\newtheorem{atheorem}{Theorem}[section]
\definecolor{lightgreen}{RGB}{180,240,180}
\definecolor{lightyellow}{RGB}{250,245,205}
\definecolor{lightred}{RGB}{255,220,230}
\newcommand{\omv}{\textnormal{\textsf{OMv}}\xspace}
\newcommand{\oumv}{\textnormal{\textsf{OuMv}}\xspace}
\newcommand{\st}{$s$-$t$\xspace}
\newcommand{\paths}[3]{${#1}$-${#2}$ ${#3}$-paths\xspace}
\newcommand{\er}{Erd\H{o}s-R\'{e}nyi\xspace}
\newcommand{\sol}{\textsc{Sol}\xspace}
\newcommand{\ptg}{\textsc{P3toGeneral}\xspace}
\newcommand{\tp}{t_{\textnormal{pre}}}
\newcommand{\tu}{t_{\textnormal{u}}}
\newcommand{\tq}{t_{\textnormal{q}}}
\let\originalleft\left
\let\originalright\right
\renewcommand{\left}{\mathopen{}\mathclose\bgroup\originalleft}
\renewcommand{\right}{\aftergroup\egroup\originalright}
\newcommand{\per}[1]{\left(#1\right)}
\newcommand{\set}[1]{\{#1\}}
\newcommand{\ceil}[1]{\lceil{#1}\rceil}
\newcommand{\R}{\mathbb{R}}
\newcommand{\N}{\mathbb{N}}
\newcommand{\Z}{\mathbb{Z}}
\newcommand{\E}{\mathbb{E}}
\newcommand{\F}{\mathbb{F}}
\newcommand{\calG}{\mathcal{G}}
\newcommand{\eps}{\epsilon}
\newcommand{\card}[1]{\left|{#1}\right|}
\newcommand{\symdif}{\triangle}
\newcommand\Prob[2]{{\Pr_{#1}\left[ {#2} \right]}}
\newcommand\Expc[2]{{\mathop{\E}_{#1}\left[ {#2} \right]}}
\newcommand{\ind}[1]{\mathbbm{1}_{\scriptscriptstyle{#1}}}
\newcommand{\sd}[2]{\mathrm{SD}\left({#1},{#2}\right)}
\newcommand{\poisson}[1]{\mathrm{Pois}\per{{#1}}}
\newcommand{\bernoulli}[1]{\mathrm{Ber}\per{{#1}}}
\newcommand{\equivmod}[1]{{\equiv_{_{#1}}}}
\newcommand{\sameparity}{\ \equivmod{2}\ }
\DeclareMathOperator{\poly}{poly}
\newcommand{\argmin}[1]{\underset{#1}{\mathrm{argmin}}}
\newtheorem{theorem}{Theorem}[section]
\newtheorem{lemma}[theorem]{Lemma}
\newtheorem{corollary}[theorem]{Corollary}
\newtheorem{claim}[theorem]{Claim}
\newtheorem{fact}{Fact}
\newtheorem*{theorem*}{Theorem}
\newtheorem*{lemma*}{Lemma}
\theoremstyle{remark}
\newtheorem*{remark}{Remark}
\newtheorem{observation}{Observation}
\theoremstyle{definition}
\newtheorem{definition}{Definition}
\newtheorem*{definition*}{Definition}
\theoremstyle{conjecture}
\newtheorem*{conjecture*}{Conjecture}
\crefname{atheorem}{Theorem}{Theorems}
\newcommand{\alg}[1]{{\mathcal{#1}}}
\newcommand{\advdist}{\mathcal{D}_\textnormal{adv}}
\newcommand{\advseq}{{\mathcal{S}_\textnormal{adv}}}
\newcommand{\advhist}{{\mathcal{H}_\textnormal{adv}}}
\newcommand{\uadv}{u_\textnormal{adv}}
\newcommand{\vadv}{v_\textnormal{adv}}
\newcommand{\muadv}{\mu_\textnormal{adv}}
\newcommand{\redseq}{\mathcal{S}_\textnormal{red}}
\newcommand{\redhist}{\mathcal{H}_\textnormal{red}}
\newcommand{\udif}{{u_\textnormal{dif}}}
\newcommand{\vdif}{v_\textnormal{dif}}
\newcommand{\roundnumber}{{t}}
\newcommand{\samedist}{\overset{\mathrm{d}}{=}}
\let\c@figure\c@table
\let\ftype@figure\ftype@table
\title{Smoothed Analysis of Dynamic Graph Algorithms}
\author{}
 \author{{Uri Meir \thanks{Tel-Aviv University, Tel-Aviv, Israel. Email: \texttt{urimeir.cs@gmail.com}}} \and {Ami Paz \thanks{LISN --- CNRS \& Universit\'e Paris-Saclay. Email: \texttt{ami.paz@lisn.fr}}}}
\date{}
\begin{document}
\maketitle
\thispagestyle{empty} \begin{abstract}
Recent years have seen significant progress in the study of dynamic graph algorithms, and most notably, the introduction of strong lower bound techniques for them (e.g., Henzinger, Krinninger,  Nanongkai and Saranurak, STOC 2015; Larsen and Yu, FOCS 2023). 
As worst-case analysis (adversarial inputs) may lead to the necessity of high running times, a natural question arises: in which cases are high running times really necessary, and in which cases these inputs merely manifest unique pathological cases?

Early attempts to tackle this question were made by
Nikoletseas, Reif, Spirakis and Yung (ICALP 1995) and by
Alberts and Henzinger (Algorithmica 1998), who considered models with very little adversarial control over the inputs, and showed fast algorithms exist for them.
The question was then overlooked for decades, until Henzinger, Lincoln and Saha (SODA 2022) recently addressed uniformly random inputs, and presented algorithms and impossibility results for several subgraph counting problems.

To tackle the above question more thoroughly, we employ \emph{smoothed analysis}, a celebrated framework introduced by Spielman and Teng (J. ACM, 2004). An input is proposed by an adversary but then a noisy version of it is processed by the algorithm instead.
This model of inputs is parameterized by the amount of adversarial control, and fully interpolates between 
worst-case inputs and a uniformly random input.
Doing so, we extend impossibility results for some problems to the smoothed model with only a minor quantitative loss. That is, we show that partially-adversarial inputs suffice to impose high running times for certain problems. 
In contrast, we show that other problems become easy even with the slightest amount of noise. 
In addition, we study the interplay between the adversary and the noise, leading to three natural models of smoothed inputs, for which we show a hierarchy of increasing difficulty stretching between the average-case and the worst-case complexities.
\end{abstract}

\newpage
\thispagestyle{empty} 
\tableofcontents

\thispagestyle{empty} 
\setcounter{page}{0}

\newpage

\section{Introduction}
We study \emph{dynamic graph algorithms}---data structures for handling graphs that undergo changes. 
The input is composed of an initial $n$-node graph, followed by a sequence for edge changes (addition or removal) and queries.
The goal is to maintain enough information on the graph in order to promptly answer a predetermined query (e.g., is the graph connected,
or how many paths of length~$3$ connect two predefined nodes).

Recent years have seen a large body of work regarding lower bounds for dynamic graph algorithms.
One line of work concerns unconditional lower bounds ~\cite{PatrascuT11a,CliffordGL15,LarsenY23,PatrascuD06};
for example, deciding the connectivity of a dynamic graph (i.e.\ whether the graph is connected) requires $\Omega(\log n)$ time per update provided the query time is constant~\cite{PatrascuD06}.
Another celebrated line of work proves  polynomial conditional lower bounds~\cite{HenzingerKNS15,AbboudW14,Patrascu10,DorHZ00,RodittyZ11,KopelowitzPP16};
for example, even the seemingly simple problem of counting paths of length 3 between two predefined nodes $s$ and $t$ (henceforth, \paths{s}{t}{3}) requires $\Omega(n^{1-\epsilon})$ time per update\footnote{%
That is, cannot be solved in $O(n^{1-\epsilon})$ time per update, for any constant $\epsilon>0$}
unless queries take roughly quadratic time~\cite{HenzingerKNS15},
assuming the \omv conjecture (see \cref{sec:intro-technical}).
Similarly to other algorithmic models, it is natural to ask whether known results in worst-case analysis\footnote{%
    In the literature of dynamic graph algorithms, 
    worst-case running time is sometimes contrasted with amortized running time. 
    Here and all throughout, worst-case simply refers to an adversarial input, as opposed to a uniformly random (average-case) input or a smoothed input.
    }
encapsulate inherent impossibilities, or just superficially allow for pathological
yet unrealistic
examples. 

Preceding the advances in lower bound techniques, a couple of early works proposed algorithms specialized for distributional inputs of restricted form, with the number of edges (the graph density) playing a crucial role.
In~\cite{NikoletseasRSY95}, the input sequence is assumed to be entirely stochastic (no adversarial choices) and follow simple rules to ensure 
a predefined density.
In~\cite{AlbertsH98}, an adversary determines ahead of time which steps will add an edge and which will remove one, but the choice of edges to add or remove is made uniformly at random.
Consequently, the graph in each round is distributed uniformly over all graphs of a given density,
a fact that is crucial in the analysis.

Following advances in lower bound techniques for dynamic graph algorithms (and for fine-grained complexity in general), a recent work~\cite{HLS22} revived the average-case analysis of dynamic graph algorithms, offering upper and lower bounds for subgraph counting problems under uniformly random inputs.
For example, it shows that \paths{s}{t}{3} can be counted
with $O(1)$ update and query times on such inputs,
while maintaining the number of such paths of length~$5$
requires either $\Omega\left( n^{1-\epsilon} \right)$ time per update or roughly quadratic time per query.
Note that some classical problems in the field, such as connectivity, are trivial for a uniformly random input: one can blindly answer each query positively, since a random graph is connected w.h.p.

These results point to the fragility of existing lower bounds, but they only apply with very little adversarial control over the input.
In our work, we strive to capture input models in which the adversary has more, but not perfect control.
In particular, we are led by the following question.

\begin{tcolorbox}[colback=blue!10, colframe=blue!30, halign=center]
	{
        How robust are lower bounds for dynamic graph algorithms?
        }%
\end{tcolorbox}

To address this question, we apply \emph{smoothed analysis}, a celebrated theoretical framework introduced by Spielman and Teng~\cite{SpielmanT04,SpielmanT09}.
In this approach, a smoothed input is created by considering an adversarial input
and then perturbing it by adding random noise;
the complexity is computed with respect to the smoothed input (e.g., on expectation).
Smoothed analysis was applied to different types of inputs;
when used on graphs, a random perturbation is applied to an adversarially-chosen graph 
by independently flipping the (in)existence of each edge with a certain probability~\cite{krivelevich2006smoothed,friedrich2011smoothed,krivelevich2015smoothed}.
A more recent line of work applied smoothed analysis to a graph-based computational model closer to ours called dynamic networks, where the computation also occurs in rounds (note that this is a model of distributed computation, while ours is centralized).
There, each round is executed with an entirely new communication graph, and a smoothed input is simply defined by perturbing each such graph independently as above~\cite{DFGN18,MPS20,DFGN22}.

We continue the conceptual line of applying random perturbations.
In dynamic graph algorithms each round consists of a \emph{single} edge change, which is too subtle to invoke noise through a global perturbation as in the aforementioned cases. 
Instead, we apply smoothing per round: with some probability and independently of other rounds, we replace the adversarial edge with a uniformly random edge.
Over many changes, this accumulates to a global perturbation, naturally extending previous graph smoothing models. 
This approach yields diverse input models and results. 

In addition to its theoretical merits, smoothed analysis is often argued to be a realistic model of inputs. This applies for dynamic graphs as well as to any computation model, since inputs are prone to minor perturbations before being processed (e.g., due to communication failures or human errors).
That is, even if (due to chance or malicious intentions) the intended input should impose high running times, it is well-incentivized to analyze the de facto performance of an algorithm assuming a perturbed version of the input is processed instead (i.e., a smoothed input).

\subsection{Smoothed dynamic graphs}

We consider fully dynamic graphs with a fixed set of nodes $V=[n]$ and changing edges.
The input consists of the edges of an initial graph, followed by a sequence of edges to be flipped.
The complexity of a dynamic graph algorithm is measured by the time $\tp$ it needs for processing the initial graph, the time $\tu$ for processing an edge change, and the time $\tq$ for answering a query.

We next define a $p$-smoothed input, for a parameter $0\leq p\leq 1$, which we name the \emph{oblivious flip adversary} model of smoothing.
First, the adversary fixes an initial graph and a sequence of edge flips.
The initial graph is smoothed as follows: each node pair remains consistent with the adversarial choice with probability~$p$, and otherwise re-sampled (uniformly from~$\set{0,1}$).
Then, each edge flip remains unchanged with probability~$p$, and otherwise re-sampled uniformly from~$\binom{[n]}{2}$.
Clearly, when 
$p=0$ the process coincides with an average-case (uniformly random) input, while for $p=1$ it results in the standard adversarial (worst case) input.

\paragraph{Study cases.}
Smoothed analysis can be relevant to any problem presenting a gap between the worst-case and average-case complexities.
We consider two classes of such problems: counting small subgraphs, and
deciding certain graph properties.
Interestingly, the different classes exhibit fundamentally different smoothed complexities as a function of $p$, the fraction of adversarial changes.

Two examples are depicted in \cref{table: preliminary list of results} for constant values of $p$.
The complexity of counting
\paths{s}{t}{3} 
increases quickly with $p$, e.g., for all values $p\in[0.01, 1]$ the complexity is asymptotically identical to the worst-case ($p=1$).
In contrast, the complexity of connectivity remains as in the average case ($p=0$) for any constant $p$;
e.g., it is constant for any $p\in[0,0.99]$.
This already 
exemplifies different behaviors in terms of noise robustness:
the hardness of counting subgraphs hinges on a small set of adversarial edges and therefore persists even with a lot of randomness, while the decision problems we consider concern with global properties, and even a small amount of noise guarantees that a blind positive answer to all queries is correct w.h.p.

\begin{table*}
	\centering
	\scriptsize
	\begin{tabular}{|l|lr|lr|lr|}
		\toprule
		Problem 
		& \multicolumn{2}{l|}{Average case ($p=0$)}
		& \multicolumn{2}{l|}{Smoothed case  ($0<p<1$ const.)} 
		& \multicolumn{2}{l|}{Worst case ($p=1$)}\\ 
		\toprule
		
		\st 3-paths
		&\cellcolor{lightgreen}$O(1)$
		&\cellcolor{lightgreen} \cite{HLS22}
		&\cellcolor{lightred}$\Omega(n^{1-\epsilon})$, $O(n)$
		&\cellcolor{lightred}T.\ref{thm:lb for counting st3 paths}, 
		L.\ref{lem:counting st 3paths}
		&\cellcolor{lightred} $\Omega(n^{1-\epsilon})$, $O(n)$		
		&\cellcolor{lightred}\cite{HenzingerKNS15}, \cite{HanauerHH22}
		\\
		\midrule
		Connectivity
		&\cellcolor{lightgreen}$O(1)$
		&\cellcolor{lightgreen}L.\ref{lem:ub_oblivious_flip}
		&\cellcolor{lightgreen}$O(1)$
		&\cellcolor{lightgreen}L.\ref{lem:ub_oblivious_flip}
		&\cellcolor{lightred}  $\Omega(\log n)$, $\tilde O(\log n)$
		&\cellcolor{lightred} \cite{PatrascuD06},\cite{HuangHKPT23} 
		\\
		\bottomrule
	\end{tabular}
	~
	\caption{
		The update time complexities of \paths{s}{t}{3} and connectivity, with a non-adaptive flip adversary and constant $0<p<1$.
		The update times are for $\tq=O(1)$ and $\tp=O(n^{3-\epsilon})$.
		The \paths{s}{t}{3} lower bounds are conditioned on the \omv conjecture.
		}
	\label{table: preliminary list of results}
\end{table*}

\vspace{15mm} 
\paragraph{A smooth transition.}
A recent work on average-case complexity~\cite{HLS22}
focuses on counting small subgraphs.
Perhaps the strongest message in it is that some problems, such as counting \paths st5, are hard not only in the worst-case but also on average. 
In contrast, counting \paths{s}{t}{2} is easy even in the worst-case (that is, solvable with constant update and query times).
Our focus is the third option presented: for a few problems, including counting \paths{s}{t}{3} and \paths{s}{t}{4}, there is a gap: these problems can be solved quickly in the average case, although they admit strong (conditional) lower bounds in the worst-case.

For constant $p$, we show that  counting \paths{s}{t}{3} and \paths{s}{t}{4} 
are as hard as the worst case and require linear update time. 
More interestingly, for sub-constant $p$, we show that both can be solved with $O(pn)$ update time and constant query time, 
while $\Omega\left(pn^{1-\eps}\right)$ time per update is necessary even if we allow almost quadratic query time.
That is, the complexity linearly depends on $p$, the 
fraction of adversarial changes, as stated in \cref{table: short paths counting}.
In particular, the problem is as easy as the average case for very small values of $p$ (e.g., $p=1/n$), but becomes asymptotically as hard as the worst case already with $p=0.01$, with a gradual transition between the two.

\subsection{Models of smoothing}
\label{sec:intro-models}
Smoothed analysis was originally introduced for studying the simplex algorithm, where the inputs are continuous and the added noise is Gaussian.
The field has greatly evolved since: first, it is now typical to analyze the complexity of a computational problem, rather than that of a specific solution (algorithm), leading to more robust results.
Second, the model of random noise was adjusted to discrete inputs, which are common throughout computer science.
Lastly, the framework has also been applied to several different computational models, each posing its own unique challenges.

In our case, the inputs are discrete and arrive in an online manner. 
This gives rise to several natural models of smoothing, differing in the adaptivity of the adversary to the noise and in the way changes are encoded.
In the following we present three variants of $p$-smoothed inputs:
oblivious flip adversary,
oblivious add/remove adversary, 
and an adaptive adversary.
Each model constitutes a different intermediate model between worst-case and average-case, and the relations between them is depicted in \cref{fig:relations between models}.
In \cref{fig:complexities separating models} 
we exemplify how the complexities of concrete problems increase differently as we transition from easier to harder models of input.

\subsubsection{Oblivious adversaries}
\paragraph{Operation types in worst and average cases.}
A change in a dynamic graph can be seen
as choosing a potential edge and flipping it, or as choosing a potential edge and whether to add or to remove it.
In the typical worst-case analysis, both views coincide:
an algorithm can always check the edge's status in $O(1)$ time and do nothing if it remained unchanged, so an adversary will never choose to leave an edge unchanged.
Hence, worst-case complexity does not depend on the operation type.
Prior work concerning average-case complexity does not address the relation between flip and add/remove operations.
In \cref{sec:adver} we show that up to constant factors, the average-case complexity of any problem is also independent of the operations' encoding.

\paragraph{Operation types in smoothed analysis.}
We have already described the oblivious flip adversary, which chooses a sequence of \emph{edges to flip}, and then some of these edges are re-sampled uniformly.
The \emph{oblivious add/remove adversary} is similarly defined, but instead of edges to flip, it chooses \emph{edges to add or remove} before some of its operations are replaced by random ones.

A priori, one might suspect that an oblivious add/remove adversary is stronger than an oblivious flip one, as the former can, e.g., fixate on certain edges and persistently remove them.
But one might also suspect the opposite, as an oblivious flip adversary is guaranteed to make a change at each round, 
while an add/remove adversary might ``waste'' rounds by trying to add an existing edge or remove a missing one.
As it turns out, the first intuition is true: if $p$ is high enough, an add/remove adversary can simulate a hard case instance on a small subgraph,
which we use in order to prove lower bounds for it. 
Furthermore, using a proxy model with lazy flips we show that the oblivious add/remove adversary is always at least as strong as the flip one: the $p$-smoothed complexity in the oblivious flip model is not higher than the $p'$-smoothed complexity in the oblivious add/remove model, for some~$p' \in [p/2,p]$.

\begin{table*}
	\centering
	\scriptsize
	\begin{tabular}{|l|lr|lr|lr|}
		\toprule
		Problem 
		& \multicolumn{2}{l|}{Average case} 
		& \multicolumn{2}{l|}{Smoothed case} 
		& \multicolumn{2}{l|}{Worst case}\\ 
		\toprule
		
		\st 2-paths
		& \multicolumn{5}{l}{\cellcolor{lightgreen}$O(1)$		}
		&\cellcolor{lightgreen}\cite{HLS22}
		\\
		\midrule
		\st 3-paths
		&\cellcolor{lightgreen}$O(1)$
		&\cellcolor{lightgreen} \cite{HLS22}
		&\cellcolor{lightyellow}$\Omega(pn^{1-\epsilon})$, $O(pn)$
		&\cellcolor{lightyellow}T.\ref{thm:lb for counting st3 paths}, 
		L.\ref{lem:counting st 3paths}
		&\cellcolor{lightred}$\Omega(n^{1-\epsilon})$, $O(n)$		&\cellcolor{lightred}\cite{HenzingerKNS15}, \cite{HanauerHH22}
		\\
		\midrule
		\st 4-paths
		&\cellcolor{lightgreen}$O(1)$
		&\cellcolor{lightgreen} \cite{HLS22}
		&\cellcolor{lightyellow}$\Omega(pn^{1-\epsilon})$, $O(pn)$
		&\cellcolor{lightyellow}T.\ref{thm:lb for counting st4 paths s3 and s4 cycles}, 
		L.\ref{lem:counting st 4paths}
		&\cellcolor{lightred}$\Omega(n^{1-\epsilon})$, $O(n)$
		&\cellcolor{lightred}\cite{HenzingerKNS15}, C.\ref{cor:counting st 4paths worst-case}
		\\
		\midrule
		\st 5-paths
		&\cellcolor{lightred}$\Omega(n^{1-\epsilon})$
		& 
		\multicolumn{5}{r|}{\cellcolor{lightred}\cite{HLS22}}
		\\
		\bottomrule
	\end{tabular}
	~
	\caption{%
		The update complexities $\tu$ of short paths counting,
		for $\tq=O(1)$ and 
		$\tp=O(n^{3-\epsilon})$, as a function of $p$.
		For these problems the smoothed complexities with all adversaries coincide.
		The lower bounds are conditioned on the \omv conjecture.}
	\label{table: short paths counting}
\end{table*}

\subsubsection{Adaptive adversary}
Another issue that emerges when defining smoothing models is \emph{adaptivity to the noise}.
Indeed, in a process that involves both adversarial and random choices, a present (``online'') adversary 
aware of the random noise applied 
might disrupt the algorithm more than an oblivious one.
Similar distinctions were recently made in smoothed analysis of dynamic networks~\cite{MPS20} and online learning~\cite{HRS21}. 

Unlike the oblivious adversaries, an \emph{adaptive adversary} can choose its edge changes in an online manner --- it chooses an edge to flip knowing all previous changes (adversarial or randomly-chosen).
We show that this is sometimes enough in order to cause much slower running times, separating the oblivious and the adaptive smoothed models.
As in the worst case, an adaptive adversary never asks to add an existing edge or to remove a non-existing one, thus defining it with flip operations or with add/remove ones makes no difference.

\paragraph{Queries.} Queries are part of the input, and are sometimes even crucial for devising worst-case lower bounds, especially for problems where queries are parametrized.
This is the case, e.g., in the lower bound
for $(u,v)$-connectivity~\cite{PatrascuD06},
where queries are made with parameters $u,v$ and ask whether the nodes $u$ and $v$ are currently in the same connected component.
Hence, parameterized queries could also be studied in a smoothed manner (and in the average-case), yet the problems studied here have no parameterized queries and the issue is left for future work.

\subsection{Main results}
\label{sec:intro-results}    
\cref{table: results comparison} compares our results to known results in average-case and worst-case models, focusing on  
update times of algorithms with constant query times.
We discuss these results below.

\begin{figure}
	\centering
	\includegraphics[scale=.8]{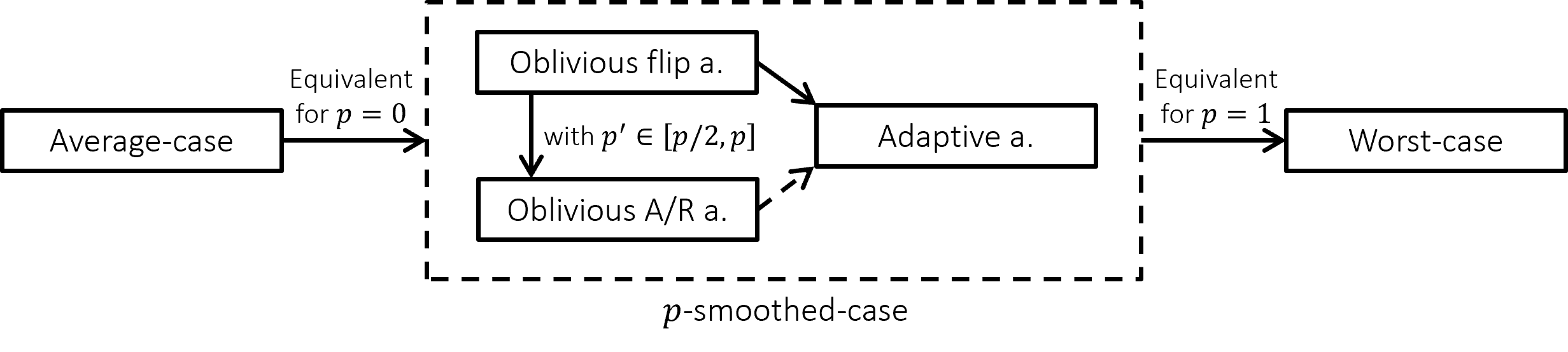}
	\caption{Relations between models (\cref{sec:intro-models} and \cref{sec:hierarchy-separations}).
		An arrow from model~A to model~B represents that model B is at least as hard as model A (that is, the complexity of a problem can only increase).
		A solid arrow represents that this relation is strict, i.e., for some problem and range of $p$, 
		the asymptotic complexity increases from A to B.} 
	\label{fig:relations between models}
\end{figure}

\subsubsection{Decision problems}
We consider several decision problems, such as
connectivity and bipartite perfect matching (deciding whether a bipartite graph contains a perfect matching).
These are trivial to solve on a random static graph
since they hold w.h.p.
In \cref{sec:upper-easy-with-flip} we point out that these hold w.h.p.\ also at each round of an average-case dynamic graph, and are thus also easy to decide in this setting.
We extend this argument to $p$-smoothed dynamic graphs with an oblivious flip adversary, 
show it holds even with very little randomness ($p$ being almost $1$),
and thus reach a trivial constant-time algorithm in this setting as well (\cref{lem:ub_oblivious_flip}).
Specifically, we show that the random perturbation of the initial graph  guarantees the first (polynomially many) graphs in the sequence hold such properties with a polynomially small error, allowing for a trivial ``yes'' answer.
To deal with longer sequences of changes, we argue even more randomness affects the later rounds, and each graph now holds the desired properties with only an \emph{exponentially small} error.
Despite the simple intuition, the formal argument here is not completely straightforward, as the graph at each round is in fact statistically far from a uniformly random graph (e.g.,  the parity of the number of edges distinguishes the two).

\begin{atheorem}[informal]
	\label{athm:decision problems-alg}
    For any $p\in[0,1-\frac{26\log n}{n})$, \textbf{connectivity} and \textbf{bipartite perfect matching} can be decided in constant update and query times under the oblivious flip adversary.
\end{atheorem}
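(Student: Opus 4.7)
The plan is to analyze the trivial algorithm that answers ``yes'' on every query, and show this answer is correct with high probability at every round. It suffices to prove that the graph $G_t$ produced by the oblivious flip adversary at any round $t$ is connected (respectively, contains a bipartite perfect matching) with probability $1 - 1/\poly(n)$, because polynomially many queries are then all simultaneously correct by a union bound.

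The crux is an independence-by-conditioning argument. For each node pair $(u,v)$, let $\xi_{uv}$ be the indicator that the initial smoothing flipped the bit of $(u,v)$ relative to the adversary's choice $a_{uv}$, so that the initial state is $a_{uv} \oplus \xi_{uv}$. By the definition of the oblivious flip smoothing, $\Pr[\xi_{uv}=1] = (1-p)/2$, the bits $\{\xi_{uv}\}$ are mutually independent across node pairs, and they are independent of all the randomness governing the flip rounds. The state of $(u,v)$ after $t$ rounds is then $s^t_{uv} = a_{uv} \oplus \xi_{uv} \oplus \Phi^t_{uv}$, where $\Phi^t_{uv}$ is the parity of the number of times $(u,v)$ was flipped in the first $t$ rounds; crucially, $\Phi^t_{uv}$ is a function of the flip-round randomness alone (including the per-round coin ``keep the adversarial edge vs.\ re-sample uniformly from $\binom{[n]}{2}$'').

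First I would condition on all of the flip-round randomness. This fixes each $\Phi^t_{uv}$ and leaves the edges of $G_t$ as mutually independent Bernoulli variables, each present with probability $(1-p)/2$ or $(1+p)/2$ according to whether $a_{uv} \oplus \Phi^t_{uv}$ equals $0$ or $1$. In particular, $G_t$ stochastically dominates an \ER random graph with edge probability $(1-p)/2$. Under the hypothesis $p < 1 - 26\log n / n$ this probability is at least $13\log n/n$, comfortably above the connectivity and bipartite-perfect-matching thresholds of $\log n/n$; standard \ER estimates then give that $G_t$ has the relevant property except with probability $n^{-\Omega(1)}$, and the same bound holds unconditionally. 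A union bound over polynomially many rounds completes the argument.

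The main obstacle is delicate rather than deep: $G_t$ is not itself distributed as an \ER graph---for example, the parity of the edge count in $G_t$ is biased by the adversarial operations---so one cannot directly invoke distributional equality with an \ER graph, and the argument must go through the conditional stochastic domination described above. Extending the union bound to super-polynomial-length sequences of updates requires a complementary observation (hinted at in the paper): the accumulated flip noise drives each $\Phi^t_{uv}$ toward uniform, pushing marginal edge probabilities toward $1/2$, so that connectivity and matching fail only with exponentially small probability per late round and the union bound survives any horizon.
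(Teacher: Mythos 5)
Your argument for polynomially many rounds is correct and is essentially the same as the paper's: condition on all flip-round randomness, observe that the initial-smoothing bits $\xi_{uv}$ are still independent and unbiased-with-probability-$1-p$, so that conditionally $G_t$ stochastically dominates $G_{n,(1-p)/2}$; then monotonicity and standard \ER{} estimates give $\Pr[\text{failure}]\le \poly(n)^{-1}$ per round and a union bound finishes. This is precisely the paper's \cref{claim:oblivious_flip_contains_G_np} and \cref{claim:ub_oblivious_flip}, and your remark that $G_t$ is not \emph{distributionally} an \ER{} graph (so one must argue by conditioning and domination, not by equality in law) is the same observation the paper makes.

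The gap is in the last paragraph, which is where the theorem's full strength lies. The version in \cref{lem:ub_oblivious_flip} is claimed for near-exponential horizons ($T\le 2^{n/3}$), and for $p$ as large as $1-\Theta(\log n/n)$ the per-round bound from $G_{n,(1-p)/2}$ is only $\poly(n)^{-1}$, so a naive union bound dies at polynomially many rounds. You write that ``accumulated flip noise drives each $\Phi^t_{uv}$ toward uniform, pushing marginal edge probabilities toward $1/2$,'' but marginals are not enough: the per-edge flip counts $\Phi^t_{uv}$ are jointly a multinomial over $\binom{n}{2}$ categories (conditioned on the number of random rounds), hence \emph{not} independent, and you cannot simply replace the conditional-on-flip-rounds product measure by a product measure with parameters near $1/2$. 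The paper resolves exactly this dependence via Poissonization (\cref{claim:oblivious_flip_looks_random}): it randomizes the round index $\ell\sim\poisson{r}$, under which the per-edge flip counts become independent Poissons, derives statistical closeness $\sd{G_\ell}{G_{n,1/2}}\le n^2/e^{2n}$, and then ``un-Poissonizes'' at a multiplicative cost of $4r$ using $\Pr[\poisson{r}=r]\ge 1/(4r)$. This requires the $G_{n,1/2}$ failure probability to be exponentially small (\cref{claim:negligible_probabilities}), which is a separate computation you also elide. Without some device that turns the correlated multinomial flip counts into (approximately) independent ones, your per-late-round bound cannot reach exponentially small, and the union bound over $2^{\Omega(n)}$ rounds does not close.
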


This theorem might raise the question whether the oblivious flip adversary has any power at all, or does it coincide with the average-case complexity.
We later separate the models by showing they induce different complexities for subgraph counting problems.

While the trivial ``yes'' algorithm works well for  bipartite perfect matching with an oblivious flip adversary,
it fails for the stronger adversaries --- oblivious add/remove adversary, and adaptive adversary.
To see this, consider an adversary that repeatedly removes all the edges touching a fixed node;
if $p$ is relatively large, this adversary will manage to isolate the node at least in some of the rounds, rendering the trivial algorithm wrong.

We formalize this intuition by proving lower bounds using a new embedding technique, which might also be applicable in other settings of smoothing of graphs.
The adversary focuses on a small set of $\hat n = \hat n(n,p)$ nodes and controls the induced subgraph on these nodes, as well as its connections with the rest of the graph.
It embeds a worst-case lower bound graph on these $\hat n$ nodes,
such that the decision on the entire graph hinges on the $\hat n$-node subgraph.

An adaptive adversary can embed a worst-case construction for bipartite perfect matching on $\hat n =pn/20$ nodes: 
between every two designated queries, it changes the internal structure of the subgraph to be as in an $\hat n$-node worst case, and in parallel ``fixes'' any random edge change that touches these nodes.
This results in \cref{thm:lb for many problems-smoothed advers}, giving a polynomial lower bound on the update time of bipartite perfect matching.

Surprisingly, a similar strategy works for the oblivious add/remove adversary,
even though it is not aware of the random changes.
This requires larger values of $p$ and assures a control of a smaller subgraph, yet \cref{thm:lb for many problems-smoothed oblivious ar} provides a polynomial lower bound on the update time for this model.

\begin{table*}
\setlength{\arrayrulewidth}{1pt} 
  \hspace*{-0.03\linewidth}
\begin{adjustbox}{max width=1.05\linewidth}
	  \begin{tabular}{|l|lr|lr|lr|lr|lr|}
		\toprule
		Problem 
		& \multicolumn{2}{l|}{Average case} 
		& \multicolumn{2}{l|}{Oblivious flip a.} 
		& \multicolumn{2}{l|}{Oblivious AR a.} 
		& \multicolumn{2}{l|}{Adaptive adv.} 
		& \multicolumn{2}{l|}{Worst case model}\\ 
		\toprule
		
  \multicolumn{11}{|l|}%
  {\textbf{Small subgraph counting}}\\  
		\midrule
  \begin{tabular}{l@{}}
        \st 3-paths
        \end{tabular}
        &$O(1)$
        & \cite{HLS22}
		& 
  \multicolumn{4}{l}{
  $\Omega(pn^{1-\epsilon})$, $O(pn)$}
        &
  \multicolumn{2}{r|}{
		T.\ref{thm:lb for counting st3 paths}, 
		L.\ref{lem:counting st 3paths}}
		& 
		$\Omega(n^{1-\epsilon})$, $O(n)$		&\cite{HenzingerKNS15}, \cite{HanauerHH22}
		\\
		\midrule
        \begin{tabular}{l@{}}
        \st 4-paths
        \end{tabular}
        &$O(1)$
        & \cite{HLS22}
		& 
  \multicolumn{4}{l}{
  {$\Omega(pn^{1-\epsilon})$, $O(pn)$}}
        &
  \multicolumn{2}{r|}{
		T.\ref{thm:lb for counting st4 paths s3 and s4 cycles}, 
		L.\ref{lem:counting st 4paths}}
		& 
		$\Omega(n^{1-\epsilon})$, $O(n)$
        &\cite{HenzingerKNS15}, C.\ref{cor:counting st 4paths worst-case}
		\\
		\midrule
  \begin{tabular}{l@{}}
		$s$ triangles
  \end{tabular}
          &$O(1)$
        & \cite{HLS22}
        &\multicolumn{4}{l}{
  {$\Omega(pn^{1-\epsilon})$, $O(pn)$}}
		&  
  \multicolumn{2}{r|}%
  {T.\ref{thm:lb for counting st4 paths s3 and s4 cycles}, L.\ref{lem:counting s triangles}}
		& $\Omega(n^{1-\epsilon})$,
		$O(n)$
		& \cite{HenzingerKNS15}
		\\
		\midrule
  \begin{tabular}{l@{}}
		$s$ 4-cycles
  \end{tabular}
          &$O(1)$
        & \cite{HLS22}
  &\multicolumn{4}{l}
  {$\Omega(pn^{1-\epsilon})$, $O(pn)$}
		&  
  \multicolumn{2}{r|}%
  {T.\ref{thm:lb for counting st4 paths s3 and s4 cycles},  L.\ref{lem:counting s 4cycles}}
		& $\Omega(n^{1-\epsilon})$,
		$O(n)$
		& \cite{HLS22}, C.\ref{cor:counting s 4cycles worst-case}
		\\
		\midrule
		\midrule
  \multicolumn{11}{|l|}{\textbf{Other key problems}}\\  

 
		\midrule
      \begin{tabular}{l@{}}
		Connectivity
    \end{tabular}
        &\multicolumn{2}{l}{$O(1)$}
        &\multicolumn{2}{r|}{L.\ref{lem:ub_oblivious_flip}}
        &\multicolumn{2}{c|}{---}     
        &$\Omega(\log pn)$
		& T.\ref{thm:lb for connectivity-smoothed advers}
		& $\Omega(\log n)$, $\tilde O(\log n)$\hspace{-5ex}
    	& \cite{PatrascuD06},\cite{HuangHKPT23} 
		\\
  		\midrule  
		    \begin{tabular}{l@{}}
			Perfect matching\\
		\end{tabular}
		&\multicolumn{2}{l}{$O(1)$}
		&\multicolumn{2}{r|}{L.\ref{lem:ub_oblivious_flip}}
		&
		$\Omega(n^{\delta/3-\epsilon})$
		& T.\ref{thm:lb for many problems-smoothed oblivious ar}
		&
		$\Omega(p^{2-\epsilon}n^{1-\epsilon})$
		& T.\ref{thm:lb for many problems-smoothed advers}
		& $\Omega(n^{1-\epsilon})$,
		$O(n^{1.495})$\hspace{-5ex}
		& \cite{HenzingerKNS15}, \cite{Sankowski07}
		\\
		\bottomrule
	\end{tabular}
	\end{adjustbox}
	~
	\caption{%
    Update times $\tu$ of different problems assuming $\tq = O(1)$ and $\tp = o(n^{3-\epsilon})$.
	The complexity of deciding connectivity with an oblivious add/remove adversary (marked ---) remains open.
    The lower bound for perfect matching with an oblivious add/remove adversary is for a constant $0<\delta<1$ and $1-n^{-\delta} \leq p \leq 1$.
    All the polynomial lower bounds are conditioned on the \omv conjecture.
  }
	\label{table: results comparison}
\end{table*}

\begin{atheorem}[informal]
	\label{athm:decision problems-lb}
For any $p\in[0,1]$,
there is no algorithm for 
	\textbf{bipartite perfect matching}
 under an adaptive adversary 
with 
$\tu(n)=O(p^{2-\epsilon}n^{1-\epsilon})$
and
$\tq(n)
=O((pn)^{2-\epsilon})$
for any $\epsilon>0$,
unless the \omv conjecture fails.

If $p\in[1-n^{-\delta} ,1]$ for a constant $0<\delta<1$,
then there is no algorithm 
for \textbf{bipartite perfect matching} under an oblivious add/remove adversary
with 
$\tu(n)=O(n^{\frac{\delta}{3}-\epsilon})$
and
$\tq(n)
=O(n^{\frac{2\delta}{3}-\epsilon})$
for any $\epsilon>0$ 
unless the \omv conjecture fails.
\end{atheorem}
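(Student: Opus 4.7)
The plan is to reduce from the worst-case dynamic bipartite perfect matching problem, which via a standard \omv-based reduction requires total online running time $\Omega(\hat n^{3-\epsilon})$ on any $\hat n$-node instance interleaving $\Theta(\hat n^2)$ updates with $\Theta(\hat n)$ queries; in particular, either $\tu=\Omega(\hat n^{1-\epsilon})$ or $\tq=\Omega(\hat n^{2-\epsilon})$. For each adversary model I designate a ``working set'' $W\subseteq[n]$ of size $\hat n=\hat n(n,p)$ and embed the hard instance inside $W$, padding the remaining $n-\hat n$ nodes with a fixed bipartite structure (for instance, a near-complete bipartite padding) whose forced perfect matching contribution persists throughout the execution, so that the BPM answer on the full $n$-node graph coincides with the BPM answer on the $W$-subgraph. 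The preprocessing budget $\tp=o(n^{3-\epsilon})$ is enough to set up this embedding.

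For the \emph{adaptive adversary} I set $\hat n=pn/20$. A uniformly random edge in $\binom{[n]}{2}$ touches $W$ with probability $O(\hat n/n)=O(p)$, so over $T$ smoothed rounds the expected number of random perturbations hitting $W$ is $O(T\hat n/n)=O(Tp)$, while the adversary has $pT$ adversarial rounds. With appropriate constants, the adversary can both execute one worst-case operation per adversarial round and undo any pending random perturbation that has entered $W$, fitting $T'=\Theta(pT)$ worst-case operations into $T$ smoothed rounds. Choosing $T=\hat n^2/p=\Theta(pn^2)$ enacts the full $\hat n$-node reduction with its $\hat n$ queries, so $T\tu+\hat n\tq=\Omega(\hat n^{3-\epsilon})$. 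Under the assumption $\tq=O((pn)^{2-\epsilon})=O(\hat n^{2-\epsilon})$ the query term contributes only $O(\hat n^{3-\epsilon})$, so the update term must pick up the slack, forcing $\tu=\Omega(\hat n^{3-\epsilon}/(pn^2))=\Omega(p^{2-\epsilon}n^{1-\epsilon})$.

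For the \emph{oblivious add/remove adversary} with $p\ge 1-n^{-\delta}$ I set $\hat n=n^{\delta/3}$. Since the adversary cannot react to noise, I simply play the $T'=\Theta(\hat n^2)=\Theta(n^{2\delta/3})$ operations of the worst-case reduction and argue that no noise occurs along the way: the expected number of noisy operations is at most $T'(1-p)\le n^{2\delta/3-\delta}=n^{-\delta/3}=o(1)$, so by Markov the simulation executes faithfully with high probability. The worst-case bound $T'\tu+\hat n\tq=\Omega(\hat n^{3-\epsilon})$ then transfers directly and implies $\tu=\Omega(n^{\delta/3-\epsilon})$ or $\tq=\Omega(n^{2\delta/3-\epsilon})$ for every $\epsilon>0$.

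The main obstacle is making the embedding work despite noise outside $W$: the padding on the $n-\hat n$ outer nodes must force the full-graph BPM answer to factor through $W$, and must be robust to random flips that fall outside $W$. For the adaptive case this can be handled actively within the adversary's spare budget, but for the oblivious case one must choose padding robust enough that the $o(1)$ expected noisy operations during the simulation cannot disrupt it, which requires a careful union bound over the few random edges that could hit structural invariants of the padding.
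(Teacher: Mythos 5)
Your high-level strategy --- embed a worst-case BPM lower bound graph inside a working set $W$ of size $\hat n$ and use the adversarial budget to shield $W$ from noise --- is the same one the paper uses (Theorems 5.3 and 5.7), and the budgeting arithmetic for choosing $\hat n$ is essentially correct. However, the padding idea has a genuine gap, and it is exactly the place where the paper takes a different and necessary route.

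You propose to fill the outer $n-\hat n$ nodes with a \emph{fixed} near-complete bipartite structure ``whose forced perfect matching contribution persists throughout the execution,'' and in your closing paragraph you acknowledge that the hard part is keeping this padding intact against noise. This cannot work as stated, for two reasons. First, the adversary does not get to choose the initial graph outright: the initial graph is also smoothed, so a proposed near-complete padding would already be corrupted at $\Theta((1-p)n^2)$ node-pairs before any round is played --- an amount that is $\omega(n)$ for any $p$ bounded away from $1$, and still $n^{2-\delta}$ in the oblivious regime. Second, and more seriously, the adversary cannot maintain any fixed structure on all $\Theta(n^2)$ outer edges: the gambler's-ruin analysis that makes the active maintenance of $W$ work (the paper's Lemma 5.1) hinges on the controlled edge set $R$ having size at most about $pn^2/18$, so that the adversary's $p$-rate of effective moves dominates the $(1-p)|R|/\binom{n}{2}$ rate at which noise enters $R$. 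The outer padding has $\Theta(n^2)$ edges, so for constant $p$ the noise influx rate exceeds the repair rate and the random walk drifts the wrong way. The paper sidesteps this entirely: it proposes a \emph{uniformly random} initial graph (so the smoothed initial graph is uniformly random for any $p$, Observation 2.3), controls only $R=\hat V\times V$ (edges inside $\hat V$ and across the cut), and observes that the complement $V\setminus\hat V$ remains a random balanced bipartite graph, which has a perfect matching with high probability by the Erd\H{o}s--R\'{e}nyi threshold. Nothing in the outer graph needs to persist --- it only needs to stay random, and the noise \emph{helps} with that rather than fighting it. This is the missing idea; without it, ``make the padding robust'' is not achievable for general $p$.

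Two smaller points. The ``preprocessing budget'' remark conflates an algorithm resource with something the reduction can spend: the reduction must burn actual rounds (about $r\log n/p$ of them, via Lemma 5.1) to establish the embedding from the random initial graph; it does not get a free preprocessing pass. And for the oblivious add/remove case the paper still needs the ``insist by repeating add/remove'' argument (Lemma 5.6) rather than a pure ``no noise at all'' argument, because one must also ensure the $\hat n^2$ controlled edges actually land as intended despite each individual adversarial operation failing independently with probability $1-p$; the union bound you allude to is over those failures, not over noise in the padding.
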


By choosing the best $\delta$ for a given value of $p\in[0,1]$, \cref{athm:decision problems-lb}
implies that there is no algorithm under the oblivious add/remove model with running times 
$\tu(n)=O(\min\big\{\frac{1}{1-p}, n\big\} ^{1/3-\eps})$
and
$\tq(n)
=O(\min\big\{\frac{1}{1-p}, n\big\} ^{2/3-\eps})$.

Finally, we remark that both theorems also apply for
\textbf{bipartite maximum matching} and \textbf{bipartite minimum vertex cover},
and \cref{athm:decision problems-lb} also applies for
	\textbf{counting $s$ $k$-cycles for $k\geq 3$ odd} (and hence $s$-triangle detection),
	\textbf{counting \paths{s}{t}{3}} and \textbf{counting \paths{s}{t}{4}};
 the latter two are subsumed by \cref{athm:small subgraphs-lb} below.

\begin{figure}
	\centering
	\begin{subfigure}[b]{0.41\textwidth}
		\centering
		\includegraphics[width=\textwidth
		]{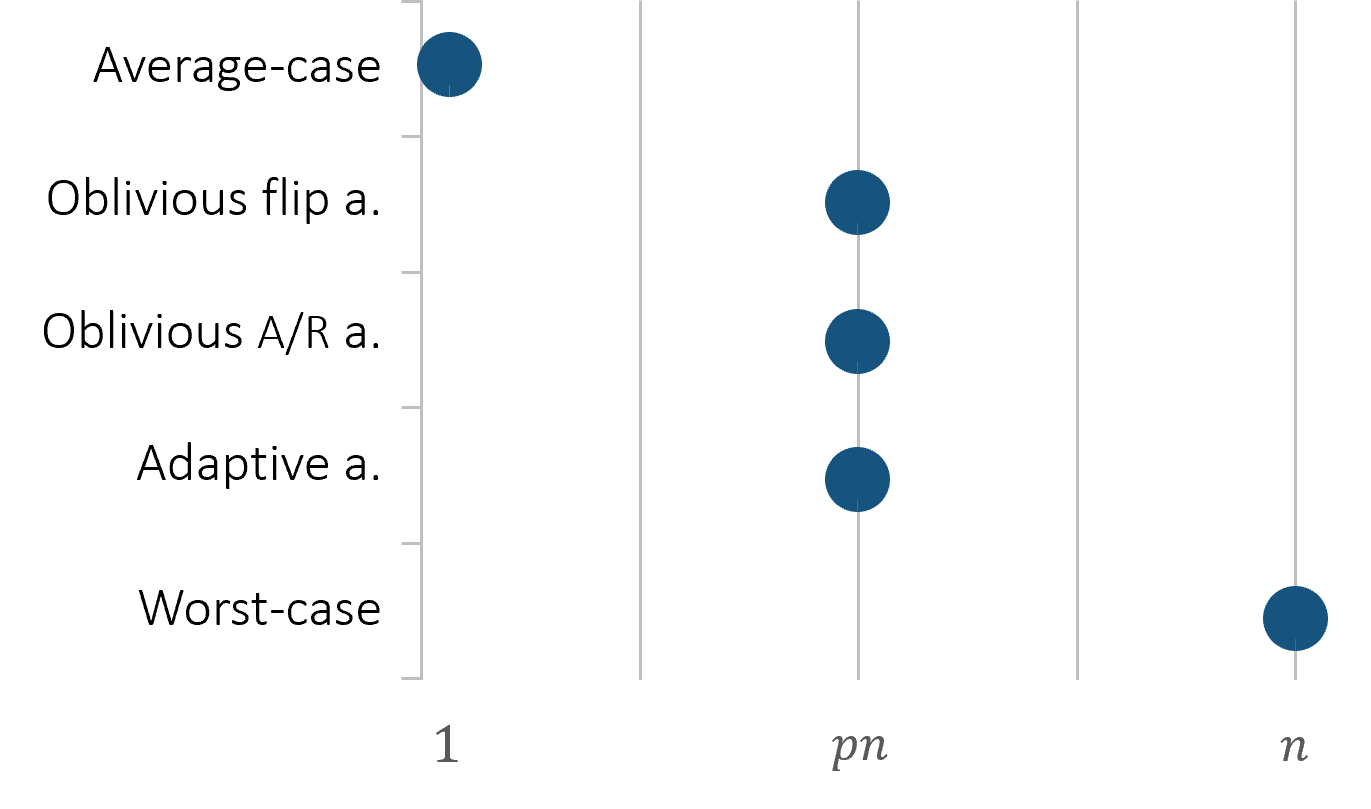}
		\caption{Small subgraph counting}
		\label{fig:subgraph counting}
	\end{subfigure}
	\hfill
	\begin{subfigure}[b]{0.54\textwidth}
		\centering
		\includegraphics[width=\textwidth
		]{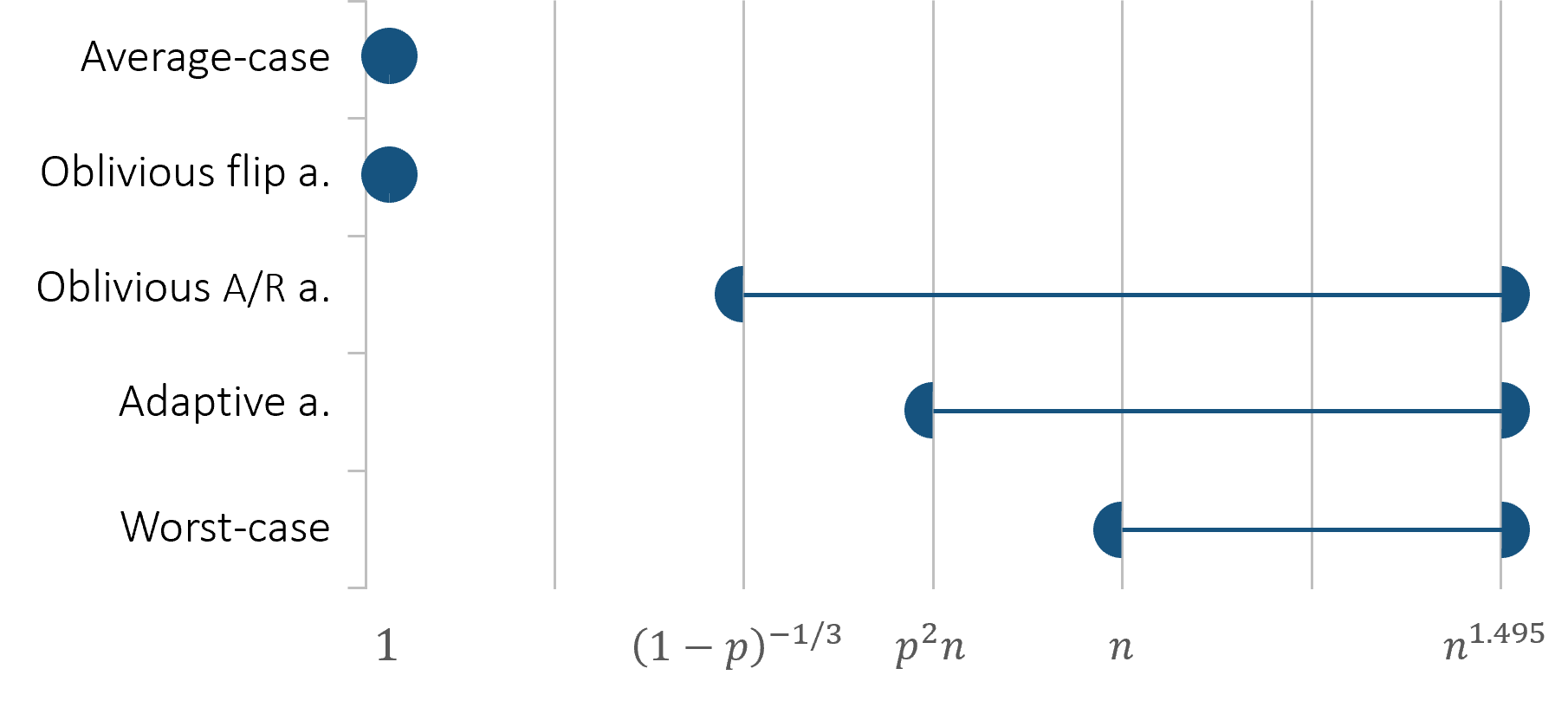}
		\caption{Bipartite perfect matching}
		\label{fig:bipartite perfect matching}
	\end{subfigure}
	\caption{
    The asymptotic update time of two problems under different input models, exemplifying the separations depicted in \cref{fig:relations between models}.
    Separations between smoothed and non-smoothed models arise from \cref{fig:subgraph counting} using, e.g., $p = 1/\sqrt{n}$. \cref{fig:bipartite perfect matching} separates the different smoothed adversaries: adaptive adversary can force higher update time for any $p = \omega(1/\sqrt{n})$, but the two oblivious adversaries are only separated for high smoothing parameter ($p = 1-1/\poly(n)$).}
	\label{fig:complexities separating models}
\end{figure}

\subsubsection{Counting small subgraphs}
Four subgraph counting problems were shown to have constant average-case update time~\cite{HLS22}, but $\Omega\left(n^{1-\eps}\right)$ worst-case update time.
These problems are counting \paths st3, \paths st4, cycles of length $3$ through a fixed node~$s$ ($s$~triangles), and cycles of length $4$ through a fixed node~$s$ ($s$~$4$-cycles).
We establish that their $p$-smoothed update time is essentially identical across all three smoothed models (\cref{fig:subgraph counting}). 
Specifically, in
lemmas~\ref{lem:counting st 3paths},
\ref{lem:counting st 4paths},
\ref{lem:counting s triangles} and 
\ref{lem:counting s 4cycles}
we present simple algorithms (a la~\cite{HLS22}) that achieve $O(pn)$ expected update time even with an adaptive adversary (the strongest adversary).
Roughly speaking, a change to an edge that touch $s$ (or $t$) is ``expensive'' --- it affect many short paths starting from $s$ (or $t$) and thus results in an $O(n)$ update time; other changes take only $O(1)$ time to process.
Expensive changes occur at random only $1/n$ of the time, but an adversary (of any type) can perform only them, resulting in probability $p + 1/n$ for an $O(n)$-update time change, and an $O(pn+1)$ expected update time overall
(which translates to the same amortized update time).

\begin{atheorem}[informal]
	\label{athm:small subgraphs-alg}
There is an algorithm for counting \textbf{\paths{s}{t}{3}, \paths{s}{t}{4}, $s$ triangles} and \textbf{$s$ $4$-cycles}, with 
	 $\tu=O(pn+1)$ (expected)
	and $\tq=O(1)$ 
	under any $p$-smoothing model.
\end{atheorem}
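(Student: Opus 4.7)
The plan is to design, for each of the four problems, a data structure that answers queries in $O(1)$ by reporting a maintained count, and that splits updates into two regimes: a ``cheap'' update (on an edge not incident to $s$ or $t$) takes $O(1)$ time, while an ``expensive'' update (on an edge incident to $s$ or $t$) takes $O(n)$ time. Once such data structures are in place, the claimed bound follows from a short probabilistic argument on how often expensive rounds occur in a $p$-smoothed stream.

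To illustrate the data structure, consider \paths{s}{t}{3}: maintain the number of pairs $(u,v)$ with $u,v\notin\{s,t\}$ and $(s,u),(u,v),(v,t)\in E$. A flip of $(u,v)$ with $u,v\notin\{s,t\}$ changes this count by $\pm\per{\ind{(s,u)\in E}\ind{(v,t)\in E}+\ind{(s,v)\in E}\ind{(u,t)\in E}}$, computable in $O(1)$, while a flip of $(s,u)$ changes it by $\pm\card{\{v\in N(u)\setminus\{s,t\}:(v,t)\in E\}}$, computable in $O(n)$ by scanning $N(u)$ (and symmetrically for $(t,v)$). Analogous designs work for \paths{s}{t}{4}, $s$-triangles, and $s$ $4$-cycles, mirroring the average-case algorithms of~\cite{HLS22}: in each case one maintains a constant number of counters that require $O(1)$ work after a flip disjoint from $\{s,t\}$, and a linear scan when the flipped edge is incident to $s$ (or $t$). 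If the model presents updates as add/remove rather than flips, a constant-time status check converts no-ops into free operations.

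For the probabilistic analysis, under any of the three $p$-smoothing models (oblivious flip, oblivious add/remove, or adaptive), each round's actual edge is either the adversary's choice (with probability $p$) or a uniformly random pair from $\binom{[n]}{2}$ (with probability $1-p$). A uniform random pair touches $\{s,t\}$ with probability $O(1/n)$, so even if the adversary always picks an incident edge, the conditional probability that a given round is expensive is at most $p+O(1/n)$. Hence the expected per-round cost is bounded by $\per{p+O(1/n)}\cdot O(n)+O(1)=O(pn+1)$. This bound holds conditionally on the entire history, so it applies unchanged against the adaptive adversary, and summing over rounds yields the same amortized guarantee.

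The main obstacle is verifying the cheap/expensive dichotomy for the more intricate subgraphs: one must check that flipping an edge $(u,v)$ with $u,v\notin\{s,t\}$ affects only $O(1)$ tracked quantities, even when the subgraph passes through several intermediate nodes as in \paths{s}{t}{4} and $s$ $4$-cycles. For \paths{s}{t}{3} and $s$-triangles this is immediate. Once the invariants are established, the remaining argument is the elementary noise calculation above, reused uniformly across all three smoothing models.
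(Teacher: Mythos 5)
Your high-level strategy is exactly the paper's: maintain counters that update in $O(1)$ for edge changes disjoint from $\{s,t\}$ and in $O(n)$ for changes incident to $s$ or $t$, then observe that an expensive round occurs with probability at most $p+O(1/n)$ regardless of the adversary's history, giving expected per-round cost $O(pn+1)$. Your noise calculation, the uniformity across the three smoothing models, and the remark that add/remove operations can be reduced to flips by a constant-time status check are all correct and essentially verbatim the argument in the paper.

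The gap you flag as the ``main obstacle'' is, however, precisely where the paper's proof puts in its real work, and it is not quite as routine as the ``mirroring \cite{HLS22}'' phrase suggests. For \paths{s}{t}{4} and $s$ $4$-cycles one cannot maintain the target count directly with $O(1)$ work per cheap flip; the paper introduces an auxiliary dynamic family $\{c_{s2u}\}_u$ (and $\{c_{t2u}\}_u$) of $s$-$u$ $2$-path counters (Lemma~\ref{lem: s2u alg}), and a cheap flip $(u,v)$ touches only $O(1)$ of these. More substantively, directly composing these counters as $\sum_u c_{s2u}c_{t2u}$ overcounts non-simple paths (e.g.\ $(s,v,u,v,t)$), and the paper explicitly notes that this subtlety was overlooked in earlier work; the proof of Lemma~\ref{lem:counting st 4paths} does a careful case analysis to subtract degenerate contributions without blowing up the update cost. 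Your proposal asserts the cheap/expensive dichotomy ``can be checked'' but neither exhibits the auxiliary-counter design nor addresses the non-simple-path corrections, so as written the data-structure half of the theorem is unproven for $k=4$.
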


Substituting $p=1$ in the theorem gives a worst-case algorithm with $O(n)$ update time.
This improves upon the state of the art for \paths st4~\cite{DemetrescuI04} and $s$ 4-cycles~\cite{HanauerHH22}, (corollaries~\ref{cor:counting st 4paths worst-case} and~\ref{cor:counting s 4cycles worst-case}).

In theorems~\ref{thm:lb for counting st3 paths} and~\ref{thm:lb for counting st4 paths s3 and s4 cycles} we complement these upper bounds by almost-matching conditional lower bounds on the update time.
Specifically, we show an $\Omega\left(p \cdot  n^{1-\eps}\right)$ lower bound on the update time, for any $\eps>0$, that applies even to an oblivious flip adversary (the weakest one).
This is achieved by a series of reductions, starting from the \omv conjecture.
This part is the most technically involved in this work, and we discuss it further in \cref{sec:intro-technical}.

\begin{atheorem}[informal]
	\label{athm:small subgraphs-lb}
	Any algorithm for the problems from \cref{athm:small subgraphs-alg} must satisfy 
    \[
	\begin{aligned}
        \tp(n) + \frac{n}{p} \cdot \tu(n) + n \cdot \tq(n) + \frac{n^2}{p}= \tilde{\Omega} \left(n^{3-\eps}\right),
    \end{aligned}
	\]
    for any $\eps > 0$,
    under any $p$-smoothing model,
    unless the \omv conjecture fails.
    Specifically, no algorithm 
    can simultaneously have
    $\tp=O(n^{3-\epsilon})$,
    $\tu=O(pn^{1-\epsilon})$,
    and
    $\tq=O(n^{2-\epsilon})$.
\end{atheorem}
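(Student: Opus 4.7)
The plan is to reduce from the \omv conjecture, lifting the worst-case reduction of \cite{HenzingerKNS15} from \omv to \paths{s}{t}{3} into the smoothed setting. Since the oblivious flip adversary is the weakest of the three smoothed models, a lower bound there propagates to the oblivious add/remove and adaptive adversaries, so I carry out the argument at that level. I describe the proof for \paths{s}{t}{3}; the counts for \paths{s}{t}{4}, $s$-triangles and $s$ $4$-cycles will follow by substituting the appropriate subgraph gadget into the same skeleton, matching the per-problem lemmas cited in the theorem.

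I first recall the worst-case reduction: an \omv instance with $m\times m$ Boolean matrix $M$ and query vectors $v_1,\ldots,v_m$ is encoded as a tripartite graph on $\Theta(m)$ nodes, whose initial state represents $M$, whose $\Theta(m)$ edge changes per round encode each $v_i$, and whose \paths{s}{t}{3} count returns an entry of $M v_i$. This consumes $\Theta(m^2)$ adversarial edge changes and $m$ queries. In the $p$-smoothed model, the adversary runs a dilated schedule of $T=\Theta(m^2/p)$ rounds; by a Chernoff bound, $\Theta(m^2)$ of these rounds are adversarial w.h.p., which the adversary uses to realize the desired edge changes while issuing a query at $m$ prescribed times. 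The technical heart is arguing that random noise does not corrupt the encoding. My plan is to embed the worst-case graph on an active subgraph of $\Theta(m)$ nodes with $m$ chosen small enough that uniform flips rarely hit it --- the expected number of random hits over $T$ rounds is $O(T(m/n)^2)=O(m^4/(pn^2))$. Any such hit will be repaired at the next adversarial slot, consuming only a constant fraction of the budget. Edges outside the active region are arranged (via the placement of $s$ and $t$) so that they cannot participate in any $s$-$t$ $3$-path, keeping the count determined solely by the active region.

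Tallying the resources: the reduction spends $\tp$ on preprocessing, $T=\Theta(m^2/p)$ update rounds of cost $\tu$ each, and $m$ queries of cost $\tq$ each, and it must pass through $\Theta(m^2/p)$ wall-clock rounds regardless --- accounting for the additive $n^2/p$ term in the statement. The \omv conjecture forces the total time to be $\tilde\Omega(m^{3-\eps})$, so tuning $m$ appropriately relative to $n$ yields the stated inequality, and substituting the specific parameter regime $\tp=O(n^{3-\eps})$, $\tu=O(pn^{1-\eps})$, $\tq=O(n^{2-\eps})$ then contradicts \omv. The main obstacle will be the noise-correction step: random flips may overwrite adversarially-set edges between two consecutive queries and skew the count. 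I plan to handle this by a scheduling strategy in which the adversary maintains a queue of broken edges and prioritizes repair over schedule advancement at each adversarial slot; a union bound over the $m$ inter-query windows, combined with a Chernoff tail on the number of random hits per window, should certify that with high probability every query observes the intended worst-case graph, preserving correctness and completing the reduction.
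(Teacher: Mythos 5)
Your proposal embeds a worst-case OMv-derived instance on a small active subgraph and has the adversary repair noise hits via a queue. This route cannot prove the stated theorem, for three interlocking reasons.

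\textbf{The repair queue is adaptive, but the theorem claims hardness already for the oblivious flip adversary.} Maintaining and acting on a queue of broken edges requires the adversary to observe which random flips occurred. An oblivious flip adversary, by definition, fixes its entire sequence before the noise is drawn. The embedding-and-repair method is exactly what the paper does in Section~4 (\cref{lem:edge control adaptive} and \cref{thm:lb for many problems-smoothed advers}), and it is explicitly noted there to apply only to the adaptive and oblivious add/remove adversaries. The paper also observes (in \cref{sec:upper-easy-with-flip} and around \cref{claim:oblivious ar vs flip}) that the oblivious flip adversary fundamentally cannot ``fixate'' on edges the way the other two can, so repair-based strategies are unavailable.

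\textbf{Your noise estimate undercounts.} You write that the expected number of random hits is $O(T(m/n)^2)$, counting only node-pairs with both endpoints inside the active region. But for the $s$-$t$ $3$-path count to be determined by the active region, you must also keep the active region isolated from the rest of the graph---otherwise random noise creates edges from $s$ or $t$ to outside nodes and spurious $3$-paths appear. The set of edges to protect is $\hat V \times V$, of size $\Theta(mn)$, so the per-round probability of a hit is $\Theta(m/n)$ and the expected hit count is $\Theta(Tm/n)$. Balancing this against the adversarial budget $pT$ forces $m = O(pn)$, not $\sqrt{p}n$. Plugging $\hat n = \Theta(pn)$ into the embedding calculation yields the update-time threshold $O(p^{2-\epsilon}n^{1-\epsilon})$ (this is \cref{thm:lb for many problems-smoothed advers}), strictly weaker than the $O(pn^{1-\epsilon})$ of \cref{athm:small subgraphs-lb}.

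\textbf{The paper's proof of this theorem (Section~5) takes a completely different route.} It does \emph{not} embed a worst-case instance. Instead it fixes a simple oblivious adversarial strategy (always flip a uniformly random $sA$ or $Bt$ edge), reduces from average-case parity \oumv (\cref{lem:conjecture_to_ac_oumv}), and \emph{simulates} the $p$-smoothed change sequence via Poissonization. The reduction generates three correlated $P_3$-partite sequences whose $AB$-edge changes cancel modulo~2 (\cref{lem:ac_oumv_to_restricted_graphs}), then uses $16$ correlated copies to pass from $P_3$-partite to general graphs (\cref{lem:restricted_graphs_to_general_graphs}). Hardness arises because the algorithm cannot distinguish adversarial changes to expensive ($sA$/$Bt$) edges from random ones, not because the adversary repairs a controlled region. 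Your proposal does not contain the key ideas (correlated copies, Poissonized conditional sampling, inclusion-exclusion over exterior edge types) needed to reach the claimed $n/p$ multiplicative factor on $\tu$.
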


This completes the picture for \paths{s}{t}{k} and the transition between $k=2$ and $k=5$
(see \cref{table: short paths counting}), showing how the complexity escalates with both the path length and the degree of adversarial control (parameter~$p$).

\subsubsection{Hierarchy between smoothing models}
\label{sec:hierarchy-separations}

While all our models coincide with the worst-case model at $p = 1$ and an average-case input at $p = 0$, 
in \cref{sec:adver} we show 
a hierarchy among the smoothing models for different regimes of $0 < p < 1$.
The adaptive adversary
can be defined with either operation type, 
and can disregard the additional information it is privy to, making it at least as strong as both its oblivious counterparts.
The oblivious add/remove adversary can simulate the oblivious flip adversary (while using $p' = p/(2-p)\in [p/2, p]$, as stated in \cref{claim:oblivious ar vs flip}).
This induces a (non-strict) hierarchy between the models (\cref{fig:relations between models}). We next discuss separation results for these models.

For some problems, such as counting small subgraphs, the three smoothing models coincide with one another (\cref{fig:subgraph counting}): they all admit 
an $O(pn)$ algorithm (\cref{athm:small subgraphs-alg})
and a conditional $\Omega(pn^{1-\epsilon})$ lower bound (\cref{athm:small subgraphs-lb}).
These bounds separate the smoothing models from the worst case for $0\leq p<n^{-\delta}$ and from the average case 
for $n^{-\delta}<p\leq1$, for any constant $\delta > 0$.

For other problems the three smoothing models demonstrate separations, i.e., different complexities for the same problem.
This is the case with perfect bipartite matching (\cref{fig:bipartite perfect matching}):
under an oblivious flip adversary with $p\in[0,1-\frac{26\log n}{n})$, the problem admits a trivial ``yes'' algorithm since the graph contains enough random edges at all times (\cref{athm:decision problems-alg}).
In contrast, an adaptive adversary can embed a worst-case lower bound construction within a small set of nodes, leading to a polynomial lower bound of $\Omega(p^{2-\epsilon}n^{1-\epsilon})$ for the problem (\cref{athm:decision problems-lb}).
This separates the oblivious flip adversary from the adaptive one for almost any value of $p$, e.g., for $p\in \big[\frac{\log n}{\sqrt n},1-\frac{26\log n}{n}\big)$.

The more surprising separation is between the 
oblivious flip adversary and the oblivious add/remove one, 
as operation types have no effect on the complexity in all other regimes (worst-case, average-case, and adaptive smoothed model).
Using our embedding technique,  \cref{athm:decision problems-lb} gives a super-constant lower bound on the update time of bipartite perfect matching when $p \approx 1 - o(1)$. This separates the two oblivious models,
e.g.,
when $1 - p\in \big(\frac{26\log n}{n},\frac{1}{\log n}\big]$.

Finally, we conjecture that the oblivious add/remove adversary is strictly weaker than the adaptive adversary, but this is left unresolved for the moment.

\subsection{A technical overview of the subgraph counting lower bounds}
\label{sec:intro-technical}

The most technically involved proof is the lower bounds for counting small subgraphs, even for the oblivious flip adversary, which we overview in this section.
The general outline is based on that of lower bounds for counting larger subgraphs in the average case~\cite{HLS22}, consisting of a chain of fine-grained reductions.
Evidently, these tools alone do not suffice for the small subgraphs we consider here, as these are easy to count in the average case. 
Our hardness results utilize an \emph{adversarial strategy}, and in its core lies a reduction that generates a sequence of changes resembling certain $p$-smoothed sequences of changes, pinpointing the dependency on~$p$.
To quantify this resemblance we use the standard notion of statistical distance between distributions (or random variables). 
In our arguments we repeatedly use the \emph{Poissonization} technique, described below, to break dependencies between key random variables. 
This technique is well-suited for our smoothed models, and we believe it should prove useful for other smoothed models of computation as well.

\paragraph{Poisson distribution.}
We recall the \emph{Poisson distribution} $\poisson{\lambda}$ is supported on $\N_{\geq 0}$ and parameterized by its mean value $\lambda$ (see~\cite[Section~8.4]{books:probability}).
In a nutshell, 
$\poisson{\lambda}$ is the number of successful coin tosses 
when tossing infinitely many times a coin with infinitesimally small success probability, such that the expected number of successful attempts is~$\lambda$. 
Formally, this is the limit for $n \to \infty$
of the binomial distribution 
$B(n, \lambda/n)$.
Useful properties of this distribution are summarized in \cref{fact:poisson_properties} in \cref{sec:tools_from_probability}.

\paragraph{The Poissonization technique.}
A particularly useful property of the Poisson distribution is known as \emph{Poissonization} (\cref{fact:item:poissonization} of \cref{fact:poisson_properties}), and we next describe it in our setting.
Consider a distribution~$\mathcal{D}$ over a set $E$ of edges.
A multi-set of edges sampled from~$\mathcal{D}$ can be described by its histogram $\mathcal{H}:E\to \N_{\geq 0}$, where $\mathcal{H}(e_i)$ represents the number of times the edge~$e_i\in E$ is sampled.
When drawing $r$ such samples, the histogram values $\mathcal{H}(e_1), \mathcal{H}(e_2), \dots$ exhibit a slight dependency among them, merely due to the fact that their sum is fixed to $r$.
This dependency poses complications in probabilistic and information-theoretic analysis, especially
if additional constraints on the histogram values exist.
To address this, we independently sample the value $\mathcal{H}(e_i)$ for each edge from $\poisson{r\cdot \mathcal{D}(e_i)}$, 
where $\mathcal{D}(e_i)$ represents the probability of $e_i$ in $\mathcal{D}$.
This yields a total of $t = \poisson{r}$ samples, which might not be precisely $r$ samples,
but remains well-concentrated around~$r$
and even allows sampling conditioned on individual histogram values having e.g., a desired parity.

We apply Poissonization in several ways in our proofs, most prominently in Step~$2$ below.
This helps to simplify and clear up some hardness results of~\cite{HLS22} and correct inaccuracies.
More crucially, Poissonization helps us overcome the additional challenges that arise due to the adversarial presence.

\medskip
Sections~\ref{sec:lb:ac_oumv_to_restricted_graphs},~\ref
{sec:lb:restricted_graphs_to_general_graphs}
and~\ref{sec:lb:wrapping_things_up} 
together prove a lower bound for counting \paths st3, which is extended to other problems in \cref{sec:lb_other_small_graphs}.
The bounds are conditioned on the \omv conjecture~\cite{HenzingerKNS15}, which formalizes the common belief that a truly sub-cubic combinatorial algorithm for matrix multiplication does not exist.

\paragraph{Step 1: massaging the \omv conjecture}~\\
The first step reduces between online algebraic matrix-vector multiplication problems. 
It consists of an algebraic reduction that was presented in~\cite{HenzingerKNS15} and is standard by now, 
and two reductions that were presented in~\cite{HLS22}.
The latter are reviewed in \cref{app: from omv to parity oumv}.

\begin{definition}[\omv problem]
	The \emph{\omv} problem is the following online problem. Fix an integer~$n$. 
	The initial inputs are a Boolean $n$-vector $v_0$ and a Boolean $n\times n$ matrix $M$.
	Then, Boolean vectors~$v_i$ of dimension $n$ arrive online, for $i=1,\dots, n$.
	An algorithm solves the \omv problem correctly if it outputs the Boolean vector $Mv_i$ before the arrival of $v_{i+1}$, for $i = 0,1,\dots, n-1$, and outputs~$M v_n$.
\end{definition}

\begin{restatable}%
	[\omv conjecture~{\cite[Conjecture 1.1]{HenzingerKNS15}}]%
	{conjecture}{omvconj}%
	\label{conj:omv}
	For any constant $\eps > 0$, there is no $O\left(n
	^{3-\eps}\right)$-time algorithm that solves
	\omv with error probability at most $1/3$ in the word-RAM model with $O\left(\log n\right)$-bit words.
\end{restatable}

After a standard reduction from the \omv problem to the \oumv problem, two other reductions are made.
The first reduces to \oumv with operations modulo $2$ (that is, over $\F_2$ instead of $\R$), as defined below.
The second reduction is a typical worst-case to average-case reduction over $\F_2$, showing the problem is as hard on random inputs. 
We define the \oumv problem and state the reduction to conclude this step.

\begin{restatable}[parity \oumv problem]{definition}{DefParityOUMV}
	The \emph{parity \oumv} problem is the following online problem. Fix an integer $n$. The initial inputs are two Boolean $n$-vectors $u_0, v_0$, and a Boolean $n\times n$ matrix $M$.
	Then, pairs of Boolean vectors $(u_i, v_i)$ of dimension $n$ arrive online, for $i=1,\dots, n$.
	An algorithm solves the \emph{parity \oumv} problem correctly if it outputs the Boolean value $u_i^T M v_i$ (over $\F_2$) before the arrival of $(u_{i+1},v_{i+1})$, for $i = 0,1,\dots, n-1$, and outputs $u_n^T M v_n$.
	
	An algorithm solves the \emph{average-case parity \oumv} problem if it answers correctly, with error probability at most $1/20$, over inputs that are chosen independent and uniformly at random (that is, all entries of the matrix $M$ and the vectors $u_i, v_i$ are i.i.d.\ random bits).
\end{restatable}

The sequence of reductions concludes in the following lemma.

\begin{restatable}[\hspace{1sp}{\cite[Section 2]{HLS22}}]{lemma}{lemavgparoumv}
\label{lem:conjecture_to_ac_oumv}
    For any constant $\eps > 0$, there is no $O\left(n
	^{3-\eps}\right)$-time algorithm that solves
	the average-case parity \oumv problem, unless the \omv conjecture fails.
\end{restatable}

\begin{remark}
    At this point,~\cite[Section~2.1]{HLS22} reduces the former problem to a newly defined algebraic online problem, the \emph{biased average-case \oumv problem}.
    Unfortunately, this leads to some inconsistencies in their proof (e.g., the statements regarding this online problem use the notions of update and query times, which are irrelevant to online problems).
    To circumvent this,
    our argument diverges from theirs, directly reducing the parity \oumv problem to a dynamic graph problem.
\end{remark}

\paragraph{Step 2: from \oumv to counting \paths{s}{t}{3} in restricted graphs}~\\
In \cref{sec:lb:ac_oumv_to_restricted_graphs},
we show that average-case parity \oumv can be solved using an algorithm that counts \paths{s}{t}{3} in $P_3$-partite $p$-smoothed dynamic graphs
with an oblivious flip adversary.

A $P_3$-partite graph is composed of $n'=2n+2$ nodes partitioned as 
$V = \set{s} \sqcup A \sqcup B \sqcup \set{t}$, where $\card{A} = \card{B} = n$,
and can only have edges of types 
$sA, AB, Bt$
(that is, $E\subseteq (\{s\}\times A)\cup(A\times B)\cup(B\times\{t\})$).
To represent the \oumv problem using this graph,
consider $u,v$ and $M$ as the indicators of the $sA, Bt$ and $AB$ edges, respectively (e.g., $(s,A_j)\in E\iff u(j)=1$).
It is not hard to see that the product $u^TMv$ is exactly the number of \paths st3.

Assume we have a $P_3$-partite graph that represents $M$ and $(u_{i-1},v_{i-1})$ as above,
and we get the next pair $(u_i,v_i)$
and want to update the graph accordingly.
In worst-case analysis, such reduction can use adversarial choices: flip the $sA$ and $Bt$ edges to represent $(u_i,v_i)$, leave the $AB$ edges intact,
and then query the number of \paths st3 to retrieve $u_i^TMv_i$.
When random changes are involved in the process, however, undesired changes to $AB$ edges are bound to occur. 
Moreover, such changes are \emph{more likely} than others, as there are $n^2$ potential edges of types $AB$ and only $2n$ potential edges of the other types ($sA$ and $Bt$).
Next, we describe how to overcome this challenge.

For the mere problem of having \emph{any} changes in $AB$ edges, a standard solution used in~\cite{HLS22} serves here as well: the reduction creates $3$ copies of the graph, 
such that whenever a query is made, each has different $AB$ edges but they have exactly the same $sA$ and $Bt$ edges.
To be precise, the $AB$ edges in the 3 graphs will represent 3 matrices $M_1,M_2,M_3$ such that $M \sameparity M_1+M_2+M_3$, 
while the $sA$ and $Bt$ edges in all the graphs represent $u_i$ and $v_i$.
This assures that the sum modulo~2 of the number of \paths st3 in the three graphs gives $u_i^TMv_i$ with the original matrix $M$:
\begin{equation}
    \label{eq:intro:3_matrices}
    u_i^T M v_i \sameparity u_i^T M_1 v_i + u_i^T M_2 v_i + u_i^T M_3 v_i .    
\end{equation}

The harder challenge is to synthesize $3$ such correlated change sequences that resemble authentic $p$-smoothed change sequences. 
To this end, consider an \emph{oblivious} adversary that chooses an $sA$ or $Bt$ edge u.a.r. 
After $p$-smoothing, each change has the following distribution over the graph edges
\[
\begin{aligned}
	\advdist^p (e):=
	\begin{cases}
		\frac{p}{2n} + \frac{1-p}{n(n+2)}, & \text{if $e$ is of type $sA$ or $Bt$}\\
		\frac{1-p}{n(n+2)}, & \text{if $e$ is of type $AB$}
	\end{cases}\;.
\end{aligned}
\]

This guarantees the highest chances for a change of type $sA$ and $Bt$ after smoothing, 
which is crucial for avoiding redundancies in the reduction and obtaining a tight lower bound.
Moreover, since all edge changes are drawn from $\advdist^p$, we can simulate a sequence by generating a multi-set of edges and later randomize the order in which they are changed.
We create a sequence of roughly $\roundnumber = 5 n \log(n) / p$ changes using Poissonization.
This breaks dependencies and allows us to construct a multi-set of samples by drawing the number of appearances of each edge independently.

\subparagraph{Edges of types $sA$ and $Bt$.}
Recall that changes to $sA$ and $Bt$ are shared by all $3$ copies; hence, we create one multi-set of such changes, $S_0$. 
Let~$\udif=u_i-u_{i-1}$ and $\vdif=v_i-v_{i-1}$ be the difference vectors, and denote by $z_e$ the number of times the $sA$ edge $e=(s,a_j)$ is added to~$S_0$. 
Our goal is to ensure that for each such $sA$ edge, $z_e \sameparity \udif(j)$ (and similarly for $Bt$ edges and $\vdif$).
Naively using
rejection sampling (resampling the multi-set $S_0$ until all conditions are met) would take $2^{\Omega(n)}$ attempts.
Instead, we use Poissonization:
for each $sA$ edge $e$, draw a Poisson number of samples $z_e$ independently,
conditioned on the desired parity, i.e.,  $z_e \sim \poisson{\advdist^p (e)\cdot \roundnumber} \vert_{z_e \sameparity \udif(e)}$.
Rejection sampling of each $z_e$ value separately only requires $\Theta(n)$ attempts overall.

\subparagraph{Edges of type $AB$.}
Drawing $z_e$ for each $AB$ edge $e$ in a similar way would require $\Omega\left(n^2\right)$~time, which is too costly for this fine-grained reduction. 
However, only $O(t) = o(n^2)$ changes of type $AB$ are expected, so we instead draw the total  number of $AB$ changes from a Poisson distribution with parameter $t_{AB} = (1-p)nt /(n+2)$, and then assign an $AB$ edge u.a.r.\ for each of these changes.
In actuality, we wish to create $3$ correlated copies, where $AB$ edges cancel out. 
To this end, we draw three multi-sets $S_1, S_2, S_3$, each composed of $\poisson{t_{AB}/2}$ edges of type $AB$, where each edge is chosen u.a.r.
Each copy uses exactly two of the sets $S_1, S_2, S_3$, such that each set is used twice. This ensures that~\eqref{eq:intro:3_matrices} holds.

Adding $S_0$ to the multi-set of each copy and shuffling each of them separately produces
$3$ sequences of edge changes
that are statistically close to a sequence of $\poisson{t}$ edge changes drawn from $\advdist^p$.
Bounding the statistical distance limits the additional error from the imperfect simulation, guaranteeing the error probability on the output $u_i^T M v_i$ is only slightly higher than the error probability of $3$ executions of the \paths st3 counting algorithm on genuine $p$-smoothed inputs.

\begin{remark}
The simulation above is perhaps the crux of the reduction, generating inputs for the dynamic graph algorithm from inputs to the \oumv problem.
We note that the simulation used in~\cite{HLS22} was much simpler and easily shown to be efficient, but its proof of correctness was incomplete. 
One place that is lacking is a false proof for Claim C.2. While this is fixable for their specific case, it stems in an inescapable subtlety of handling dependencies. 
Our use of Poissonization in both the simulation and the analysis tackles these subtleties and formally prove our claims, as well as theirs.
\end{remark}

\paragraph{Step 3: from restricted graphs to general graphs}~\\
We continue in \cref{sec:lb:restricted_graphs_to_general_graphs}, by showing how to count \paths st3 in the restricted class of $P_3$-partite graphs using an algorithm that counts such paths in general graphs.
This reduction uses \emph{inclusion-edgeclusion}: it takes a $P_3$-partite graph $\calG$, and constructs $16$ (general) graphs by adding edges that were previously not allowed (\emph{exterior} edges). The same approach was taken in the average-case analysis, but smoothed inputs require more attention due to the adversarial presence.

To account for the adversarial presence, the transition from a $p$-smoothed $P_3$-partite graph to $p'$-smoothed graphs must use $p' \leq p$ that sufficiently weakens the adversary.
Still, this can be done using $p' \in [p/2, p]$, 
which is crucial for a lossless reduction leading to (nearly) tight lower bounds.

The $16$ new dynamic graphs all use the original node set
$V = \set{s} \sqcup A \sqcup B \sqcup \set{t}$, and initially all have the same interior edges as the $P_3$-partite graph.
Their initial exterior edges are correlated in the following sense: they \emph{properly partition} the four exterior edge sets $E_{At}, E_{AA}, E_{BB}, E_{sB}$.
That is, there exists four partitions, one for each edge set, such that the exterior edges of each of the~$16$ initial graphs 
constitute a unique combination of the parts, one from each edge set.

We next describe how to construct a single change, using a parameter $\alpha(n,p) \approx \frac{1}{2-p}\in[1/2,1]$.
We flip a coin $C\sim Bin(\alpha)$; 
if $C = 1$ we make an interior change, the next change in the $p$-smoothed sequence of $\calG$, 
while if $C = 0$ we change a random exterior edge.
Choosing $\alpha$ carefully and $p' = \alpha \cdot p$ guarantees that a sequence of such changes is $p'$-smoothed on a general graph, for an appropriate adversarial strategy. We apply this sequence of changes to each of the $16$ graphs.

Since the graphs undergo the exact same changes, they properly partition the four exterior edge sets at all times.
This invariant ensures that the total number of exterior \paths st3 (i.e., not of type $sABt$) in all $16$ graphs is always easy to compute.
After querying all graphs for their \paths st3 count, we sum the answers, subtract the exterior paths, and divide by $16$.
This gives the number of $sABt$ paths in $\calG$, since each interior path appears in all graph and was counted $16$ times.

\medskip
To summarize, steps 1-3 dictate a chain of fine-grained reductions as follows.
An algorithm for counting \paths st3 in general graphs can be used to count them in a $P_3$-partite graph, which in turn can be used to solve the average-case parity \oumv problem.
A solution for the average-case parity \oumv problem implies a solution to the \omv problem, contradicting the \omv conjecture.

\subsection{Related work}

\subsubsection{Dynamic graph algorithms}

Dynamic graph algorithms constitute a deep and well-studied research topic, 
as was recently presented in the thorough survey of Hanauer, Henzinger and Schulz~\cite{HanauerHS22}.
The most recent dynamic algorithm for connectivity is by
Huang, Huang, Kopelowitz, Pettie and Thorup~\cite{HuangHKPT23}, and has $\tu=O(\log n \log^2\log n)$ amortized update time and $\tq=O(\log n/\log\log\log n)$ time per query.
Small subgraph counting was recently studied by Hanauer, Henzinger and Hua~\cite{HanauerHH22}, 
who showed, e.g., that counting \paths st3 and $s$-triangles (triangles through a given node $s$) can be done with $\tu=O(n)$ and constant query time 
(see also \cref{sec:poly lbs} below).
Kara, Nikolic, Ngo, Olteanu and Zhang~\cite{KaraNNOZ20} studied databases that support enumeration of triangles in trinary relations,
and their work implies triangle counting in dynamic graphs with $\tp=O(m_0^{3/2}), \tu=O(m^{1/2})$, and~$\tq=O(1)$.

P\u{a}tra\c{s}cu and Demaine~\cite{PatrascuD06} 
proved the first unconditional lower bound for dynamic graph algorithms, focusing on connectivity problems in undirected graphs.
One of their results asserts that there is no dynamic algorithm for deciding whether the graph is connected with a constant query time
and $\tu=o(\log n)$	amortized time
(see~\cref{lem:lb for connectivity-original}),
a results that remains the best known to date.
Follow-up works on their work~\cite{PatrascuT11a,CliffordGL15,LarsenY23},
considered, e.g., the case of parameterized queries (querying whether two given nodes are in the same connected component) and directed graphs.

Following a line of work on conditional lower bounds in dynamic graphs~\cite{AbboudW14,Patrascu10,DorHZ00,RodittyZ11,KopelowitzPP16}, which used several different conjectures,
Henzinger, Krinninger, Nanongkai and Saranurak~\cite{HenzingerKNS15} introduced the \omv conjecture as a unified conjecture.
They showed, e.g., that 
$s$-triangle detection,
bipartite matchings
and
bipartite vertex cover cannot be done with
polynomial preprocessing time, 
	$\tu=O(n^{1-\epsilon})$
and
$\tq=O(n^{2-\epsilon})$,
for any $\epsilon>0$ and even amortized and randomized,
unless the \omv conjecture is false.	
The main focus of the works until this point was on worst-case analysis,
and average-case analysis was rarely applied, as described next.

\subsubsection{Beyond worst-case analysis of dynamic graph algorithms}

The first to study average-case complexity of dynamic graph algorithms were 
Nikoletseas, Reif, Spirakis and Yung~\cite{NikoletseasRSY95}.
They consider an initial empty graph, a phase of addition of random edges, and then a phase where random edges are added or remove with equal probabilities. 
In this setting, they present an algorithm for $(u,v)$-connectivity, i.e., where a query includes a pair $(u,v)$ of nodes and has to determine if they are in the same connected component.

Following this, Alberts and Henzinger~\cite{AlbertsH98} 
defined the ``restricted randomness'' model.
To this end, they note that each change consists of a type (add or remove) and a parameter (which edge to add or to remove).
In their model, an adversary chooses a sequence of types (add or remove) for the changes,
and then the parameter (edge) of each change is chosen uniformly at random from the relevant set (existing edges for a remove operation, node-pairs without edges for an add operation).
They show that several problems have better amortized time complexities in the restricted randomness model compared to the best worst-case bounds known at the time (almost no lower bounds where known at the time).
The problems considered include connectivity and its variants (edge and node connectivity), 
maximum cardinality matching, 
minimum spanning forest, and bipartiteness. 

Indeed, one could see this model as an average-case model, but we prefer to use this term only for uniformly random inputs (as in~\cite{NikoletseasRSY95,HLS22}). 
It can also be seen as a smoothed analysis model, though without the ability to parameterize the amount of randomness.

As this model consists in an adversary and a random process, it should not come as a surprise that both an oblivious adversary and an adaptive one can be considered in this setting.
Here, an adversary that chooses the sequence of types in advance is called an oblivious one, and an adversary that can choose each type according to the outcome of the random process so far is an adaptive one.
These variants are not equivalent:
for example, an adaptive adversary can make sure that all the $n$-edge graphs in the sequence are connected (by adding an $n$-th edge to an $(n-1)$-edge graph only if it is connected), while with an oblivious adversary every $n$-edge graph has equal probability, and most of them are not connected.
The distinction between adversaries is not made in~\cite{AlbertsH98}, but the proofs use the fact that each graph has equal probability (conditioned on the number of edges), thus the model there is oblivious.

The average-case complexity of dynamic graph algorithms was disregarded for decades, until recently Henzinger, Lincoln and Saha~\cite{HLS22}
approached it again with new tools.
They studied subgraph counting problems, 
presented some simple algorithms inspired by~\cite{AlbertsH98},
and focused mainly on proving conditional lower bounds.
Their proofs use ideas from fine-grained complexity~\cite{HenzingerKNS15}, and specifically lower bounds for average-case complexity in this setting~\cite{DalirrooyfardLW20,Boix-AdseraBB19}.
For some problems, such as counting \paths st5, they show that the average-case complexity cannot be better than the worst-case one (up to an $n^\epsilon$ factor), while for others, such as 
\paths st3, they prove a big gap exists.
As smoothed analysis lie between the worst case and the average one, we focus the current work only on problems of the last type, where a gap exists.
Some of our lower bound proofs use techniques from~\cite{HLS22};
for example, while they do not give a lower bound for \paths st3 (since it is easy in the average case) we give a lower bound for this problem in the smoothed case
by extending the technique they used for proving a lower bound for \paths st5.

\subsubsection{Smoothed analysis in different models}
Smoothed analysis was introduced by Spielman and Teng~\cite{SpielmanT04} in relation to using pivoting rules in the simplex algorithm. 
Since then, it have received much attention in sequential algorithm design.
In particular interest to us are smoothed analysis results for static graphs~\cite{krivelevich2006smoothed, friedrich2011smoothed,krivelevich2015smoothed} 
and dynamic networks~\cite{DFGN18,MPS20,GMPS21,DFGN22},
which we use as inspiration for our model of noise.

In recent years, smoothed analysis was also applied in settings of a repeated game, where inputs are chosen successively.
These include dynamic networks, population protocols~\cite{SS21} and online learning~\cite{HRS20,HRS21}.
The effect of the adversary's adaptivity was first studied with regard to information spreading in dynamic networks~\cite{MPS20},
which is shown to be polynomially slower when the adversary is adaptive compared to an oblivious adversary.
The adaptivity question was also studied in the context of online learning~\cite{HRS21}, and a handful of problems were shown to have the same smoothed complexity whether the adversary is adaptive or oblivious.
From these, the dynamic networks works are the most relevant to ours, as they defined similar smoothing models. 
Their results, however, focus on round complexity and not computation time, and thus are incomparable with ours.

\section{Preliminaries}
\label{sec:prelim}
We use $[n]$ for the set $\set{1,2,\dots,n}$, and $\symdif$ to denote the symmetric difference between two sets. $T$ denotes the number of edge changes in the dynamic graph.

We extend the definition of a dynamic graph to the case where some updates are chosen at random while others are adversarial. While we explore variants on this model, the basic definition is for the \emph{adaptive flip adversary}:

\begin{definition}[$p$-smoothed dynamic graph, adaptive]
    A $p$-smoothed dynamic graph with an adaptive adversary is a dynamic graph $\calG = (G_0,\dots,G_T)$ created by the following random process.
    \begin{itemize}
        \item Initially, $H_0$ is proposed by an adversary. 
        For each node-pair $e\in \binom{V}{2}$, 
        with probability $p$ it is included in $G_0$ according to $H_0$, and
        with probability $1-p$ it is re-sampled uniformly, i.e., included in $G_0$ as an edge with probability $1/2$.
        
        \item At step $i\in [T]$,
        a node-pair $e\in \binom{V}{2}$
        is proposed by the adversary.
        With probability~$p$ this $e$ remains unchanged, and with probability~$1-p$ it is discarded and a new  $e$ is chosen uniformly at random from $\binom{V}{2}$.
        The new graph $G_i$ has edge set $E_i = E_{i-1} \symdif \set{e}$.
    \end{itemize}
\end{definition}

We also consider \emph{oblivious} adversaries, that fix all their choices ahead of time, but then in execution each choice is replaced by a random choice with probability $1-p$. In this model we allow for either a flip adversary or an add/remove adversary:

\begin{definition}[$p$-smoothed dynamic graph, oblivious]
    A $p$-smoothed dynamic graph with an oblivious adversary
    is a dynamic graph $\calG = (G_0,\dots,G_T)$ created by the following random process.
    \begin{itemize}
        \item An adversary proposes 
        an initial graph $H_0$ and a sequence of $T$ changes, where each change consists of a node-pair $e_i\in \binom{V}{2}$.     
        An add/remove adversary also chooses an action $a_i$: add or remove.

        \item 
        For each node-pair $e\in \binom{V}{2}$, 
        with probability $p$ it is included in $G_0$ according to $H_0$, and 
         with probability $1-p$ it is re-sampled uniformly, i.e., included in $G_0$ as an edge with probability~$1/2$.

        \item At step $i\in [T]$, randomly choose $c_i\sim \bernoulli{p}$ and act accordingly:
        \begin{itemize}
            \item if $c_i=1$: use the edge $e_i$.
            For a flip adversary, flip it, and the new edge set is $E_i = E_{i-1} \symdif \set{e_i}$.
            For an add/remove adversary use action $a_i$ and accordingly the resulting edge set is either $E_i = E_{i-1} \setminus \set{e_i}$ or $E_i = E_{i-1} \cup \set{e_i}$.

            \item if $c_i=0$: an edge $e$ is chosen uniformly at random from $\binom{V}{2}$, and the new graph $G_i$ has the edge set $E_i = E_{i-1} \symdif \set{e}$.
        \end{itemize}
    \end{itemize}
\end{definition}

Note that a graph created by this process can also be thought of as a graph created step by step as in the adaptive adversary case, but by an adversary (flip or add/remove) that does not see any of the random choices in the initial graph and previous steps.

\paragraph{The initial graph.}
Smoothing the initial graph is crucial for the model to extrapolate between worst-case and average-case inputs, but it seem to make little difference in all our results.
Our algorithms sometime assume an adversarial initial graph (that is, their analysis does not use the smoothing of the initial graph).
Our lower bounds rely on a uniformly random initial graph $G_0$, using the following simple observation:
\begin{observation}
    \label{observation:initial graph}
    The adversarial strategy that proposes an initial graph $H_0$ uniformly at random results in a smoothed initial graph $G_0$ which is uniformly random as well. 
    This strategy can be employed by any type of adversary and any parameter $0\leq p\leq 1$.
\end{observation}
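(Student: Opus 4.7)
The plan is to verify the claim by a direct, edge-wise computation, leveraging the fact that the smoothing of the initial graph is applied independently to each node-pair. First I would fix an arbitrary node-pair $e\in\binom{V}{2}$ and condition on the two cases of the smoothing rule: with probability $p$ the status of $e$ in $G_0$ is inherited from $H_0$, and with probability $1-p$ it is re-sampled as an independent $\bernoulli{1/2}$. Since the adversary proposes $H_0$ uniformly at random, the indicator that $e\in H_0$ is itself $\bernoulli{1/2}$ and independent of the status of every other node-pair, so in both cases the probability that $e\in G_0$ equals $1/2$; by the law of total probability, $\Pr[e\in G_0]=p\cdot \tfrac12+(1-p)\cdot\tfrac12=\tfrac12$.

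Next I would argue independence across edges. The smoothing coins (whether to copy $H_0$ or re-sample) are drawn independently for each node-pair, the re-sampled bits are drawn independently, and the adversary's uniformly random $H_0$ has independent edge indicators. Since the indicator of $e\in G_0$ is a deterministic function of these three mutually independent sources restricted to $e$, the family $\{\mathbbm{1}[e\in G_0]\}_{e\in\binom{V}{2}}$ consists of independent $\bernoulli{1/2}$ variables, which is exactly the distribution of a uniformly random graph on $V$.

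Finally, I would note that choosing $H_0$ uniformly at random is a legitimate strategy for every adversary type considered: the oblivious adversaries simply sample $H_0$ before execution begins (together with the sequence of changes they commit to), and the adaptive adversary samples $H_0$ at the start of its interaction, prior to observing any random choices of the smoothing process. The parameter $p$ plays no role in this step since the adversary's choice of $H_0$ precedes smoothing. I do not anticipate a real obstacle here; the only subtlety is to be careful that the adversary's freedom to randomize its own proposal is implicit in all three models (it may always ignore whatever internal state it has and flip fair coins), so the strategy is admissible throughout.
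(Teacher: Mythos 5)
Your proof is correct, and since the paper states this as an observation without a written proof, you have simply filled in the (short, natural) argument that the authors evidently had in mind: edge-wise, inherit-with-probability-$p$ and re-sample-with-probability-$1-p$ both give a $\bernoulli{1/2}$ marginal when $H_0$ is uniformly random, and independence across node-pairs follows because the three randomness sources (the adversary's $H_0$, the per-edge keep/resample coin, and the per-edge resampled bit) are all drawn independently per edge. The closing remark that randomizing the proposal is admissible for every adversary type is the right thing to check and matches how the paper applies the observation.
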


\paragraph{Restricted dynamic graphs.}
In ~\cref{sec:lb small subgraphs}, we consider a variant where the graph edges are restricted to a set $R_H \subseteq \binom{V}{2}$ of potential edges.
In this model \emph{all changes}, both random and adversarial, as well as the initial graph $G_0$, are on $R_H$ instead of the entire set $\binom{V}{2}$.

We particularly focus on $(H,\Pi)$-partite graphs, defined as follows.
\begin{definition}[$(H,\Pi)$-partite graph]
    Fix a simple graph $H$ on node set $[k]$, and consider a graph $G$ on node set $[n]$.
    We say that $G$ is an $(H,\Pi)$-partite graph with a mapping $\Pi:[n]\to[k]$ if any edge in $G$ is mapped to an edge in $H$, i.e. if $(u,v)\in E_G$ then $(\Pi(u),\Pi(v)) \in E_H$.

    If all the graphs $G$ of a dynamic graph $\calG$ on $[n]$ are $(H,\Pi)$-partite, we say $\calG$ is $(H,\Pi)$-partite.
    In this case we use $R_H$ to denote the set of edges that are allowed in $\calG$, i.e. 
    \[
        R_H := \set{(u,v) \mid (\Pi(u),\Pi(v)) \in E_H} .
    \]
\end{definition}
For example, if $H$ is composed of two nodes and a single edge and $G$ is a $(H,\Pi)$-partite graph for some $\Pi$, 
then $G$ is bipartite.
In some cases we describe $\calG$ on a node set with $k$ parts, calling it $H$-partite (where $H$ has $k$ nodes). 
In these cases $\Pi$ and $H$ are implicit for ease of presentation.
For example, in~\cref{sec:lb small subgraphs} we discuss $P_3$-partite graphs where $\calG$ uses node set $V = \set{s} \bigsqcup A \bigsqcup B \bigsqcup \set{t}$. It is then clear from context that $\Pi(A)$ is fixed ($\Pi(B)$ as well), and $H$ is a $3$-path ($P_3$) graph: $\Pi(s) - \Pi(A) - \Pi(B) - \Pi(t)$.

For a partition of the node set of $\calG$, we naturally categorize edges and longer paths using the different parts.
For example, in a general graph on $4$ sets of nodes there could be many edge types, but if the graph is $P_3$-partite (with the mapping above) then only edges of types $sA, AB, Bt$ exist and $3$-paths from $s$ to $t$ are all of type $sABt$.

Our proofs use some probabilistic tools, detailed in \cref{sec:tools_from_probability} and shortly described next.
We make an extensive use of the \emph{Poisson distribution} and more importantly the \emph{Poissonization} technique (see, e.g., \cite[Section~8.4]{books:probability}).
We use $\poisson{t}$ for the Poisson distribution with parameter $t$ and sometimes for a random variable drawn from this distribution.
Some useful properties of the Poisson distribution are laid out in \cref{fact:poisson_properties}.
For the distance between two distribution $P, Q$ over the same domain we use the well-known \emph{statistical distance} (sometimes called total-variation distance), denoted by $\sd PQ$.
Some useful properties of the statistical distance can be found in \cref{fact:SD_properties}.

\section{The relative power of the adversaries}
\label{sec:adver}

\paragraph{Adaptive vs.\ oblivious adversaries.}
While we assume an adversary is aware of the problem in hand, the smoothing parameter $p$ and other parameters of the problem, 
awareness to the actual noisy changes occurred is a completely different matter
(as is the case in other multi-round smoothing models). 
We show that for some problems, an adaptive adversary can cause higher running times compared to the oblivious adversaries, separating the models. 
Perhaps surprisingly, we show that for counting small subgraphs this is not the case, and adaptivity has little to no affect on the complexity of the problem.

\subsubsection*{Flip vs.~add/remove operations} A priori, it is unclear whether an algorithm that handles edge-flip operations can deal with add/remove operations, or vice versa.
This is typically a non-issue, as the vast majority of research in dynamic graph algorithm sticks to worst-case analysis, where the adversary has full knowledge of the (in)existence of edges and picks additions or removals accordingly.
Slightly more subtle is the case of average-case analysis, where the location of each change is chosen at random. 
The recent work defining average-case analysis of dynamic graphs~\cite{HLS22} assumed an edge chosen at random is always flipped, i.e., an \emph{average-case flip model}.
Let us define the \emph{average-case add/remove model} as one where in each round, both an edge and an operation type are chosen at random 
(this may sound similar to~\cite{AlbertsH98} where the type was chosen by an adversary and the edge at random,
but the models are actually very different).
Unsurprisingly, we show below that the average-case model with flip and with add/remove operations are equivalent, up to a constant factor.

\begin{claim}[Equivalence between flips and add/remove models in average-case analysis]
	\label{claim:random flip vs ar}
	An average-case adversary with flip operations has the same power as an average-case adversary with add/remove operations.
	That is, 
	they have the same success probability up to $1/n^c$ additive error, and the same amortized times up to constant factors, over any sequence of $\Omega(\log n)$ changes.
\end{claim}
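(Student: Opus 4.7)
The plan is to establish a two-way simulation between the average-case flip and average-case add/remove models, showing that each can be efficiently emulated by the other with only polynomially small distortion. The central observation is that for any fixed graph $G$, a uniformly random add/remove operation is a ``no-op'' (adding an existing edge or removing a missing one) with probability exactly $1/2$, and otherwise flips a uniformly random edge of $\binom{[n]}{2}$; conversely, a uniformly random flip is distributed exactly like an add/remove operation conditioned on being non-trivial. Hence a genuine add/remove sequence is, up to relabeling, a sequence of uniform flips interleaved with independent random no-ops, each slot being an effective operation independently with probability $1/2$.

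For the direction ``add/remove algorithm implies flip algorithm,'' I would simulate $A_\mathrm{ar}$ on a flip input by inserting random no-op operations between the input flips. Before each input flip (and before the first one), draw $k_i \sim \mathrm{Geom}(1/2)$ no-ops; each such no-op is a uniformly random edge $e$ together with the type that renders it ineffective (add if $e$ is currently present, remove otherwise). Each input flip is then converted to the add or remove realizing it given the current graph. A per-step marginal check confirms that the resulting sequence is distributed exactly as a genuine average-case add/remove input of random length $T+\sum k_i$, and running $A_\mathrm{ar}$ on it costs $O((T+\sum k_i)\cdot t_\mathrm{ar})$, plus $O(1)$ overhead per step for edge-state tracking.

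For the reverse direction, I would traverse the add/remove input while maintaining an $O(1)$-time edge-state data structure, feeding $A_\mathrm{flip}$ only the effective operations (each converted to the corresponding flip) and ignoring the no-ops. By the central observation, the sub-sequence sent to $A_\mathrm{flip}$ is exactly an i.i.d.\ uniform-flip sequence, and its length is $\mathrm{Bin}(N,1/2)$ when the add/remove input has length $N$. Total time is $O(N) + O(\mathrm{Bin}(N,1/2)\cdot t_\mathrm{flip})$.

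The main obstacle is that in both directions the number of ``productive'' operations is random, so the length of the sequence the inner algorithm processes does not exactly match that of the outer input. Chernoff bounds do the needed work: for sequences of length $\Omega(\log n)$, both $\sum k_i$ in the first direction and the effective-op count in the second direction are within $(1\pm o(1))$ of their means $T$ and $N/2$ respectively, except with probability at most $1/n^c$ for any chosen constant $c$ (by tuning the hidden constant in the length). Absorbing this concentration-failure probability into the error of the inner algorithm yields the claimed $1/n^c$ additive loss in success probability, while the total work stays within a constant factor of $T\cdot t_\mathrm{ar}$ (respectively $N\cdot t_\mathrm{flip}$), giving the claimed constant-factor equivalence of amortized running times.
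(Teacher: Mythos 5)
Your proposal is correct and follows essentially the same approach as the paper: both hinge on the observation that a uniformly random add/remove step is a no-op with probability exactly $1/2$ and otherwise flips a uniform edge (the paper phrases this via an explicit ``lazy flip'' proxy model), both directions use geometric gaps / fair-coin tossing to interconvert between the two encodings, and both invoke Chernoff concentration over the $\Omega(\log n)$-length sequence to bound the additive error and the constant-factor amortized-time blowup. The only (stylistic) difference is that the paper factors the argument through the named lazy-flip intermediate model, whereas you simulate directly between the two endpoints.
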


\begin{proof}
	The key for the proof is translating add/remove changes to flip changes with laziness.
	
	For the sake of the proof, consider a third average-case model, \emph{lazy flip}.
	In each step of this model no change occurs with probability $1/2$, call this `null', 
    and otherwise a uniformly random edge is flipped.

    This lazy flip model is entirely equivalent to the average-case add/remove model\footnote{Note the equivalence can only hold for an add/remove model with equal probabilities. 
    Otherwise, the graph tends to a different density, while a lazy flip tends to $1/2$ with any laziness factor.}.
    Indeed, first choose the edge $e$ at random. 
    In the former model we choose whether to flip it or not, while in the latter we choose whether to try adding or try removing it. 
    In both models the edge is  consequently flipped with probability $1/2$.
	
	We next show equivalence between the lazy and non-lazy flip models.
	Any algorithm for the non-lazy flip model with amortized time $\tu$ can be run on the lazy model, doing nothing when the step is `null',
	and the amortized update time can only decrease (and does, roughly by a factor of $2$). 
	Formally, any sequence of $s$ steps consists of $s' \leq s$ flips. 
	The total computation is at most $\tu \cdot s' \leq \tu \cdot s$, or at most $\tu$ amortized.
	For correctness, note that the distribution of each of the $s'$ changes is uniform over the edges, just as in the non-lazy model, leading to the exact same error probability. 
	
	For the other side, consider an algorithm $\alg{A}$ with update procedure $u(\cdot)$ for the lazy model.
	This algorithm might use the `null' steps to perform computation.
	We can instead use a new algorithm $\alg{A'}$ with the following update procedure $u'(\cdot)$, where a variable $counter$ was initialized to $0$ in pre-processing:
	\begin{itemize}
		\item If the step is `null', increase $counter$ by $1$.
		
		\item If the step is an edge $e$, run $u(\text{null})$ for $counter$ times, then $u(e)$, and set $counter = 0$.
	\end{itemize}
	Evidently, $\alg{A}$ and $\alg{A'}$ do the exact same computations over any lazy input sequence.
	
	Next, consider an algorithm $\alg{B}$ for the non-lazy flip model.
	The algorithm simulates $\alg{A'}$, with update operation $u_{\alg{B}}(e)$:
	\begin{itemize}
		\item Toss random fair coins until the first heads appears, use $t$ for the number of tosses. 
		
		\item Run $u'(e)$ with $couner = t-1$.
	\end{itemize}
	
	\paragraph{Amortized update time.} Consider an execution of $\alg{B}$ on a sequence of $s = \Omega(\log n)$ edge updates, and denote by $T$ the total amount of coin tosses made by $\alg{B}$ in
	$u_{\alg{B}}(e)$ operations.
	The random variable $T$ is simply the amount of fair coin tosses required for $s$ heads.
	By a standard Chernoff bound, since $s = \Omega(\log n)$, then $4s$ fair tosses have at least $s$ heads with probability at least $1 - 1/n$, which gives a lower bound on the probability that $T \leq 4s$.
	Note that each execution of $u'(\cdot)$ with $counter = t-1$ executes $u(\cdot)$ for $t$ times. Therefore, $\alg{B}$ calls $u(\cdot)$ exactly $T$ times over $s$ updates, and its amortized time is of at most $(\tu \cdot T)/s$, and w.h.p. $4 \tu$.
	
	\paragraph{Correctness.} Note that each of the $T$ executions of $u(\cdot)$ gets as input `null' with probability~$1/2$ and a change with probability~$1/2$, so the guaranteed error is kept, but for the added $1/n$ factor for the event $\set{T > 4s}$.
\end{proof}

\subsubsection*{Operation type in smoothed models.}
In both worst-case and average-case analysis the model can be equivalently defined with flip operations or add/remove operations. 
Perhaps surprisingly, it turns out this is not quite the case for the smoothed model.
First, we note that an adaptive adversary, which is fully aware of the current graph, can use either operations type (same as the worst-case adversary). 
This leaves us with a single adaptive smoothed model, where the adversary is at least as strong as the oblivious one (no matter the operation type used).

For oblivious adversaries, which choose their entire input ahead of the execution, we show below that the add/remove adversary is not weaker than the flip adversary 
(i.e., the complexity of any problem in the presence of an add/remove adversary is at least as high as with a flip adversary).
We later show the other direction is false. 
The intuition here
is that an oblivious flip adversary 
quickly loses track of the graph, whereas an oblivious add/remove adversary can control a small part of it by ``insisting'' on some edges begin added or removed.
Note that, by definition, whenever randomness controls a change, it does not draw an action with equal probability but instead always flips this edge. 
This is justified by \cref{claim:random flip vs ar} above: 
smoothing the operation type 
as well as the choice of the edge would have simply resulted in a lazy model that chooses a null operation with probability $p/2 \leq 1/2$, affecting the amortized times by a factor of at most $2$.

We now focus on $p$-smoothed models with oblivious adversaries. 
A strategy similar to the proof of \cref{claim:random flip vs ar} is applied here: 
we define a proxy lazy model, with laziness only over adversarial flip changes, 
and show it is equivalent (up to constants) to the non-lazy flip model.
However, in this setting an add/remove adversary is not forced to choose the operation at random (although it can), making this adversarial model potentially stronger. 
Our results from \cref{sec:upper-easy-with-flip} and \cref{sec:lb with embedding} show that this is indeed the case, separating the models.

Our first result shows that any lower bound shown using a flip oblivious adversary can be translated to a similar lower bound (up to constant factors) using an add/remove oblivious adversary.

\begin{claim}[Add/remove is stronger than flip for oblivious adversary in a $p$-smoothed model]
	\label{claim:oblivious ar vs flip}
    If there exists an algorithm $\alg{A}$ that solves a problem $\mathcal{P}$ with error $\delta$ over a $p$-smoothed sequence of $s$ steps with an oblivious add/remove adversary,
    then there exists an algorithm $\alg{B}$ that solves $\mathcal{P}$ with error probability at most $\delta + 1/n$ over a $p'$-smoothed sequence of $\Theta(s)$ steps with an oblivious flip adversary, for $p'= p/(2-p) \in [p/2,p]$, and provided that $s = \Omega(\log n)$.
\end{claim}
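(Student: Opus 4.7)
The plan is to mimic the proof of \cref{claim:random flip vs ar}, factoring the simulation through an intermediate \emph{oblivious lazy flip} proxy model. This model is identical to the oblivious flip model, except that in each adversarial step (probability $p$), an independent fair coin additionally decides whether to actually flip $e_i$ or to leave the graph unchanged.

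I first show that $\alg{A}$, an algorithm for the $p$-smoothed oblivious add/remove model, yields an algorithm $\alg{A}'$ for the $p$-smoothed oblivious lazy flip model with the same error $\delta$. Given a lazy flip input with adversarial sequence $(e_i)$, $\alg{A}'$ translates each step into an add/remove operation on the same edge and forwards it to $\alg{A}$: a flip of $e_i$ becomes the (unique) action that changes the state of $e_i$, whereas a null step becomes its complement (which has no effect). The main step is verifying that the translated sequence has exactly the distribution of some $p$-smoothed oblivious add/remove input; this is done by comparing against the mixed add/remove adversary that keeps the same $e_i$'s but picks each $a_i$ uniformly at random. A direct case analysis on the current state $X_i$ of $e_i$ shows that the induced action $a_i^{\text{sim}}$ in $\alg{A}'$'s simulation is uniform conditional on $X_i$ (hence independent of it) and that the graph transition coincides in both settings. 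An induction on the rounds then yields full distributional equivalence, so $\alg{A}'$ inherits the error bound $\delta$ by averaging over realizations of the mixed strategy.

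I then construct $\alg{B}$ from $\alg{A}'$ by a geometric-coin expansion, following \cref{claim:random flip vs ar}. For each of its own input steps, a flip of some edge $e$, $\alg{B}$ draws $N_i$ with $\Pr[N_i = k] = (p/2)^k (1 - p/2)$, feeds $N_i$ null steps to $\alg{A}'$, and then feeds the flip $e$. Each step in $\alg{A}'$'s input is thus independently null with probability $p/2$; conditioned on being a flip it is adversarial with probability $p/(2-p) = p'$ (since $\alg{B}$'s input is $p'$-smoothed) and random otherwise. A short calculation confirms this is exactly the distribution of a $p$-smoothed oblivious lazy flip sequence for a suitable adversary (whose edges are those of $\alg{B}$, re-indexed). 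Queries are forwarded to $\alg{A}'$ between each of $\alg{B}$'s input steps and the next batch of null steps; since null steps do not modify the graph, $\alg{A}'$'s answers reflect $\alg{B}$'s current state.

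Finally, the total number of $\alg{A}'$-steps equals $s' + \sum_{i=1}^{s'} N_i$, a sum of $s'$ i.i.d.\ geometric variables with mean $p'$. For $s' = \Omega(\log n)$, a Chernoff-type tail bound shows this quantity lies in $\Theta(s')$ with probability at least $1 - 1/n$; in this event $\alg{A}'$ errs with probability at most $\delta$, giving $\alg{B}$ total error at most $\delta + 1/n$ and degrading the amortized update time by at most a constant factor (about $2/(2-p) \le 2$). The main obstacle is the first reduction: formally verifying that the induced $a_i^{\text{sim}}$'s form an i.i.d.\ uniform sequence independent of the state evolution --- once that distributional equivalence is established, the geometric expansion and the concentration bound are mechanical adaptations of the corresponding arguments in \cref{claim:random flip vs ar}.
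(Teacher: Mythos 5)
Your proposal matches the paper's own proof: both pass through an \emph{oblivious lazy-flip} proxy model (null w.p.\ $p/2$, adversarial flip w.p.\ $p/2$, random flip w.p.\ $1-p$), establish that an oblivious add/remove adversary with uniformly random actions induces exactly this distribution, and then recover the non-lazy flip model by expanding each step with a geometric number of nulls, yielding $p' = p/(2-p)$ and an $O(1/n)$ failure term from a Chernoff bound over $\Omega(\log n)$ steps. Your extra case analysis verifying that the induced action $a_i^{\mathrm{sim}}$ is uniform and independent of the edge state is a welcome explicitation of a step the paper's proof only gestures at.
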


\begin{proof}
    As before, we use a proxy model we call the oblivious \emph{lazy}-flip $p$-smoothed model: every step in the sequence is `null' with probability $p/2$, flips an adversarially chosen edge with probability $p/2$, and flips a uniformly random edge with probability $1-p$.
    This is equivalent to giving the adversary probability $p$, but then with probability $1/2$ nullifying the step it chose.
    
    On the one hand, an oblivious add/remove adversary can simulate \emph{any} oblivious lazy-flip adversary. This is done by copying the strategy to choose $e$, but then choosing to add or remove it with probability $1/2$.
    Doing so, each step consists of a random flip with probability $1-p$, or the adversarial edge $e$ with probability $p$. If an adversarial edge is chosen, it is only flipped with probability $1/2$. This is the exact same distribution as the proxy lazy model above.  

    On the other hand, as before, the laziness can be dealt with, as formalized next.
    Assume an algorithm $\alg{A}$ with update procedure $u(\cdot)$ for the $p$-smoothed model with an oblivious AR adversary.
    In particular the same algorithms works in the lazy $p$-smoothed model with an oblivious flip adversary.
    As before, we define $\alg{A'}$ with update procedure $u'(\cdot)$ that uses a counter of null steps, and execute $u(\cdot)$ multiple time upon the next actual change.
    
    Consider an algorithm $\alg{B}$ for the non-lazy $p'$-smoothed model with flip adversary for $p'$ that we set later.
    The update procedure of $\alg{B}$ does the following on edge $e$:
    \begin{itemize}
        \item Toss random coins with bias $1- p/2$ towards heads, until the first heads appears. Call the number of coins used $t$.

        \item Run $u'(e)$ with plugged in value $counter = t-1$.
    \end{itemize}

    \paragraph{Amortized update time.} Note that the probability for head is $1-p/2 \geq 1/2$, and so for a sequence of length $s = \Omega(\log n)$ changes, the total amount of coin tosses $T$ is at most $4s$ with probability at least $1 - 1/n$.
    Hence, the total time for updates is at most $\tu \cdot T$, and the amortized update time is bounded by $\tu \cdot (T/s)$, and w.h.p. by $4\tu$.
    
    \paragraph{Correctness.} In the sequence of $T$ calls to $u(\cdot)$, each change is null with probability $p/2$, an adversarial flip with probability $p' (1-p/2)$, and a random flip otherwise.
    Choosing $p' = p/(2-p)$ (note that $p' \in [p/2,p]$), the probability for the adversarial flip is $p/2$, meaning $\alg{A}$ only errs with probability $\delta$. Adding the error for the event $\set{T > 4s}$, we get overall error $\delta + 1/n$.
\end{proof}

\section{Upper bounds}
\label{sec:upper}
In this section we present dynamic algorithms for the problems studied in this work. 
We start with proving that some problems become trivial in the presence of input randomness, 
then present algorithms for counting short paths, 
and finally use the latter algorithms to also count cycles.

We use $s$ and $t$ to denote two different fixed nodes (i.e., that are hard-coded in the algorithms), and $u$ and $v$ for arbitrary nodes, or nodes that can come as inputs.
All the paths and cycles we consider are simple. We use $m_0=|E(G_0)|$.

\subsection{Easy problems with an oblivious flip adversary}
\label{sec:upper-easy-with-flip}
This section deals with  properties that hold w.h.p.\ for random graphs, e.g., connectivity.
In essence, we show the randomness in the model guarantees that such properties hold w.h.p.\ at all times, which allows for a trivial algorithm that outputs ``yes''. 
While this is intuitive for $p=0$ (i.e., the average-case model of~\cite{HLS22}), we extend this intuition to almost any value of $p\in[0,1)$.  

Our first claim is applicable to monotone graph properties, and relies solely on the random changes occurred in the initial graph.

\begin{claim}
    \label{claim:oblivious_flip_contains_G_np}
    Fix a monotonely increasing graph property $\mathcal{P}$ and real numbers $\alpha \in(0,1]$ and $q \in [0,1/2]$, such that on an \er graph ~$G \sim G_{n,q}$,
    \[
        \Prob{G \sim G_{n,q}}{G \textnormal{ does not satisfy } \mathcal{P}} \leq \alpha .
    \]
    Then for every smoothing parameter $p \leq 1-2q$ and every round $r \geq 0$ in a $p$-smoothed dynamic graph in the oblivious flip model, it holds that
    \[
        \Prob{}{G_r \textnormal{ does not satisfy }\mathcal{P}} \leq \alpha .
    \]
    
\end{claim}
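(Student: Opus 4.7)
The plan is to prove a stronger edgewise statement: in every round $r$, each potential edge $e$ lies in $G_r$ independently of the others and with probability at least $(1-p)/2 \geq q$. Once this is established, $G_r$ stochastically dominates $G_{n,q}$, and monotonicity of $\mathcal{P}$ yields the bound $\alpha$ after taking expectations.

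First, I would examine the marginals of the initial graph: by the oblivious smoothing rule, for each $e \in \binom{V}{2}$, $\Prob{}{e \in E_0}$ equals $(1+p)/2$ if $e \in E(H_0)$ and $(1-p)/2$ otherwise, and these indicators are mutually independent since each edge of $G_0$ is determined by its own independent coins. The hypothesis $p \leq 1-2q$ gives $(1-p)/2 \geq q$, so every marginal is at least $q$.

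Next, the key step is to handle the subsequent flips. Let $F_r \subseteq \binom{V}{2}$ be the (random) set of edges flipped a net-odd number of times during rounds $1,\dots,r$, so that $E_r = E_0 \symdif F_r$. Because the adversary is oblivious, $F_r$ is a function of the adversary's pre-committed sequence together with the coins that decide adversarial-versus-random at each step and that draw the uniform replacement edges; none of these share randomness with the initialization coins producing $E_0$, so $F_r$ and $E_0$ are independent. Conditioning on $F_r = F$, each $e$ lies in $E_r$ with probability either $\Prob{}{e \in E_0}$ (when $e \notin F$) or $1 - \Prob{}{e \in E_0}$ (when $e \in F$); in both cases the value lies in $\set{(1-p)/2,(1+p)/2}$ and hence is at least $q$, and the indicators $\{\ind{e \in E_r}\}_e$ remain mutually independent. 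Thus, conditionally on $F_r=F$, the graph $G_r$ stochastically dominates $G_{n,q}$ edgewise, so monotonicity of $\mathcal{P}$ yields $\Prob{}{G_r \text{ does not satisfy } \mathcal{P} \mid F_r = F} \leq \alpha$, which is preserved after averaging over $F$.

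The main obstacle is the clean bookkeeping of independence, namely verifying that $F_r$ lives on randomness disjoint from the coins defining $E_0$; this is transparent from the formal definition of the oblivious flip model but is the one place where it matters that the adversary does not observe the random choices. Once this is pinned down, the proof reduces to stochastic domination followed by a direct application of monotonicity, and the argument is uniform in $r$, so no separate reasoning is needed to handle long sequences of updates.
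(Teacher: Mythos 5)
Your proposal is correct and takes essentially the same route as the paper's proof: both decompose $G_r$ as the XOR of the noise-perturbed initial graph with the net flips from the update sequence, invoke obliviousness to establish independence between the two, condition on the latter so that the residual edge marginals are independent and lie in $\{(1-p)/2,(1+p)/2\} \geq q$, and finish by monotonicity (the paper phrases the domination step via coupling against $F_0 \sim G_{n,(1-p)/2}$, while you phrase it directly as stochastic domination over $G_{n,q}$, but this is the same argument).
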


\begin{proof}
    We denote by $C_i$ the random string used for the smoothing process at round $i$, and by $C_0$ for the random string used to smooth the initial graph.
    Initially, the adversary proposes $H_0$, and the smoothing process chooses each node-pair w.p. $1-p$ and then flips it w.p. $1/2$ (equivalent to re-sampling w.p. $1/2$), independently of other edges.
    That is, the randomness $C_0$ samples a graph $F_0 \sim G_{n,(1-p)/2}$ of flips, and changes the initial graph to $G_0 = H_0 \oplus F_0$.

    Next, we fix the (oblivious flip) adversarial strategy. The smoothed sequence of changes $e_1,\dots e_r$ applied on an empty graph results a (random) graph that only depends on $C_1, \dots, C_r$. Denoting this graph by $G'$, we have that 
    \[
        G_r = G_0 \oplus G' = F_0 \oplus \left(H_0 \oplus G'\right) .
    \]
    The crucial point is that the adversarial strategy is chosen in advance, and therefore it does not depend on $F_0$. Furthermore, the random strings $C_1,\dots, C_r$ are independent of $C_0$ and therefore of $F_0$.
    This means the distribution of $G_r$ can be described as follows: first draw $C_1,\dots, C_r$ which combined with the adversarial strategy gives $H_0 \oplus G'$, and then XOR this graph with $F_0 \sim G_{n,(1-p)/2}$.

    For any outcome of $C_1, \dots, C_r$, the graph $H_0 \oplus G'$ is fixed and $G_r$ then depends only on $F_0$: every node-pair $e$ is an edge in $G_r$ independently of the others, w.p. $(1-p)/2$ if $e \notin H_0 \oplus G'$ and w.p. $(1+p)/2$ if $e \in H_0 \oplus G'$.
    Using the inequalities $(1+p)/2 \geq (1-p)/2 \geq q$ and the monotonicity of $\mathcal{P}$, we have the following: 
    \[
    \begin{aligned}
        \Prob{C_0,\dots,C_r}{G_r \textnormal{ does not satisfy }\mathcal{P}}
        &= \Expc{C_1,\dots,C_r}{\Prob{C_0}{G_r \textnormal{ does not satisfy }\mathcal{P}}}\\
        &\leq \Expc{C_1,\dots,C_r}{\Prob{C_0}{F_0 \textnormal{ does not satisfy }\mathcal{P}}}\\
       &= \Prob{C_0}{F_0 \in \mathcal{P}}\\
        &\leq \Prob{G\sim G_{n,q}}{G \textnormal{ does not satisfy }\mathcal{P}} \\
        &\leq \alpha. &&\qedhere
    \end{aligned}
    \]
\end{proof}

We can leverage the claim to get the following result.
\begin{claim}
    \label{claim:ub_oblivious_flip}
    Fix $C > 2$. For any parameter $p\in[0,1-\frac{2C\log n}{n})$, the problems of 
     \textbf{connectivity, bipartite perfect matching, bipartite maximum matching} and \textbf{bipartite minimum vertex cover}
     admit constant update and query time algorithms with an oblivious flip adversary, with probability of at least $1-1/n$ to succeed through all $T$ rounds, provided that $T \leq n^{^{\frac{C-3}{2}}}$ rounds.
\end{claim}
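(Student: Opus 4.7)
The plan is to use the trivial algorithm that ignores all updates and always outputs the positive answer at every query --- ``yes'' for connectivity and for bipartite perfect matching, and size $n/2$ for bipartite maximum matching and bipartite minimum vertex cover. This has $O(1)$ preprocessing, update, and query time by construction, so the only thing to verify is correctness.

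To prove correctness, I would show that at every round $r \leq T$, the smoothed graph $G_r$ satisfies the relevant property with high probability, and then apply a union bound over the $T$ rounds. Each of the four properties is monotonically increasing in the edge set: adding edges preserves connectivity and the existence of a perfect matching, cannot decrease the maximum matching size, and cannot increase the minimum vertex cover. Moreover, by König's theorem on bipartite graphs, having a maximum matching of size $n/2$ and having a minimum vertex cover of size $n/2$ are both equivalent to having a perfect matching, so it suffices to analyze the two properties of connectivity and bipartite perfect matching.

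For both properties, I would invoke standard \ER threshold results: for $q = C\log n / n$, a random graph $G_{n,q}$ (resp.\ a random bipartite graph with two sides of size $n/2$) satisfies the property with failure probability $\alpha(n)$ polynomially small in $n$, the dominant failure mode being the existence of an isolated vertex. The assumption $p < 1 - 2C\log n / n$ gives $(1-p)/2 > C \log n / n$, so \cref{claim:oblivious_flip_contains_G_np} applies with this $q$ and yields $\Pr[G_r \text{ fails the property}] \leq \alpha(n)$ at every round $r$. A union bound over all $T \leq n^{(C-3)/2}$ rounds then produces total failure probability $\leq T \cdot \alpha(n)$, which is $\leq 1/n$ for the stated range of $C$ and $T$.

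The main (minor) technical wrinkle is handling the bipartite properties within the framework of \cref{claim:oblivious_flip_contains_G_np}, whose statement is phrased for properties of graphs on all of $\binom{V}{2}$. Since the random flips in the oblivious flip model are independent across all potential edges, the coupling argument in the proof of \cref{claim:oblivious_flip_contains_G_np} applies verbatim when restricted to cross-bipartition edges, so no essential additional work is required. Tracking constants is the only thing to be careful about, but the polynomial gap between $T$ and $1/\alpha(n)$ gives ample slack.
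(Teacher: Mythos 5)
Your proposal is correct and takes essentially the same approach as the paper: invoke \cref{claim:oblivious_flip_contains_G_np} with $q = C\log n/n$ and the known $G_{n,q}$ threshold bounds for connectivity and bipartite perfect matching, union-bound over the $T$ rounds, answer ``yes'' always, and reduce maximum matching / minimum vertex cover to perfect matching via K\H{o}nig's theorem. The only cosmetic differences are that the paper uses $n$ nodes per side (so the matching/cover answer is $n$ rather than your $n/2$), and you are slightly more explicit about why the coupling argument of \cref{claim:oblivious_flip_contains_G_np} restricts cleanly to the bipartite edge set.
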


\begin{proof}
    The premise of \Cref{claim:oblivious_flip_contains_G_np} holds for connectivity with $q = \frac{C\log n}{n}$ and $\alpha = 2n^{-(C-1)/2}$, for any $C > 2$.
    A similar result holds for bipartite perfect matching with $q = \frac{C\log n}{n}$ and $\alpha = n^{-(C-1)}$, for any $C > 2$ (see~\cite[Remark 4.3]{books:random_graphs_janson2011}).

    Thus, for any $p\in[0,1-\frac{2C\log n}{n})$, a $p$-smoothed dynamic graph is disconnected at round $r$ for any $r\in[n^{(C-3)/2}]$ with probability at most $2n^{-(C-1)/2}$ (and similarly for a bipartite graph having a perfect matching at all times). This allows for a trivial constant-time algorithm that does nothing during updates and outputs ``yes'' upon any query.    
    
    Using a union bound over $T \leq n^{^{\frac{C-3}{2}}}$ rounds, the algorithm succeeds in \emph{all} rounds with probability at least $1 - 1/(2n)$.
    
    For maximum matching, the existence of a perfect matching immediately implies a matching on all nodes, i.e., of size $n$ (for a bipartite graph with $n$ nodes in each side).
    Hence, an algorithm for maximum matching can answer $n$ to all queries.
    
    Similarly, picking all the nodes of one side gives a vertex cover of size $n$, while any cover must touch all the edges of the perfect matching described above and thus must have size at least $n$.
    An algorithm for minimum vertex cover can thus answer $n$ to all queries as well.
\end{proof}

Before we turn to another argument, we digest the meaning of the claim above in the context of comparison between models.
\begin{description}
    \item[Oblivious add/remove adversary.] 
    The proof of \Cref{claim:oblivious_flip_contains_G_np} would indeed fail if the adversary uses add/remove operations. This is for a good reason: in \Cref{thm:lb for many problems-smoothed oblivious ar} we show that such an adversary can disrupt any fast algorithm with. e.g., $p = 1 - 1/\sqrt{n}$. 
    This in fact separates the two models.

    \item[Average-case model.] 
    Looking at \Cref{claim:oblivious_flip_contains_G_np}, one might wonder if the $p$-smoothed oblivious flip adversary can pose any difficulty or is it simply equivalent to not having no adversary at all (the average-case, $p=0$).
    In~\cref{sec:lb small subgraphs}, we show the two models are separated for several subgraph counting problems. We prove a polynomial lower bound for the update time in the $p$-smoothed oblivious flip model whenever $p = \Omega(n^{c})$ (for fixed $c> 0$), which stands in contrast to the upper bound of $O(1)$ for the average case shown by~\cite{HLS22}.
\end{description}

The argument in \Cref{claim:oblivious_flip_contains_G_np} relies only on the randomness injected to the initial graph, and for~$p$ close to $1$ it can only be useful for a polynomial number of rounds (hence the restriction on $T$).
This restriction, however, can be lifted. Indeed, more randomness is being injected to the model with time, and we next show that once $\poly(n)/(1-p)$ rounds have passed, each graph exhibits a ``typical'' behavior of a \emph{uniformly random graph}.
This is true even though each such graph might be statistically far from an actual uniformly random graph (to see this, consider the number of edges mod $2$ as the distinguisher). 

\begin{claim}
    \label{claim:oblivious_flip_looks_random}
    Fix a graph property $\mathcal{P}$ and a constant $\alpha>0$ such that on a uniformly random graph~$G \sim G_{n,\frac{1}{2}}$,
    \[
        \Prob{}{G \textnormal{ does not satisfy } \mathcal{P}} \leq \alpha .
    \]
    Then for every smoothing parameter $p < 1$ and $r_p := n\cdot \binom{n}{2} / (1-p)$, every round $r \geq r_p$ in a $p$-smoothed dynamic graph in the oblivious flip model satisfies
    \[
        \Prob{}{G_r \textnormal{ does not satisfy }\mathcal{P}} \leq 4r\cdot \left(\alpha + \frac{n^2}{e^{2n}}\right) .
    \]
\end{claim}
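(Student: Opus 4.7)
The plan is to show that, for $r \geq r_p$, the distribution of $G_r$ is close to that of a uniformly random graph $G \sim G_{n,1/2}$, so that the bound for $\mathcal{P}$ under $G_{n,1/2}$ transfers to $G_r$ with only a small additive error. The key idea is to decompose $G_r$ into the XOR of a ``deterministic'' piece (depending on $G_0$ and the coins deciding adversarial vs.\ random rounds) and a ``random flips'' piece that, after sufficiently many rounds, mixes close to uniform. Since $X \oplus G_{n,1/2} \samedist G_{n,1/2}$ for any fixed independent $X$, such near-uniformity transfers to $G_r$ itself.

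First I would fix the oblivious adversary's schedule and condition on the i.i.d.\ coins $B_1, \ldots, B_r \sim \bernoulli{p}$ deciding which rounds apply the adversarial edge $e_i$ and which resample a uniform edge. Because every flip is a XOR in $\F_2^{\binom{n}{2}}$, the operations commute and I can write
\[ G_r = X_B \oplus Y_k, \]
where $X_B := G_0 \oplus \bigoplus_{i: B_i = 1}\{e_i\}$ depends only on $G_0$ and $B$, and $Y_k$ is the XOR of $k := |\{i : B_i = 0\}|$ i.i.d.\ uniformly chosen edges. A Chernoff bound applied to $k \sim \mathrm{Bin}(r, 1-p)$ with $r \geq r_p$ shows $k \geq n\binom{n}{2}/2$ except on an event of probability at most $e^{-2n}$, which can be absorbed into the $n^2/e^{2n}$ error term.

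Next I would apply Poissonization to $Y_k$. Define $\tilde Y$ by drawing $\tilde N_e \sim \poisson{k/\binom{n}{2}}$ independently for each edge $e$ and setting $\tilde Y := \bigoplus_{e : \tilde N_e \text{ odd}}\{e\}$. Each edge of $\tilde Y$ is then independently present with probability $q_k = \bigl(1 - e^{-2k/\binom{n}{2}}\bigr)/2$, and for $k \geq n\binom{n}{2}/2$ one has $|q_k - 1/2| \leq e^{-2n}/2$. Consequently $\sd{\tilde Y}{G_{n,1/2}} \leq \binom{n}{2} e^{-2n} \leq n^2/e^{2n}$, and since $X_B$ is independent of $\tilde Y$ this bound also holds for $X_B \oplus \tilde Y$ against $G_{n,1/2}$. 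To return to the true $Y_k$, I would use that $\tilde Y$ conditioned on $\sum_e \tilde N_e = k$ is distributed exactly as $Y_k$, allowing the transfer
\[ \Pr[Y_k \text{ does not satisfy }\mathcal{P}] \leq \Pr[\tilde Y \text{ does not satisfy }\mathcal{P}] / \Pr[\poisson{k} = k]. \]
Averaging over the relevant values of $k$ and adding the Chernoff-tail error should yield the claimed $4r(\alpha + n^2/e^{2n})$ bound.

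The main obstacle is keeping the prefactor linear in $r$. The subtlety is that for any fixed $k$, $Y_k$ is supported on graphs whose edge-parity equals $k \bmod 2$, so $Y_k$ is not close to $G_{n,1/2}$ in statistical distance for a fixed $k$; the bound must arise only after averaging over the random value and parity of $k$, which becomes near-uniform on parities when $(1-p)r$ is large. Carefully orchestrating the Poissonization with this averaging so as to obtain only a linear-in-$r$ prefactor (rather than a larger polynomial or exponential) is where I expect the bulk of the technical work.
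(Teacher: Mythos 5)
Your proposal captures the same core ingredients as the paper's proof --- decompose $G_r$ as an XOR of a deterministic piece and a random-flips piece, Poissonize to get per-edge independence, bound the statistical distance to $G_{n,1/2}$, and transfer back via $\Pr[\poisson{\lambda} = \lambda] \geq 1/(4\lambda)$ --- but you apply the Poissonization at a slightly different level, and this changes the quantitative outcome. The paper Poissonizes the \emph{round count} itself: it draws $\ell = \poisson{r}$ and analyzes $G_\ell$ rather than $G_r$. This makes the number of random rounds a Poisson of parameter $(1-p)r$, which, for $r \geq r_p$, gives every edge a Poisson flip count with parameter $(1-p)r/\binom{n}{2} \geq n$ \emph{deterministically} --- no Chernoff bound is needed anywhere. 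You instead condition on the $\bernoulli{p}$ coins $B$ first, so the number of random rounds $k$ is Binomial, and you must pay a Chernoff step to ensure $k$ is large. That step loses a constant: for $k \geq n\binom{n}{2}/2$ you get $|q_k - 1/2| = e^{-2k/\binom{n}{2}}/2 \leq e^{-n}/2$, not $e^{-2n}/2$ as you wrote (the latter requires $k \geq n\binom{n}{2}$, which $\E[k] \geq n\binom{n}{2}$ only barely exceeds). Your route therefore yields roughly $4r\bigl(\alpha + n^2/e^{n}\bigr)$ rather than the stated $4r\bigl(\alpha + n^2/e^{2n}\bigr)$; still exponentially small and adequate for the downstream lemmas, but not matching the claim verbatim.

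One more point: your closing paragraph worries that because $Y_k$ is parity-constrained for fixed $k$, the bound "must arise only after averaging over the random value and parity of $k$." This obstacle is illusory, and your own argument already dissolves it. The multiplicative transfer $\Pr[Y_k \text{ bad}] \leq \Pr[\tilde Y \text{ bad}] / \Pr[\poisson{k} = k]$ is applied for a \emph{fixed} $k$, and the statistical-distance bound is proved for $\tilde Y$ (Poisson number of flips, not parity-constrained), not for $Y_k$. No averaging over parities is required; the multiplicative factor absorbs the parity issue. The paper handles this in exactly the same spirit, with the transfer factor $4r$ playing the identical role.
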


A key element in the proof is a stochastic graph $G_{\ell}$.
First, we pick $\ell = \poisson{r}$, and consider the graph after $\ell$ rounds, $G_{\ell}$
(which is a random variable depending both on the smoothing process and the choice of~$\ell$).
On the one hand, if $r \geq r_p$ then $G_{\ell}$ has negligible statistical distance from a uniformly random graph ($G_{n,\frac{1}{2}}$).
On the other hand, $\ell = r$ with probability at least $1/\poly(r)$, so events with an exponentially negligible probability in $G_{\ell}$ must have an exponentially negligible probability in $G_r$ as well.
Combining both parts shows that any event with a negligible probability in a uniformly random graph (e.g., being disconnected),
also have a negligible probability in $G_r$.

\begin{proof}
    Fix any sequence of changes (adversarial strategy) $e_1,\dots, e_t$ and a round~$r$, and focus on the first $\ell = \poisson{r}$ steps.
    We shall analyze the distribution of the stochastic graph $G_{\ell}$, where randomness comes from the choice of $\ell$ and the smoothing process.

    Our first goal is to show 
    that $G_\ell$ is distributed almost uniformly, that is, 
    \begin{equation}
        \label{eq:poison_round_is_almost_uniform}
        \sd{G_{\ell}}{G_{n,\frac{1}{2}}} \leq \frac{n^2}{e^{2n}} .
    \end{equation}
    Using the Poissonization technique, we construct $G_{\ell}$ by independently drawing the number of random changes chosen by nature, $\poisson{(1-p)\cdot r}$, and the number of adversarial changes, $\poisson{p\cdot r}$.
    Next, we draw a random permutation $\sigma$ on their order, choosing \emph{which} rounds use random changes and which follow the adversarial choice.
    Finally, each random change draws a uniformly random edge to flip.
    
    As the adversarial strategy is fixed, any permutation $\sigma$ entirely fixes the set of adversarial changes made throughout the entire process.
    We first focus on the initial graph $G_0$ and apply only these adversarial flips, 
    obtaining a stochastic graph $G'_{\ell} = (V,E'_{\ell})$ whose randomness only comes from $G_0$ and $\sigma$. 
    
    Next we focus on the edges chosen by the randomness. Each edge is drawn with probability $1/\binom{n}{2}$, and so we can use Poissonization for the set of $\poisson{(1-p)\cdot r}$ random changes. Indeed, this draw is equivalent to drawing independently the number of randomness-chosen flips for each edge $z_e \sim \poisson{(1-p)\cdot r / \binom{n}{2}}$.
    If we start from an empty graph and make these changes, we end up with a stochastic graph $G''_{\ell} = (V,E''_{\ell})$ that represents all random flips made throughout the process. 

    The order in which edges are flipped does not affect the resulting graph $G_{\ell} = (v,E_{\ell})$. Thus, starting from the initial graph $G_0$ and applying both adversarial and random flips, we can write
    \[
        E_{\ell} = E'_{\ell} \symdif E''_{\ell} ,
    \]
    where both sides are random variables according to the smoothing process and the choice of $\ell$.

    We are now ready to analyze $G_{\ell}$ via $G''_{\ell}$. Choose $r_p = n \binom{n}{2} / (1-p)$, and consider any round $r \geq r_p$.
    The process that produces $G''_{\ell}$ flips each potential edge in the graph $z_e \sim \poisson{\lambda_r}$ times, where $\lambda_r = (1-p)\cdot r / \binom{n}{2}$. For any round $r \geq r_p$ we have $\lambda_r \geq \lambda_{r_p} = n$. By \cref{fact:item:poisson_parity}~in~\cref{fact:poisson_properties}, we know the parity of $z_e$ (and the existence or inexistence of the edge $e$) is almost uniform:
    \[
        \Prob{}{z_e \sameparity 0}
        = \frac{1 + e^{-2\lambda_r}}{2} 
        \in \left[ \frac{1}{2}\ ,\  \frac{1 + e^{-2n}}{2}\right]
    \]
    Denoting the parity of $z_e$ by $\ind{e}$, we get an indicator for the existence edge $e$ in $G''_{\ell}$. 
    We also use $U_1$ for a uniformly random bit, thus the previous inequality can be written in terms of statistical distance:
    \[
        \sd{\ind{e}}{U_1} \leq e^{-2n} .
    \]
    Due to Poissonization, all $z_e$ are independent from one another, and so are the indicators $\ind{e}$, implying the distribution of $G''_{\ell}$ is a product of all $\ind{e}$. Similarly, a uniformly random graph $G_{n,\frac{1}{2}}$ is a product of $U_1$ for each potential edge.
    By sub-additivity of statistical distance for product measures (\cref{fact:item:SD_subadditivity}~in~\cref{fact:SD_properties}), we get:
    \[
        \sd{G''_{\ell}}{G_{n,\frac{1}{2}}}
        \leq \sum_{e}\sd{\ind{e}}{U_1}
        \leq \frac{n^2}{e^{2n}} .
    \]
    Lastly, recall that $G_{\ell}$ is made by flipping the edges of $G'_{\ell}$ in $G''_{\ell}$. That is, any fixed permutation $\sigma$ sets the adversarial flips and in turn sets a bijection between $G_{\ell}$ and $G''_{\ell}$. This bijection only only re-orders the probability vector, attaching each probability to a new graph instead of the old one (in a bijective manner). For this reason, it does not affect the statistical distance from a uniformly random graph (where all probabilities are the same). Finally, averaging over all permutations $\sigma$ we obtain  
    \[
        \sd{G_{\ell}}{G_{n,\frac{1}{2}}} \leq \frac{n^2}{e^{2n}} ,
    \]
    which proves \eqref{eq:poison_round_is_almost_uniform}.
    
    Since the two graphs are statistically close,  \cref{fact:item:SD_error_difference}~in~\cref{fact:SD_properties} implies
    \begin{equation}
        \label{eq:poisson_round_probability}
        \Prob{}{G_{\ell} \textnormal{ does not satisfy } \mathcal{P}} 
        \leq \Prob{}{G_{n,\frac{1}{2}} \textnormal{ does not satisfy } \mathcal{P}} + \sd{G_{\ell}}{G_{n,\frac{1}{2}}}
        \leq \alpha + \frac{n^2}{e^{2n}} .
    \end{equation}

    Finally, to connect $G_{\ell}$ and $G_r$, recall that $\ell$ is distributed according to $\poisson{r}$, and so it has a rather high probability of choosing exactly round $r$:
    \[
        \Prob{}{\poisson{r} = r} = \frac{r^re^{-r}}{r!} 
        \geq \frac{1}{2\sqrt{\pi r}}
        \geq \frac{1}{4r} ,
    \]
    with the first inequality deriving from Stirling's approximation.

    This means that $G_{\ell}$ is somewhat likely to choose the graph $G_r$. By the total law of probability:
    \[
    \begin{aligned}
        \Prob{}{G_{\ell} \textnormal{ does not satisfy } \mathcal{P}} 
        &= \sum_{i\in\N} \Prob{}{\ell = i} \cdot \Prob{}{G_{i} \textnormal{ does not satisfy } \mathcal{P}} \\ 
        &\geq \Prob{}{\ell = r} \cdot \Prob{}{G_{r} \textnormal{ does not satisfy } \mathcal{P}}\\
        &\geq \frac{1}{4r} \cdot \Prob{}{G_{r} \textnormal{ does not satisfy } \mathcal{P}} .
    \end{aligned}
    \]
        
    Combining this inequality with \cref{eq:poisson_round_probability}, the proof is concluded.    
\end{proof}

Using the above claim we can guarantee a $p$-smoothed dynamic graph stays connected w.h.p.\ even over a near-exponential number of rounds.
As mentioned above, $G_{n,q}$ satisfies some nice properties w.h.p.\ even for $q = o(1)$ (see~\cite{books:random_graphs_janson2011}).
For a uniformly random graph ($q=1/2$) we quantify the negligible probability that these properties don't hold, following similar arguments as in~\cite{ER59,ER64}.

\begin{claim}
    \label{claim:negligible_probabilities}
    There exists $N_0$ such that the following hold for any $n \geq N_0$: 
    \begin{enumerate}
        \item The probability of an \er graph over $n$ nodes w.p.~$1/2$ for each edge ($G_{n,\frac 1 2}$) to be disconnected is at most~$\frac{2n}{2^{n/2}}$.

        \item The probability of a random bipartite graph with $n$ nodes on each side and each edge w.p.~$1/2$ to have no perfect matching is at most $\frac{4(n+1)^2}{2^{(n+1)/2}}$.
    \end{enumerate}
\end{claim}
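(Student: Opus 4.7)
Both parts are classical union-bound calculations in the style of Erd\H{o}s and R\'enyi; I would handle them separately but with the same template.

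For part 1, a graph on $n$ vertices is disconnected if and only if some subset $S\subsetneq V$ with $1\leq|S|=k\leq n/2$ has no edges leaving it. For a fixed such $S$, this event has probability $2^{-k(n-k)}$ under $G_{n,1/2}$, and there are $\binom{n}{k}\leq n^{k}$ such subsets. Using $k(n-k)\geq kn/2$ for $k\leq n/2$, the union bound gives
\[
\Prob{}{G_{n,1/2}\text{ disconnected}}\;\leq\;\sum_{k=1}^{\lfloor n/2\rfloor}\binom{n}{k}2^{-k(n-k)}\;\leq\;\sum_{k\geq 1}\!\left(\frac{n}{2^{n/2}}\right)^{\!k},
\]
and once $n$ is large enough that $n/2^{n/2}\leq 1/2$, this geometric series is bounded by $2n/2^{n/2}$.

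For part 2, I would apply Hall's theorem: a balanced bipartite graph $(A\sqcup B,E)$ with $|A|=|B|=n$ has no perfect matching iff there exists $S\subseteq A$ with $|N(S)|<|S|$. Setting $T=B\setminus N(S)$ yields disjoint-side subsets with $|S|+|T|\geq n+1$ and no edges between them; by taking a minimal witness we may assume $|S|+|T|=n+1$. The union bound then gives
\[
\Prob{}{\text{no PM}}\;\leq\;\sum_{s=1}^{n}\binom{n}{s}\binom{n}{n+1-s}\,2^{-s(n+1-s)}.
\]
The summand is symmetric under $s\mapsto n+1-s$, so I would restrict to $s\leq(n+1)/2$ and double. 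In this range $s(n+1-s)\geq s(n+1)/2$, and $\binom{n}{s}\binom{n}{n+1-s}=\binom{n}{s}\binom{n}{s-1}\leq n^{2s-1}/(s!(s-1)!)$, reducing the sum to a rapidly decaying series in $q:=n^{2}/2^{(n+1)/2}$ whose $s=1$ term is $\Theta(n/2^{(n+1)/2})$ and whose tail is $O(q\,e^{q}/n)$, comfortably fitting the loose target $4(n+1)^{2}/2^{(n+1)/2}$.

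The main obstacle is purely bookkeeping: in part 2 the extremal indices ($s\in\{1,n\}$, where $s(n+1-s)=n$ rather than $\Theta(n^{2})$) give the dominant contribution, so one must check that even these extremes remain below the claimed threshold. Both target bounds are loose enough that crude estimates suffice, and no idea beyond Hall's theorem and cut counting is needed.
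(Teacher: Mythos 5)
Your proof is correct and follows essentially the same approach as the paper: in both parts, the argument is a union bound over small cut-witnesses (components of size $k\leq n/2$ for connectivity; Hall violations $(S,T)$ with $|S|+|T|=n+1$ for matching), symmetry to restrict to the range where the exponent $k(n-k)$ or $s(n+1-s)$ is at least linear in $n$, and summing a rapidly decaying series dominated by the $k=1$ (resp.\ $s=1$) term. The only cosmetic difference is in how the binomial products are bounded in part 2 — the paper uses $\binom{n}{k}\binom{n}{n-k+1}\leq\binom{n+1}{k}^2\leq(n+1)^{2k}$ to get a clean geometric series, whereas you keep the sharper $\binom{n}{s}\binom{n}{s-1}\leq n^{2s-1}/(s!(s-1)!)$ and invoke a factorial-type tail — but both routes land comfortably inside the stated bound.
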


\begin{proof}
    Any disconnected graph must have a connected component of size $k \leq n/2$ nodes. 
    The probability of a set of $k$ nodes to indeed be disconnected from the rest of the graph $G_{n,\frac{1}{2}}$ is exactly $2^{-k(n-k)} \leq 2^{-\frac{nk}{2}}$.

    We can union bound over all sets of size $k$ and over all sizes $k$. The probability of a graph being disconnected is thus at most
    \[
        \sum_{k=1}^{\ceil{n/2}} \binom{n}{k} \cdot \left(\frac{1}{2}\right)^{nk/2}
        \leq \sum_{k=1}^{\ceil{n/2}} n^k \cdot \left(\frac{1}{2^{n/2}}\right)^{k}
        = \sum_{k=1}^{\ceil{n/2}} \left(\frac{n}{2^{n/2}}\right)^k 
        \leq \frac{2n}{2^{n/2}} ,
    \]
    where the last inequality is since the sum of an infinite geometric series is at most twice its first element (assuming $n/2^{n/2} \leq 1/2$ which holds for $n\geq 8$).

    For the second bullet, consider a graph with $n$ nodes on each side, call the sides $A, B$. 
    Any graph with no perfect matching must violate Hall's condition, meaning there exists $S \subseteq A$ of size $\card{S} = k$ with neighbor set $\card{\Gamma(S)} < k$.
    Differently put, there exists $S\subseteq A$, $\card{S} = k$, and a set $T\subseteq B$, $\card{T} = n-k+1$ such that all edges between $S$ and $T$ are missing.
    We union bound again over all possible pairs $S, T$ with sizes $k, n-k+1$, and over all values of $k$ to find an upper bound for the probability that Hall's condition is violated:
\[    \begin{aligned}
        \sum_{k=1}^{n}\binom{n}{k} \binom{n}{n-k+1} \left(\frac{1}{2}\right)^{k(n-k+1)}
        &\leq \sum_{k=1}^{n+1}\binom{n+1}{k} \binom{n+1}{n-k+1} \left(\frac{1}{2}\right)^{k(n-k+1)}\\
        &\leq 2\sum_{k=1}^{\ceil{(n+1)/2}}\binom{n+1}{k} \binom{n+1}{n-k+1} \left(\frac{1}{2}\right)^{k(n-k+1)}\\
        &\leq 2\sum_{k=1}^{\ceil{(n+1)/2}}\binom{n+1}{k}^2 \left(\frac{1}{2}\right)^{k(n+1)/2}\\
        &\leq 2\sum_{k=1}^{\ceil{(n+1)/2}} \left(\frac{(n+1)^2}{2^{(n+1)/2}}\right)^k\\
        &\leq \frac{4 (n+1)^2}{2^{(n+1)/2}} .
    \end{aligned}
    \]
    In the second inequality we used symmetry of the addends (potentially adding twice the addend for $k=(n+1)/2$ when $n$ is even). The two last inequalities are similar to those in the previous bullet, using $(n+1)^2/2^{(n+1)/2} \leq 1/2$ for $n \geq 20$. This proves the claim with $N_0 = 20$.
\end{proof}

While \cref{claim:ub_oblivious_flip}
is limited to $p\in[0,1-\frac{2C\log n}{n})$,
\cref{claim:oblivious_flip_looks_random} allows us to handle the regime $1-p = o(\log n / n)$, where very few perturbations are added to the initial graph, leaving it close to the worst-case.
The trick is to execute a known algorithm for the worst-case during the first $\approx n^3/(1-p)$ rounds, and only then rely on the randomness to kick in. Over a long enough sequence, the amortized running times will become similar to those of the average-case.

\begin{lemma}
    \label{lem:ub_oblivious_flip2}
    Fix $p \in [0,1-n^{-c}]$ for some $c > 0$. For the following problems, there exists an algorithm with all amortized running times~$O(1)$ on any sequence of polynomial length $\Omega\left(n^6/(1-p)\right)$ which succeeds w.h.p.\ on $p$-smoothed dynamic graphs with an oblivious flip adversary: \textbf{bipartite perfect matching, bipartite maximum matching} and \textbf{bipartite minimum vertex cover}.
    The same holds for \textbf{connectivity} with a sequence of polynomial length at least $\tilde{\Omega}\left(n^3/(1-p)\right)$.
\end{lemma}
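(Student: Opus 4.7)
The plan is to split the input sequence into a short \emph{warm-up phase} followed by a long \emph{steady-state phase}, running a worst-case dynamic algorithm in the warm-up and the trivial ``yes''/``$n$'' algorithm in the steady-state. Set $r_p := n\binom{n}{2}/(1-p) = O(n^3/(1-p))$, the round threshold from \cref{claim:oblivious_flip_looks_random}. Since $p\leq 1-n^{-c}$, $r_p$ is polynomial in $n$.

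For rounds $1,\dots,r_p$ (the warm-up), invoke a worst-case algorithm: for connectivity, use Huang et al.~\cite{HuangHKPT23} giving $\tu=\tilde O(\log n)$, so the warm-up costs $r_p\cdot\tilde O(\log n)=\tilde O(n^3/(1-p))$; for the three matching/cover problems, simply recompute the answer from scratch in $O(n^3)$ time per round, for a total warm-up cost of $O(n^6/(1-p))$. For rounds $r>r_p$, answer ``yes'' (for connectivity/perfect matching) or output $n$ (for maximum matching/minimum vertex cover, which are both correct whenever a perfect matching exists, the latter via K\"onig's theorem and taking one side as the cover). Update and query then cost $O(1)$.

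Correctness in the steady-state phase follows by combining \cref{claim:oblivious_flip_looks_random} with \cref{claim:negligible_probabilities}. Fix one of the properties $\mathcal{P}$ (connectivity or existence of a bipartite perfect matching). By \cref{claim:negligible_probabilities}, $\Pr_{G\sim G_{n,1/2}}[G\text{ violates }\mathcal{P}]\leq\alpha$ with $\alpha=2^{-\Omega(n)}$. By \cref{claim:oblivious_flip_looks_random}, for every round $r\geq r_p$,
\[
\Pr[G_r\text{ violates }\mathcal{P}]\leq 4r\left(\alpha+\tfrac{n^2}{e^{2n}}\right)=2^{-\Omega(n)}\cdot r.
\]
Union-bounding this over all rounds $r\in\{r_p,\dots,T\}$ for any polynomial $T$ yields a failure probability bounded by $T^2\cdot 2^{-\Omega(n)}$, which is $n^{-\omega(1)}$; this is the w.h.p.\ guarantee.

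Finally, the amortized bound: the total work is at most (warm-up cost)~$+\;O(T)$. The hypothesis $T=\Omega(n^6/(1-p))$ for matching/cover, resp.\ $T=\tilde\Omega(n^3/(1-p))$ for connectivity, guarantees that the warm-up cost is $O(T)$, so the amortized preprocessing, update, and query times are all $O(1)$. The main subtlety, and the only non-routine step, is the steady-state correctness argument: one has to be careful that the per-round bound from \cref{claim:oblivious_flip_looks_random} does indeed compose by a union bound across the entire polynomial-length steady state, which works precisely because the ``one graph looks uniform'' bound there carries an exponentially small slack ($\alpha+n^2/e^{2n}$) that absorbs the polynomial blow-up from summing over rounds.
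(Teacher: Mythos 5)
Your proof is correct and takes essentially the same approach as the paper: warm-up with a worst-case algorithm until round $r_p$, then switch to the trivial ``yes''/``$n$'' answers, with correctness in the steady state coming from \cref{claim:oblivious_flip_looks_random} combined with \cref{claim:negligible_probabilities} and a union bound over the polynomially many rounds. The amortization argument and the observation that the exponentially small per-round error absorbs the polynomial blow-up also match the paper's reasoning.
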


\begin{proof}
    For all problems but connectivity, until round $r_p = n\binom{n}{2}/(1-p)$, we run a known algorithm that works in the worst-case. All problems admit such algorithms with $O(n^3)$ amortized running time (by solving each round separately). 
    For connectivity, 
    we use the algorithm of~\cite{HuangHKPT23} that has $\tilde O(\log n)$ update time and $\tilde O(\log n)$ query time.

    For connectivity and bipartite perfect matching, starting round $r_p$ the algorithm simply answers ``yes'' on all queries.
    
    Combining \cref{claim:oblivious_flip_looks_random} with \cref{claim:negligible_probabilities}, we know that for all steps $r_p \leq r \leq T$ we have for connectivity
    \[
        \Prob{}{G_r \textnormal{ is not connected}}
        \leq 4r\left( \frac{2n}{2^{n/2}} + \frac{n^2}{e^{2n}}\right)
        \leq \frac{2T \cdot n^2}{2^{n/2}} .
    \]
    Similarly, for bipartite perfect matching:
    \[
        \Prob{}{G_r \textnormal{ does not have perfect a matching}}
        \leq 4r\left( \frac{4(n+1)^2}{2^{(n+1)/2}} + \frac{n^2}{e^{2n}}\right)
        \leq \frac{8T \cdot (n+1)^2}{{2^{(n+1)/2}}} .
    \]
	
	That is, for both problems, the failure probability of the ``yes'' algorithm on each round is exponentially small. Taking a union bound over at most $T$ rounds, and as since $T = \poly(n)$, 
	we get that the ``yes'' algorithm succeeds on all rounds between $r_p$ and $T$ with probability $1 - o(1)$.
    
    For the complexity, recall that $T \geq n^6/(1-p)$ which implies $T \geq n^3 \cdot r_p$.
    The total running time of the first $r_p$ rounds is $n^3 \cdot r_p$, and the rest trivially answers ``yes'', taking $O(T - r_p)$ total computation. Over all $T$ rounds, this amortizes to $O(1)$, meaning constant amortized update and query times.
    
    As in the proof of \cref{claim:ub_oblivious_flip}, an algorithm for maximum matching can answer $n$ on all queries after round $r_p$,
    and algorithm for minimum vertex cover can answer $n$ on all queries.
\end{proof}

Finally, for a slightly more modest range for $p$ we improve upon \cref{claim:ub_oblivious_flip}:
\begin{lemma}
\label{lem:ub_oblivious_flip}
    For any $p\in[0,1-\frac{26\log n}{n})$, 
    there are algorithms for 
    \textbf{connectivity, bipartite perfect matching, bipartite maximum matching} and \textbf{bipartite minimum vertex cover}
    with constant update and query times with an oblivious adversary, with success probability at least $1-1/n$ through all $T$ rounds, provided that $T \leq 2^{n/3}$ rounds.
\end{lemma}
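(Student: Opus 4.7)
The plan is to keep the trivial algorithm of \cref{claim:ub_oblivious_flip}, namely always output \textsf{YES} for connectivity and bipartite perfect matching, and always output $n$ for bipartite maximum matching and minimum vertex cover. This gives $\tp=\tu=\tq=O(1)$. As in \cref{claim:ub_oblivious_flip}, correctness for maximum matching and minimum vertex cover follows from correctness for bipartite perfect matching (via the size-$n$ matching and the size-$n$ vertex cover given by one side of the bipartition), so the task reduces to showing that the graph satisfies connectivity, respectively has a bipartite perfect matching, at every round $r\le T$ with probability at least $1-1/n$.

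I split the analysis at the threshold $r_p=n\binom{n}{2}/(1-p)=O(n^4/\log n)$ of \cref{claim:oblivious_flip_looks_random}, which is polynomial in $n$ since $1-p\ge 26\log n/n$. For rounds $r\le r_p$ I apply \cref{claim:oblivious_flip_contains_G_np} with $q=13\log n/n$: the hypothesis $p\le 1-2q$ is exactly $p\le 1-26\log n/n$. A standard union bound over vertex subsets --- analogous to the computation in the first bullet of \cref{claim:negligible_probabilities} but with density $q$ in place of $1/2$ --- shows that $G_{n,q}$ fails to be connected, respectively to contain a bipartite perfect matching, with probability $n^{-c}$ for any constant $c$ of our choice, by enlarging the constant in front of $\log n$ in $q$. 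Summing over the polynomially many rounds $r\le r_p$ gives total failure $o(1/n)$.

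For rounds $r_p<r\le T$ I combine \cref{claim:oblivious_flip_looks_random} with \cref{claim:negligible_probabilities}. These give a per-round failure of $O(r\cdot n^2/2^{n/2})$ using the bound $\Pr[\poisson{r}=r]\ge 1/(4r)$ stated inside the proof of \cref{claim:oblivious_flip_looks_random}; however, that Stirling estimate is loose, and in fact $\Pr[\poisson{r}=r]=\Theta(1/\sqrt{r})$, which sharpens the per-round failure to $O(\sqrt{r}\cdot n^2/2^{n/2})$. Summing this over $r$ up to $T\le 2^{n/3}$ bounds the failure in this regime by $O(T^{3/2}n^2/2^{n/2})$, which is sub-exponentially small in $n$ and in particular comfortably below $1/(2n)$.

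The main obstacle is reconciling the per-round loss inherent in the Poissonization trick of \cref{claim:oblivious_flip_looks_random} with the exponential length of the sequence: with the weaker $4r$ factor, the union bound fails to close at $T\approx 2^{n/3}$. Sharpening that factor to $\Theta(\sqrt{r})$ via Stirling is precisely what produces the headroom required to reach $T\le 2^{n/3}$, while the first regime relies entirely on the initial graph already having enough random perturbations that the monotone property holds with high probability at every polynomially-early round.
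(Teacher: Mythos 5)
The structure of your proof---the trivial algorithm, the split at $r_p$, \cref{claim:oblivious_flip_contains_G_np} for early rounds, \cref{claim:oblivious_flip_looks_random} combined with \cref{claim:negligible_probabilities} for late rounds, and the reductions from maximum matching and vertex cover to perfect matching---matches the paper's, and your first-regime analysis is fine. The arithmetic in the second regime, however, does not close, and your proposed repair is not the right one. Even with the Stirling sharpening the per-round failure is $O(\sqrt{r}\cdot\poly(n)/2^{n/2})$ and the union bound over $r\le T$ gives $O(T^{3/2}\poly(n)/2^{n/2})$; at $T=2^{n/3}$ one has $T^{3/2}=2^{n/2}$, so the total is $\Theta(\poly(n))$, which is not ``sub-exponentially small'' and certainly not below $1/(2n)$ --- it does not even go to zero. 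Your claim that $T^{3/2}n^2/2^{n/2}$ is ``comfortably below $1/(2n)$'' at $T=2^{n/3}$ is an arithmetic error.

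What actually rescues the argument is that the estimate $\alpha=2n/2^{n/2}$ in \cref{claim:negligible_probabilities} is very loose. The true probability that $G_{n,1/2}$ is disconnected is $\Theta(n/2^{n})$, dominated by the $k=1$ term $n\cdot 2^{-(n-1)}$; the step in that proof that relaxes $2^{-k(n-k)}$ to $2^{-nk/2}$ gives up a factor of $2^{n/2-1}$ already at $k=1$. Similarly, the probability that a random balanced bipartite graph on $n+n$ nodes has no perfect matching is $\Theta(n/2^n)$. With $\alpha=O(n/2^{n})$, the original $4r$ factor from \cref{claim:oblivious_flip_looks_random} is already enough: $\sum_{r\le T}4r\alpha = O(T^2n/2^{n}) = O(n/2^{n/3})$ at $T=2^{n/3}$, which is exponentially small. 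You were right to suspect that a per-round bound of $\poly(n)/2^{n/2}$ is not implied by the $4r$ factor when $r$ is close to $2^{n/3}$ and $\alpha=2n/2^{n/2}$ --- the paper's own writeup also glosses over this point --- but the fix is to sharpen $\alpha$, not the Stirling estimate.
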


\begin{proof}
    For connectivity we prove that the ``yes'' algorithm is correct on all $T$ rounds with probability at least $1-1/n$. 
    The argument for bipartite perfect matching is almost identical,
    and the extension to maximum matching 
    and minimum vertex cover is as in the proof of \cref{claim:ub_oblivious_flip}.
    
    Using \cref{claim:oblivious_flip_contains_G_np} with $q = \frac{13\log n}{n}$ and $\alpha = 2n^{-6}$
    for all rounds $r = 0 ,\dots, n^4$ the graph $G_r$ is disconnected with probability at most $2n^{-6}$, and by a union bound over all rounds the algorithm accumulate an error probability of at most $1/(2n)$.
    We next apply \cref{claim:oblivious_flip_looks_random}, noting that $1-p \geq 1/n$ and therefore $r_p \leq n^4$. We thus apply the claim for all rounds $r=n^4+1, \dots, T$, combined with \cref{claim:negligible_probabilities}. This shows that each graph $G_r$ in these rounds is disconnected with probability at most $\poly(n) / 2^{n/2} \leq 1/(2n 2^{n/3})$, and by a union bound over all rounds, the algorithm accumulates an error probability of at most $1/(2n)$.
    Thus, the algorithm errs on any round between $1$ and $T$ with probability at most $1/n$.
\end{proof}

\subsection{Counting small subgraphs with any adversary}
\subsubsection{Counting short \st paths}
In this section we show that it is easy to count \st paths of length $1,2,3$ or $4$ in smoothed dynamic graphs,
with any adversary type.
Our algorithms in this section and the following one are similar to the average-case algorithms of~\cite{HLS22}, 
with a faster preprocessing procedure, as well as careful treatment of non-simple paths (such as \paths st4 of the form $(s,u,v,s,t)$) that
seem to have been previously overlooked.
Roughly speaking,
most algorithms have ``cheap'' and ``expensive'' changes, 
where the latter rarely happen at random but might be consistently chosen by an adversary.
We use an amortization argument
to bound the running times in the presence of an adversary, pinpointing the dependency on $p$.

The update time of our algorithms is constant for any $p = O(1/n)$, and then gradually increases with $p$, reaching $O(n)$ when $p$ is an arbitrarily small constant.
In particular, this coincides with previous bounds for the worst case ($p=1$) and the average case ($p=0$).
For the average case, the initialization time is reduced to $O(m_0)$, where $m_0$ is the number of edges in the initial graph,
compared to previous algorithms~\cite{HLS22} that have initialization times of $O(n^2)$ for \paths st3 and $s$~-triangles, and $O(n^\omega)$ for \paths st4,  and $s$~$4$-cycles.

Counting \st paths of length $1$ or $2$ is trivial even in the worst case~\cite{HLS22}, and hence also in smoothed and random dynamic graphs.
In order to count \st $3$-paths, we use an auxiliary dynamic algorithm that counts \paths{s}{u}{2}, for a fixed node $s$ and a node $u$ given as an input in each query.

\begin{lemma}
	\label{lem: s2u alg}
	There is an algorithm for $p$-smoothed dynamic graphs with any adversary type
	that has fixed nodes $s$ and $t$, receives a node $u$ in each query, 
	and has
	$O(m_0)$ preprocessing time, 
	$O(pn+1)$ expected update time, 
	and can answer a query with the number of $s$-$u$ $2$-paths excluding the edge~$(s,t)$
	in $O(1)$ time.
	In addition, the algorithm maintains a counter $c_{s2u}$ of such $s$-$u$ $2$-paths for each node $u$ in the graph.
	The algorithm can also be used without specifying a node $t$, in which case no edge is excluded.
\end{lemma}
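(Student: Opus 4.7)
The plan is to maintain, for every node $u$, a counter $c_{s2u}$ equal to the number of simple $s$-$u$ $2$-paths in the current graph that avoid the edge $(s,t)$; a query on $u$ then returns $c_{s2u}$ in $O(1)$ time.

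For preprocessing on $G_0$ I iterate over each neighbor $v$ of $s$ with $v \neq t$, and for each such $v$ over each $u \in N(v) \setminus \{s\}$, incrementing $c_{s2u}$. The total cost is $\sum_{v \in N(s) \setminus \{t\}} \deg(v) \leq 2m_0 = O(m_0)$. Paths of the form $s \to v \to t$ with $v \neq t$ are counted correctly (they avoid $(s,t)$), while paths $s \to t \to u$ are correctly omitted, since the iteration skips $v=t$.

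For an update on an edge $(a,b)$ I distinguish three cases. (1) If $\{a,b\} = \{s,t\}$, no counted path uses this edge and no counter changes; $O(1)$. (2) If exactly one endpoint equals $s$, say $a=s$ and $b \neq t$, then the affected $2$-paths are precisely $s \to b \to u$ for $u \in N(b) \setminus \{s\}$; I iterate over $N(b)$ and update each $c_{s2u}$ by $\pm 1$, for cost $O(\deg(b)) = O(n)$. (3) If $a, b \neq s$, then only $c_{s2a}$ and $c_{s2b}$ can change: I check whether $(s,b)$ is an edge with $b \neq t$ and if so update $c_{s2a}$, and symmetrically for $c_{s2b}$; $O(1)$ time. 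Only case (2) is expensive.

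The expected update time then follows by amortization. A uniformly random edge drawn from $\binom{[n]}{2}$ touches $s$ with probability $2/n$, and by the definitions in \cref{sec:prelim}, in every smoothing model each update coincides with the adversarial proposal with probability at most $p$ and is a uniformly random edge otherwise. Hence the probability of landing in case (2) is at most $p + (1-p)\cdot 2/n = O(p + 1/n)$, giving expected update time $O((p+1/n)\cdot n + 1) = O(pn+1)$. The variant without a designated $t$ is identical, dropping the exclusions. The only genuine subtlety is obtaining $O(m_0)$ preprocessing rather than the naive $O(n^2)$, which forces the enumeration to go through $N(s)$ rather than over all pairs; after that the amortization argument applies uniformly to the adaptive, oblivious-flip, and oblivious-add/remove adversaries, since in each of them the non-adversarial fraction of rounds contributes only $O(1)$ expected work.
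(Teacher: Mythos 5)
Your proof is correct and follows essentially the same approach as the paper's: a depth-2 BFS from $s$ for preprocessing, $O(n)$ work only for updates incident to $s$, $O(1)$ otherwise, and then bounding the expected update cost by the probability that an update touches $s$. The case analysis, the $O(m_0)$ preprocessing via $\sum_{v\in N(s)}\deg(v)\leq 2m_0$, and the amortization argument all match the paper's proof, so there is nothing substantive to add.
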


The expected running can be thought of as the amortized running time, as for a reasonable amount of changes the total update time would concentrate around its expectation.

\begin{proof}
	The preprocessing is a simple BFS from $s$ up to depth $2$, ignoring the edge $(s,t)$,
	in $O(m_0)$ time.
	
	Upon an insertion or a removal of an edge $(v,u)$ (where $u,v\neq s$), 
	if $v\neq t$ then the algorithm checks the existence of the edge $(s,v)$ and update $c_{s2u}$ accordingly (increments $c_{s2u}$ if $(v,u)$ was added and $(s,v)$ exists, 
	decrements $c_{s2u}$ if $(v,u)$ was removed and $(s,v)$ exists, and ignores the change otherwise).
	If $u\neq t$, the algorithm similarly checks the existence of $(s,u)$ and updates $c_{s2v}$.
	
	Upon an update of an $(s,v)$ edge, $v\neq t$, the algorithm checks for all nodes $u\neq s$ if the edge $(v,u)$ exists and update $c_{s2u}$ accordingly, which takes $O(n)$ time.
	A random update will update an $(s,v)$ edge with probability $1/n$; an adversarial change, on the other hand, might perform such an update at each time, 
	and an adversarial change happens w.p.~$p$.
	
	Upon a query with a node $u$, the algorithm naturally returns $c_{s2u}$.
	Note that $c_{s2s}=0$ at all times.
	
	To conclude, for each update w.p.~$1-p$ it takes $O(1)$ time, and w.p.~$p$ it takes $O(n)$, for a total of $O(pn+1)$ expected time.
\end{proof}

With this algorithm in hand, counting \st $3$-paths is an easy task.
\begin{lemma}
\label{lem:counting st 3paths}
	There is an algorithm for $p$-smoothed dynamic graphs with any adversary type
	that counts the number of \st $3$-paths with preprocessing time $O(m_0)$, 
	update time $O(pn+1)$ in expectation, 
	and query time $O(1)$.
\end{lemma}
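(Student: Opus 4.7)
The plan is to invoke the subroutine of \cref{lem: s2u alg} as a black box and maintain, alongside its internal counters $c_{s2b}$, a single global counter $C$ equal to the current number of \paths{s}{t}{3}. A simple \st $3$-path has the form $(s,a,b,t)$ with $a,b\notin\set{s,t}$ and $a\neq b$, and decomposes uniquely into an $(s,b)$-$2$-path $(s,a,b)$ whose middle node $a$ is not $t$, followed by the edge $(b,t)$. Since $c_{s2b}$ already counts exactly these $2$-paths (the subroutine's exclusion of the edge $(s,t)$ precisely discards those with $a=t$), and since $c_{s2s}=0$ is a subroutine invariant, I will maintain
\[
    C \;=\; \sum_{b\,:\,(b,t)\in E} c_{s2b}.
\]
Preprocessing consists of running \cref{lem: s2u alg}'s preprocessing and then scanning the neighbors of $t$ once to form the initial sum, both of which cost $O(m_0)$.

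For each update to an edge $(x,y)$, I would compute $\Delta C$ by a short case analysis on the role that $(x,y)$ can play in a $3$-path, performed in tandem with the subroutine's own update. If $\set{x,y}=\set{s,t}$ then $\Delta C=0$, as no $3$-path uses the edge $(s,t)$. If $t\in\set{x,y}$ and the other endpoint $z$ is not $s$, the edge can only be the last edge of a $3$-path, contributing $\Delta C=\pm c_{s2z}$ in $O(1)$ time. If $\set{x,y}\cap\set{s,t}=\emptyset$, the edge is a candidate middle edge, contributing $\pm\per{[(s,x)\in E][(y,t)\in E]+[(s,y)\in E][(x,t)\in E]}$ in $O(1)$ time. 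The only expensive case is $s\in\set{x,y}$ with the other endpoint $y\neq t$: here $(s,y)$ is the first edge of many $3$-paths, and I would compute
\[
    \Delta C \;=\; \pm\abs{\set{b\in N(y)\cap N(t) : b\notin\set{s,y,t}}}
\]
by a single scan of $N(y)$ or of $N(t)$, in $O(n)$ time. This expensive case is exactly the expensive case of \cref{lem: s2u alg}, so the wrapper only adds a constant overhead per update.

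To bound the expected update time I would note that, under any of our adversary types, a $p$-smoothed step draws a uniformly random edge from $\binom{[n]}{2}$ with probability at least $1-p$; conditioned on this, the probability that the edge is $s$-incident is at most $2/n$, so the expected cost of a random step is $O(1)$. The remaining probability at most $p$ is charged against the worst-case bound $O(n)$. Summing gives expected update time $O(pn+1)$, uniformly across all adversary types, while the query time is trivially $O(1)$ since we only return $C$.

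The one subtlety I would handle explicitly is timing: $\Delta C$ must be computed against a consistent snapshot of the adjacency indicators and of the $c_{s2z}$ values --- either entirely pre-change or entirely post-change --- so that $C$ stays in sync with the subroutine's state. Simplicity of the counted $3$-paths is then automatic: $a\neq b$ and $a\neq s$ come from the subroutine's definition of $2$-paths, $a\neq t$ from its exclusion of the edge $(s,t)$, $b\neq t$ from $b\in N(t)$, and $b\neq s$ either from the invariant $c_{s2s}=0$ or, in the expensive case, from the explicit filter $b\notin\set{s,y,t}$.
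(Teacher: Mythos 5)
Your proposal takes essentially the same approach as the paper: maintain the invariant $C = \sum_{b \in N(t)} c_{s2b}$ alongside the $c_{s2u}$ counters of \cref{lem: s2u alg}, and update $C$ incrementally in constant overhead per change to a counter (the expensive $s$-incident case already being charged to the subroutine). The paper phrases this as reflecting into $c$ every change to a $c_{s2u}$ counter when $(u,t)\in E$ and updating $c$ by $\pm c_{s2u}$ on a flip of $(u,t)$, which is equivalent to your explicit case analysis; the correctness and complexity arguments coincide.
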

Note that we only consider simple paths, so the edge $(s,t)$ can never participate in an \st $3$-path.

\begin{proof}
	The algorithm uses the algorithm for maintaining the counter $c_{s2u}$ of $s$-$u$ $2$-paths from Lemma~\ref{lem: s2u alg}.
	In addition, it maintains a counter $c$ of \st $3$-paths, initialized to $0$.
	
	Upon initialization, the algorithm initializes the algorithm from Lemma~\ref{lem: s2u alg}, and while doing so it lists all the nodes $u$ with $c_{s2u}\neq 0$.
	Then, for each node $u$ in the list, if the edge $(u,t)$ exists, it increases $c$ by $c_{s2u}$.
	Since each \st $3$-path must be of the form $(s,v,u,t)$ with $u,v\neq s,t$, 
	the counter $c$ now contains exactly the number of \st $3$-paths.
	
	Upon an update of an edge $e$ not touching $t$, the algorithm updates the $c_{s2u}$ counters, and every time it changes a  counter $c_{s2u}$, it checks if the edge $(u,t)$ exists.
	If so, it updates $c$ by the change of $c_{s2u}$ (which can be positive or negative).
	Upon an update of an edge $(u,t)$, the algorithm updates~$c$ by $c_{s2u}$.
	
	For a query, the algorithm naturally returns $c$.
	
	Regarding the complexity, 
	note that the preprocessing phase only adds $O(1)$ operations for each node $u$ that was updated by the preprocessing phase of the algorithm from \cref{lem: s2u alg}, and thus they have the same asymptotic preprocessing time.
	Similarly, upon an update not involving~$t$, 
	the algorithm only performs $O(1)$ additional operations for each operation of the algorithm from \cref{lem: s2u alg},
	and does only $O(1)$ operations upon an update involving $t$, thus again the algorithms have the same asymptotic update time.
\end{proof}

Finally, we present an algorithm that maintains the number of \st $4$-paths.

\begin{lemma}
\label{lem:counting st 4paths}
	There is an algorithm for $p$-smoothed dynamic graphs 
	with any adversary type
	that counts the number of \st $4$-paths with preprocessing time $O(m_0)$, 
	update time $O(pn+1)$ in expectation, 
	and query time $O(1)$.
\end{lemma}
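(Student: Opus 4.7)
The plan is to follow the template of \cref{lem:counting st 3paths}: express the number of \st $4$-paths via the middle vertex of a walk and then correct for non-simplicity, while maintaining every ingredient incrementally.

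First, I would run two symmetric copies of \cref{lem: s2u alg} to obtain, at all times, the counters $c_{s2u}$ (the number of $s$-$u$ $2$-paths) and $c_{u2t}$ (the number of $u$-$t$ $2$-paths) for every node $u$, appropriately excluding the edge $(s,t)$. Each copy has $O(m_0)$ preprocessing, expected $O(pn+1)$ update time, and $O(1)$ query time for the stored counters.

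Next, I would write the count. A $4$-walk $s \to a \to b \to c \to t$ splits at its middle vertex $b$ into a $2$-walk $s \to a \to b$ and a $2$-walk $b \to c \to t$, so the number of $4$-walks from $s$ to $t$ is $\sum_{b \neq s, t} c_{s2b} \cdot c_{b2t}$, up to easily handled degenerate contributions for $b \in \{s, t\}$. A simple \st $4$-path is exactly a $4$-walk in which no vertex repeats; since there are no self-loops and $s \neq t$, the only possible repetitions are $a = c$, $a = t$, $c = s$, $b = s$, and $b = t$. A routine enumeration shows that every correction term produced by inclusion--exclusion over these five events, as well as each feasible pairwise intersection, is a simple product of quantities of the form $\deg(s)$, $\deg(t)$, $[(s,t) \in E]$, $c_{s2t}$, the triangle counts $T_s$ and $T_t$ through $s$ and through $t$, and the weighted sum $\Sigma := \sum_a [(s,a),(a,t)\in E]\cdot \deg(a)$. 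I would maintain all of these as auxiliary scalar counters.

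Finally, I would maintain a single running counter $c$ of \st $4$-paths. Preprocessing computes $c$ and the auxiliary quantities from the lists of nonzero $c_{s2u}, c_{t2u}$ produced by the two copies of \cref{lem: s2u alg}, together with a scan of edges incident to $s$, $t$ and their neighbors, in $O(m_0)$ time. For an edge flip $(u,v)$ with $u,v \notin \{s,t\}$, each of $c_{s2u}, c_{s2v}, c_{t2u}, c_{t2v}$ changes by at most $\pm 1$, and the induced shifts in every auxiliary counter and in $c$ are computed from a constant number of lookups; for an edge flip incident to $s$ or $t$, the two copies of \cref{lem: s2u alg} already spend $O(n)$, and refreshing our auxiliary counters (in particular recomputing $T_s$ or $T_t$ and $\Sigma$ by scanning the neighbor list of the touched endpoint) fits within the same budget. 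An update is expensive with probability at most $p + O(1/n)$ (either adversarial with probability $p$, or random and incident to $\{s,t\}$ with probability $O(1/n)$), yielding expected update time $O(pn+1)$; queries return $c$ in $O(1)$. The hard part, I expect, is keeping the inclusion--exclusion honest: one must verify that each of the five possible vertex collisions, together with their (mostly impossible) higher-order intersections, produces only $O(1)$-maintainable terms and that no walk is over- or under-counted. The preamble explicitly notes that precisely this non-simple-path bookkeeping was overlooked in \cite{HLS22}, so it is the delicate part of the proof, while the update-time accounting is routine given \cref{lem: s2u alg}.
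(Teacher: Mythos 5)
Your architecture matches the paper's exactly: two copies of \cref{lem: s2u alg} maintaining $c_{s2u}$ and $c_{t2u}$ (both excluding $(s,t)$), the middle-vertex decomposition $\sum_u c_{s2u}\,c_{t2u}$, a running counter $c$ corrected for non-simple walks, and the same expensive/cheap amortization giving $O(pn+1)$ expected update time. The paper's proof is precisely this.

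However, your inclusion--exclusion plan over five collision events is inconsistent with the counters you chose. Once $c_{s2u}$ and $c_{t2u}$ exclude the edge $(s,t)$ and satisfy $c_{s2s}=c_{t2t}=0$, the product $\sum_u c_{s2u}\,c_{t2u}$ already omits every walk with $a=t$, $c=s$, $b=s$, or $b=t$: those walks correspond to a $2$-path through the edge $(s,t)$ or to a zero $c_{s2s}/c_{t2t}$ term. In particular, the quantity you are summing is \emph{not} the number of $4$-walks from $s$ to $t$, so subtracting inclusion--exclusion corrections for all five events would over-subtract. The only surviving degeneracy is $a=c$, i.e.\ walks of the form $(s,v,u,v,t)$, and the correction is simply $\sum_{v\in N(s)\cap N(t)}(\deg(v)-2)$ — which equals $\Sigma - 2\,c_{s2t}$ in your notation; the triangle counts $T_s,T_t$ are not needed. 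The remaining delicacy, which you defer to ``a routine enumeration'', is precisely the part the paper spells out carefully in its update rule: when an edge $(u,v)$ disjoint from $\{s,t\}$ is flipped, whether to add $c_{t2v}$ or $c_{t2v}-1$ to $c$ depends on whether $(v,t)$ currently exists, because $c_{t2v}$ may or may not already include the degenerate $2$-path $(v,u,t)$ at that moment. If you reconcile your bookkeeping with the $(s,t)$ exclusion along these lines, the proof goes through; as written, the five-event inclusion--exclusion would produce the wrong count.
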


\begin{proof}
	The algorithm uses the algorithm from \cref{lem: s2u alg} twice, in order to keep track of the number $c_{s2u}$ of $s$-$u$ $2$-paths and the number $c_{t2u}$ of $t$-$u$ $2$-paths (in both cases, excluding $(s,t)$).
	In addition, it maintains a counter $c$ of \st $4$-paths, initialized to $0$.
	
	In the preprocessing, the algorithm first performs preprocessing of the two $2$-path counters, and while doing so keeps a list of all nodes $u$ with $c_{s2u}>0$.
	It then goes over this list, and for each $u$ with $c_{s2u}>0$ it adds $c_{s2u}c_{t2u}$ to $c$, thus counting all \st $4$-paths but including degenerate paths of the form $(s,v,u,v,t)$.
	Finally, it goes over all neighbors $v$ of $s$, and for each such $v$ that is also a neighbor of $t$, it subtracts $\deg(v)-2$ from $c$.
	This is done in order to account for all $(s,v,u,v,t)$ paths $v$ is a part of.
	In total, the preprocessing takes no more than the $O(m_0)$ preprocessing time of the $2$-path databases, and the last part also takes $O(\deg(s))=O(m_0)$ time.
	
	Upon an update of an edge $(u,v)$ (with $\{u,v\}\cap\{s,t\}=\emptyset$)
	the algorithm preforms 4 types of updates, corresponding to 4 types of \st $4$-paths through $(u,v)$, as follows.
	All the updates are done before updating the $2$-path counters.
	
	To account for paths of type $(s,u,v,x,t)$, $x\notin\{s,t,u,v\}$, the algorithm first checks that the edge $(s,u)$ exists. If so,
	if $(u,v)$ is \emph{added}, the algorithm adds $c_{t2v}$ to $c$;
	if $(u,v)$ is \emph{removed} and the edge $(v,t)$ does not exist, the algorithm subtracts $c_{t2v}$ from $c$;
	finally, if $(u,v)$ is \emph{removed} and the edge $(v,t)$ exists, the algorithm subtracts $c_{t2v}-1$ from $c$.
	The reason for this subtle difference is that in the first two cases we do not have to account for a degenerate path of the form $(s,u,v,u,t)$ since the 2-path $(v,u,t)$ is not counted in $c_{t2v}$ in these cases, 
	while in the last case the degenerate $4$-path $(s,u,v,u,t)$ is not counted in $c$ while the $2$-path $(v,u,t)$ is counted in $c_{t2v}$ so we should not subtract $1$ from $c$ for this $2$-path.
	
	The other 3 types of updates account for paths of types $(s,v,u,x,t)$, $(s,x,u,v,t)$, and $(s,x,v,u,t)$, and are completely analogous. All these updates take constant time.
	
	Upon an update of an edge $(s,u)$, the algorithm must go through all neighbors $v\neq s$ of $u$, and for each such neighbor update $c$ by $c_{t2v}$ if the edge $(u,t)$ does not exist in the graph, and update it by $c_{t2v}-1$ otherwise.
	This counts all paths of the form $(s,u,v,x,t)$ changed by the update of~$(s,u)$, where the last case makes sure degenerate paths are not counted.
	An update of an edge of the form $(t,v)$ is analogous.
	In both cases, the $2$-path counters are updated after $c$ is update, and the time complexity of these updates is~$O(n)$.
\end{proof}

Observe that both in the above proof and in the proof of \cref{lem: s2u alg}, any update takes $O(n)$ time even in the worst case;
in fact, this can already be observed by considering the proof of~\cite{HLS22} in detail.
This results in a deterministic worst-case algorithm with $O(n)$ (non-amortized) update time and constant query time.
The state of the art for the problem is $O(n^2\log^3n)$ update time and $O(\#_4P+\log n)$ query time, where $\#_4P$ is the output value.
This is achieved using a technique of~\cite{DemetrescuI04} for maintaining a set of shortest paths between any pair of nodes.

\begin{corollary}
\label{cor:counting st 4paths worst-case}
	There is a deterministic worst-case algorithm
	that counts the number of \st $4$-paths with preprocessing time $O(m_0)$, 
	update time $O(n)$, 
	and query time $O(1)$.  
\end{corollary}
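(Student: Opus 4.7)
The plan is to reuse verbatim the algorithm from \cref{lem:counting st 4paths} and observe that a finer, deterministic bound on its update time already follows from the proof presented there. The preprocessing and query phases are already stated to run in $O(m_0)$ and $O(1)$ time respectively without any probabilistic analysis, so the only thing that needs to be upgraded is the update step, from an ``$O(pn+1)$ expected'' guarantee to a non-amortized $O(n)$ worst-case guarantee.

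I would walk through the update cases described in the proof of \cref{lem:counting st 4paths} and verify that each one takes $O(n)$ time in the worst case, independently of the input distribution. There are two kinds of updates. For an edge $(u,v)$ with $\{u,v\}\cap\{s,t\}=\emptyset$, the algorithm performs a constant number of arithmetic updates to the counter $c$ (one for each of the four possible positions of $(u,v)$ inside an \st $4$-path, together with the degenerate-path corrections), and then propagates the change to the auxiliary counters $c_{s2u}, c_{s2v}, c_{t2u}, c_{t2v}$ from \cref{lem: s2u alg}; this last step consists of a constant number of $O(1)$ lookups and updates, so the overall work is $O(1)$. For an edge incident to $s$ or to $t$, the algorithm iterates over all $n-1$ potential neighbors of that endpoint, doing $O(1)$ work per neighbor, for a total of $O(n)$ deterministic time; the subsequent update to the underlying $2$-path counter also runs in $O(n)$ by \cref{lem: s2u alg}.

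Taking the worst of the two cases gives an $O(n)$ worst-case bound per update, without needing to amortize or to rely on randomness in the input sequence. Combined with the already-deterministic $O(m_0)$ preprocessing and $O(1)$ query time established in \cref{lem:counting st 4paths}, this yields the claimed deterministic worst-case algorithm, as foreshadowed in the paragraph preceding the corollary. There is essentially no obstacle here beyond carefully reading off the deterministic per-operation cost from the existing case analysis; the corollary is a direct restatement of the worst-case bound implicit in the proof of \cref{lem:counting st 4paths}.
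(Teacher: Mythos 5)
Your proposal matches the paper's own justification: the paper notes, in the paragraph preceding the corollary, that every update in the algorithms of \cref{lem: s2u alg} and \cref{lem:counting st 4paths} already runs in $O(n)$ deterministic (non-amortized) time, so the corollary follows by simply reading off this worst-case bound. Your case analysis confirming the $O(1)$ versus $O(n)$ split for non-$\{s,t\}$ versus $\{s,t\}$-incident edges is exactly the observation the paper relies on.
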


\subsubsection{Counting short cycles through a node}

We now move from counting paths to counting cycles, and start by counting triangles ($3$-cycles) through a fixed node $s$.

\begin{lemma}
\label{lem:counting s triangles}
	There is an algorithm for $p$-smoothed dynamic graphs 
	with any adversary type
	that counts the number of triangles through $s$ with preprocessing time $O(m_0)$, 
	update time $O(pn+1)$ in expectation, 
	and query time $O(1)$.
\end{lemma}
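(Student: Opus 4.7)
The plan is to adapt the template used for \cref{lem:counting st 3paths} and \cref{lem:counting st 4paths}: maintain a single counter $c$ of triangles through $s$, partition each edge update into a ``cheap'' case (edges not touching $s$, handled in $O(1)$) and an ``expensive'' case (edges incident to $s$, handled in $O(n)$), and then charge the expensive case using the probability that any given smoothed update is $s$-incident.

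The key observation for preprocessing is that every triangle through $s$ has a unique edge $(u,v)$ with $u,v\neq s$; so I would iterate once over $E(G_0)$ and, for every edge $(u,v)$ with $s\notin\{u,v\}$, check in $O(1)$ whether both $(s,u)$ and $(s,v)$ lie in $E(G_0)$, incrementing $c$ if they do. Each triangle is counted exactly once, yielding $O(m_0)$ total. This avoids the $O(n^2)$ bound that would arise from naively enumerating pairs of neighbors of $s$.

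For updates, a change to an edge $(u,v)$ with $u,v\neq s$ affects $c$ by $\pm 1$ iff both $(s,u)$ and $(s,v)$ are currently edges, so it is processed in $O(1)$; a change to an edge $(s,u)$ requires scanning all $v\in V\setminus\{s,u\}$ and updating $c$ by $\pm 1$ for each $v$ with $(s,v),(u,v)\in E$, costing $O(n)$ time. Queries return $c$ in $O(1)$. In contrast to the \paths{s}{t}{4} case, no degeneracy arises here, since triangles have only three distinct vertices and are automatically simple, so there is nothing analogous to the $(s,v,u,v,t)$ correction to worry about.

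Finally, for the expected update time under any of the three $p$-smoothing models, the probability that a given update touches $s$ is at most $p+2/n$: the adversarial coin lands with probability $p$ and can always designate an $s$-incident edge, while a uniformly random node-pair from $\binom{[n]}{2}$ is $s$-incident with probability $(n-1)/\binom{n}{2}=2/n$. The expected cost per update is therefore $(p+2/n)\cdot O(n)+O(1)=O(pn+1)$. The only mildly delicate point is that this bound must also hold against an adaptive adversary, but since we only invoke the \emph{marginal} probability that a given step is $s$-incident — not any conditional structure given previous randomness — the adversary's adaptivity confers no advantage, and the analysis is identical in all three smoothing models.
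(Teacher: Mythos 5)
Your proof is correct, but it takes a genuinely different route from the paper's. The paper derives the triangle counter as a thin wrapper around the $s$-$u$ $2$-path data structure of \cref{lem: s2u alg}: it sets $c = \sum_{u \in N(s)} c_{s2u}$, which counts each triangle $(s,u,v,s)$ exactly twice (once as a $2$-path landing at $u$, once as a $2$-path landing at $v$), and reports $c/2$; updates piggy-back on the existing maintenance of the $c_{s2u}$ counters with only $O(1)$ overhead per change. Your approach is a direct, self-contained algorithm: you observe that each triangle through $s$ has a unique non-$s$ edge $(u,v)$, so you maintain the exact count by iterating over such edges during preprocessing and, on each update, branching on whether the flipped edge is $s$-incident. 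The paper's design is more modular --- the same auxiliary structure is reused for \paths{s}{t}{3}, \paths{s}{t}{4}, $s$-triangles, and $s$-$4$-cycles --- whereas yours avoids the factor-of-two bookkeeping and any dependence on \cref{lem: s2u alg}. Both give the claimed $O(m_0)$, $O(pn+1)$, $O(1)$ bounds, and your closing remark that only the \emph{marginal} per-step probability of an $s$-incident update matters (so adaptivity confers no advantage via linearity of expectation) is the right justification for the ``any adversary type'' clause, and is in fact made more explicitly by you than by the paper.
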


\begin{proof}
	We use the algorithm from \cref{lem: s2u alg} in order to keep track of the number $c_{s2u}$ of $s$-$u$ $2$-paths.
	In addition, it maintains a counter $c$ of \emph{twice} the number of triangles through $s$, initialized to $0$.

	The algorithm first initializes all the $c_{s2u}$ counters and keeps a list of the nodes $u$ with $c_{s2u}>0$.
	It then goes over this list and for each such $u$, if $u$ is a neighbor of $s$, it adds $c_{s2u}$ to $c$.
	
	Upon an update, the algorithm keeps track of all updates of $c_{s2u}$ variables, and for each such update, if $u$ is a neighbor of $s$, it updates $c$ by the same amount.
	
	Note that each triangle $(s,u,v,s)$ through $s$ is counted exactly twice, both in the initialization and during the algorithm: such a triangle is counted once as $(s,u,v,s)$  when updating $c_{s2v}$, and once $(s,v,u,s)$ when updating $c_{s2u}$.
    Accordingly, the algorithm answers each query with $c/2$.
	The overhead over each operation of the algorithm from \cref{lem: s2u alg}
	is constant, 
    giving the desired preprocessing, update and query times.
\end{proof}

Next, we give an algorithm for counting $4$-cycles through a fixed node $s$.
While it is tempting to use the \st $4$-path counting algorithm with $s=t$ for this goal, as suggested in~\cite{HLS22},
the situation is a bit more subtle:
the $4$-path algorithm first counts also degenerate paths (e.g.\ of the form $(s,u,v,u,t)$) 
and then deduce them from the total count, and this deduction step assumes $s\neq t$.
In addition, when $s=t$ it will count each cycle twice, once at each direction, but this is easily fixable by dividing the output by $2$.
A simpler approach is maintaining two copies of $s$ in the graph, denoted $s$ and $s'$, and counting $s$-$s'$ $4$-paths.
We suggest yet another approach, that does not use the $4$-path algorithm at all.

\begin{lemma}
\label{lem:counting s 4cycles}
	There is an algorithm for $p$-smoothed dynamic graphs 
	with any adversary type
	that counts the number of $4$-cycles through a given node $s$ with preprocessing time $O(m_0)$, 
	update time $O(pn+1)$ in expectation, 
	and query time $O(1)$.
\end{lemma}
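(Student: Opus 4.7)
The plan is to bypass the $4$-path algorithm entirely via a direct bijection. Every simple $4$-cycle through $s$ has the form $(s,u,v,w,s)$ where $v$ is the unique node at distance two from $s$ along the cycle. Splitting the cycle at $s$ and $v$ decomposes it into two simple $s$-to-$v$ $2$-paths, $(s,u,v)$ and $(s,w,v)$, whose middle nodes $u,w$ are distinct. Conversely, any unordered pair of simple $s$-$v$ $2$-paths with distinct middle nodes reassembles into a simple $4$-cycle through $s$ (the nodes $u,w$ differ from $s$ and from $v$ by the very definition of a simple $2$-path, and from each other by choice of the pair). So the number of $4$-cycles through $s$ equals
\[
    \sum_{v \neq s}\binom{c_{s2v}}{2},
\]
where $c_{s2v}$ is the count maintained by \cref{lem: s2u alg}, invoked without specifying a node $t$ (so that no edge is excluded).

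Accordingly, I would run the algorithm of \cref{lem: s2u alg} to maintain the $c_{s2v}$ counters, and maintain an additional global counter $c$ that is supposed to equal $\sum_{v\neq s}\binom{c_{s2v}}{2}$. In preprocessing, after the underlying BFS from $s$ fills in $c_{s2v}$ for the nodes reached (a list of size $O(m_0)$), I walk the same list once more and add $\binom{c_{s2v}}{2}$ to $c$ for each node, at $O(1)$ per node and $O(m_0)$ overall. During updates, whenever some $c_{s2v}$ changes by $\pm 1$, I adjust $c$: an increment from $k$ to $k+1$ adds $k$ to $c$ (since $\binom{k+1}{2}-\binom{k}{2}=k$), and a decrement from $k$ to $k-1$ subtracts $k-1$. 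Queries return $c$.

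For the running-time analysis, the $c$-maintenance contributes $O(1)$ work for each individual $\pm 1$ change of a $c_{s2v}$ counter, and thus is dominated by the underlying algorithm's bookkeeping. The inherited bounds are $O(m_0)$ preprocessing, $O(pn+1)$ expected update time, and $O(1)$ query time, matching the statement. The only point worth double-checking in the write-up is that every $c_{s2v}$ change produced by \cref{lem: s2u alg} is reported as a unit $\pm 1$ update rather than in bulk; inspecting that proof confirms this (an update of an $(s,v)$-edge scans neighbors of $v$ and applies one $\pm 1$ per counter, while an update of an edge not incident to $s$ touches at most two counters, each by $\pm 1$). There is no real obstacle beyond this bookkeeping — the mild subtlety is just that the $\binom{\cdot}{2}$ formula automatically enforces the $u \neq w$ condition needed for simplicity of the $4$-cycle, which is exactly what makes the approach cleaner than going through the $4$-path counter.
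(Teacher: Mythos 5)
Your proof is correct and is essentially identical to the paper's: both maintain the $c_{s2v}$ counters from \cref{lem: s2u alg} (with no excluded edge), both keep a global counter equal to $\sum_{v\neq s}\binom{c_{s2v}}{2}$ using the observation that each $4$-cycle through $s$ has a unique midpoint $v$ and contributes once to $\binom{c_{s2v}}{2}$, and both update this counter in $O(1)$ per unit change to a $c_{s2v}$. The only cosmetic difference is that the paper phrases the counter adjustment as ``subtract $\binom{c_{s2v}}{2}$ before, add $\binom{c_{s2v}}{2}$ after,'' while you spell out the telescoping difference $\pm(k$ or $k-1)$; the two are the same constant-time update.
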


\begin{proof}
	The algorithm uses the algorithm from \cref{lem: s2u alg} in order to keep track of the number $c_{s2u}$ of $s$-$u$ $2$-paths (without a node $t$ and an excluded edge $(s,t)$).
	In addition, it maintains a counter $c$ of $4$-cycles through $s$, initialized to $0$.
	
	The algorithm first initializes all the $c_{s2u}$ counters and keeps a list of the nodes $u$ with $c_{s2u}>0$.
	It then goes over this list and for each such $u$ it adds $\binom{c_{s2u}}{2}$ to $c$.
	
	Upon an update, the algorithm keeps track of all updates of $c_{s2u}$ variables, and for each such update, is subtracts $\binom{c_{s2u}}{2}$ from $c$ before the update, and adds $\binom{c_{s2u}}{2}$ to it after.
	
	Since each $4$-cycles through~$s$ has exactly one mid-point $u$, and the number of $4$-cycles through~$s$ with a midpoint $u$ is exactly $\binom{c_{s2u}}{2}$, the algorithm is correct.
	The overhead over each operation of the  algorithm from \cref{lem: s2u alg}
	is constant.
\end{proof}

Finally, we observe that the above proof also implies a better worst-case algorithm than known,
as in the case of \paths st4.
We get a deterministic worst-case algorithm with $O(n)$ update time and constant query time,
improving upon the state of the art $O(m^{2/3})$ update time~\cite{HanauerHH22}
for any $m=\omega (n^{3/2})$.

\begin{corollary}
\label{cor:counting s 4cycles worst-case}
	There is a deterministic worst-case algorithm
	that counts the number of $4$-cycles through a given node $s$ with preprocessing time $O(m_0)$, 
	update time $O(n)$, 
	and query time $O(1)$.  
\end{corollary}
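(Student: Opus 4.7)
The plan is to observe that the algorithm constructed in the proof of \cref{lem:counting s 4cycles} is already deterministic, and that the expected bound $O(pn+1)$ on its update time simply aggregates two worst-case behaviors, each of which is at most $O(n)$. So essentially I claim that the corollary is a direct reading of the construction from \cref{lem:counting s 4cycles}, specialized to the worst-case model.

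Concretely, I would proceed as follows. First, I would note that the outer procedure in the proof of \cref{lem:counting s 4cycles} invokes the subroutine from \cref{lem: s2u alg} and then, on top of each $c_{s2u}$-counter change, performs only $O(1)$ arithmetic work (subtracting $\binom{c_{s2u}}{2}$ before the update and adding $\binom{c_{s2u}}{2}$ afterwards). Therefore it suffices to bound the worst-case cost of one invocation of the \cref{lem: s2u alg} subroutine.

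Second, I would inspect the update procedure of \cref{lem: s2u alg}. For an edge update $(u,v)$ with $u,v\neq s$, the subroutine reads a constant number of entries and updates a constant number of counters, taking $O(1)$ time. For an edge update $(s,v)$, the subroutine scans all nodes $u\neq s$ to check whether $(v,u)$ exists and to update $c_{s2u}$ accordingly, taking $O(n)$ time. Because in the outer algorithm each modified $c_{s2u}$ counter costs only $O(1)$ additional work, each individual update to the dynamic graph is processed in $O(n)$ time deterministically, irrespective of how it was generated. Preprocessing is $O(m_0)$ as in \cref{lem:counting s 4cycles}, and each query returns the stored value of $c$ in $O(1)$ time. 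Correctness was already argued in the lemma.

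There is essentially no obstacle here, because the worst-case bound is already implicit in the existing algorithm; the only work is to explicitly separate the worst-case per-update cost (always $O(n)$) from the smoothed-case amortization (which used the fact that $(s,\cdot)$-touching updates occur in only a $p+\Theta(1/n)$ fraction of rounds). Once this separation is stated, the comparison with \cite{HanauerHH22}, namely that $O(m^{2/3})=\omega(n)$ precisely when $m=\omega(n^{3/2})$, is immediate and finishes the remark after the corollary.
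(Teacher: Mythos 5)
Your proposal is correct and takes exactly the paper's route: the paper derives \cref{cor:counting s 4cycles worst-case} by the same observation that every individual update in the algorithm of \cref{lem:counting s 4cycles} (via the subroutine of \cref{lem: s2u alg}) deterministically costs at most $O(n)$, the $O(pn+1)$ figure being merely the expected (or amortized) refinement under smoothing. Your tracing of the $O(1)$ overhead in the outer loop and the $O(1)$ vs.\ $O(n)$ dichotomy inside \cref{lem: s2u alg} is precisely the justification the paper intends, so nothing is missing.
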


\section{Lower bounds via embedding}
\label{sec:lb with embedding}

In the case of an adaptive or oblivious add/remove adversary, 
we can achieve lower bounds for many problems by embedding a known worst-case lower bound graph as a subgraph of the entire smoothed graph.
To this end, an adaptive adversary can choose $\hat n=\Theta(pn)$ nodes, 
keep them disconnected from the rest of the graph, and fully control the structure of their induced subgraph. 
An oblivious add/remove adversary can do the same, but on $\hat n=n^{\delta/3}$ nodes for a small constant $\delta$.

\subsection{Polynomial lower bounds}
\label{sec:poly lbs}

We start by noting that an adaptive adversary can ``take control'' of parts of the graph,
i.e., choose a set $\hat V$ of nodes,
make sure it is disconnected from the rest of the graph,
and set the edges inside $\hat V$ as it desires.
We prove a more general lemma:
given a graph $G$, a set $R$ of potential edges and a set $R'\subseteq R$,
the adversary can reach a new graph $G'$ satisfying $(G\symdif G')\cap R=R'$, i.e. a graph 
where the edges of $R'$ are flipped, 
the edges of $R\setminus R'$ are as in the original graph,
and the rest of the edges may change arbitrarily.

\begin{lemma}[Embedding a graph]
	\label{lem:edge control adaptive}
	Consider an $n$-node graph $G$, a set $R\subseteq \binom{V}{2}$ of $r=|R|$ potential edges,
	and a subset $R'\subseteq R$ of size $|R'|= r'$.
	An adaptive adversary can reach a new graph $G'$ satisfying 
	$(G\symdif G')\cap R=R'$
	within at most 
    $\ell$ steps with probability at least
    $1 - 2\exp\left( -p \ell /40\right)$, provided that
    $\ell \geq \frac{10r'}{p}$  
    and
    $r\leq\frac{pn^2}{18}$.
\end{lemma}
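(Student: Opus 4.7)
The plan is to let the adaptive adversary greedily correct a ``deficit set'' $X_t := ((G_t \symdif G) \cap R) \symdif R'$, which collects the edges of $R$ whose current status disagrees with the target prescribed by $R'$. Observe that $|X_0| = r'$, and it suffices to show $|X_\tau| = 0$ for some $\tau \leq \ell$. The strategy is simple: at each step, propose an arbitrary edge of $X_t$ (if $X_t = \emptyset$ we have already succeeded, and the adversary may do anything). Every flip of an edge in $R$ toggles its membership in $X_t$, changing $|X_t|$ by exactly $\pm 1$, while a flip outside $R$ leaves $X_t$ unchanged.

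Next, I would partition the $\ell$ rounds by type. Let $A$ denote the number of adversarial rounds (each of probability $p$) and $B$ the number of rounds in which the random replacement happens to land in $R$ (each of probability $(1-p) r/\binom{n}{2}$, independent of everything else). Every adversarial round decrements $|X_t|$ by one as long as $|X_t|>0$, while every random-into-$R$ round changes $|X_t|$ by $\pm 1$; in the worst case we pessimistically treat each such round as an increment. Thus, if $|X_t|$ were to remain positive throughout the $\ell$ rounds, the total net decrease, which is at least $A-B$, would satisfy $A - B \leq r' - 1 < r'$. Contrapositively, $A - B \geq r'$ forces $|X_\tau| = 0$ at some $\tau \leq \ell$.

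It remains to concentrate. The hypothesis $r \leq pn^2/18$ yields $r/\binom{n}{2} \leq p/8$ (up to a $n/(n-1)$ factor absorbed for moderate $n$), so $\E[A] = p\ell$ and $\E[B] \leq p\ell/8$. A standard lower-tail Chernoff bound gives $A \geq p\ell/2$ except with probability at most $e^{-p\ell/8}$, and an upper-tail Chernoff bound gives $B \leq p\ell/4$ except with probability $e^{-\Omega(p\ell)}$. On the intersection, $A - B \geq p\ell/4 \geq r'$ since $\ell \geq 10 r'/p$, and a careful packaging of the two deviations with matching constants produces the stated failure bound $2\exp(-p\ell/40)$.

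The argument is fundamentally routine concentration; the main points requiring care are (i) stopping the analysis the first time $|X_t|$ reaches zero, since the adversary only needs to hit the target at \emph{some} step within the first $\ell$ rounds and is free to do anything afterwards; and (ii) bounding all random-into-$R$ steps uniformly as possible increments, a loose but sufficient estimate thanks to the hypothesis $r \leq pn^2/18$ which keeps their rate safely below that of the guaranteed adversarial decrements. The choice of the constant $10$ in $\ell \geq 10r'/p$ (rather than the tighter $4$ needed for the expectation argument alone) provides the slack absorbed by the Chernoff tails.
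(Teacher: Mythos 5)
Your proof is correct, and it arrives at the conclusion by a somewhat cleaner and more direct route than the paper's own argument. Both proofs use the same adversarial strategy (greedily propose an edge that is currently in disagreement with the target), but the analyses diverge afterwards. The paper's proof first isolates the ``effective'' steps (those that touch $R$, whether adversarially or randomly), proves via Chernoff that there are at least $p\ell/2$ of them, and then runs a biased random walk / gambler's ruin argument \emph{conditioned} on being at an effective step, where each step is adversarial with conditional probability at least $9/10$; this two-stage filtering introduces the bookkeeping with the event $E_t$, and the footnote about ``calculations are as if a random change is never in our favor'' papers over the usual stochastic-domination subtlety that arises once you condition on the walk not yet having hit zero. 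Your proof avoids the filtering entirely: you count the number $A$ of adversarial rounds and the number $B$ of random-into-$R$ rounds over the full, unfiltered sequence of $\ell$ rounds (each a sum of $\ell$ independent Bernoullis), apply two plain Chernoff bounds, and then observe the purely deterministic implication that $A - B \geq r'$ forces the deficit set $X_t$ to empty at some $\tau \leq \ell$. This sidesteps the conditioning issues and, as a bonus, the resulting constant in the exponent is actually slightly better than $40$ (you get something around $1/21$--$1/24$ depending on which form of the upper-tail Chernoff you invoke), which of course still implies the stated bound of $2\exp(-p\ell/40)$. The one thing worth spelling out when writing this up is the upper-tail multiplicative Chernoff bound for $B$, since the form stated in the paper's \cref{fact:Chernoff/Hoefdding} only gives the additive Hoeffding bound (which would cost a factor $p$ in the exponent) and the lower-tail multiplicative bound; a standard $\Pr[B \geq 2\mu] \leq \exp(-\mu/3)$-type estimate, taking care that it remains $\exp(-\Omega(p\ell))$ even when $\E[B]$ is much smaller than $p\ell/8$, closes the gap.
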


We later use this lemma both to control a subset of nodes (their internal edges and the edges connecting them to the rest of the graph), and to simulate a sequence of worst-case changes on these nodes.

The adversary puts all its efforts on changing a subset of the edges of $R'$ and maintaining the edges of $R$ by reverting any random changes to them.
We analyze this as a 1-dimensional random walk process: 
the nature starts with a budget of at most $r'$ (edges that the adversary wants to change), and then a random change might add 1 to the sum by flipping an edge of $R\setminus R'$ or an edge of $R'$ that was already changed, 
while an adversarial change reduces it by 1 by reverting a change or fixing an edge of $R'$.
If the probability of the adversary to take a step is larger the probability of a random change to hit $R$, then by gambler’s ruin (see\cite[Section~7.2.1]{books:probability}) the process will eventually converge to $0$ (edges that the adversary wants to change), as desired.

\begin{proof}
	We say that a change is \emph{effective} if it changes an edge of $R$.
	The adversary will change an edge of $R$ as long as there is a change to make,
    which happens w.p.~$p$.
    The probability that a random change happens \emph{and} hits $R$ is $(1-p)\cdot r / \binom{n}{2}$. By our assumption, this is at most $p/9$.
    A sequence of $\ell$ steps is expected to have at least $p\ell$ effective steps (just by the adversarial turns), and by Chernoff, the probability of having less than $p\ell /2$ effective steps is at most $\exp\left(-p\ell/8\right)$.

    Consider the first $\ell' = p\ell/2$ effective steps, and let $f(i)$ be the number of edges that need to be flipped after $i$ effective steps.
    Conditioned on such a step, the probability of an adversarial change is at least $9$ times that of a random one, and therefore at least $9/10$.
    Thus, if $f(i-1) > 0$ then with probability\footnote{We need $f(i-1) > 0$ to allow the adversary to make a required change. For simplicity, the calculations are as if a random change is never in our favor.} at least $9/10$ we have $f(i) = f(i-1) - 1$, and otherwise $f(i) = f(i-1) +1$. In particular $f(i-1) - f(i) \in \set{-1,1}$.
    
    Denote by $E_t$ the event that $f(i) \neq 0$ for any $i\leq t$ (otherwise we are done within $t$ effective steps).
    Define for each step $y_i := (f(i-1) - f(i) + 1)/2$, then $y_1,\dots t_{\ell'}$ are Bernoulli (assuming $E_{\ell' - 1}$), each with expectation at least $9/10$ and overall $\Expc{}{\sum y_i} \geq 9\ell'/10$. By telescopic sum we can write:
    \[
    \begin{aligned}
        \Prob{}{f(\ell') > 0}
        &= \Prob{}{f(0) - f(\ell') < r'}\\
        &= \Prob{}{\sum_{i\in[\ell']} y_i
        < \frac{r' + \ell'}{2}}\\
        &\leq \Prob{}{\sum_{i\in[\ell']} y_i
        < (6/10)\ell'} \\
        &\leq \exp\left(-\ell'/20\right) ,
    \end{aligned}
    \]
    where the next-to-last inequality is due to the premise $r' \leq p\ell/10 = \ell'/5$, and the last by a Chernoff bound.
    Thus, if we weren't done within $\ell'-1$ effective steps, we must be done within $\ell'$ with only this error.
    
    Using union bound and plugging in $\ell' = p\ell/2$, the total error probability is at most $2\exp\left(-p\ell /40\right)$.
\end{proof}

We state known lower bounds for the worst-case (non-smoothed) setting,
with $\hat \tu$ and $\hat \tq$ update and query times.
\begin{lemma}
	\label{lem:lb for many problems-original}
    There is no dynamic algorithm for 
    \textbf{%
    	counting $s$ $k$-cycles for $k\geq 3$ odd} (and hence $s$-triangle detection),
    \textbf{%
    	counting $s$ $4$-cycles,
    	counting \paths{s}{t}{3},
    	counting \paths{s}{t}{4},
    	bipartite perfect matching,
    	bipartite maximum matching} 
    	and 
    	\textbf{bipartite minimum vertex cover} 
	with error probability at most $1/3$,
	polynomial preprocessing time,
	$\hat\tu(\hat n)=O(\hat{n}^{1-\epsilon})$
    and
    $\hat\tq(\hat n)
    =O(\hat{n}^{2-\epsilon})$
    (both amortized)
	for any $\epsilon>0$ 
	unless the \omv conjecture is false.	

    The proofs of all these claims use worst-case dynamic  graphs of a similar structure:
    either one or $\hat n$ phases, where each phase consists of $\hat n$ updates to different edges, followed by a single query.
\end{lemma}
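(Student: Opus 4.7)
The plan is to verify this lemma by assembling known reductions from the \omv conjecture and inspecting their common structure. First, I would invoke the standard algebraic reduction of Henzinger, Krinninger, Nanongkai and Saranurak~\cite{HenzingerKNS15} from \omv to the \oumv problem, which shows that under the \omv conjecture, \oumv also requires essentially $n^3$ total time. Concretely, if an \oumv instance on an $\hat n\times \hat n$ matrix admits an algorithm with per-round cost $o(\hat n^2)$, then we obtain an \omv algorithm of total cost $o(\hat n^3)$, contradicting \cref{conj:omv}.

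Next, for each problem in the statement I would exhibit (or point to) the standard reduction from \oumv to that dynamic graph problem. The template is uniform: encode the $\hat n\times \hat n$ Boolean matrix $M$ once during preprocessing (polynomial time, absorbing all $O(\hat n^2)$ matrix edges), then, upon receiving each pair $(u_i,v_i)$, perform $O(\hat n)$ edge updates that overwrite the encoding of the previous vectors with that of $u_i,v_i$, and issue a single query whose answer determines $u_i^T M v_i$. For \paths{s}{t}{3} this is the $P_3$-partite graph $s\sqcup A\sqcup B\sqcup t$ from \cref{sec:intro-technical}, where $sA$ encodes $u_i$, $AB$ encodes $M$ and $Bt$ encodes $v_i$; for $s$-triangles use the analogous construction with edges $sA$, $AB$, $sB$; \paths{s}{t}{4} and $s$ $4$-cycles follow by inserting a short fixed gadget that shifts a $3$-path or triangle into a $4$-path or $4$-cycle; and for bipartite perfect matching, bipartite maximum matching and bipartite minimum vertex cover I would cite the reductions of~\cite{HenzingerKNS15} (which similarly proceed in $\hat n$ phases of $O(\hat n)$ updates and one query). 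Running any supposed algorithm on the resulting dynamic graph yields an \oumv algorithm of total time $\hat n \cdot \hat n \cdot \hat\tu(\hat n) + \hat n \cdot \hat\tq(\hat n) = O(\hat n^{3-\epsilon})$, contradicting the \omv conjecture via \cref{lem:conjecture_to_ac_oumv} (modulo the average-case aspect, which is irrelevant here since we only need worst-case hardness).

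The ``one or $\hat n$ phases'' wording of the lemma is simply a bookkeeping observation: counting problems are naturally reduced from the full $\hat n$-round \oumv game (giving $\hat n$ phases), whereas certain detection variants (e.g.\ $s$-triangle detection) admit a reduction from a one-shot \uMv or \uv variant, giving a single phase of $\hat n$ updates and one query; in both cases every phase updates $\hat n$ distinct edges and is terminated by a single query, which is exactly the invariant we will later need in \cref{sec:lb with embedding} to embed the hard instance inside a small vertex set.

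I do not expect any real obstacle here, since every individual reduction is already in the literature; the only work is to check that, as stated, each reduction has the claimed phase structure. Minor care is needed for problems whose original proof was phrased as a monolithic sequence rather than phased (e.g.\ counting $s$ $k$-cycles for odd $k$), where one must re-parse the construction and verify that the $\hat n$ updates between two consecutive queries touch distinct edges; this is immediate from the fact that in each phase the reduction simply rewrites the encoding of one vector.
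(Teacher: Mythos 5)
Your proposal is correct and takes essentially the same route as the paper: the paper does not prove this lemma from scratch but rather collects citations to the OMv-based worst-case lower bounds in \cite{HenzingerKNS15}, \cite{HLS22}, and \cite{HanauerHH22}, noting that each uses the phased $\omv\to\oumv\to$ dynamic-graph reduction template you describe, with $\hat n$ updates rewriting the vector encoding between queries. The one small imprecision is that you invoke \cref{lem:conjecture_to_ac_oumv} (average-case parity \oumv) where a plain worst-case \oumv hardness statement from \cite{HenzingerKNS15} is the more direct and appropriate source, but you flag this yourself and it does not affect the argument.
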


These claims and their proofs appear in: 
\cite[Theorem 26]{HanauerHH22} for
$s$ $k$-cycles;
\cite[Lemma 7.6]{HLS22} for $s$ 4-cycles\footnote{The statement of \cite[Lemma 7.6]{HLS22} regards $\tq=O(1)$ only, but the proof clearly extends to a general $\tq$.};
\cite[Lemma 7.5]{HLS22} for \paths{s}{t}{4};
and
\cite[Corollary~3.4 and footnote~9]{HenzingerKNS15} for
\paths{s}{t}{3},
$s$-triangle detection (which is also implied by $s$ $k$-cycles counting),
bipartite matchings
and
bipartite vertex cover.

We next augment this lemma to get lower bounds for the $p$-smoothed setting with an adaptive adversary.

\begin{theorem}
	\label{thm:lb for many problems-smoothed advers}
	Fix $0 \leq p \leq 1$. 
	There is no dynamic algorithm for
	 $p$-smoothed dynamic graphs with an adaptive adversary
	 for 
	\textbf{%
		counting $s$ $k$-cycles for $k\geq 3$ odd} (and hence $s$-triangle detection),
	\textbf{
		counting \paths{s}{t}{3},
		counting \paths{s}{t}{4},
		bipartite perfect matching,
		bipartite maximum matching} 
	and 
	\textbf{bipartite minimum vertex cover} 
	with error probability at most $1/3$,
	polynomial preprocessing time,
	$\tu(n)=O(p^{2-\epsilon}n^{1-\epsilon})$
	and
	$\tq(n)
	=O((pn)^{2-\epsilon})$
	(both amortized)
	for any $\epsilon>0$,
	unless the \omv conjecture is false.	
\end{theorem}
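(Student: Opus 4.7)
The plan is to reduce the worst-case lower bound of \cref{lem:lb for many problems-original} on $\hat n$-node graphs, where $\hat n := pn/20$, to the $p$-smoothed setting on $n$-node graphs, by having the adaptive adversary embed a worst-case hard instance inside the induced subgraph on a dedicated set $\hat V \subseteq V$ of $\hat n$ nodes. Each worst-case edge flip is then simulated by $\tilde O(1/p)$ smoothed steps via \cref{lem:edge control adaptive}, and the resulting usage of the smoothed algorithm on $n$ nodes gives rise to a worst-case algorithm on $\hat n$ nodes that would contradict \cref{lem:lb for many problems-original}.

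Concretely, I would have the adversary propose a uniformly random initial graph (so that by \cref{observation:initial graph} the smoothed $G_0$ is uniformly random), place the distinguished nodes $s, t$ inside $\hat V$, and take $R$ to be the union of all $\hat V$-internal edges and all edges between $\set{s,t}$ and $V \setminus \hat V$. One preliminary invocation of \cref{lem:edge control adaptive} with $\ell_0 = \tilde\Theta((\hat n^2 + n)/p)$ brings the graph to the desired worst-case initial state with $s, t$ isolated from the outside; then, for each of the $\hat n^2$ online worst-case updates, one invocation with $r' = 1$ and $\ell_1 = \Theta(\log n/p)$ restores the embedding, failing with probability at most $1/\poly(n)$ per invocation. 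Because $|R| \leq \binom{\hat n}{2} + 2n \leq pn^2/18$ whenever $p = \Omega(1/n)$, the hypothesis of \cref{lem:edge control adaptive} is satisfied. At every query point, $s$ and $t$ have no edges leaving $\hat V$, so any $st$-path, $s$-cycle, or matching-like witness counted by the query lies entirely in $\hat V$ and the algorithm's answer on the full smoothed graph coincides with the worst-case answer on $\hat V$. For bipartite matching and vertex cover, the two halves of $\hat V$ are placed on opposite sides of the declared bipartition, and the cross-edges between $\hat V$ and the opposite outside side are added to $R$ to guard against random edges that would otherwise violate bipartiteness.

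To derive the contradiction, I would substitute $\tu(n) = O(p^{2-\epsilon} n^{1-\epsilon})$ and $\tq(n) = O((pn)^{2-\epsilon})$, and amortize the $\ell_0 + \hat n^2 \ell_1 = \tilde O(\hat n^2/p + n/p)$ smoothed updates over the $\hat n^2$ worst-case updates. This yields a worst-case algorithm on $\hat n$ nodes with amortized update time $\tilde O((pn)^{1-\epsilon} + p^{-1-\epsilon} n^{-\epsilon}) = O(\hat n^{1-\epsilon'})$ and query time $O((pn)^{2-\epsilon}) = O(\hat n^{2-\epsilon'})$ for a slightly smaller $\epsilon' > 0$. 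A union bound over the $O(\hat n^2)$ invocations of \cref{lem:edge control adaptive} adds at most $o(1)$ error on top of the algorithm's baseline $1/3$ error, so the composite procedure contradicts \cref{lem:lb for many problems-original} and, through its assumptions, the \omv conjecture.

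The main obstacle is the initialization overhead: the $\Theta(n)$ random edges incident to $\set{s,t}$ in $G_0$ force $\ell_0 = \tilde\Omega(n/p)$ smoothed updates before the online simulation even begins, which is asymptotically larger than the $\tilde\Theta(\hat n^2/p)$ updates used online. One must verify that the initialization contributes only $\tilde O(p^{-1-\epsilon} n^{-\epsilon})$ amortized per worst-case update, and that this is dominated by $\hat n^{1-\epsilon'}$ for a suitable $\epsilon'$ as long as $pn = \omega(1)$, the regime in which the theorem is non-vacuous. A secondary subtlety is the bipartite case: the random edges outside $\hat V$ can destroy the declared bipartition, but adding the relevant cross-edges to $R$ raises $|R|$ by only a factor of $O(\hat n)$ and therefore leaves the asymptotic accounting intact.
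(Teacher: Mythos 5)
Your overall plan matches the paper's: embed a worst-case instance from \cref{lem:lb for many problems-original} inside $\hat V$ of size $\hat n = pn/20$, maintain the embedding via \cref{lem:edge control adaptive}, and charge the resulting smoothed updates to a simulated $\hat n$-node worst-case algorithm. However, there are two genuine gaps.

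\textbf{The set $R$ is too small.} You take $R$ to be the $\hat V$-internal edges together with edges from $\set{s,t}$ to $V\setminus\hat V$. Isolating only $s$ and $t$ does confine \paths{s}{t}{3} and $s$-triangles to $\hat V$, but it fails for other problems in the statement. For a $4$-path $s$-$a$-$b$-$c$-$t$, the endpoints force $a,c\in\hat V$, yet $b$ may lie outside $\hat V$ via two uncontrolled random edges $(a,b),(b,c)\notin R$, so spurious $4$-paths are counted. The same escape happens for $s$ $k$-cycles with $k\geq 5$ (two or more inner nodes can leave $\hat V$). You do notice and patch the analogous issue for bipartite matching by enlarging $R$, but not for the counting problems. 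The paper's choice is $R=\hat V\times V$, which fully disconnects $\hat V$ from the rest and makes every witnessing subgraph through $s$ (or $s,t$) lie inside $\hat V$. With that $R$ one has $r\leq pn^2/20$, still within the $r\leq pn^2/18$ hypothesis of \cref{lem:edge control adaptive}.

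\textbf{The initialization cannot be amortized over the online updates.} You spread the $\ell_0=\tilde\Theta((\hat n^2+n)/p)$ setup steps over the $\hat n^2$ worst-case updates and claim the contribution $\tilde O(p^{-1-\epsilon}n^{-\epsilon})$ is dominated by $\hat n^{1-\epsilon'}$ whenever $pn=\omega(1)$. Setting $p=n^{-a}$, the dominance requires $a\leq(1-\epsilon'+\epsilon)/(2-\epsilon'+\epsilon)$, i.e.\ roughly $p\geq n^{-1/2}$; for $n^{-1}\ll p\ll n^{-1/2}$ the amortized cost exceeds $\hat n^{1-\epsilon'}$ for any $\epsilon'>0$, so the contradiction with \cref{lem:lb for many problems-original} is lost. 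With the corrected $R=\hat V\times V$ the situation is worse still, since $\ell_0=\tilde\Theta(n^2)$. The paper sidesteps this entirely: the initial embedding is charged to the \emph{preprocessing time} of the simulated $\hat n$-node algorithm, which by \cref{lem:lb for many problems-original} only needs to be polynomial in $\hat n$. That gives $\tilde O(n^2\,\tu(n))=O(p^{2-\epsilon}n^{3-\epsilon})=\hat n^{O(1)}$ for any $p=\Omega(n^{-1+\delta})$ (the range where $\hat n=\omega(1)$ and the claim is nontrivial), and only the online per-phase work of $O(n\log n)$ smoothed updates is amortized over the $\hat n$ updates of a phase.
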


In the next section, we show that 
counting $s$-triangles, $s$ $4$-cycles, \st 3-paths and \st 4-paths need even higher running times (in terms of $p$), i.e.~one cannot have 
$\tu(n)=O(pn^{1-\epsilon})$
and 
$\tq(n)=O(pn^{2-\epsilon})$.
Moreover, these lower bounds hold for the weaker, oblivious adversaries.

\begin{proof}
	When $p = o(1/n)$, the assertion is trivial. 
    Assume for contradiction that in the $p$-smoothed setting with adaptive adversary, one of the above problems has an algorithm with better running times than claimed on an $n$-node graph.
	For the bipartite problems, we will show the lower bound on a bipartite balanced graph.
	We devise a fast dynamic algorithm for the same problem in a worst-case dynamic graph $\cal H$ on $\hat n=pn/20$ nodes 
	by creating a $p$-smoothed $n$-node dynamic graph $\cal G$ and running the claimed fast algorithm on it.
	The new graph $\cal G$ is created by simulating a carefully chosen adversary that makes sure $\cal G$ has $\cal H$ as a disconnected subgraph;
	hence, the answers to the queries on $\cal G$ and $\cal H$ are correlated.
	This allows us to solve the problem on $\cal H$ with running times better than stated in \cref{lem:lb for many problems-original},
	a contradiction.
	
	Consider an instance of the problem on a fully-dynamic non-smoothed graph $\mathcal{H}$ on $\hat n =pn/20$ nodes, with an initial graph $H_0$.
    Recall that the worst-case dynamic graph is composed of phases, 
    where each phase consists of $\hat n$ updates followed by a query.
    We simulate the first phase, where $H_1$ is the final graph on which 
    the query is made;
    the rest pf the phases, if exist, are similar.
	
    Choose a random $n$-node graph $G_0$ where each edge exists w.p.~$1/2$, 
    which is a $p$-smoothed initial graph by \cref{observation:initial graph}. 
   	Note that there is no need to initialize the graph, and instead, we decide on the existence of each edge at the first time it is read or changed.
	Chooses an arbitrary set $\hat V$ of $pn/20$ nodes, 
	which includes the designated nodes ($s$ and sometimes $t$) if those exist in the problem,
	and is a balanced bipartite graph if the problem is on bipartite graphs.
    Let $R=\hat V\times V$ be the set of all potential edges inside $\hat V$ and connecting $\hat V$ to the rest of the graph. 
    Set $r=|R|$, so $\frac{p}{21}n^2\leq r\leq\frac{p}{20}n^2$ for large enough $n$.
    The set $R$ is fixed for the rest of this proof.
    
    By \cref{lem:edge control adaptive} with $R'\subseteq R$ and $r'\leq r$,
    the adversary can make sure the subgraph on $\hat V$ is isomorphic to the initial lower bound graph $H_0$,
    and is disconnected from the rest of the graph,
    in 
     $\ell=40 \frac{r}{p}\log n$
     steps w.h.p. 
	Start the simulation by these $\ell$ steps:
	chooses each step to be adversarial w.p.~$p$ and otherwise random,
	choose random changes uniformly at random, and the adversarial changes as in the lemma.
	
	The $n$-node graph (of $\cal G$) now has a subgraph on $\hat V$  isomorphic to $H_0$ and disconnected from the rest of the graph.
	We now simulate the first phase.
	Let $R'$ be the set of edges in $\hat V\times \hat V$ that are isomorphic to $H_0\symdif H_1$, and set $r'=|R'|$;
	the structure of the lower bound graph $\cal H$ guarantees 
	$r'=\hat n=pn/20$.
	The adversary uses \cref{lem:edge control adaptive} 
    to perform 
    $\ell=40 \frac{r'}{p}\log n$
    updates
    such that after them the subgraph on $\hat V$ is isomorphic to $H_1$ and disconnected from the rest of the graph w.h.p. 
    It then perform the first query, and returns the same value or a function of it, as follows.
	
	For all the subgraph counting problems, there is the same number of subgraphs through $s$ (or $s$ and $t$) in $H_1$ and in the disconnected subgraph on $\hat V$ isomorphic to it,
	so the algorithm returns the same.
	For maximum bipartite matching,
	the graph on $V\setminus\hat V$ is a random balanced bipartite graph on $n-\hat n$ nodes with each edge w.p.~$1/2$.
	Hence, it has a matching of $\frac{n-\hat n}{2}$ edges w.h.p.~\cite{ER64},
	and this quantity should be added to the output on $H_1$.
	The same output describes the minimum vertex cover size, by K\H{o}nig's Theorem~\cite{Konig31}.
	Finally, the graph has a perfect matching iff $h_1$ has a perfect matching, so the outputs are the same.

    For complexity, 
    note that the preprocessing takes $\tilde O(n^2)$ edge updates (in the graph $\cal G$),
    i.e. $\tilde O(n^2\tu(n))$ operations
    (and the other operations are negligible).
    By assumption, this is 
    $O(p^{2-\epsilon}n^{3-\epsilon})$
    which is polynomial in $\hat n$.
    Each phase of $\hat n$ updates of the original graph involves $O(n\log n)$ updates and a single query.
    Hence, each update of the original graph takes amortized 
    \begin{equation*}
    \begin{split}
    \hat\tu(\hat n)
    	&=\frac{1}{\hat n}n\log n\cdot \tu(n)\\
    	&=\frac{1}{\hat n}	O((pn)^{2-\epsilon})\\
    	&=O(\hat n^{1-\epsilon})
    \end{split}
    \end{equation*}
    and each query takes 
	\begin{equation*}
	\begin{split}
			\hat\tq(\hat n)
			&=\tq(n)
			=O((pn)^{2-\epsilon})
			=O(\hat n^{2-\epsilon})
		\end{split}
	\end{equation*}    
    contradicting \cref{lem:lb for many problems-original}.
\end{proof}

\subsection{Connectivity lower bound}

In \cref{lem:edge control adaptive}, if $r'$ is small, and we take the minimal number of steps $\ell = \Theta(r'/p)$, the error term might grow too large to apply this simulation for, say, $\poly(n)$ times.
One way to circumvent that is by making each simulation longer by a factor of $\log n$, reducing its error to $1/\poly(n)$, which allows us to union bound over all simulations.
However, in some cases (specifically, connectivity) this is too costly. 
We show a better result utilizing the amortization over all simulations.

\begin{lemma}
	\label{lem:edge control adaptive with few changes}
    Consider an $n$-node graph $G_0$, a set $R\subseteq \binom{V}{2}$ of $r=|R|$ potential edges,
    and a sequence $R'_1,\ldots R'_k$ of $k$ subsets $R'_i\subseteq R$ of size $r'_i=|R'_i|$ each, and $1\leq r_i\leq \hat r$.
    An adaptive adversary can guarantee $k$ phases, each starting with $G_{i-1}$ and ends with $G_i$ such that $\left(G_i \symdif G_{i-1}\right) \cap R = R_i$, using at most $\ell=\frac{12k \hat{r}}{p}$ steps,
    with error probability $1/k^c$ for any constant $c$ and sufficiently large $k$, provided $r\leq\frac{pn^2}{18}$.
\end{lemma}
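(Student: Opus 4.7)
The plan is to avoid the per-phase concentration of \cref{lem:edge control adaptive} (which would give error $\exp(-\Omega(r'_i))$ per phase—too weak to union-bound when some $r'_i$ are small) and instead amortize the adversary's total work over all $k$ phases in a single Chernoff argument. The key observation is that the total amount of work the adversary must do—namely $\sum_i |R'_i| \le k\hat r$, plus the cost of reversing any random flips that happen to land inside~$R$—is far below the $\approx p\ell = 12 k \hat r$ adversarial changes available during the $\ell = 12k\hat r/p$ rounds.

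The strategy I would have the adversary follow is greedy. Maintain a phase counter~$i$, initially $i = 1$, and at each adversarial turn flip any edge of $R$ on which $G_t \cap R$ currently disagrees with the running target $T_i := T_{i-1} \symdif R'_i$ (with $T_0 := G_0 \cap R$); after every step, adversarial or random, if $G_t \cap R = T_i$ then declare phase~$i$ complete and increment $i$. Define the potential
\[
\Phi_t := \bigl|(G_t \cap R) \symdif T_{i(t)}\bigr| + \sum_{j > i(t)} |R'_j|,
\]
so that $\Phi_0 \le \sum_{j=1}^k |R'_j| \le k\hat r$ and all phases are completed precisely when $\Phi_t = 0$. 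Each adversarial step decreases $\Phi$ by exactly $1$ as long as $\Phi > 0$; each random flip that lands in $R$ changes $\Phi$ by at most $+1$; every other step leaves $\Phi$ unchanged. Letting $A$ and $B$ count respectively the adversarial steps and the random flips hitting $R$ during the $\ell$ rounds, this yields $\Phi_\ell \le \max\bigl(0,\, k\hat r - A + B\bigr)$, so it suffices to prove $A - B \ge k\hat r$ with probability at least $1 - 1/k^c$.

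Here $A \sim \mathrm{Binomial}(\ell, p)$ has mean $12k\hat r$, while $B$ is dominated by $\mathrm{Binomial}(\ell, p/9)$ thanks to the assumption $r \le pn^2/18$ (the same bound as in the proof of \cref{lem:edge control adaptive}), with mean at most $4k\hat r/3$. A standard Chernoff lower-tail gives $\Pr[A < 6k\hat r] \le \exp(-p\ell/8) = \exp(-1.5\, k\hat r)$, and the multiplicative upper-tail bound $\Pr[B \ge T] \le (e\,\mathbb{E}[B]/T)^T$ applied at $T = 4k\hat r$ gives $\Pr[B > 4k\hat r] \le (e/3)^{4k\hat r} \le \exp(-1.4\, k\hat r)$. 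Combining, $A - B \ge 2k\hat r \ge k\hat r$ except with probability $2\exp(-\Omega(k\hat r)) \le 2\exp(-\Omega(k))$, which drops below $1/k^c$ once $k$ is large enough in terms of $c$. The main conceptual hurdle is to realize that the argument can be amortized in this way—once the potential $\Phi_t$ is introduced, the $k$ phase-completion events all reduce to the single question of whether a biased random walk reaches $0$ within $\ell$ steps, and standard concentration finishes the job; the minor technical care is to use the multiplicative Chernoff form for $B$ so that the tail remains $\exp(-\Omega(k\hat r))$ uniformly in the smoothing parameter $p$.
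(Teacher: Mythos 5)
Your proof is correct and takes a genuinely different route from the paper's. The paper treats each phase separately: it invokes \cref{lem:edge control adaptive} to bound the phase-length tail, truncates $\ell_i$ at $\ell_{\max}=\Theta(\hat r\log k/p)$ to obtain bounded, independent phase durations, shows $\Expc{}{\ell_i}\le 5r'_i/(4p)$ via Wald's equation applied to the per-step drift, and then applies Hoeffding to $\sum_i\ell_i$. You instead collapse all $k$ phases into a single potential $\Phi_t$ whose initial value is at most $k\hat r$, observe that each adversarial turn decreases $\Phi$ by one (the target-incrementing rule keeps this invariant intact at phase boundaries because every $|R'_j|\ge 1$) while only random flips landing in $R$ can increase it, and finish with two direct Chernoff bounds on $A\sim\mathrm{Bin}(\ell,p)$ and on the stochastically dominated $B$. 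This replaces Wald's equation and the truncation device entirely, giving a shorter, more self-contained argument at the modest price of a slightly larger implied constant in the exponent (your bound on $(e/3)^{4k\hat r}$ is roughly $\exp(-0.39\,k\hat r)$ rather than the $\exp(-1.4\,k\hat r)$ you wrote, but the exponent is still $\Omega(k\hat r)=\Omega(k)$, which is all you need). The paper's approach, by contrast, reuses \cref{lem:edge control adaptive} as a black box, which may be preferable for modularity. Both yield the stated $1/k^c$ error for large $k$; neither is clearly superior.
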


\begin{proof}
    Using the guarantee on $r$, we can apply~\cref{lem:edge control adaptive}, using $\ell_{\max} = \frac{40(c+2) \hat{r} \log k}{p}$, with some constant $c > 0$. The error is upper bounded by $k^{-c-2}$, and $k^{-c-1}$ for all iterations using union bound.
    We therefore assume the adversary ``gives up'' on a simulation after $\ell_{\max}$ steps, which happens with a small probability. This allows us to show concentration over enough phases. 
	
    Consider a single phase $i$. As per the analysis from the proof of~\cref{lem:edge control adaptive}, denote $\ell_i := \argmin{j} \left(f(j) = G_i\right)$, the first time that $G_i$ is reached.

    Applying~\cref{lem:edge control adaptive} with different values gives $\Prob{}{\ell_i > t} \leq \exp\left(-pt/40\right)$. By Fubini's theorem,
    \[
        \Expc{}{\ell_i} 
        = \sum_{t=0}^{\infty} \Prob{}{\ell_i \geq t} 
        \leq \sum_{t=0}^{\infty} \exp\left(-pt/40\right) .
    \]
    But this last sequences converges (having a fixed fractional quotient, $e^{-p/40} < 1$), proving the expectation is finite $\Expc{}{\ell_i} < \infty$, which is crucial to invoke Wald's equation for stopping times later on. 
    
    Similarly to the proof of~\cref{lem:edge control adaptive}, we start at $f(0) = r_i$ but this time we count all steps, not only effective ones.
    We continue until $f(\ell_i) = 0$, the first time we reach $0$. In between $f(i) > 0$, and so each difference $X_j = f(j)-f(j-1)$ has the same distribution, $X$. With probability at least $p$, this step hit an edge of $R$, and conditioned on that, we have expected decrease of at least $4/5$.
    Overall, we write $\Expc{}{X} \leq -4p/5$. The sum of all differences is exactly $r_i$, and so by Wald's equation:
    \[
        r_i = \Expc{}{\sum_{i\in[\ell']} X_i} = \Expc{}{\ell_i} \cdot \Expc{}{X} .
    \]
    Plugging in the values and bounds, we finally get $\Expc{}{\ell_i} \leq 5r_i/(4p)$. Recall the fact that $\ell_i \in[0,\ell_{\max}]$, and note that each $\ell_i$ is independent of the others.
    
    Use $\ell := \sum_{i\in[k]} \ell_i$ with $\Expc{}{\ell} \leq 5\hat{r}k/(4p)$, and apply Hoeffding's inequality:
    \[
        \Prob{}{\ell \geq \frac{12\hat{r}k}{p}}
        \leq \Prob{}{\ell \geq \Expc{}{\ell} + \frac{10\hat{r}k}{p}}
        \leq \exp\left(- \frac{200\hat{r}^2k^2/p^2}{k \ell_{\max}^2}\right) 
        = \exp\left(-\frac{k}{1600c^2 \log ^2 (k)}\right)
        \leq \frac{1}{k^{c+1}} .
    \]
    Where the last inequality holds for large enough $k$.
    Applying union bound over the two errors, the adversary succeeds within $12k\hat{r}/p$ steps with error probability at most $1/k^c$. 
\end{proof}

We apply~\cref{lem:edge control adaptive with few changes} with $\hat{r} = 4$ and $k = \sqrt{n}$.
This gives error that is $o_n(1)$, and overall $\Theta\left(\sqrt{n}/p\right)$ steps (with no extra logarithmic factors, as desired).

We state a known lower bound for connectivity in the worst-case (non-smoothed) setting. 
\begin{lemma}[\hspace{1sp}{\cite[Section~9.1]{PatrascuD06}}]
	\label{lem:lb for connectivity-original}
	There is no dynamic algorithm for 
	\textbf{connectivity}
	with error probability $\hat n^{-\Omega(1)}$,
	constant query time,
	and 
	$\hat\tu(\hat n)=o(\log \hat n)$
	amortized time.
	
	The proof uses a sequence of phases,
	which must take $\Omega(\sqrt{\hat n}\log\hat n)$ steps each on average.
	Each phase is composed of 
	$2\sqrt{\hat n}$ edge flips
	and then 
	$\sqrt{\hat n}$
	subsequences, each composed of $O(1)$ edge flips and a query.
\end{lemma}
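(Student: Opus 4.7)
The plan is to prove this in the cell-probe model by constructing a hard distribution of update-query sequences whose phases encode $\Omega(\sqrt{\hat n}\log \hat n)$ bits of information that can only be extracted through $\sqrt{\hat n}$ connectivity queries. The phase structure described in the statement is a direct hint: each phase must transmit this many bits between its $2\sqrt{\hat n}$ ``write'' updates and its $\sqrt{\hat n}$ queries, so cell-probe counting should yield $\Omega(\log \hat n)$ amortized cost per operation.

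First, I would design the hard instance. Partition the $\hat n$ nodes into a left side $L$ and a right side $R$ of size $\sqrt{\hat n}$ each, plus auxiliary gadget nodes, and maintain the invariant that the only $L$-$R$ connectivity in the graph is a perfect matching $M \colon L \to R$ carried by $\sqrt{\hat n}$ edges. A phase begins by overwriting $M$ with a freshly drawn uniformly random perfect matching $M'$: remove the $\sqrt{\hat n}$ old matching edges and add the $\sqrt{\hat n}$ new ones, totaling $2\sqrt{\hat n}$ flips. Then, for each $\ell_i \in L$, a query subsequence performs $O(1)$ selector updates that use the gadgets to expose only the $\ell_i$-incident matching edge, and asks a single connectivity query whose answer (together with previous answers) pins down $M'(\ell_i)$. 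Thus the $\sqrt{\hat n}$ query answers collectively recover $M'$.

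Next, I would lower bound the total cell-probe cost of a phase via the entropy of $M'$. A uniform perfect matching on $\sqrt{\hat n}$ pairs has entropy $\log(\sqrt{\hat n}!) = \Theta(\sqrt{\hat n}\log \hat n)$, and the phase's queries determine $M'$, so the cells read during the query stage must transfer this many bits from cells written during or before the phase. With $O(\log \hat n)$-bit words, the number of probes attributable to the phase must therefore be $\Omega(\sqrt{\hat n}\log \hat n)$. Amortizing over the $\Theta(\sqrt{\hat n})$ updates and queries in a phase, and using the constant-query-time hypothesis, forces the update operation to cost amortized $\Omega(\log \hat n)$ cell probes, matching the statement.

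The main obstacle is the information-flow accounting in the presence of amortization across phases, since an algorithm might probe cells written in earlier phases to pre-compute useful state. This is addressed by the chronogram technique: partition the execution into epochs of geometrically increasing length, and argue that because each phase's matching $M'$ is drawn independently and uniformly, the queries cannot meaningfully shortcut through cells last written in epochs far in the past, so $\Omega(\log \hat n)$ epochs must contribute to each query and a total of $\Omega(\sqrt{\hat n}\log \hat n)$ probes is forced within the current phase's timeframe. Carrying out this chronogram analysis while tolerating the stated $\hat n^{-\Omega(1)}$ error probability is the technical crux of the proof; the graph construction and the per-phase entropy bound are by comparison routine.
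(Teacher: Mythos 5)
First, note that the paper does not prove this lemma itself: it is imported from P\u{a}tra\c{s}cu and Demaine, whose hard instance is a reduction from the partial-sums problem over the symmetric group --- a $\sqrt{\hat n}\times\sqrt{\hat n}$ grid of nodes joined column-to-column by permutation matchings, with a ``macro-update'' rewriting one column's permutation ($2\sqrt{\hat n}$ flips) and a ``verify-sum'' implemented by $\sqrt{\hat n}$ connectivity queries. Your matching-based instance is close in spirit to that, so the construction side of your sketch is not the problem.

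The genuine gap is in the lower-bound engine. Your per-phase entropy accounting cannot yield the $\log\hat n$ factor: a phase contains $\Theta(\sqrt{\hat n})$ operations and its matching $M'$ carries $\Theta(\sqrt{\hat n}\log\hat n)$ bits, so transferring that information through $O(\log\hat n)$-bit cells forces only $\Omega(\sqrt{\hat n})$ probes, i.e.\ $\Omega(1)$ amortized --- the $\log(\sqrt{\hat n}\,!)$ entropy exactly cancels against the word size. The $\Omega(\log\hat n)$ bound arises only by summing information-transfer arguments over \emph{all scales of interleaving} between writes and reads, which is precisely P\u{a}tra\c{s}cu--Demaine's information-transfer tree: a balanced binary tree over the operation sequence where each internal node charges the probes whose write time lies in its left subtree and read time in its right subtree, and each of the $\Theta(\log\hat n)$ levels contributes $\Omega(1)$ words per operation. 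Crucially, that technique was introduced because the chronogram/epoch method you fall back on is inherently lossy: with geometrically growing epochs the epoch ratio must exceed roughly $\tu\cdot w$, leaving only $O(\log\hat n/\log\log\hat n)$ epochs and hence only an $\Omega(\log\hat n/\log\log\hat n)$ amortized bound. As written, your argument would therefore prove a strictly weaker statement than the lemma; replacing the chronogram step by the information-transfer tree (and then handling the $\hat n^{-\Omega(1)}$ error probability via Fano-type robustness of the entropy count) is the actual content of the cited proof.
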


The connectivity lower bound is from \cite{PatrascuD06}, 
where the result appears in Section~9.1 and the details of the proof are in Theorem~2.4 and Section~6.1.

The proof uses a sequence of $\sqrt[3]{\hat n}$ phases,
each composed of two macro-operations:
a macro-updates, followed by verify-sum.
A macro-update can be implemented by $2\sqrt{\hat n}$ edge flips, 
and a verify-sum can be implemented by 
$\sqrt{\hat n}$ phases, each consists of $O(1)$ edge flips and a single query.
The paper shows that each phase must take $\Omega(\sqrt{\hat n}\log {\hat n})$ cell-prob steps in average, 
and hence the same number of operations,
which yields the lower bound.

Combining this result with \cref{lem:edge control adaptive} and \cref{lem:edge control adaptive with few changes}, we can give a lower bound for connectivity in the $p$-smoothed setting with an adaptive adversary.

\begin{theorem}
	\label{thm:lb for connectivity-smoothed advers}
	Fix $\log^2n/n < p \leq 1$. 
	There is no dynamic algorithm for
	$p$-smoothed dynamic graphs with an adaptive adversary
	for 
	\textbf{connectivity}
	with error probability $(pn)^{-\Omega(1)}$,
	constant query time,
	and 
	$\tu(n)=o(\log(p n))$
	amortized time.
\end{theorem}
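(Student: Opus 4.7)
The plan is to adapt the embedding strategy of \cref{thm:lb for many problems-smoothed advers} for connectivity, reducing from the worst-case lower bound of \cref{lem:lb for connectivity-original} on a control set of $\hat n = \Theta(pn)$ nodes embedded inside the $n$-node smoothed graph. The key distinction from the polynomial lower bounds is that the target bound on $\tu(n)$ is only logarithmic, so we cannot afford any additional $\log n$ slack from union-bounding individual simulations; this is precisely the reason to replace \cref{lem:edge control adaptive} with the total-step guarantee of \cref{lem:edge control adaptive with few changes} in the reduction.

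Assume for contradiction that an algorithm $\mathcal{A}$ solves $p$-smoothed connectivity with error probability $(pn)^{-\Omega(1)}$, constant query time, and $\tu(n) = o(\log(pn))$ amortized update time. Set $\hat n = \lfloor pn / 20 \rfloor$, which is $\omega(1)$ by the hypothesis $p > \log^2 n / n$. Fix $\hat V \subseteq V$ with $|\hat V| = \hat n$, a ``bridge'' node $v^* \in \hat V$, and a node $w^* \in V \setminus \hat V$; let $R = \hat V \times V$ so that $|R| \leq pn^2/18$, meeting the premise of \cref{lem:edge control adaptive with few changes}. Sample the initial graph $G_0$ uniformly at random, which is a valid $p$-smoothed initial graph by \cref{observation:initial graph}.

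The adaptive adversary simulates the hard worst-case sequence from \cref{lem:lb for connectivity-original} inside $\hat V$, one PD update at a time. Each PD update is treated as one ``phase'' of \cref{lem:edge control adaptive with few changes} with $\hat r = O(1)$: the adversary effects the prescribed edge flip inside $\hat V$ and repairs any random changes that hit $R$ since the previous phase, maintaining the invariant that $(v^*, w^*)$ is the only $\hat V$-to-$(V \setminus \hat V)$ edge. Over all $k = \poly(n)$ PD updates this uses $O(k/p)$ smoothed steps with aggregated failure probability $o(1)$. The uncontrolled sub-graph on $V \setminus \hat V$ starts as a uniformly random graph (restriction of $G_0$) and is only modified by uniform random flips of its own edges; arguing analogously to \cref{claim:oblivious_flip_looks_random} combined with \cref{claim:negligible_probabilities} and union-bounding over the $\poly(n)$ rounds, it stays connected at every simulation round with probability $1-o(1)$. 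Consequently the persistent bridge forces the global connectivity query on $G$ to coincide with the connectivity query on the embedded PD instance, so answers from $\mathcal{A}$ solve the worst-case problem.

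Tallying the work, $\mathcal{A}$ spends $O((k/p)\,\tu(n)) + O(k\cdot \tq(n)) = O((k/p)\,\tu(n)) + O(k)$ across the simulation, whereas \cref{lem:lb for connectivity-original} demands total work at least $\Omega(k \log \hat n) = \Omega(k \log(pn))$ on the embedded instance. Rearranging yields a contradiction with $\tu(n) = o(\log(pn))$ and establishes the theorem. The principal obstacle is exactly this accounting: a per-phase union bound over the $\poly(n)$ simulation phases through \cref{lem:edge control adaptive} alone would inject an additional $\log n$ factor that swallows the logarithmic lower bound, so the proof must instead lean on the concentration-based total-step guarantee of \cref{lem:edge control adaptive with few changes} to keep the overhead at its expected value without any multiplicative $\log$ slack.
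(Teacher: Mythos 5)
Your proposal is correct and follows essentially the same route as the paper: embed the P\u{a}tra\c{s}cu--Demaine instance on $\hat n=\Theta(pn)$ nodes kept attached to the (w.h.p.\ connected) random remainder by a single bridge edge, and use the amortized total-step guarantee of \cref{lem:edge control adaptive with few changes} precisely to avoid the extra $\log$ factor that a per-phase union bound via \cref{lem:edge control adaptive} would incur. The only (immaterial) difference is in batching: the paper handles each phase's bulk of $2\sqrt{\hat n}$ flips with \cref{lem:edge control adaptive} (where $r'$ is large enough for concentration) and reserves \cref{lem:edge control adaptive with few changes} for the $\sqrt{\hat n}$ constant-size sub-phases, whereas you apply the latter lemma uniformly to all updates.
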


\begin{proof}
	Assume for contradiction that in the $p$-smoothed setting with adaptive adversary, there is a connectivity algorithm as above with $\tu=o(\log(pn))$ on an $n$-node graph.
	We follow a simulation argument as in the proof of \cref{thm:lb for many problems-smoothed advers}.
	
	Consider an instance of the problem on a fully-dynamic non-smoothed graph $\mathcal{H}$ on $\hat n =pn/20$ nodes, with an initial graph $H_0$.
	Recall that the worst-case dynamic graph is composed of phases, 
	where each phase consists of $2\hat n$ updates,
	followed by $\hat n$
	sub-phases of $O(1)$ updates and a query.
	We simulate the first phase,
	where the first part of $2\sqrt{\hat n}$ edge flips ends with a graph $H'$, and 
	the queries in the next $k=\sqrt{\hat n}$ sub-phases are on the graphs $H_1,\ldots,H_k$.
	The rest of the phases are similarly simulated.
	
	Choose a random $n$-node graph $G_0$ where each edge exists w.p.~$1/2$ (a $p$-smoothed input by \cref{observation:initial graph}),
	and an arbitrary set $\hat V$ of $pn/20$ nodes.
	Let $R=\hat V\times V$ be the set of all potential edges inside $\hat V$ and connecting $\hat V$ to the rest of the graph.
	Set $r=|R|$, so $\frac{p}{21}n^2\leq r\leq\frac{p}{20}n^2$ for large enough $n$.
	Fix an edge $\hat e$ connecting $\hat V$ and the rest of the graph.
	
	By \cref{lem:edge control adaptive} with $R'\subseteq R$ and $r'\leq r$,
	the adversary can make sure the subgraph on $\hat V$ is isomorphic to the initial lower bound graph $H_0$,
	and the only edge connecting $\hat V$
	and the rest of the graph is $\hat e$,
	in
	$\ell=40 \frac{r}{p}\log n$
	steps w.h.p. 
	Start the simulation with these $\ell$ steps.
	
	The $n$-node graph (of $\cal G$) now has a subgraph on $\hat V$  isomorphic to $H_0$ and connected to the rest of the graph by a single edge $\hat e$.
	We now simulate the first part of the phase, which ends with $H'$.
	Let $R'$ be the set of edges in $\hat V\times \hat V$ that are isomorphic to $H_0\symdif H'$, and set $r'=|R'|$;
	the structure of the lower bound graph $\cal H$ guarantees 
	$r'=2\sqrt{\hat n}=\sqrt{pn/5}$.
	The adversary uses \cref{lem:edge control adaptive} 
	to perform
	$\ell=10 \frac{r'}{p}=O(\sqrt{n/p})$ steps (note that $\ell$ is large enough due to the range of $p$)
	such that after them the subgraph on $\hat V$ is isomorphic to $H'$ and is connected to the rest of the graph only by $\hat e$ w.h.p.
	
	We now turn to simulate the sub-phases.
	Recall that there are $k=\sqrt{\hat n}$ sub-phases,
	each with $O(1)$ edge changes an a single query.
	By \cref{lem:edge control adaptive with few changes},
	this can by be simulated in $\ell=O(\sqrt{\hat n}/p)=O(\sqrt{n/p})$ steps
	with failure probability $\hat n^{-c/2}=O((pn)^{-c/2})$ for any constant $c$.
	
	In each query, we return the same output on the simulated $H_i$ as on the $n$-node graph.
	This is because 
	the subgraph on $\hat V$ is connected iff $H_i$ is connected, by the isomorphism:
	the rest of the graph is a random graph with each edge existing w.p.~$1/2$, so it is connected w.h.p.\ if $n$ is large enough (see \cref{claim:negligible_probabilities});
	and the edge $\hat e$ connects 
	$\hat V$ to the rest of the graph.
	Hence, the whole graph is connected iff $H_i$ is connected.

	For complexity, 
	note that 
	each query is answered in a constant time, 
	and each phase is simulated using 
	$O(\sqrt{\hat n})$ updates, which are
	$O(\sqrt{\hat n})\tu(n)$ steps.
	As a step must take $\Omega(\sqrt{\hat n}\log(\hat n))$
	steps on average,
	we cannot have $\tu(n)=o(\log(\hat n))=o(\log(pn))$ amortized time.
\end{proof}

\subsection{Lower bounds with oblivious add/remove adversary}
In this section we focus on oblivious adversaries, but with the power to add/remove edges, which allows it to ``insist'' on the embedding. 
A weaker version of~\cref{lem:edge control adaptive} can be shown for this adversary, one that only holds with $p = 1 - o(1)$, but it is enough to leverage to a non-trivial lower bound, showing separation between the oblivious add/remove and the oblivious flip adversaries (along with the upper bounds of \cref{lem:ub_oblivious_flip}).

For convenience, write $q = 1 - p$, the probability that a random edge is flipped at a given step. We show:  

\begin{lemma}[Embedding a graph, oblivious add/remove adversary]
	\label{lem:edge control oblivious}
	Consider an $n$-node graph $G$, a set $R\subseteq \binom{V}{2}$ of $r=|R|$ potential edges,
	and a subset $R'\subseteq R$ of size $|R'|= r'$.
	An adaptive adversary can reach a new graph $G'$ satisfying 
	$(G\symdif G')\cap R = R'$
	within at most 
    $\ell$ steps with failure probability at most
    \[
        r' \cdot q^{\frac{\ell}{r'}} + \frac{q\ell r}{n^2}.
    \]
\end{lemma}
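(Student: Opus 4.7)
The plan is to exhibit a simple ``retry'' strategy for the oblivious add/remove adversary and bound its failure probability via two independent bad events. The adversary partitions the $\ell$ steps into $r'$ consecutive blocks of length $\ell/r'$, assigns the $i$-th block to the $i$-th edge $e_i$ of $R'$, and in every step of block $i$ issues the same add-or-remove operation on $e_i$, with the action chosen according to whether $G'$ wants $e_i$ present or absent relative to~$G$. The key structural point I would highlight is that, unlike a flip, the operations ``add~$e$'' and ``remove~$e$'' are idempotent with respect to a fixed target state of $e$: a \emph{single} genuinely adversarial step in block~$i$ already pins $e_i$ to its target state, and no later block can disturb $e_i$ since later blocks act on different edges of $R'$.

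Next I would isolate the two failure modes. Let $A$ be the event that some block $i$ has all of its $\ell/r'$ steps replaced by random flips (so the adversary never acts on $e_i$), and let $B$ be the event that at least one random flip anywhere during the execution lands in $R$. I would then check that on $\overline A\cap\overline B$ the strategy succeeds: $\overline A$ guarantees every $e_i\in R'$ was pushed to its target state by at least one adversarial action, and $\overline B$ guarantees no edge of $R$ was ever touched by a random flip, so at the end each $e_i\in R'$ sits in its target state (flipped relative to $G$) while each $e\in R\setminus R'$ retains its state in $G$. This is exactly $(G\symdif G')\cap R=R'$.

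The remaining step is quantitative and entirely routine. Each step is independently replaced by a random flip with probability~$q$, so the probability that all $\ell/r'$ steps of a fixed block are random is exactly $q^{\ell/r'}$; a union bound over the $r'$ blocks yields $\Pr[A]\leq r'\,q^{\ell/r'}$, matching the first term. For $B$, at each of the $\ell$ steps the joint probability of having a random flip that lands in $R$ is $q\cdot |R|/\binom{n}{2}$, and a union bound over steps gives the second term $q\ell r/n^2$. A final union bound over $A$ and $B$ completes the proof.

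I do not foresee a real obstacle; the proof is much shorter and less delicate than that of Lemma~\ref{lem:edge control adaptive} because no random-walk / gambler's-ruin analysis is needed --- the idempotency of add/remove lets a single oblivious step per edge suffice. That idempotency is the only conceptual observation, and it is precisely what permits an oblivious adversary to ``insist'' by repetition despite being unable to observe the random interventions. It is also why the resulting bound is strictly weaker than the adaptive one: an oblivious adversary cannot stop acting on an edge once it is set and must budget $\ell/r'$ attempts per edge, whereas an adaptive adversary can reallocate its effort the moment an edge reaches its target state.
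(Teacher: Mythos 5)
Your proposal is correct and takes essentially the same approach as the paper's proof: both divide the $\ell$ steps equally among the $r'$ target edges, rely on the idempotency of add/remove to conclude that a single non-randomized step per edge suffices, and union-bound over the two failure events (some block fully randomized; some random flip hits $R$). Your write-up is a bit more explicit about why idempotency and disjointness of blocks make the argument go through, but the decomposition, the events, and the resulting quantities are identical to the paper's.
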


The oblivious adversary uses all its changes for $R'$, trying to enforce the desired arrangement (using add and remove operations). This can be done even with no knowledge about the current graph, assuming the previous embedding worked well.

Interestingly, even without knowledge of $G_0$, the oblivious adversary that can insist on certain changes, which allows it to control its own piece of the graph, albeit quite small. 
The flip adversary cannot do even this.

\begin{proof}
    Consider a sequence of $\ell $ changes.
    The adversary equally divides its attention among the $r'$ edges, repeatedly adding or removing them accordingly.
    Each adversarial change is swapped by a random one with probability $q$, and so for each edge, with probability $q^{\ell/r'}$ none of the $q$ add/remove operations on it follow through and the change does not happen.
    Using a union bound over all edges of $R'$, we have probability of $r'\cdot q^{\ell/r'}$ that some edge was not changed by the adversary.

    On the other hand, each step involves a probability of $q \cdot (r/n^2)$ for a random edge to be chosen, and then hit the set $R$. 
    Using a union bound over $\ell$ changes, we have an upper bound of $q\ell r/n^2$ that any random change in the entire sequence touched $R$.
\end{proof}

We use this lemma to prove lower bounds for the $p$-smoothed setting with an oblivious add/remove adversary, in a way similar to the proof of \cref{thm:lb for many problems-smoothed advers}.

\begin{theorem}
	\label{thm:lb for many problems-smoothed oblivious ar}
	Fix a constant $0<\delta<1$ and $1-n^{-\delta} \leq p \leq 1$. 
	There is no dynamic algorithm for
	$p$-smoothed dynamic graphs with an              oblivious add/remove adversary
	for 
	\textbf{%
		counting $s$ $k$-cycles for $k\geq 3$ odd} (and hence $s$-triangle detection),
	\textbf{%
		counting \paths{s}{t}{3},
		counting \paths{s}{t}{4},
		bipartite perfect matching,
		bipartite maximum matching} 
	and 
	\textbf{bipartite minimum vertex cover} 
	with error probability at most $1/3$,
	polynomial preprocessing time,
	$\tu(n)=O(n^{\frac{\delta}{3}-\epsilon})$
	and
	$\tq(n)
	=O(n^{\frac{2\delta}{3}-\epsilon})$
	(both amortized)
	for any $\epsilon>0$ 
	unless the \omv conjecture is false.	
\end{theorem}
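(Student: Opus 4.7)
The plan is to mirror the proof of \cref{thm:lb for many problems-smoothed advers}, replacing \cref{lem:edge control adaptive} with \cref{lem:edge control oblivious} and choosing parameters suited to an oblivious add/remove adversary in the regime $q := 1-p \leq n^{-\delta}$. Suppose for contradiction that some problem from the list admits an algorithm in the $p$-smoothed oblivious add/remove model with the stated update and query times. Set $\hat n := n^{\delta/3}$. Given a worst-case instance $\mathcal{H}$ on $\hat n$ nodes following the phase structure of \cref{lem:lb for many problems-original} (each phase: $\hat n$ edge changes followed by one query), I would construct a $p$-smoothed instance $\mathcal{G}$ on $n$ nodes as follows. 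The adversary proposes a uniformly random $H_0$, which by \cref{observation:initial graph} yields a uniformly random $G_0$; it fixes a set $\hat V \subseteq V$ of size $\hat n$ (balanced bipartite, containing any designated nodes $s,t$ when applicable), and targets only edges in $R := \{e \in \binom{V}{2} : e \cap \hat V \neq \emptyset\}$, with $r := |R| = \Theta(n^{1+\delta/3})$. The adversary's add/remove operations first isolate $\hat V$ and install the worst-case initial graph, then in each phase replicate the $\hat n$ worst-case edge changes on $\hat V$ while also reverting any random disruptions inside $R$.

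For the preprocessing, I would invoke \cref{lem:edge control oblivious} with $\ell_0 = \Theta(r \log^2 n)$ steps and $R'$ encoding $H_0$ together with the isolation of $\hat V$. The failure bound is then $r \cdot q^{\ell_0/r} + q\ell_0 r/n^2 = n^{1+\delta/3 - \delta\log^2 n} + O(n^{-\delta/3}\log^2 n) = o(1)$. For each of the $\hat n$ phases, invoke the same lemma with $r' = \hat n$ and $\ell = \Theta(\hat n \log^2 n)$, yielding a per-phase failure bound of $\hat n \cdot q^{\log^2 n} + q\ell r/n^2 = O(n^{-1-\delta/3}\log^2 n)$. A union bound over all $\hat n$ phases keeps the total simulation error $o(1)$, well below $1/3$ after combining with the base algorithm's error.

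Correctness of query translation carries over verbatim from the proof of \cref{thm:lb for many problems-smoothed advers}: because $\hat V$ is isolated, the $s$- and $st$-subgraph counts in $\mathcal{G}$ equal those in $\mathcal{H}$; for bipartite matching and vertex cover, one adds the deterministic contribution of the near-perfect matching that the random bipartite graph on $V \setminus \hat V$ admits w.h.p.\ by \cref{claim:negligible_probabilities}. For complexity, the preprocessing of $\mathcal G$ consists of $\ell_0 \cdot \tu(n) = O(n^{1+2\delta/3 - \epsilon}\log^2 n)$ steps, which is polynomial. Each simulated worst-case update costs amortized $(\ell/\hat n) \cdot \tu(n) = O(\log^2 n \cdot n^{\delta/3 - \epsilon}) = O(\hat n^{1 - 3\epsilon/\delta}\log^2 n)$, and each query costs $\tq(n) = O(n^{2\delta/3 - \epsilon}) = O(\hat n^{2 - 3\epsilon/\delta})$. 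Picking any $0 < \epsilon' < 3\epsilon/\delta$, these running times violate \cref{lem:lb for many problems-original}, a contradiction.

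The main obstacle is parameter tuning. Unlike the adaptive setting, an oblivious adversary cannot react to random noise, and \cref{lem:edge control oblivious} is only useful when $q$ is polynomially small, which forces the restriction $p \geq 1 - n^{-\delta}$. Within this regime one must balance the two failure terms: $r'q^{\ell/r'}$ demands $\ell/r'$ exceed $\Theta(\log(r')/\log(1/q))$, while $q\ell r/n^2$ demands $\ell r \ll n^{2+\delta}$. Both constraints together cap $\hat n$ at a polynomial in $n$, and $\hat n = n^{\delta/3}$ turns out to be the right pivot producing the three-fold exponent reduction in both $\tu$ and $\tq$ stated in the theorem.
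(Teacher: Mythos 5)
Your proof proposal is correct and follows essentially the same route as the paper's: embed a worst-case instance on $\hat n = n^{\delta/3}$ nodes, use \cref{lem:edge control oblivious} with $R = \hat V \times V$ for both the preprocessing and per-phase simulations, bound the two failure terms using $q = 1-p \leq n^{-\delta}$, and translate the contradiction to \cref{lem:lb for many problems-original}; your $\Theta(\log^2 n)$-length simulation windows in place of the paper's $\Theta(\log n)$ ones make the union bound over phases even more comfortable without affecting the polynomial conclusion. One small imprecision: an oblivious adversary cannot "revert random disruptions" (it never sees them); the argument in fact relies on \cref{lem:edge control oblivious} bounding the probability that any random change hits $R$ at all, so disruptions simply do not occur w.h.p.\ rather than being undone.
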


This result is clearly much weaker than 
\cref{thm:lb for many problems-smoothed advers},
which is not surprising as the oblivious add/remove adversary is weaker than the oblivious one.
Yet, it is interesting as it shows a separation between the oblivious add/remove adversary and the oblivious flip adversary for the problem of bipartite perfect matching --- compare to \cref{lem:ub_oblivious_flip}.

\begin{proof}
	Assume for contradiction that in the $p$-smoothed setting with oblivious add/remove adversary, one of the above problems has an algorithm with better running times than claimed on an $n$-node graph.
	We devise a fast dynamic algorithm for the same problem in a worst-case dynamic graph $\cal H$ on $\hat n=n^{\delta/3}$ nodes
	by creating a $p$-smoothed $n$-node dynamic graph $\cal G$ and running the claimed fast algorithm on it, as in the proof of \cref{thm:lb for many problems-smoothed advers}.
	
	Consider an instance of the problem on a fully-dynamic non-smoothed graph $\mathcal{H}$ on $\hat n $ nodes, with an initial graph $H_0$.
	Choose a random $n$-node graph $G_0$ where each edge exists w.p.~$1/2$ as the initial graph (\cref{observation:initial graph}). 
	Chooses an arbitrary set $\hat V$ of $\hat n$ nodes, 
	which includes the designated nodes ($s$ and sometimes $t$) if those exist in the problem,
	and is a balanced bipartite graph if the problem is on bipartite graphs.
	Let $R=\hat V\times V$ be the set of all potential edges inside $\hat V$ and connecting $\hat V$ to the rest of the graph. 
	Set $r=|R|= n^{1+\delta/3}$.
	
	By \cref{lem:edge control oblivious} with $R'\subseteq R$ (and $r'\leq r)$,
	the adversary can make sure the subgraph on $\hat V$ is isomorphic to the initial lower bound graph $H_0$,
	and is disconnected from the rest of the graph,
	in
	$\ell=3r\log n$
	steps with error probability 
	\[
	r'\cdot q^{\ell/r'} + \frac{q\ell r}{n^2}
	\leq r (1/2)^{3\log n} + 3\log n \cdot n^{-\delta/3}
	= o(1) .
	\]
	
	The $n$-node graph (of $\cal G$) now has a subgraph on $\hat V$  isomorphic to $H_0$ and disconnected from the rest of the graph.
	We now simulate the first phase.
	Let $R'$ be the set of edges in $\hat V\times \hat V$ that are isomorphic to $H_0\symdif H_1$, and set $r'=|R'|$;
	the structure of the lower bound graph $\cal H$ guarantees 
	$r'=\hat n$.
	The adversary uses \cref{lem:edge control oblivious}  
	to perform 
	$\ell=3r'\log n$
	updates
	such that after them the subgraph on $\hat V$ is isomorphic to $H_1$ and disconnected from the rest of the graph with error probability at most
	\[
	r'\cdot q^{\ell/r'} + \frac{q\ell r}{n^2}
	\leq n^{-2\delta\log n} + 3n^{-1-\delta/3}\log n
	= o(1) .
	\]
	Note that even repeating this for $\hat n$ phases leaves the error probability in $o(1)$.
	It then performs the first query, and returns the same value or a function of it, as in the proof of
	\cref{thm:lb for many problems-smoothed advers}.
	
	For complexity, 
	note that the preprocessing takes $O(n^2)$ edge updates (in the graph $\cal G$),
	i.e. $O(n^2\tu(n))$ operations.
	By assumption, this is
	$O(n^{2+\delta/3-\epsilon})$
	which is polynomial in $\hat n$.
	Each phase of $\hat n$ updates of the original graph involves $3r'\log n=O(\hat n \log n)$ updates and a single query.
	Hence, each update of the original graph takes amortized 
	\[
		\begin{split}
			\hat\tu(\hat n)
			&=\frac{1}{\hat n}O(\hat n \log n)\cdot \tu(n)\\
			&=O(\tu(n)\log n)\\
			&=O(n^{\frac{\delta}{3}-\epsilon}\log n)\\
			&=O(\hat n^{1-\epsilon})
		\end{split}
	\]
	and each query takes 
	\[
		\begin{split}
			\hat\tq(\hat n)
			&=\tq(n)\\
			&=O(n^{\frac{2\delta}{3}-\epsilon})\\
			&=O(\hat n^{2-\epsilon})
		\end{split}
	\]   
	contradicting \cref{lem:lb for many problems-original}.
\end{proof}

\section{Lower bounds for counting small subgraphs}
\label{sec:lb small subgraphs}
The main focus of this section is proving a conditional lower bound for counting \paths{s}{t}{3} in a $p$-smoothed model, even with the weakest adversary---an oblivious flip adversary. 
We then extend this lower bound to other subgraph counting problems.

Given the \omv conjecture, the average-case parity \oumv problem is also hard (\cref{lem:conjecture_to_ac_oumv}).
The proof then goes through the problem of counting \paths{s}{t}{3} in a specific family of well-structured graphs, called $P_3$-partite graph, similarly to the outline of~\cite{HLS22}.
We show:
\begin{itemize}
    \item Average-case parity \oumv can be solved using an algorithm that counts \paths{s}{t}{3} in $P_3$-partite $p$-smoothed dynamic graphs (\cref{sec:lb:ac_oumv_to_restricted_graphs}).

    \item An algorithm that counts \paths{s}{t}{3} in a (general) $p$-smoothed dynamic graph, can be used to count the specific type of such paths in a $P_3$-partite $p$-smoothed dynamic graph (\cref{sec:lb:restricted_graphs_to_general_graphs}).
    
    \item From these reductions, we conclude the conditional hardness of counting 
    \paths{s}{t}{3} in a general $p$-smoothed dynamic graph (\cref{sec:lb:wrapping_things_up}).    
\end{itemize}

Overall, we get an (almost) tight lower bound for counting \paths{s}{t}{3}, for (almost) any value of $p$.
This is extended to other subgraph counting problems in \cref{sec:lb_other_small_graphs}. 
Throughout this section, we use adversaries that pick the initial random graphs at random (with each edge existing with an independent probability of~$1/2$), which assures that the smoothed initial graphs are also random, by \cref{observation:initial graph}.

\subsection{Solving OuMv by counting \paths{s}{t}{3} on $P_3$-partite graphs}
\label{sec:lb:ac_oumv_to_restricted_graphs}

A graph is called $P_3$-partite if it is composed of 4 node sets, $V = \set{s} \sqcup A \sqcup B \sqcup \set{t}$,
and all its edges are of type $sA, AB$, or $Bt$.
Formally, it is a $(H,\Pi)$-partite graph with $H$ being a path $(v_1,v_2,v_3,v_4)$ and $\Pi$ mapping $s$ to $v_1$, all the nodes of $A$ to $v_2$, all the nodes of $B$ to $v_3$, and $t$ to~$v_4$.
In this section we prove the following lemma.

\begin{lemma}
    \label{lem:ac_oumv_to_restricted_graphs}
    If there exists a data structure for counting \paths{s}{t}{3} on a $p$-smoothed $P_3$-partite dynamic $n'$-node graph initialized at random, with an oblivious flip adversary and  preprocessing, update and query times $\tp(n'), \tu(n'), \tq(n')$ and error probability at most $1/100$, then there exists an algorithm solving the average-case parity \oumv problem of dimension $n$ in
    \[
        3\tp(n') + O\left(\frac{n \log n}{p}\right) \cdot \tu(n') + O(n) \cdot \tq(n') + O\left(\frac{n^2 \log n \log(n/p)}{p}\right) 
    \]
    steps of computation, where $n' = 2n+2$.
\end{lemma}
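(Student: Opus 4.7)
The plan is to use the assumed data structure to solve average-case parity \oumv via the standard encoding of a matrix $M$ and vectors $(u,v)$ as a $P_3$-partite graph on $n'=2n+2$ nodes $V=\{s\}\sqcup A\sqcup B\sqcup\{t\}$: set $(s,a_j)\in E$ iff $u(j)=1$, $(b_k,t)\in E$ iff $v(k)=1$, and $(a_j,b_k)\in E$ iff $M(j,k)=1$. The number of \paths{s}{t}{3} in this graph equals $u^T M v$ over $\N$, so its parity equals $u^T M v$ in $\F_2$. The central difficulty is that a $p$-smoothed sequence inevitably flips some $AB$ edges, corrupting $M$. We sidestep this with the three-copy trick: run three independent instances of the data structure on three jointly-designed $p$-smoothed $P_3$-partite dynamic graphs that all encode $u,v$ on their $sA$ and $Bt$ edges but whose $AB$ edges encode matrices $M_1,M_2,M_3$ satisfying $M_1+M_2+M_3\sameparity M$. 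Summing the three parity queries in $\F_2$ then yields $u^T M v$.

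We simulate each of the three inputs as a single $p$-smoothed sequence of edge flips, interleaving queries at $n$ designated times $\tau_1<\cdots<\tau_n$ so that before $\tau_i$ each copy encodes $(u_i,v_i,M_\ell)$. The target oblivious adversary picks every edge uniformly from the $2n$ edges of types $sA$ and $Bt$, so that after $p$-smoothing each position is an iid sample from the explicit distribution $\advdist^p$ given in the overview. We generate each sub-sequence using Poissonization. First we produce a shared multi-set $S_0$ of $sA,Bt$ changes: setting $t\sim\poisson{5n\log n/p}$, we independently draw for each $sA$ or $Bt$ edge $e$ a count $z_e\sim\poisson{\advdist^p(e)\cdot t}$ conditioned on $z_e$ having the parity $\Delta(e)\in\{0,1\}$ dictated by $\udif$ or $\vdif$. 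Since $\advdist^p(e)\cdot t=\Theta(\log n)$ for such edges, the parity conditioning succeeds with probability $\tfrac12\pm o(1)$ so rejection sampling per edge costs $O(1)$ in expectation. For the $n^2$ $AB$ edges we cannot afford a per-edge Poisson sample; instead we draw three independent multi-sets $S_1,S_2,S_3$, each of $\poisson{t_{AB}/2}$ uniformly random $AB$ edges where $t_{AB}=(1-p)nt/(n+2)$, and give copy~$\ell$ the union of the two $S_m$ with $m\neq\ell$. Every $AB$ edge-flip thus lands in exactly two of the three copies and cancels in $M_1+M_2+M_3$ modulo~$2$. Each copy then independently shuffles its multi-set uniformly and feeds the resulting sequence to its data-structure instance.

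A genuine $p$-smoothed input to a copy has iid positions drawn from $\advdist^p$. In the Poissonized model, our construction matches this exactly except for the parity conditioning on the $2n$ $sA,Bt$ edges per segment and for the Poisson fluctuations in total lengths. By \cref{fact:poisson_properties}, when $\lambda=\advdist^p(e)\cdot t=\Omega(\log n)$ the parity-conditioned Poisson differs from the unconditioned one by only an exponentially small statistical distance, and de-Poissonizing segment lengths likewise costs $o(1)$. Applying sub-additivity of statistical distance over the product structure (\cref{fact:SD_properties}) across all edges, segments, and three copies keeps the total distance at $o(1)$. The combined error of the three data-structure queries on our simulated inputs is then at most $3\cdot\tfrac{1}{100}+o(1)<\tfrac{1}{20}$, which suffices for the average-case parity \oumv problem by \cref{lem:conjecture_to_ac_oumv}. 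The preprocessing cost is $3\tp(n')$; the total update cost is $O(n\log n/p)\cdot\tu(n')$ as the Poissonized sequences concentrate around their means; the $O(n)$ queries cost $O(n)\cdot\tq(n')$; and the sampling and shuffling overhead, dominated by drawing and routing the parity-conditioned Poisson variates at $O(\log(n/p))$ amortized time per edge per round, contributes $O(n^2\log n\log(n/p)/p)$ additive cost.

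The main obstacle is the statistical distance analysis. The three copies share the $sA,Bt$ multi-set and have interlocking $AB$ contributions, so they are highly correlated; yet we only need each copy's \emph{marginal} distribution to be close to iid $\advdist^p$. Poissonization decouples per-edge counts within a single copy, making the TV bound sub-additive across edges and segments, so only the parity conditioning contributes a residual error --- and that error is exponentially small because $\lambda$ is logarithmic. This is precisely where Poissonization is indispensable, and where the related analysis of~\cite{HLS22} was imprecise, as noted in the excerpt.
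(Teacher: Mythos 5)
Your proposal follows the paper's approach very closely: the same $P_3$-partite encoding of $(M,u,v)$, the same three-copy trick to cancel unwanted $AB$ flips, the same Poissonization strategy with a shared $S_0$ multiset for $sA,Bt$ changes (conditioned on the parities of $\udif,\vdif$) and three $AB$ multisets $S_1,S_2,S_3$ paired off two-per-copy, the same per-edge rejection sampling, and the same statistical-distance bookkeeping. The running-time analysis also matches.

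One phrasing should be corrected before it becomes a proof. You write that ``the parity-conditioned Poisson differs from the unconditioned one by only an exponentially small statistical distance,'' but this is false as literally stated: conditioning $\poisson{\lambda}$ on a fixed parity changes the distribution by roughly $1/2$ in total variation. What is true, and what you implicitly mean, is that the \emph{parity marginal} of an unconditioned $\poisson{\lambda}$ is $e^{-2\lambda}$-close to a uniform bit (\cref{fact:item:poisson_parity}), so that when $\udif(e)$ is itself a uniform bit the two-step process (sample $\udif(e)$, then sample $z_e\sim\poisson{\lambda}|_{z_e\sameparity \udif(e)}$) yields a $z_e$ that is $e^{-2\lambda}$-close to a plain $\poisson{\lambda}$. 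The paper makes this precise by introducing the intermediate distribution $\muadv$ on difference-vector pairs, proving that the reduction output is \emph{exactly} distributed as $\advseq$ when $(\udif,\vdif)\sim\muadv$ (\cref{claim:distributional_input}), and separately bounding $\sd{\muadv^n}{U_{2n^2}}$ (\cref{claim:distributional_vs_uniform}); your version is the composed form of that decomposition, which is fine, but the intermediate distribution is the clean way to make the parity argument rigorous without the misstatement above. Finally, note that the lemma's $O(n\log n/p)\cdot\tu$ update term, which you quote, does not agree with the counting you sketch ($n$ iterations of $\poisson{t}$ updates with $t=\Theta(n\log n/p)$ gives $O(n^2\log n/p)$ updates); the paper's internal running-time claim indeed gives $O(n^2\log n/p)\cdot\tu$, so the lemma's display and your sentence both carry what appears to be a typo, which your concentration argument does not justify.
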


\subsubsection{The setting}
 
\paragraph{\oumv and \paths{s}{t}{3}.} 
Following the connection established in~\cite{HenzingerKNS15}, 
we associate the \oumv problem of dimension $n$ with $P_3$-partite graphs on $n' = 2n+2$ nodes $V = \set{s} \sqcup A \sqcup B \sqcup \set{t}$, where $\card{A} = \card{B} = n$.
A $P_3$-partite graph has edges of types $sA, AB, Bt$, and they can be naturally represented algebraically: 
$u$ (resp., $v$) represent all neighbors of $s$ in $A$ (resp., of $t$ in $B$), and $M$ is the adjacency matrix between $A$ and $B$.
It is easy to verify that the multiplication $u^T M v$ (over $\R$) is the number of $3$-paths of type $sABt$.

\paragraph{Indexing.} With the aforementioned connection in mind, it is more convenient to index the vectors $u, v$ and the matrix $M$ with graph edges. 
That is, we use $u(s,a_i)$ for $u(i)$, use $v(b_j,t)$ for $v(j)$ and use $M(a_i, b_j)$ for $M(i,j)$. 
The indices of the three objects are disjoint and together they correspond to all $n(n+2)$ allowed edges in the $P_3$-partite graph above. 
We often use $u(e), v(e), M(e)$, with an appropriate edge $e$.

\paragraph{The adversarial strategy.}
We fix a (randomized) oblivious flip  adversary throughout the section, an adversary that flips an edge of $sA$ or $Bt$ chosen u.a.r, and never flips an edge of $AB$:

\[
\begin{aligned}
        \advdist (e):=
            \begin{cases}
        		\frac{1}{2n} , & \text{if $e$ is of type $sA$ or $Bt$}\\
                0, & \text{if $e$ is of type $AB$}
		  \end{cases}
\end{aligned}
\]
In the $p$-smoothed model, each change samples an edge independently and according to the distribution $\advdist^p = p\cdot \advdist + (1-p)\cdot U_{P_3}$, where $U_{P_3}$ is the uniform distribution over all $n(n+2)$ allowed edges in the $P_3$-partite graph. 
More specifically:
\[
\begin{aligned}
        \advdist^p (e):=
            \begin{cases}
        		\frac{p}{2n} + \frac{1-p}{n(n+2)}, & \text{if $e$ is of type $sA$ or $Bt$}\\
                \frac{1-p}{n(n+2)}, & \text{if $e$ is of type $AB$}
		  \end{cases}
\end{aligned}
\]

\paragraph{Random sequence of updates.}
Our reduction needs to create an authentic sequence of changes, as if it came from the $p$-smoothed dynamic graph with the adversarial strategy above. 
Such a sequence is simply a (randomly ordered) sample set from $\advdist^p$, which can be represented by its histogram $H$, a vector of occurrences where $H(e)$ is the number of times an edge $e$ appears.

In actuality, we require \emph{conditional} sampling: for new \oumv inputs $u_i, v_i$, we wish to create a sequence of changes that ends with certain edge sets for types $sA, Bt$ (dictated by $u_i, v_i$). This means that (the parity of) $H(e)$ for all these edges is fixed ahead, which can potentially skew other things in our sample set (e.g., since the sum of all $H(e)$ is the total number of changes performed).

\paragraph{Poissonization.}
In order to overcome such complications, we use Poissonization~(\cref{fact:poisson_properties}): we sample $\poisson{t}$ changes from $\advdist^p$ by independently sampling a histogram: the number of occurrences of edge $e$, denoted $H(e)$, is sampled from $\poisson{\advdist^p (e)\cdot t}$. 
This way, conditioning on the parity of $H(e)$ does not affect the distribution of $H(e')$ for any $e' \neq e$.

\paragraph{Changing edges of type $AB$.}
Generating a sequence of changes that complies with the target vectors~$(u_i, v_i)$ might change edges that correspond to $M$ along the way (unless $p=1$).
This means a query for the number of \paths{s}{t}{3} is answered with $u_i^T M' v_i$ for some $M'$ instead of the original $M$. 
Similarly to~\cite{HLS22}, we circumvent this problem by creating $3$ correlated sequences of changes that contain the same changes on edges of $u_i, v_i$ and otherwise their changes correspond to matrices $M_1, M_2, M_3$ that sum (modulo 2) exactly to $M$. 
Thus, one can retrieve the desired (parity) value by distributivity over $\F_2$:
\[
    u_i^T M v_i = u_i^T M_1 v_i + u_i^T M_2 v_i + u_i^T M_3 v_i .
\]

\subsubsection{The reduction}
Assume a data structure that runs on graphs with $n' = 2n+2$ nodes with preprocessing time $\tp(n')$, update time $\tu(n')$ and query time $\tq(n')$. We show how to use these operations in order to solve the parity \oumv problem (of dimension $n$).  

The initial inputs for \oumv are $M$ and $u_0, v_0$, where all entries have random Boolean value. These correspond perfectly to an initial random $P_3$-partite graph on $n'$ nodes and random initialization. 
One can construct the graph, apply the preprocessing and use a single query to retrieve the parity of $u_0 ^T M v_0$.

Next we describe how, given new inputs $u_i, v_i$, one can utilize the data structure to efficiently compute the parity of $u_i^T M v_i$.
This is done using three data structures, for each we generate a sequence of changes.
We can then query once each data structure, and combine the three answers to finish up.

\noindent
\begin{algorithm}[H]
        \label{alg:sol}
	\caption{Algorithm \sol for solving \oumv by counting paths in $P_3$-partite graphs}
		 \nl $\roundnumber \gets 5 n \log(n) / p$
   
		\nl Initialize $\mathcal{S}^1,\mathcal{S}^2,\mathcal{S}^3$ to be empty sequences of changes
		 
		\nl Compute the differences $\udif \gets u_i - u_{i-1}$ and $\vdif \gets v_i - v_{i-1}$ (modulo $2$)
		
		\nl For each $sA$ edge $e$, draw $z_e \sim \poisson{\advdist^p (e)\cdot \roundnumber} \vert_{z_e \sameparity \udif(e)}$. 		 
		Add $z_e$ copies of $e$ to $\mathcal{S}^j$ for $j\in[3]$
		
		\nl For each $Bt$ edge $e$, draw $z_e \sim \poisson{\advdist^p (e)\cdot \roundnumber} \vert_{z_e \sameparity \vdif(e)}$. 
		Add $z_e$ copies of $e$ to $\mathcal{S}^j$ for $j\in[3]$
		
		\nl \For{For edges of type $AB$, for each $j\in[3]$}{
			{Draw $z^j \sim \poisson{\frac{(1-p)n}{n+2} \cdot (\roundnumber/2)}$}
			
			 Draw $z^j$ uniformly random matrix entries $e$ (with repetitions), and add each such edge to the two sequences $\mathcal{S}^{j'}$ for which $j' \neq j$.
		}
    
    	 \nl Randomize the order of each of the three sequences $\mathcal{S}^1,\mathcal{S}^2,\mathcal{S}^3$. Feed each sequence to a different data structure instance, applying the update procedure with each change.
    	
    	 \nl Query all $3$ instances for the number of \paths{s}{t}{3}, sum the answers and output the parity of the sum. 
\end{algorithm}

Call the procedure above \sol
and run it for each new pair $u_i, v_i$ to produce the one-bit answer.
In particular, we refer to steps $(4)$--$(7)$ as the reduction steps, taking two difference vectors $\udif, \vdif$ and returning three sequences of changes $\mathcal{S}^1, \mathcal{S}^2, \mathcal{S}^3$.

\subsubsection{Analysis}

For the sake of clarity, in this section we use $\redseq^j$ to refer to the sequences $\mathcal{S}^j$  output by the reduction (for any $j\in[3]$), and denote the histogram of such a sequence by $\redhist^j$.
For $AB$-type edges, drawn in step (6), denote the amount of times the edge $e$ was chosen for some $j\in[3]$ by $z_e^j$. 
Thus, for every $AB$-type edge $e$ we have 
\[
    \redhist^j(e) = \sum_{j' \in [3]\setminus \set{j}} z_e^{j'} ,
\]
since edges were added to each sequence from the two iteration with different index.

Note that the first inputs $M, (u_0, v_0)$ are simply mirrored perfectly in the initial graphs of all three copies.
We next prove the correctness of \sol for any fixed sequence of vector pairs. That is, we show that the reduction maintains the connection between \oumv and \paths st3.

\begin{claim}[Correctness for a fixed input]
    \label{claim:correctness_fixed}
    Fix inputs $M$ and $(u_0, v_0), \dots, (u_n, v_n)$.
    For any $i = 0, \dots, n$, 
    if all the queries to the $3$ data structures are answered correctly,
    then the online algorithm correctly outputs~$u_i^T M v_i$.
\end{claim}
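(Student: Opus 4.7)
The plan is to prove this by induction on $i$, maintaining an invariant about the states of the three simulated graphs. Specifically, after processing the first $i$ pairs, I would like to show that in each of the three copies, the $sA$ edges agree with $u_i$ and the $Bt$ edges agree with $v_i$ (the same in all three copies), while the three adjacency matrices $M_1^{(i)}, M_2^{(i)}, M_3^{(i)}$ describing the $AB$ edges of the three copies satisfy $M_1^{(i)} + M_2^{(i)} + M_3^{(i)} \sameparity M$.

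The base case $i=0$ follows directly from the fact that the three data structures are all initialized from the same graph representing $M$ and $(u_0,v_0)$, so $M_1^{(0)}=M_2^{(0)}=M_3^{(0)}=M$, and $3M \sameparity M$. For the inductive step, I would argue separately about the three edge types. For an $sA$ edge $e$, step~$4$ adds exactly $z_e$ copies of $e$ to each of $\mathcal{S}^1,\mathcal{S}^2,\mathcal{S}^3$, and since the final graph depends only on the parity of the number of times each edge is flipped (shuffling the sequence does not matter), each of the three copies flips $e$ a number of times of parity $z_e \sameparity \udif(e)$. Combined with the inductive assumption that the $sA$ edges represented $u_{i-1}$ before this iteration, they now represent $u_{i-1}+\udif=u_i$ (modulo $2$). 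An identical argument handles $Bt$ edges using $\vdif$.

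For $AB$ edges, the key observation is that in step~$6$, the edges drawn in iteration $j$ (a multiset of size $z^j$) are added to $\mathcal{S}^{j'}$ precisely for the two indices $j'\neq j$. Hence, writing $z_e^j$ for the number of times edge $e$ was drawn in iteration $j$, the parity of the number of flips of $e$ in copy $k$ is $\sum_{j\neq k} z_e^j \pmod 2$. Summing the contribution to all three copies, the total change modulo $2$ in $M_1+M_2+M_3$ at entry $e$ is $\sum_{k}\sum_{j\neq k} z_e^j = 2\sum_j z_e^j \sameparity 0$, so the sum $M_1^{(i)}+M_2^{(i)}+M_3^{(i)}\sameparity M$ is preserved.

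Given the invariant, the assumed correctness of the three queries means that the $j$-th data structure returns $u_i^T M_j^{(i)} v_i$ as an integer. The algorithm outputs the parity of the sum, which by linearity/distributivity over $\F_2$ equals $u_i^T\bigl(M_1^{(i)}+M_2^{(i)}+M_3^{(i)}\bigr) v_i \sameparity u_i^T M v_i$, as required. The main subtlety to spell out carefully is that the random shuffling in step~$7$ does not affect the final graph (only per-edge parities do), and that the $\F_2$ identity is applied to integer counts---routine once the invariant is phrased correctly; there is no real obstacle here, but a clean formulation of the invariant is the place where care is needed.
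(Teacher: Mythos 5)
Your proposal is correct and follows essentially the same approach as the paper: prove by induction the invariant that after iteration $i$ all three copies have $sA$, $Bt$ edges matching $(u_i,v_i)$ and $AB$ matrices summing to $M$ modulo $2$, arguing separately per edge type, and then conclude via distributivity over $\F_2$. The per-edge-type calculations (including the $2\sum_j z_e^j\sameparity 0$ observation for $AB$ edges) match the paper's proof step for step.
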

 
\begin{proof}
    We prove that the following invariant holds for $i=0, \dots, n$: at the end of iteration $i$ the edges of types $sA$ and $Bt$ in all three copies correspond to $(u_i, v_i)$, and edges of type $AB$ in the three copies correspond to three matrices $M^1, M^2, M^3$ such that $M^1 + M^2 + M^3 \sameparity M$.
    
    This suffices for proving the claim, since for each iteration $i = 0, \dots, n$ we have
    \[
        u_i^T M^1 v_i + u_i^T M^2 v_i + u_i^T M^3 v_i = u_i^T M v_i .
    \]
        
    The invariant is proved by induction.
    For the base case $i=0$, all three initial graphs mirror $M$ and $u_0, v_0$ perfectly, and indeed $M + M + M = 3M \sameparity M$. 
    
    For the inductive step, focus on a single iteration of \sol. 
    Edges of types $sA$ are shared by all $3$ copies and undergo the same changes.
    By induction hypothesis, at the beginning of the $i^{th}$ iteration, the existence of any $sA$ edge $e$ corresponds to $u_{i-1}(e)$. By the conditional sampling process, at the end of the iteration its existence $\ind{}(e)$ satisfies:
    \[
\begin{aligned}
        \ind{}(e)
       \sameparity u_{i-1}(e) + z_e
       \sameparity u_{i-1}(e) + \udif(e) \sameparity u_i(e) . 
    \end{aligned}
\]
    The congruence implies equality, since values are Boolean on both ends.
    The same holds for $Bt$ edges and the vectors $v_{i-1}, v_i$.
    
    Edges of type $AB$ are added differently to each copy. 
    For each $j\in[3]$, let $M^j$ and $\ind{}^j(e)$ denote the edges in the $j^{th}$ copy at the beginning and the end of the iteration, respectively.
    By the induction hypothesis we have $M^1 + M^2 + M^3 \sameparity M$.
    For each $AB$ edge $e$ and copy $j$ we have $\ind{}^j(e) \sameparity M^j(e) + \redhist^j(e)$. Altogether:
    \[
        \begin{aligned}
            \ind{}^1(e) + \ind{}^2(e) + \ind{}^3(e)
            &\sameparity \left((M^1(e) + M^2(e) + M^3(e)\right)) + \left(\redhist^1(e) + \redhist^2(e) + \redhist^3(e)\right)\\
            &\sameparity M(e) + 2\left(z_e^1 + z_e^2 + z_e^3\right)\\
            &\sameparity M(e) ,
        \end{aligned}
    \]
    where the second congruence uses the induction hypothesis for the first summand and the reduction construction for the second. 
    Equality is implied as both ends are Boolean values. 
\end{proof}

Next, we analyze a distributional $\oumv$ input, but not yet a uniformly random one. Instead, we focus on a distribution $\muadv$ over pairs $(\udif,\vdif)$ that naturally arises from a sequences of changes. 

Consider a $p$-smoothed sequence of $\poisson{t}$ updates, with the adversarial strategy $\advdist$. Denote this sequence by $\advseq$, and its histogram by $\advhist$. Their distribution is exactly:
\[
\begin{aligned}
    \advseq \sim \left(\advdist^p\right)^{\poisson{\roundnumber}};
    && \advhist(e) \sim \poisson{\advdist^p(e) \cdot \roundnumber} .
\end{aligned}
\]

Define the projection of $\advhist$ to parities for edges of types $sA, Bt$:
\[
\begin{aligned}
    \uadv(e) := \advhist(e) \ \text{  (mod 2)}
    && &&
    \vadv(e) := \advhist(e) \ \text{  (mod 2)}
\end{aligned}
\]
Our desired distribution $\muadv$ is the projection of the random histogram $\advhist$ to these vectors:
\[
    \muadv(u,v) := \Prob{\advhist}{\uadv = u \wedge \vadv = v}
\]

Our next claim is that on a pair sampled from $\muadv$, the resulting histograms from the reduction $\redhist^j$ (for $j\in[3]$) have the same distribution as $\advhist$:

\begin{claim}[reduction output on distributional input]
    \label{claim:distributional_input}
    If the reduction (steps (4)-(7)) is fed with distributional inputs $(\udif, \vdif)\sim \muadv$, then the three output sequences satisfy, for all $j\in[3]$:
    \[
            \redseq^j \samedist \advseq
    \]
    
\end{claim}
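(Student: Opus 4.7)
The plan is to establish equality of the joint histograms $\redhist^j \samedist \advhist$, from which $\redseq^j \samedist \advseq$ follows because both sequences are obtained by uniformly shuffling their histograms (an i.i.d.\ sample from $\advdist^p$ of Poisson size is equivalent, via Poissonization, to first drawing independent edge counts and then shuffling). I would work edgewise, exploiting the fact that the Poissonization in the reduction makes all edge-counts independent; this is the whole point of moving to a $\poisson{\roundnumber}$-length sequence.

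First, I would write down the marginal of $\advhist$ at an edge $e$. Since $\advseq$ is a $\poisson{\roundnumber}$-size i.i.d.\ sample from $\advdist^p$, standard Poissonization (\cref{fact:poisson_properties}) gives $\advhist(e) \sim \poisson{\advdist^p(e)\cdot\roundnumber}$, mutually independent across $e$. Consequently, $(\uadv,\vadv) = (\advhist|_{sA} \bmod 2,\ \advhist|_{Bt}\bmod 2)$ is a deterministic function of the $sA$- and $Bt$-parts of $\advhist$ only, and for any fixed target $(u,v)$ with $\muadv(u,v) > 0$ the conditional law of $\advhist$ given $(\uadv,\vadv)=(u,v)$ factorizes as: for each $sA$-edge $e$, $\advhist(e)\sim \poisson{\advdist^p(e)\cdot\roundnumber}\,\big|_{\advhist(e)\sameparity u(e)}$; symmetrically for $Bt$; and for each $AB$-edge $e$, $\advhist(e)\sim \poisson{\advdist^p(e)\cdot\roundnumber}$ unchanged (since the conditioning event involves only other coordinates).

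Next, I would compute the marginals of $\redhist^j$ under the reduction. For $sA$- and $Bt$-edges, these are exactly the conditional Poissons sampled in steps (4)–(5), so they match the law above tautologically. For an $AB$-edge $e$ and a fixed iteration $j'$ of step (6), Poisson thinning gives $z^{j'}_e \sim \poisson{\frac{(1-p)}{n(n+2)}\cdot\roundnumber/2} = \poisson{\advdist^p(e)\cdot\roundnumber/2}$, independently across $e$ and across $j'$. Since $\redhist^j(e) = \sum_{j' \neq j} z^{j'}_e$ is a sum of two independent such Poissons, additivity of the Poisson distribution (\cref{fact:poisson_properties}) yields $\redhist^j(e)\sim \poisson{\advdist^p(e)\cdot\roundnumber}$, independently across $e$ and independently of the $sA$/$Bt$ coordinates (which are constructed from disjoint randomness).

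Having matched all marginals and verified mutual independence on both sides, I conclude that for every $(u,v)$ in the support of $\muadv$ and every $j\in[3]$, the law of $\redhist^j$ conditioned on $(\udif,\vdif)=(u,v)$ equals the law of $\advhist$ conditioned on $(\uadv,\vadv)=(u,v)$. Averaging over $(\udif,\vdif)\sim\muadv$ on the one hand and over $(\uadv,\vadv)$ (its exact image law, by definition of $\muadv$) on the other, we obtain $\redhist^j \samedist \advhist$. A final uniform permutation of positions, applied in step (7) to $\redseq^j$ and implicit in the i.i.d.\ sampling of $\advseq$, then gives $\redseq^j \samedist \advseq$. The only delicate point is the $AB$-block: one must be careful that step (6) couples the three sequences (each $z^{j'}_e$ is inserted into \emph{two} of them), yet for a single fixed $j$ the marginal law of $\redhist^j$ is still a product of independent Poissons with the correct parameters; the coupling is what will matter later in \cref{claim:correctness_fixed} but is invisible in the per-sequence marginal established here.
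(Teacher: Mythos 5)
Your proposal is correct and follows essentially the same route as the paper: reduce to equality of histograms by random ordering, use Poissonization to establish independent per-edge marginals, handle $sA$/$Bt$ edges by matching the conditional Poisson laws from steps (4)–(5), handle $AB$ edges by Poisson thinning and additivity of the two contributing $z^{j'}_e$ terms, and finally average over $\muadv$. Your explicit remarks about Poisson thinning and about the cross-sequence coupling being invisible in the per-sequence marginal are nice clarifications of what the paper leaves implicit, but they do not change the argument.
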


\begin{proof}
    As both adversarial and the reduction sequences are randomly order, it suffices to prove the statement for the histograms.
    Fix two Boolean vectors $(u,v)$, and denote by $\redhist^j[u,v]$ the histograms created when $(\udif,\vdif) = (u,v)$.

    By Poissonization, the distribution of $\advhist$ is a product of independent distributions $\advhist(e)$. Thus, for any pair of Boolean vectors $u,v$, the conditioning on $\set{\uadv = u \wedge \vadv = v}$ affects each entry $\advhist(e)$ separately, for $e$ of types $sA, Bt$.
    This is mirrored by steps (4), (5) which determine $\redhist^j[u,v]$ for $j\in[3]$.
    
    For edges of type $AB$, step (6) takes a Poisson number of samples, each is uniform over the $n^2$ edges of type $AB$. Thus, again by Poissonization:  
    \[
        z_e^j \sim \poisson{\advdist(e) \cdot (\roundnumber/2)} ,
    \]
    where we used $\advdist(e) = \frac{1-p}{n(n+2)}$ for $AB$-type edges.
    By additivity of Poisson distributions, for $j\in[3]$ and any $AB$-type edge $e$, the following distributions are equal:
    \[
        \redhist^j(e) \samedist \sum_{j'\in[3]\setminus\set{j}} z_e^{j'} \ \sim \  \poisson{\advdist(e) \cdot \roundnumber} .
    \]
    
    Overall, conditioned on $(\udif,\vdif) = (u,v)$, we get:
\begin{equation}
        \label{eq:conditional_equality}
        \redhist^j [u,v]
        \samedist \advhist \vert \left(\uadv = u \wedge \vadv = v \right) .
\end{equation}

    By the law of total probability, with $(\udif,\vdif)\sim \muadv$ and for $j\in[3]$, we get the following:
    \[
        \redhist^j 
        \samedist \sum_{(u,v)} \muadv(u,v) \cdot \redhist^j[u,v]
        \samedist \sum_{(u,v)} \muadv(u,v) \cdot \advhist \vert \left(\uadv = u \wedge \vadv = v \right)
        \samedist \advhist .\qedhere
    \]
\end{proof}

To relate the last claim to the average-case \oumv, note that instead of drawing a uniformly random pair $(u_i, v_i)$, one can uniformly draw the difference vectors $(\udif, \vdif)$ at each iteration.
We thus relate $n$ independent draws from $\muadv$ (denoted $\muadv^{n}$) to a uniform choice of $n$ pairs of difference vectors.

By a slight abuse of notation, we think of both distributions as over $2n^2$ bits ($2n$ vectors of $n$ bits), and use $U_m$ for the uniform distribution on $m$ bits.
Our closeness measure is statistical distance, which is a normalized $\ell_1$ distance (with factor $1/2$).
\begin{claim}
    \label{claim:distributional_vs_uniform}
    It holds that
    \[
        \sd{\muadv^n}{U_{2n^2}}  \leq 2n^{-3}
    \]
\end{claim}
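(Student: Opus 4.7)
The plan is to exploit Poissonization to reduce the claim to a single-bit computation, and then apply subadditivity twice. First, I would observe that by Poissonization (\cref{fact:item:poissonization}), the $2n$ entries of $\advhist$ corresponding to $sA$-type and $Bt$-type edges are mutually independent Poisson random variables $\advhist(e) \sim \poisson{\lambda_e}$ with $\lambda_e := \advdist^p(e)\cdot \roundnumber$. Consequently, the $2n$ parity bits constituting a sample of $\muadv$ are mutually independent Bernoulli variables, so $\muadv$ is a product distribution over $2n$ bits.

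Next, for any $sA$- or $Bt$-edge $e$, using $\roundnumber = 5n\log(n)/p$ and $\advdist^p(e)\geq p/(2n)$, the Poisson parameter satisfies
\[
\lambda_e \geq \frac{p}{2n}\cdot \frac{5n\log n}{p} = \frac{5\log n}{2}.
\]
By the Poisson parity formula (\cref{fact:item:poisson_parity} in \cref{fact:poisson_properties}), the probability that $\advhist(e)$ is even equals $(1+e^{-2\lambda_e})/2$, and hence the statistical distance between the bit indexed by $e$ in $\muadv$ and a uniformly random bit $U_1$ is $e^{-2\lambda_e}/2 \leq n^{-5}/2$.

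Finally, I would apply subadditivity of statistical distance for product measures (\cref{fact:item:SD_subadditivity} in \cref{fact:SD_properties}) twice: first to bound the distance of $\muadv$ from $U_{2n}$ by summing over its $2n$ independent bits, and then to bound the distance of $\muadv^n$ from $U_{2n^2}$ by summing over the $n$ independent copies. This yields
\[
\sd{\muadv^n}{U_{2n^2}} \;\leq\; n \cdot 2n \cdot \frac{n^{-5}}{2} \;=\; n^{-3} \;\leq\; 2n^{-3},
\]
as claimed. There is no substantive obstacle here beyond checking that Poissonization really does deliver independence of the per-edge parity bits; once that is in hand, the rest is a routine combination of the two probabilistic facts already cataloged in \cref{sec:tools_from_probability}.
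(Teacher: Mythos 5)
Your proof is correct and follows essentially the same route as the paper: Poissonization gives independence of the per-edge Poisson counts (hence of their parity bits), the parity formula bounds the per-bit deviation from uniform by $e^{-2\lambda_e}/2 \leq n^{-5}/2$ using $\lambda_e \geq \frac{5}{2}\log n$, and subadditivity over the $2n$ bits and $n$ iterations finishes the job. The only cosmetic difference is that you retain the factor $1/2$ in the per-bit statistical-distance bound, which gives the marginally sharper $n^{-3} \leq 2n^{-3}$.
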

\begin{proof}
    Focus on a single iteration.
    All edges of type $sA, Bt$ have the same probability under $\advdist^p$, call it $p_1$.
    By Poissonization, each of the $2n$ bits in $\muadv$ is i.i.d according to the parity of $\poisson{p_1 \cdot t}$, which by~\cref{fact:poisson_properties} is simply $\bernoulli{(1+e^{-p_1\cdot t})/2}$.
    Thus, for the $j^{th}$ bit of $\muadv$:
    \[
        \sd{\left(\muadv\right)_j}{U_1}
        \leq e^{-2 p_1 \cdot t}
        \leq e^{-5 \log n}
        = n^{-5},
    \]
    where the inequality is due to $p_1 \geq p/(2n)$ and $t = 5n\log n / p$.

    Finally, due to independence of the $2n$ edges within the same iteration, and independence among the $n$ iterations, we can use subadditivity of statistical distance (\cref{fact:item:SD_subadditivity} of \cref{fact:SD_properties}):
    \[
        \sd{\muadv^n}{U_{2n^2}}
        \leq n \sum_{j=1}^{2n} \sd{\left(\muadv\right)_j}{U_1}
        \leq 2n^2\cdot n^{-5} 
        = 2n^{-3} .\qedhere
    \]
\end{proof}

Combining~\cref{claim:correctness_fixed} and~\cref{claim:distributional_input}, and~\cref{claim:distributional_vs_uniform} proves the correctness of \sol:

\begin{claim}[Correctness]
    \label{claim:correctness_final}
    Assume there exists a data structure for counting \paths{s}{t}{3} which answers all queries correctly with probability at least $0.99$ on a $p$-smoothed sequence of changes,
    then algorithm \sol solves average-case $\oumv$ w.p.\ at least~$0.96$.
\end{claim}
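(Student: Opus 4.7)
The plan is to combine the three preceding claims in a chain, accounting for the three sources of potential error: (i) the three data-structure copies may give a wrong answer to some query; (ii) the reduction's output distribution matches an adversarial $p$-smoothed sequence only when the \oumv differences are drawn from $\muadv^{n}$; and (iii) the average-case \oumv input is uniform, not distributed as $\muadv^{n}$.

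First I would analyze \sol on an input obtained by fixing $M,u_{0},v_{0}$ uniformly at random and then drawing the $n$ pairs of difference vectors $(\udif,\vdif)$ i.i.d.\ from $\muadv$; equivalently, I feed the reduction itself with input from $\muadv^{n}$. By \cref{observation:initial graph}, the uniformly random choice of $M,u_{0},v_{0}$ supplies each of the three data structures with a uniformly random $P_{3}$-partite initial graph, matching the ``initialized at random'' setting of the premise. By \cref{claim:distributional_input}, the $j$-th sequence $\redseq^{j}$ then has the same distribution as an honest adversarial $p$-smoothed sequence $\advseq$, so by the premise each copy answers \emph{all} its queries correctly with probability at least $0.99$. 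A union bound over $j\in[3]$ yields that, with probability at least $0.97$, every query answered by every copy is correct, and conditioning on this event \cref{claim:correctness_fixed} guarantees deterministically that \sol outputs $u_{i}^{T}Mv_{i}$ correctly for every iteration $i$. Hence on an input drawn from $\muadv^{n}$, the overall error is at most $0.03$. To pass to a uniformly random \oumv input, I would use that such an input is equivalent in distribution to $(u_{0},v_{0})$ uniform together with uniform difference vectors in $U_{2n^{2}}$, and \cref{claim:distributional_vs_uniform} then gives $\sd{\muadv^{n}}{U_{2n^{2}}}\le 2n^{-3}$. Since statistical distance upper bounds the difference in probability of any event (\cref{fact:SD_properties}), the error of \sol on a uniform input is at most $0.03+2n^{-3}<0.04$ for large $n$, so the success probability is at least $0.96$ as required.

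The main subtlety is not in any single estimate but in the plumbing: the three data-structure copies receive highly correlated input sequences, so one cannot (and does not need to) argue independence of their outputs. Instead, the union bound is applied only to the \emph{marginal} events ``copy $j$ is fully correct,'' each of which holds with probability at least $0.99$ because \cref{claim:distributional_input} matches the \emph{marginal} distribution of $\redseq^{j}$ to $\advseq$. Poissonization is what makes this possible, since it decouples the $\redhist^{j}(e)$ values enough that each of the three marginals is already an honest $p$-smoothed adversarial sequence, even though the joint law is constrained to realize the common $(u_{i},v_{i})$ on the $sA$ and $Bt$ edges.
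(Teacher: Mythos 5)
Your proposal is correct and follows essentially the same route as the paper's proof: analyze \sol{} on the intermediate distribution $\muadv^n$, invoke \cref{claim:distributional_input} so that each copy's marginal input matches an honest $p$-smoothed sequence and union-bound the three marginal error events, apply \cref{claim:correctness_fixed} conditioned on no errors, and finally transfer to the uniform average-case input via \cref{claim:distributional_vs_uniform} and the statistical-distance bound. Your closing remark explaining why the union bound is applied to the marginal correctness events (rather than attempting an independence argument) is a helpful clarification of a point the paper leaves implicit.
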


\begin{proof}
    An average-case input for $\oumv$ can be constructed as follows.
    Draw uniformly random $M, u_0, v_0$ and then for $n$ iterations, draw uniformly random difference vectors $\udif, \vdif$.
    By setting $u_i$ as the xor of $u_{i-1}$ and the $i$-th difference vector $\udif$, we get a sequence of uniformly random vectors $u_0,\ldots,u_n$;
    the vectors $v_0,\ldots,v_n$ are defined similarly.
    
    First, we consider a similar input, where the difference vectors $\udif,\vdif$ are drawn from $\muadv$ instead of uniformly (each iteration independently).    
    By~\cref{claim:distributional_input}, the sequences $\redseq^j$ at each iteration are distributed according to the adversarial strategy $\advdist$, 
    and so each copy of the data structure has error probability at most $0.01$.
    By union bound, the probability of any error within all executions is at most $0.03$. 
    
    By~\cref{claim:correctness_fixed}, whenever there are no errors, \sol  correctly outputs $u_i^T M v_i$. Thus, over the distributional inputs it succeeds with probability at least $0.97$.
    
    Lastly, by~\cref{claim:distributional_vs_uniform}, the entire input has statistical distance at most $2n^{-3}$ from the one of the average-case. 
    By~\cref{fact:item:SD_error_difference} of \cref{fact:SD_properties}, the error of an algorithm can only differ by the statistical distance between the input distributions.
    Thus, on an average-case input, \sol succeeds at least with probability $0.97 - 2n^{-3} > 0.96$.
\end{proof}

\begin{claim}[Running time]
    \label{claim:running_time}
    If there exists a data structure for counting \paths{s}{t}{3} with pre-processing time $\tp$, update time $\tu$ and query time $\tq$, then with probability $1-o(1)$ the running time of our algorithm is
    \[
        3\tp(n') + O\left(\frac{n^2 \log n}{p}\right) \cdot \tu(n') + O(n) \cdot \tq(n') + O\left(\frac{n^2 \log n \log(n/p)}{p}\right) .
    \]
\end{claim}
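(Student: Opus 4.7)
The plan is to decompose the claimed running time into four contributions and bound each separately. \emph{Preprocessing} is immediate: three instances of the data structure, each requiring $\tp(n')$ time, give $3\tp(n')$. \emph{Queries} are also simple, since each of the $n$ outer iterations of \sol issues exactly three queries, totalling $O(n)\cdot\tq(n')$.

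For the \emph{update} contribution, the key observation is that by Poissonization (steps~(4)--(6) of the algorithm) the length of each sequence $|\redseq^j_i|$ is a sum of independent Poisson random variables, up to the parity conditioning on $sA$ and $Bt$ edges which shifts each $z_e$ by at most one. The expected total is
\[
    \mathbb{E}\bigl[|\redseq^j_i|\bigr] \;=\; \Theta(n\log n) \;+\; \Theta\!\left(\tfrac{(1-p)n\log n}{p}\right) \;=\; O\!\left(\tfrac{n\log n}{p}\right),
\]
where the first term comes from the $2n$ edges of types $sA$ and $Bt$ (each with Poisson mean $\Theta(\log n)$) and the second from the $AB$-edges drawn in step~(6), where each $z^j \sim \poisson{(1-p)n t/(2(n+2))}$. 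A standard Chernoff-type tail bound for Poisson sums, together with a union bound over the $3n$ such sequences, yields $|\redseq^j_i| = O(n\log n/p)$ uniformly with probability $1-o(1)$. Summing over the $n$ iterations and three copies gives $O(n^2\log n/p)$ total updates, contributing $O(n^2\log n/p)\cdot\tu(n')$.

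For the \emph{reduction's own bookkeeping}, per iteration I would compute $\udif,\vdif$ in $O(n)$ time; draw $2n$ conditional Poisson samples of mean $O(\log n)$, each in $O(\log n)$ expected time by rejection sampling against the unconditional Poisson; and draw and place $O((1-p)n\log n/p)$ uniformly random $AB$-edges, each described by $O(1)$ words. Finally, each of the three sequences of length $N = O(n\log n/p)$ is randomized, e.g.\ by assigning each element a random tag and sorting, which takes $O(N\log N) = O\!\left(\tfrac{n\log n}{p}\log(n/p)\right)$ time (using $\log N = O(\log(n/p))$). The shuffling step dominates the per-iteration overhead, and over all $n$ iterations it contributes the final term $O\!\left(\tfrac{n^2\log n \log(n/p)}{p}\right)$.

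The main technical obstacle is the uniform high-probability bound on $|\redseq^j_i|$ across all $O(n)$ iterations and $3$ copies, needed both for the update count and for the length used in the shuffle analysis. The argument uses standard Poisson tail inequalities, but it crucially exploits $t = \Omega(n\log n)$ so that each per-sequence failure probability is below $1/\poly(n)$, letting the union bound over $O(n)$ sequences still give overall probability $1-o(1)$. This is precisely what converts the per-iteration expected complexities into the high-probability running time stated in the claim.
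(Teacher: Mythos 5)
Your decomposition (preprocessing, queries, updates, bookkeeping) and the final bound match the paper's proof exactly, and the dominant contribution—shuffling three length-$\Theta(t)$ sequences per iteration at cost $O(t\log t)$ with $t=\Theta(n\log n/p)$—is identified correctly. One place where your accounting is slightly looser than the paper's: you bound the conditional (rejection-sampled) Poisson generation in \emph{expectation} (``each in $O(\log n)$ expected time''), whereas the claim demands a $1-o(1)$-probability bound on the total running time. The paper instead caps the rejection sampling at $O(\log n)$ deterministic attempts per edge, failing with only polynomially small probability, and combines this with Poisson concentration to bound the sum of sampled values by $O(t)$ w.h.p.; this yields a high-probability $O((t+n)\log n)$ cost for steps (4)--(5). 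Since this term is dominated by the shuffle cost $O(t\log t)$ either way, your conclusion stands, but for a clean high-probability statement you should replace the expectation argument with a tail bound as the paper does (or at least note that Poisson/negative-binomial concentration gives the required w.h.p.\ bound). Everything else—including your observation that $\log N = O(\log(n/p))$ converts $O(N\log N)$ into the final $O(n^2\log n\log(n/p)/p)$ term—tracks the paper's argument.
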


\begin{proof}
    We pre-process three separate executions, and query $3(n+1)$ times overall. 
    The rest of the proof analyzes the number of update operations, and the time for all other computations.

    Focus on one iteration.
    As we successfully simulate a sequence of $\poisson{t}$ changes, by~\cref{fact:poisson_properties}:
    \[
        \Prob{}{\poisson{t} > e t}
        \leq \frac{(e t)^{e t} e^{-t}}{(e t)^{e t}}
        = e^{-t} 
        \leq e^{-n} .
    \]
    That is, there is a negligible probability we exceed $O(t)$ changes (where we can abort). Assume henceforth this is not the case. 
    In particular, the reduction uses a total of $O(t)$ update operations.

    The nitty-gritty is in generating Poisson values. By~\cref{fact:poisson_properties}, one can generate a sample of $\poisson{\lambda}$ with running time $O(m)$ where $m$ is the output value and $\Expc{}{m} = \lambda$. When generating multiple values, we can rely on their sum for running time, but it is important to take into account the number of values generated.\footnote{Indeed, consider generating $k$ samples from $\poisson{1/k}$. Their sum is w.h.p. $O(1)$, but the overall running time is $\Theta(k)$, as we pay additional $\Theta(1)$ for each value.}
    Differently put, the running time is actually $O(m_i+1)$ for each generated value $m_i$, and for $k$ such values the running time is $O(m+k)$ where $m = \sum_{i\in[k]} m_i$.

    In steps (4) and (5) we need conditional samples, which can be obtained by \emph{rejection sampling}.
    This is true since the parity of each such value is almost a fair random coin (as seen in the proof of~\cref{claim:distributional_vs_uniform}). Thus, any desired parity value has probability at least $1/3$, and using $O(\log n)$ repetitions suffices to get a polynomially small error probability (even after union bounding the $2n$ conditional samples).
    Generating $2n$ such values, with sum $O(t)$ (with probability $1-o(1)$) takes~$O\left((t + 2n)\cdot \log n\right)$ time.

    In step (6) we generate all $AB$-type changes. Generating each value separately would result in $n^2$ values (most of which are $0$), which is too costly.
    Instead, we sample the total amount of changes (which is $O(t)$ w.h.p), and assign random edge to each one, using $\log n$ random bits. The overall running time of this step is $O\left(t \log n\right)$. 
    In step (7) we randomly order the sequences, each of length $O(t)$, which takes $O(t \log t)$ steps.

    Overall, for a single iteration we require $O(t \log t)$ steps of computation, on top of $O(t)$ update operations and one query. The numbers for $n$ iterations follow once we plug $t = O\left(\frac{n \log n}{p}\right)$.
\end{proof}

Finally we can deduce the lemma.
\begin{proof}[Proof of~\cref{lem:ac_oumv_to_restricted_graphs}]
    union bounding the error probabilities from \cref{claim:correctness_final} and \cref{claim:running_time}, the existence of an algorithm for counting \paths{s}{t}{3} with running times $\tp, \tu, \tq$ implies an algorithm that correctly solves average-case parity \oumv using \sol, with error probability at most $0.04 + o(1) < 0.05$ and the running time from~\cref{claim:running_time}. 
\end{proof}

\subsection{Reducing $P_3$-partite graphs to general graphs}
\label{sec:lb:restricted_graphs_to_general_graphs}

In this section, we reduce counting \paths{s}{t}{3} in $p$-smoothed $P_3$-partite dynamic graphs to counting all \paths{s}{t}{3} in a general $p'$-smoothed dynamic graph, for some $p' = \Theta(p)$. We prove the following lemma.

\begin{lemma}
\label{lem:restricted_graphs_to_general_graphs}
    If there exists a data structure for counting \paths{s}{t}{3} on (general) $p'$-smoothed dynamic graphs initialized at random 
    that fails with probability at most $1/1600$,
    then there exists a data structure for counting \paths{s}{t}{3} on $p$-smoothed $P_3$-partite dynamic graphs initialized at random, where $p \in [p',2p']$, with the same asymptotic running times and error probability at most~$1/100$.
\end{lemma}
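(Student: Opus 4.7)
My plan is to follow the inclusion--exclusion approach sketched in the technical overview. Given a $p$-smoothed $P_3$-partite dynamic graph $\calG$ on $V=\{s\}\sqcup A\sqcup B\sqcup\{t\}$, I would maintain $16$ coupled instances of the assumed data structure on general graphs over the same node set. All $16$ graphs share the \emph{interior} edges (those of types $sA$, $AB$, $Bt$) with $\calG$. The \emph{exterior} edges come from bipartitioning each of the four exterior edge sets $E_{At},E_{AA},E_{BB},E_{sB}$ into two parts, and indexing the $16$ graphs by a vector in $\{0,1\}^4$: graph $i$ contains exactly the chosen part from each of the four bipartitions. By \cref{observation:initial graph} on initial graphs, the initial exterior parts can be sampled so that each of the $16$ initial graphs is uniformly random (independent $1/2$-biased coins), matching the random initialization required by the assumed data structure.

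Next, I would define a coupled sequence of changes. At each step, toss a Bernoulli coin $C$ with bias $\alpha=\alpha(n,p)$; if $C=1$ take the next change $e$ from $\calG$'s sequence, and if $C=0$ draw $e$ uniformly from the exterior edge set. The same edge $e$ is then flipped in all $16$ graphs simultaneously. A direct calculation — requiring that (i) with probability $\alpha p$ we see an adversarial interior edge, (ii) otherwise the edge is uniform over $\binom{V}{2}$ — fixes the choice
\[
\alpha = \frac{n(n+2)}{(1-p)\binom{2n+2}{2}+p\cdot n(n+2)}\in[1/2,1],\qquad p'=\alpha p\in[p/2,p],
\]
so $p\in[p',2p']$. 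Under this coupling, the induced per-graph change distribution is exactly that of a $p'$-smoothed process with an explicit oblivious flip adversary (the one that, conditioned on an adversarial round, proposes $\calG$'s interior adversarial edge). The key invariant to verify is that after any number of steps the exterior edges of the $16$ graphs still properly partition each of the four exterior edge sets: flipping $e$ in all graphs simply moves $e$ to the opposite part of its bipartition, so the partition structure is preserved throughout.

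To answer a query, I would query all $16$ data structures for the total \paths{s}{t}{3} count, sum the answers, subtract the number of exterior \paths{s}{t}{3} (those using at least one exterior edge), and divide by $16$. Every $sABt$ path uses only shared interior edges and is therefore counted in all $16$ graphs, contributing $16$ times to the sum; any exterior 3-path contributes an amount that depends only on how its exterior edges sit inside the four bipartitions and on which interior edges are present — both of which are known explicitly from the partition invariant and the shared interior, so the total exterior contribution can be computed in $O(1)$ per query after maintaining $O(1)$ auxiliary counters under each edge change. Since the preprocessing, update and query times each invoke the assumed data structure a constant number of times ($16$), the asymptotic running times are preserved. A union bound over the $16$ queries converts the $1/1600$ failure probability into at most $1/100$ for the reduction, as required.

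The main obstacle I expect is the bookkeeping in the last paragraph: explicitly describing the bipartitions so that the exterior 3-path count has a closed form simple enough to maintain in $O(1)$ per update while using only $O(1)$ extra calls to the underlying data structure, and carefully verifying that the resulting per-graph change distribution truly equals the general $p'$-smoothed distribution with an oblivious flip adversary (in particular, that the adversary proposed for the $16$ graphs is legitimately oblivious, which follows because it is determined entirely by $\calG$'s oblivious sequence together with independent randomness, independent of the smoothing noise of the general graphs).
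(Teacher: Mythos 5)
Your proposal follows essentially the same route as the paper: sixteen coupled general-graph instances sharing the interior edges, the four exterior edge sets bipartitioned and allocated by a $\{0,1\}^4$ index, a single coupled change sequence where a Bernoulli($\alpha$) coin decides between the next interior change and a uniformly random exterior flip, the observation that the proper-partition invariant persists, and the inclusion--exclusion query that sums the sixteen counts, subtracts the (easily maintained) exterior $3$-path contributions, and divides by $16$. Your formula for $\alpha$ matches the paper's $\alpha=\card{R_{P_3}}/\bigl(\card{R_{P_3}}+(1-p)\card{\overline{R_{P_3}}}\bigr)$, and the union bound $16\cdot 1/1600 = 1/100$ is exactly what the paper uses, so the argument is correct.
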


To prove this lemma, we take a $p$-smoothed $P_3$-partite dynamic graph, 
and create 16
$p'$-smoothed unrestricted dynamic graphs,
each with the same set of nodes as the original graph.
Moreover, the 16 dynamic graphs differ only by their initial graphs, while they all consist of the same sequence of changes;
hence, we only introduce one sequence of changes.
Later on, we show that from the number of 
\st $3$-paths in the 16 unrestricted graphs,
one can deduce the number of 
\st $3$-paths in the $P_3$-partite graph.

The reduction works for any value of $p\in[0,1]$, and coincides with the relevant models in the cases of $p=0$ and $p=1$:
when $p=0$, the original sequence is uniform on the edges allowed in  
a $P_3$-partite graph
and the new sequence is uniform on all the  graph edges, as in~\cite{HLS22};
when $p=1$, both sequences are the same, and the reduction shows an adversarial strategy in the general model that focuses only on edges that can appear in a $P_3$-partite graph.
 
\subsubsection{Simulating a single step}
We transform a sequence of changes in the $p$-smoothed restricted model (hereafter: restricted sequence) into a longer sequence of changes in a $p'$-smoothed general model (hereafter: general sequence). Recall that $R_H$ is the set of allowed edges in the restricted model.
In order to construct the general sequence we use a parameter $0\leq\alpha\leq 1$. 
For each change of an edge $e$ in the restricted sequence, w.p.~$\alpha$ we add to the general sequence the same edge $e$, and w.p.~$1-\alpha$ we add to the general sequence a uniformly random edge from the set $\overline{R_H} = \binom{V}{2} \setminus R_H$ of edges not allowed in the restricted sequence. 

Let
\[
\begin{aligned}
	&\alpha = \frac{\card{R_H}}{\card{R_H} + (1-p)\card{\overline{R_H}}}
\end{aligned}
\]
and note that $\alpha \in [\card{R_H} / \binom{\card{V}}{2} , 1]$ since $p\in[0 , 1]$.
Given a flip of an edge $e$ in the restricted sequence, we choose the change in the general sequence as follows.
\begin{enumerate}
    \item Flip a coin $C$ with probability $\alpha$.

    \item If $C = 1$, make an \emph{interior} update, i.e. flip the edge $e$ in the general sequence.
    Note that the choice of $e$ in the restricted model implies that, conditioned on $C=1$, w.p.~$p$ the edge $e$ is chosen by the adversary (from $R_H$), and w.p.~$1-p$ it is chosen uniformly at random from $R_H$.

    \item If $C = 0$, make an \emph{exterior} update: add to the general sequence an edge chosen uniformly at random from~$\overline{R_H}$.
\end{enumerate} 

To see that the update sequences created by this process is indeed a $p'$-smoothed sequence on general graphs, note that the updates of the following types happen with the prescribed probabilities.
\begin{itemize}
    \item An adversarial interior update w.p.~$p' = \alpha \cdot p$;

    \item A random interior update w.p.~$(1-p')\card{R_H}/\binom{\card{V}}{2} = \alpha \cdot (1-p)$;

    \item A random exterior update w.p.~$(1-p')\card{\overline{R_H}}/\binom{\card{V}}{2} = 1 - \alpha$.
\end{itemize}
That is, an adversarial update happens w.p.~$p'$, and otherwise the update is chosen uniformly at random from all of $\binom{V}{2}$. 
This corresponds to a $p'$-smoothed sequence with an adversary that only chooses edges from~$R_H$.

Recall that a $P_3$-partite graph consists of $4$ parts, denoted $V = \set{s} \cup A \cup B \cup \set{t}$, where $s$ and $t$ are single nodes and $\card{A} = \card{B} = n$.
Its set of interior edges is thus
\[
\begin{aligned}
    R_{P_3} = \left(\set{s}\times A\right) \cup \left(A\times B\
\right) \cup \left(B\times \set{t}\right) .
\end{aligned}
\]
Since 
$\card{V}=2n+2$ and 
$\card{R_{P_3}} = n(n+2)$,
we have $\alpha\in[\frac{1}{2}, 1]$ (for any value of $p$), and therefore $p' \in [p/2,p]$.

The complement set $\overline{R_{P_3}}$ consists of edges of exterior types: $sB, At, AA, BB, st$. 
As the edge $(s,t)$ cannot participate in any simple $3$-path from $s$ to $t$, we are interested in the four other edge types.

\subsubsection{The actual reduction}
In order to reduce the problem of counting \st 3-paths in general (smoothed dynamic) graphs into the same problem in $P_3$-partite (smoothed dynamic) graphs
we use the inclusion-edgclusion idea (which appeared in~\cite{HLS22}, and earlier in~\cite{DalirrooyfardLW20,Boix-AdseraBB19}).

We start with a $p$-smoothed sequence of edge flips in the $P_3$-partite graph $\calG$,
with node set
$V = \set{s} \cup A \cup B \cup \set{t}$.
The dynamic graph $\calG$ consists of an initial graph $G_0$ chosen uniformly at random with edges only from $R_{P_3}$, and a sequence $e_1, \dots, e_T$ of $T$ updates in the form of edges from $R_{P_3}$ being flipped.

From this, we define $16$ (highly correlated) dynamic graphs $\calG^{(0)}, \dots, \calG^{(15)}$, each of which being a $p'$-smoothed dynamic graph with the same node set $V$.
The 16 initial graphs are constructed by randomly partitioning each of the $4$ sets of exterior edges ($sB, At, AA, BB$) into two subsets, and allocate these subsets to the $16=2^4$ graphs such that each combination of subsets occurs once.
The 16 dynamic graph are composed of these 16 different initial graphs, 
while their sequences of changes are identical.

Formally, the reduction does the following:

\begin{algorithm}[H]\DontPrintSemicolon
	\caption{Algorithm \ptg for counting paths in a general graph by counting them in $P_3$-partite graphs}
	
	\nl
	Partition $sB$ into $sB= E_{sB}^1 \sqcup E_{sB}^0$, where each edge goes to one of the sets u.a.r.
	
	\nl
	Similarly, partition $At, AA, BB$ into $E_{At}^1, E_{AA}^1, E_{BB}^1$ and their complements $E_{At}^0, E_{AA}^0, E_{BB}^0$
	
	\tcp*{partition each set of external edges into two}
	
	\nl
	Initialize $E'=\emptyset$, and w.p.~$1/2$ assign $E'\gets \{s,t\}$
	
	\nl
	\For{$i\gets 0,\ldots,15$}{
		 Let $b_{i1}, b_{i2}, b_{i3}, b_{i4}$ be the binary representation of $i$
		
		 $E^{(i)}_0\gets E_0 \sqcup E_{sB}^{b_{i1}} \sqcup E_{At}^{b_{i2}} \sqcup E_{AA}^{b_{i3}} \sqcup E_{BB}^{b_{i4}} \sqcup E'$		 
		 \tcp*{define the $16$ initial graphs}
	}
	
	\nl Initialize $c\gets1$, $c'\gets 1$. 
	
	\nl
	\While{$c \leq T$}{	
	\lIf{next operation is query}{query all graphs, compute the output and continue}
	\tcp*{compute output as described in \cref{eq:number of paths in restricted graph} and end the current loop iteration}
		
	Flip a coin $C\sim Bin(\alpha)$ \tcp*{$\alpha$ as above}
		
	\lIf{$C=1$}{$f \gets e_{c}$}
	\lElse{draw $f$ uniformly from $\overline{R_H}$}
		
	\For{$i\gets 0,\ldots,15$}{update $E^{(i)}_{c'} \gets E^{(i)}_{c'-1} \symdif \set{f}$}
		
	Update counters: 
		$c'\gets c'+1$, and \lIf{$C=1$}{$c\gets c+1$}
	}
\end{algorithm}

Let $T'$ be the number of edge flips in the dynamic graph $\calG^{(i)}$ (the last value of $c'$), which is identical for all $i$.
Since $\alpha \geq 1/2$, by a simple Chernoff bound the probability that $T' \geq 3T$ is at most $e^{-\Omega(T)}$.
This completes the construction of the dynamic graphs $\calG^{(0)}, \dots, \calG^{(15)}$.

\subsubsection{Analysis}
We now show that the reduction indeed creates 
$p'$-smoothed dynamic (general) graphs, 
and that by counting \paths{s}{t}{3} on them we can deduce the number of \paths{s}{t}{3}
in the $p$-smoothed $P_3$-partite dynamic graph $\calG$,
thus proving \cref{lem:restricted_graphs_to_general_graphs}.

\begin{claim}
	For each $i\in \set{0,\dots, 15}$, the dynamic graph $\calG^{(i)} = (G^{(i)}_0, \dots, G^{(i)}_{T'})$ is a $p'$-smoothed dynamic graph.
\end{claim}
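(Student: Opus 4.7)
The plan is to verify the two defining properties of a $p'$-smoothed oblivious flip dynamic graph for $\calG^{(i)}$: that the initial graph $G^{(i)}_0$ is distributed as a $p'$-smoothed initial graph for some randomized adversarial proposal, and that each subsequent edge flip is distributed as a $p'$-smoothed change, consistently with a single oblivious adversary. Throughout, I would exploit the standing assumption of this section that in the restricted model the adversary proposes $H_0$ uniformly at random over subsets of $R_{P_3}$, so by \cref{observation:initial graph} the interior edges of $G_0$ are independent $\bernoulli{1/2}$.

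First I would compute the marginal of $G^{(i)}_0$ edge by edge. Its interior edges coincide with those of $G_0$ and are therefore independent $\bernoulli{1/2}$; each exterior edge other than $(s,t)$ appears in $G^{(i)}_0$ iff its partition bit matches the corresponding bit of $i$, and since the partition bits are independent fair coins, every such exterior edge is also an independent $\bernoulli{1/2}$; finally, the $(s,t)$ edge is included independently with probability $1/2$ via $E'$. Hence $G^{(i)}_0$ is uniformly distributed over all graphs on $V$, and by \cref{observation:initial graph} this distribution matches the $p'$-smoothed initial graph of an adversary proposing $H^{(i)}_0$ uniformly at random.

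For each update step, I would condition on the entire history up to iteration $c'$ (the reduction randomness and the original adversary's randomness, which together determine the current $c$-index and the next adversarially proposed edge $e^{\text{adv}}_c \in R_{P_3}$) and verify that the flipped edge $f$ follows the $p'$-smoothed distribution. By construction, $f$ equals $e^{\text{adv}}_c$ with probability $\alpha p = p'$, is uniform on $R_{P_3}$ with probability $\alpha(1-p)$, and is uniform on $\overline{R_{P_3}}$ with probability $1-\alpha$. The key identity, which follows directly from the definition of $\alpha$, is
\[
    \frac{\alpha(1-p)}{|R_{P_3}|} = \frac{1-\alpha}{|\overline{R_{P_3}}|} = \frac{1-p'}{\binom{|V|}{2}},
\]
so the two random branches fuse into the uniform distribution on $\binom{V}{2}$. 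The randomized oblivious adversary for $\calG^{(i)}$ is then packaged as follows: jointly with the original adversary, sample the entire sequence of coins $C_{c'}$ and emit the $e^{\text{adv}}_c$'s in order at the slots where $C_{c'} = 1$, filling remaining slots with any fixed edge (which is overwritten by uniform noise with probability $1-p'$). The main subtlety I expect is the bookkeeping that shows this repartitions the randomness cleanly into an oblivious adversary that is independent of the residual $p'$-smoothing noise; the identity above is precisely what makes this repartition work.
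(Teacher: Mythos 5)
Your first part mirrors the paper exactly: compute the edge-by-edge marginal of $G^{(i)}_0$ (interior pairs inherited from $G_0$, the four exterior types from the random bipartitions, and $(s,t)$ from $E'$, all independent $\bernoulli{1/2}$) and invoke \cref{observation:initial graph}. Your derivation of the per-step probabilities and the identity
$\alpha(1-p)/\card{R_{P_3}} = (1-\alpha)/\card{\overline{R_{P_3}}} = (1-p')/\binom{\card{V}}{2}$
is also precisely what the paper establishes in the ``Simulating a single step'' subsection, to which its proof of this claim merely defers (``the edge $f$ is drawn in a $p'$-smoothed manner, as seen before'').

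The ``packaging'' you flag as the main subtlety, however, contains a real gap, and your proposed resolution does not close it. If the new oblivious adversary draws the coins $C_{c'}$ as part of its own randomness and proposes $e^{\text{adv}}_{c(c')}$ at slots with $C_{c'}=1$ and some filler at slots with $C_{c'}=0$, then the $p'$-smoothing noise --- which must be drawn independently of the adversary's randomness --- does not know which slots have $C_{c'}=0$. With probability $p'(1-\alpha)>0$ the noise selects such a slot as adversarial and would output the filler, whereas the reduction's output there is always a uniform exterior edge; so the distributions do not coincide. Equivalently, the running index $c(c')$ depends on how many earlier slots were interior, which is determined by the realized smoothing noise, so the effective adversarial proposal at step $c'$ is correlated with the noise at steps $<c'$ --- a correlation no oblivious adversary can reproduce. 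For a deterministic restricted adversary (say $e^{\text{adv}}_1\neq e^{\text{adv}}_2$ fixed) one can verify already with $T'=2$ that the joint law of $(f_1,f_2)$ cannot be written as a $p'$-smoothed pair for any $(\tilde e_1,\tilde e_2)$. To be fair, the paper's own proof is equally terse and never constructs the claimed adversary; what rescues the reduction in its actual use is that the $P_3$-partite adversary fed into it is the i.i.d.\ strategy $\advdist$, under which the restricted flips $e_1,e_2,\dots$ are i.i.d., the index $c(c')$ becomes irrelevant, and the output flips are i.i.d.\ from $p'\advdist + (1-p')\,\unif{\binom{V}{2}}$, which genuinely is a $p'$-smoothed sequence. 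Neither your packaging nor the paper's one-liner establishes the claim at the generality at which it is stated.
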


\begin{proof}
    Fix $i\in \set{0,\dots, 15}$.
    The probability of any node pair $(x,y)$ to be an edge in $G^{(i)}_0$ is exactly $1/2$.
    Indeed, if $(x,y)$ is interior (of types $sA, AB, Bt$) then it is taken from $G_0$, where it is chosen w.p. $1/2$.
    If $(x,y)$ is an exterior edges, then it is either $(s,t)$ which is taken w.p.~$1/2$ (see the choice of $E'$).
    Otherwise, it is of one of four types $sB, At, AA, BB$.
    Say $(x,y)$ is of type $sB$ (similar argument works for $At, AA, BB$), then it has probability exactly $1/2$ to be chosen in $E^{b_{i,1}}_{sB}$ (either if $b_{i,1} = 0$ or $b_{i,1} = 1$). 
    Overall, any $(x,y)\in \binom{V}{2}$ is chosen to $G^i_0$ w.p.~$1/2$.
    By \cref{observation:initial graph},
    this is a $p'$-smoothed initial graph, for an adversary that picks a random initial graph.

    Finally, by our choice of $\alpha$ and $p'$, the edge $f$ is drawn in a $p'$-smoothed manner, as seen before.
\end{proof}

The following definition will be useful.
\begin{definition}[properly partitioning multiple edge set]
	Let $F_1, \dots, F_m$ be 
	$m$ pairwise disjoint edge sets on the same node set $V$, 
	and $F = \bigsqcup_{i=1}^{m} F_i$. 
	
	A set of $2^m$ graphs $G^{(0)}, \dots, G^{(2^m - 1)}$, with $G^{(i)}=(V,E^{(i)})$, 
	is said to \emph{properly partition} the edge sets $F_1, \dots, F_m$ if there exist~$m$ partitions $F_{\ell} = F'_{\ell} \sqcup F''_{\ell}$ such that:
	\begin{itemize}
		\item Each graph chooses a part of each edge set:
		\[
		\forall i\in\set{0,\dots,2^m - 1}, \ell\in[m]: E^{(i)} \cap F_{\ell} \in \set{F'_{\ell}, F''_{\ell}}
		\]
		
		\item Each graph chooses a different combination of parts:
		\[
		\forall i\neq  j\in \set{0,\dots,2^m - 1}, F\in \{F_0,\ldots,F_{15}\}: E^{(i)} \cap F \neq E^{(j)} \cap F
		\]
	\end{itemize}
	
	In the context of $P_3$-partite graphs,  $m=4$, and each $F_i$ corresponds to an exterior edge type (e.g., $F_1 = \set{s} \times B$). 
	If the conditions above hold for a set of $16$ graphs, we say they \emph{properly partition} the exterior edges.  
\end{definition}

The next claim shows that the $16$ graphs are correlated at all times.
\begin{claim}
    \label{lem:always_partitioned}
    For any $c'\in [T']$, the $16$ graphs $G^{(0)}_{c'}, \dots, G^{(15)}_{c'}$ properly partition the exterior edge types $F_1 = \set{s} \times B, F_2 = A \times \set{t}, F_3 = \binom{A}{2}, F_4 = \binom{B}{2}$.
\end{claim}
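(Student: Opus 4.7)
My plan is to prove this by induction on $c'\in\{0,1,\dots,T'\}$, observing that the sixteen dynamic graphs are forced to evolve in lockstep because \ptg applies the \emph{same} flip $f$ to all of them at every step. The base case $c'=0$ is immediate from the construction: the algorithm chooses partitions $E^0_{sB}\sqcup E^1_{sB}=sB$, $E^0_{At}\sqcup E^1_{At}=At$, $E^0_{AA}\sqcup E^1_{AA}=AA$, $E^0_{BB}\sqcup E^1_{BB}=BB$, and assigns to each $i\in\{0,\dots,15\}$ with binary representation $b_{i1}b_{i2}b_{i3}b_{i4}$ the unique combination $E^{b_{i1}}_{sB}\sqcup E^{b_{i2}}_{At}\sqcup E^{b_{i3}}_{AA}\sqcup E^{b_{i4}}_{BB}$ of exterior edges. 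This witnesses the proper partition of $F_1,F_2,F_3,F_4$ at time $0$.

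For the inductive step, assume at time $c'-1$ there exist partitions $F_\ell=F'_\ell\sqcup F''_\ell$ satisfying both properties of the definition, and let $f$ be the edge flipped at step $c'$ (the same $f$ across all sixteen graphs). If $f\in R_{P_3}$ (interior), then $f\notin F_\ell$ for every $\ell$, so for each $i$ the intersection $E^{(i)}_{c'}\cap F_\ell$ equals $E^{(i)}_{c'-1}\cap F_\ell$, and the same partitions continue to witness the property. If $f\in \overline{R_H}$, then $f$ lies in exactly one $F_\ell$; without loss of generality $f\in F'_\ell$. Define new partitions $\tilde F'_\ell:=F'_\ell\setminus\{f\}$ and $\tilde F''_\ell:=F''_\ell\cup\{f\}$, and keep $F_{\ell'},F'_{\ell'},F''_{\ell'}$ unchanged for $\ell'\neq\ell$. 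Then $\tilde F'_\ell\sqcup\tilde F''_\ell=F_\ell$ is still a partition, and for any graph $i$ we have $E^{(i)}_{c'}\cap F_\ell=(E^{(i)}_{c'-1}\cap F_\ell)\symdif\{f\}$, which equals $\tilde F'_\ell$ if $E^{(i)}_{c'-1}\cap F_\ell=F'_\ell$ and equals $\tilde F''_\ell$ if $E^{(i)}_{c'-1}\cap F_\ell=F''_\ell$. Thus the first condition of the definition still holds.

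It remains to verify that the combinations across graphs remain distinct. For any $i\neq j$, the inductive hypothesis guarantees that $E^{(i)}_{c'-1}\cap F=E^{(j)}_{c'-1}\cap F$ fails for some $F\in\{F_1,F_2,F_3,F_4\}$. Since the same edge $f$ was flipped in both graphs, $E^{(i)}_{c'}\cap F=E^{(j)}_{c'}\cap F$ iff $E^{(i)}_{c'-1}\cap F=E^{(j)}_{c'-1}\cap F$ (for any $F$), so the witnessing $F$ carries over. Hence the second condition of the definition also holds, closing the induction.

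The main obstacle is essentially bookkeeping: one must be careful to track how the two sides of the partition swap which edges they contain when $f\in F_\ell$, and to confirm that doing so for all graphs simultaneously preserves distinct combinations. No real difficulty arises because flipping the same edge in every graph is an involution that acts identically on all $E^{(i)}$, so symmetric differences between graphs are invariants of the entire process — which is the conceptual heart of the argument.
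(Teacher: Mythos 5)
Your proof is correct and follows essentially the same inductive argument as the paper: the base case is the explicit random partition at $c'=0$; an interior flip leaves the witnessing partitions untouched; an exterior flip in $F_\ell$ updates only that partition, and distinctness across graphs is preserved because the map $S\mapsto S\symdif\{f\}$ is a bijection (the paper phrases this uniformly with symmetric differences rather than your WLOG $f\in F'_\ell$, but the two are equivalent).
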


\begin{proof}
    We prove the claim by induction.
    At time $c'=0$, this holds by definition, with $F'_0$ and $F''_0$ being the randomly chosen sets $E_{sB}^1, E_{sB}^0$, and similarly for all other exterior edge types.
    
    For the step, assume that at time $c'$ the graphs $G^{(0)}_{c'}, \dots, G^{(15)}_{c'}$ properly partition the exterior edge types using $\set{F'_j, F''_j}_{j=1}^{4}$.
    If at time $c'+1$ an interior edge $f$ was updated (that is, $f=e_c$ for some $c\in[T]$), then the same partition trivially works.
    
    If at time $c'+1$ an exterior edge $f$ was updated, then $f\in F_{\ell}$ for some $\ell\in[4]$.
    Recall that $f$ is flipped in \emph{all} $16$ graphs. 
    Take the partition where
    $\left(F'_{\ell}, F''_{\ell}\right) \gets
    \left(F'_{\ell}\text{(old)} \symdif \set{f},F''_{\ell}\text{(old)} \symdif \set{f}\right)$ and all other $(F'_i,F''_i)$ for $i\neq {\ell}$ remain unchanged.
    
    Formally, the first condition holds for the edge set $F_{\ell}$ since
    \[
\begin{aligned}
        \forall i: &&
        E^{(i)}_{c'+1} \cap F_{\ell} 
        = \left( E^{(i)}_{c'} \cap F_{\ell} \right) \symdif \set{f} 
        \in \set{F'_{\ell}\text{(old)} \symdif \set{f},
        F''_{\ell}\text{(old)} \symdif \set{f}} 
        = \set{F'_{\ell}, F''_{\ell}} ,
    \end{aligned}
\]
    where the containment is due to induction hypothesis. 
    Similarly, the second condition holds trivially for each partition $(F'_i,F''_i)$ that was left unchanged, and 
    \[
\begin{aligned}
        \forall i\neq j: &&
        E^{(i)}_{c'+1} \cap F 
        = \left( E^{(i)}_{c'} \cap F_{\ell} \right) \symdif \set{f} 
        \neq \left( E^{(j)}_{c'} \cap F_{\ell} \right) \symdif \set{f} 
        = E^{(j)}_{c'+1} \cap F
    \end{aligned}
\]
    for $(F'_\ell,F''_\ell)$ that was changed.
    The inequality uses the induction hypothesis and the fact that $S \mapsto S\symdif\set{f}$ is a bijection.

    Thus, $G^{(0)}_{c'+1}, \dots, G^{(15)}_{c'+1}$ properly partition $F_1, F_2, F_3, F_4$, with partitions $F_{\ell} = F'_{\ell} \sqcup F''_{\ell}$.
\end{proof}

\begin{claim}
    Fix $c\in [T]$ and $c'\in[T']$ 
    such that in $G^{(i)}_{c'}$ (for all $i$)
    the interior edge change $e_c$ already occurred while the change $e_{c+1}$ did not.
    For each choice of $X, Y \in \set{A,B}$ (where $X$ and $Y$ may be identical),
    denote by $C^i_{sXYt}$ the count of all $s$-$t$ $3$-paths of type $sXYt$ in graph $G^{(i)}_{c'}$, 
    and by $C$ the number of paths of type $sABt$ in the original graph $G_{c}$ (recall it has no \st 3-paths of other types).
    Then
    \[
\begin{aligned}
        & \sum_{i=0}^{15} C^i_{sABt} = 16 \cdot C; &
        & \sum_{i=0}^{15} C^i_{sBAt} = 4\cdot \card{E_c \cap \left(A \times B\right)}\\
        & \sum_{i=0}^{15} C^i_{sAAt} = 4(n-1)\cdot \card{E_c \cap \left(\set{s} \times A\right)}; &
        & \sum_{i=0}^{15} C^i_{sBBt} = 4(n-1)\cdot \card{E_c \cap \left(B \times \set{t}\right)}
    \end{aligned}
\]
\end{claim}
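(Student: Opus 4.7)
The plan is to exploit two combinatorial consequences of the proper partition property established in \cref{lem:always_partitioned}: since the $16$ graphs realize all $2^4$ combinations of parts $F'_\ell$ vs.\ $F''_\ell$ over the four exterior edge types $F_1, F_2, F_3, F_4$, each fixed exterior edge $e \in F_\ell$ appears in exactly $2^3 = 8$ of the $16$ graphs, and for any two exterior edges of distinct types $e \in F_\ell, e' \in F_{\ell'}$ with $\ell \neq \ell'$, both $e$ and $e'$ appear together in exactly $2^2 = 4$ of the $16$ graphs. Meanwhile, the interior edges (types $sA, AB, Bt$) coincide in all $16$ graphs with $E_c$ (by construction and by the fact that interior changes are applied uniformly to all copies). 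I would set up these two facts at the outset and then apply them in a case analysis over the four path types.

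For $sABt$ paths, all three edges are interior, hence present in graph $G^{(i)}_{c'}$ if and only if they are present in $G_c$; thus each path counted by $C$ contributes $1$ to $C^i_{sABt}$ for every $i$, yielding $\sum_i C^i_{sABt} = 16 C$.

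For $sBAt$ paths, I would write
\[
\sum_i C^i_{sBAt} = \sum_{b \in B,\, a \in A} \sum_{i=0}^{15} \mathbbm{1}[(s,b) \in E^{(i)}_{c'}] \cdot \mathbbm{1}[(b,a) \in E^{(i)}_{c'}] \cdot \mathbbm{1}[(a,t) \in E^{(i)}_{c'}] .
\]
The middle factor is independent of $i$ (interior edge) and vanishes unless $(b,a) \in E_c$; the two outer edges lie in distinct exterior types $F_1$ and $F_2$, so by the second combinatorial fact their joint appearance contributes exactly $4$ per pair $(b,a) \in E_c \cap (B \times A)$. Simplicity of the path is automatic since $s, b, a, t$ lie in distinct parts. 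For $sAAt$ paths, the same template applies to triples $(s,a_1), (a_1,a_2), (a_2,t)$ with $a_1 \neq a_2$: the interior $(s,a_1)$ must lie in $E_c$, and the two exterior edges $(a_1,a_2) \in F_3, (a_2,t) \in F_2$ contribute jointly in exactly $4$ graphs; summing over the $n-1$ choices of $a_2 \in A \setminus \{a_1\}$ gives $4(n-1) |E_c \cap (\{s\} \times A)|$. The $sBBt$ case is symmetric, replacing $F_3$ with $F_4$ and $F_2$ with $F_1$.

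The only step that requires any care is verifying the two combinatorial counts; everything else is routine bookkeeping. For this I would reread \cref{lem:always_partitioned} to confirm that the map from $i$ to its $4$-tuple of part choices is a bijection onto $\{0,1\}^4$, which is immediate from the initialization step of \ptg where $b_{i1} b_{i2} b_{i3} b_{i4}$ ranges over all binary strings. Once this bijection is in hand, fixing a part choice for one coordinate (resp.\ two coordinates) leaves $2^3$ (resp.\ $2^2$) graphs, which is precisely the $8$ and $4$ used above.
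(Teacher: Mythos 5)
Your proposal is correct and takes essentially the same route as the paper: you rely on \cref{lem:always_partitioned} to observe that each graph realizes one of the $2^4$ combinations of parts, deduce that any two exterior edges of distinct types jointly appear in exactly $4$ of the $16$ graphs, and then split each path type into its interior edges (common to all copies) and its exterior edges (handled by the counting fact). The paper phrases the counting via indicator variables and a change of summation order rather than your upfront $2^3/2^2$ lemma, but this is only a cosmetic difference; your remark about the irrelevance of the "single exterior edge appears in $8$ graphs" fact (you state it but never use it) and your explicit note on path simplicity are minor extras, not a different argument.
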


\begin{proof}
    The proof takes into account the different path types.
    \subparagraph{Paths of type $sABt$.}
    Note that the interior edges are the same for all $16$ graphs as well as the original graph. Paths of type $sABt$ consist only of interior edges, and therefore appear in all $16$ graphs as well as the original graph, which proves the first equality.
    
    The other equalities use paths with exterior edges as well. 
    Due to Lemma~\ref{lem:always_partitioned}, the $16$ graphs properly partition the exterior edge with $\set{(F'_{\ell}, F''_{\ell})}_{\ell\in[4]}$. 
    
    \subparagraph{Paths of type $sBAt$.}
    For each choice of $i\in \set{0,\dots,15}, a\in A, b\in B$, denote by $\ind{a,b}^i$ the event that $\set{(s,b), (b,a), (a,t)} \subseteq E^{(i)}_c$. 
    Switching the order of summation we have:
    \[
\begin{aligned}
        \sum_{i=0}^{15} C^i_{sBAt}
        = \sum_{i=0}^{15} \sum_{\substack{a\in A\\b\in B}} \ind{a,b}^i
        = \sum_{\substack{a\in A\\b\in B}} \sum_{i=0}^{15} 
          \ind{a,b}^i .
    \end{aligned}
\]
    Any interior edge $(x,y)\in E_{c}$ in the original graph at time $c$ also exists in all other graphs at time $c'$.
    For each choice where $(x,y)\in E_c$, we add $1$ to the count if and only if $(s,b)$ and $(a,t)$ are both in graph $G^{(i)}_{c'}$.
    Since $(s,b)$ and $(a,t)$ each belong to a specific part (either $F'_{\ell}$ or $F''_{\ell}$ for some $\ell$), there is a unique combination for both, and exactly $4$ graphs of the $16$ are aligned with it. Thus
    \[
\begin{aligned}
        \sum_{\substack{a\in A\\b\in B}} \sum_{i=0}^{15} 
          \ind{a,b}^i
        = \sum_{\substack{a\in A\\b\in B}}
          4\cdot \ind{(a,b)\in E_c}
        = 4\cdot \card{E_c \cap \left(A \times B\right)} ,
    \end{aligned}
\]
    which proves the second equality.

    \paragraph{Paths of type $sAAt$.} For any $i\in\set{0,\dots,15}$ and $a,a'\in A$ $a\neq a'$, denote by $\ind{a,a'}^i$ the event that $\set{(s,a), (a,a'), (a',t)} \subseteq E^{(i)}$. 
    Switching the order of summation we have:
    \[
\begin{aligned}
        \sum_{i=0}^{15} C^i_{sAAt}
        = \sum_{i=0}^{15} \sum_{\substack{a\in A\\a\neq a'\in A}} \ind{a,a'}^i
        = \sum_{\substack{a\in A\\a\neq a'\in A}} \sum_{i=0}^{15} \ind{a,a'}^i .
    \end{aligned}
\]
    Here, the interior edges are $(s,a)$ of type $sA$, which appear in all graphs. 
    Each choice $a\in A$ such that $(s,a)\in E_t$, and any choice $a'\in A\setminus\set{a}$ induce two exterior edges $(a,a')$ and $(a',t)$ that belong to certain parts, causing a unique combination that aligns with exactly $4$ graphs. Thus
    \[
\begin{aligned}
		\sum_{\substack{a\in A\\a\neq a'\in A}} \sum_{i=0}^{15} \ind{a,a'}^i
        = 
        \sum_{\substack{a\in A\\a\neq a'\in A}} 4\cdot \ind{(s,a)\in E_c}        
        = \sum_{a\in A} 4(n-1)\cdot \ind{(s,a)\in E_c} 
        = 4(n-1)\cdot \card{E_c \cap \left(\set{s} \times A\right)} ,
    \end{aligned}
\]
    proving the third equality. The fourth equality is analogous.
\end{proof}

The above claim allows us to easily deduce \cref{lem:restricted_graphs_to_general_graphs}.
By maintaining the 16 $p'$-smoothed general graphs and making queries to the number $C^i$ of \paths st3 in each graph $G_{c'}^{(i)}$ at time $c'$, we can find the total number 
\[
C_\textnormal{all}=\sum_{i=0}^{15} C^i
\]
of \paths st3 in these graphs (with repetitions).
Note that this number is also the sum of the sums above, that is,
\[
C_\textnormal{all} = \sum_{i=0}^{15} 
\left( 
  C^i_{sABt} 
+ C^i_{sBAt} 
+ C^i_{sAAt}
+ C^i_{sBBt} \right).
\]
The addends from the above claim are all easy to compute, except for $C$.
Define 3 counters
\[
\begin{aligned}
	& C_{AB}= 
	\card{E_c \cap \left(A \times B\right)};
	& C_{sA}= 
	\card{E_c \cap \left(\set{s} \times A\right)};\;\;
	& C_{Bt}=
	\card{E_c \cap \left(B \times \set{t}\right)}
\end{aligned}
\]
and update them whenever an edge of $AB$, $sA$ or $Bt$ is added or removed.
The sum of the above equalities thus gives
\[
    \sum_{i=0}^{15} C^i
    =
    16 \cdot C 
    + 
    4\cdot \card{E_c \cap \left(A \times B\right)}
    +
    4(n-1)\cdot \card{E_c \cap \left(\set{s} \times A\right)}
    +
    4(n-1)\cdot \card{E_c \cap \left(B \times \set{t}\right)}
\]
which simplifies to

\begin{align}
\label{eq:number of paths in restricted graph}
\sum_{i=0}^{15} C^i
= 
16 \cdot C 
+ 
4\cdot C_{AB}
+
4(n-1)\cdot C_{sA}
+
4(n-1)\cdot C_{Bt}
\end{align}
and it is trivial to compute the number $C$ of \paths st3 in $G_c$, as desired.

\subsection{Putting it all together}
\label{sec:lb:wrapping_things_up}
Our main theorem is derived by~\Cref{lem:conjecture_to_ac_oumv},~\Cref{lem:ac_oumv_to_restricted_graphs} and~\Cref{lem:restricted_graphs_to_general_graphs}.

We remark that for ease of presentation throughout the proof we used $n$ for dimension of the algebraic objects (matrix $M$ and vectors $u_i, v_i$), and $n' = 2n+2 = \Theta(n)$ for the number of nodes in the graphs. 
The polynomial factors can use either.
For the sake of clarity, in the statement below we revert back to~$n$ being the number of nodes in the input dynamic graph.

\begin{theorem}
\label{thm:lb for counting st3 paths}
    Fix $0 < p \leq 1$. 
    Conditioned on the \omv conjecture, any dynamic graph algorithm for counting \paths{s}{t}{3} with error probability at most $1/1600$ on $p$-smoothed dynamic graphs, with preprocessing, update and query times $\tp(n), \tu(n), \tq(n)$, must satisfy
    \[
\begin{aligned}
        \tp(n) + \frac{n \log n}{p} \cdot \tu(n) + n \cdot \tq(n) + \frac{n^2 \log n \log(n/p)}{p} = \Omega \left(n^{3-\eps}\right),
    \end{aligned}
    \]
    for any $\eps > 0$.
    Specifically, no algorithm for counting \st 3-paths 
    can simultaneously have
    $\tp=O(n^{3-\epsilon})$,
    $\tu=O(pn^{1-\epsilon})$,
    and
    $\tq=O(n^{2-\epsilon})$,  
    for any $\epsilon>0$.
\end{theorem}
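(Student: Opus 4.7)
The proof will assemble the three reductions already proved in this section into a single chain, obtaining a contradiction with the \omv conjecture via \cref{lem:conjecture_to_ac_oumv}. All the technical work (the Poissonized simulation of the adversarial distribution, the inclusion-edgeclusion handling of exterior edges, and the correctness/running-time accounting) lives inside the two reduction lemmas, so the remaining task is simply to instantiate parameters and track the running times carefully.

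The plan is the following. Assume for contradiction that for some constant $\eps > 0$ there is an algorithm $\mathcal{A}$ for counting \paths{s}{t}{3} on $p$-smoothed (general) dynamic graphs with error at most $1/1600$ and with $\tp(n),\tu(n),\tq(n)$ violating the stated inequality, i.e.,
\[
    \tp(n) + \tfrac{n \log n}{p}\tu(n) + n\,\tq(n) + \tfrac{n^2 \log n \log(n/p)}{p} = O(n^{3-\eps}).
\]
First, apply \cref{lem:restricted_graphs_to_general_graphs} with $p' := p$ to obtain from $\mathcal{A}$ a data structure $\mathcal{A}'$ for counting \paths{s}{t}{3} on $q$-smoothed $P_3$-partite graphs for some $q \in [p,2p]$, with the same asymptotic $\tp,\tu,\tq$ and error at most $1/100$. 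Next, feed $\mathcal{A}'$ into \cref{lem:ac_oumv_to_restricted_graphs} (with its smoothing parameter taken to be $q$), yielding an algorithm for the average-case parity \oumv problem of dimension $n$ whose total running time is
\[
    3\tp(n) + O\!\bigl(\tfrac{n \log n}{q}\bigr)\tu(n) + O(n)\tq(n) + O\!\bigl(\tfrac{n^2 \log n \log(n/q)}{q}\bigr).
\]

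Since $q \in [p,2p]$ we have $1/q \le 1/p$ and $\log(n/q) \le \log(n/p) + O(1)$, so the above expression is dominated (up to constants) by the left-hand side of the assumed bound, and is therefore $O(n^{3-\eps})$. This contradicts \cref{lem:conjecture_to_ac_oumv}, which (under the \omv conjecture) forbids any $O(n^{3-\eps})$-time algorithm for average-case parity \oumv. The specific corollary that no algorithm can simultaneously achieve $\tp=O(n^{3-\epsilon})$, $\tu=O(p\,n^{1-\epsilon})$, and $\tq=O(n^{2-\epsilon})$ then follows by direct substitution: each of the four summands becomes $O(n^{3-\epsilon}\log^2 n)$, which is still $O(n^{3-\epsilon/2})$ and hence contradicts the first part.

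No new ideas are required here; the only thing to be careful about is that the error probabilities propagate correctly along the chain ($1/1600$ in general graphs, $1/100$ in $P_3$-partite graphs, and below $1/20$ for average-case parity \oumv, as \cref{lem:ac_oumv_to_restricted_graphs} yields error $\le 0.05$), and that the degradation of the smoothing parameter by the factor $2$ in \cref{lem:restricted_graphs_to_general_graphs} is absorbed into constants in the final inequality. Because both steps are essentially tight in $p$, the entire lower bound inherits its linear dependence on $p$ from the simulation in \cref{sec:lb:ac_oumv_to_restricted_graphs}, with no further loss.
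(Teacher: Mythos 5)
Your proof is correct and takes essentially the same route as the paper: assume an algorithm that makes the stated expression $O(n^{3-\eps})$, pass it through \cref{lem:restricted_graphs_to_general_graphs} to get a $P_3$-partite data structure with smoothing parameter $q \in [p,2p]$, feed that into \cref{lem:ac_oumv_to_restricted_graphs}, and contradict \cref{lem:conjecture_to_ac_oumv}. The only cosmetic difference is that the paper dispenses separately with the regime $p = O(n^{\eps-1})$ (where the fourth summand alone already exceeds $n^{3-\eps}$) before invoking the reduction chain, whereas your proof-by-contradiction structure handles that regime implicitly; your bookkeeping of the error probabilities ($1/1600 \to 1/100 \to 1/20$) and the substitution $q\in[p,2p]$ matches the paper's accounting.
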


\begin{proof}
    For any small $\eps > 0$, we note the inequality holds trivially for $p = O\left(n^{\eps-1}\right)$ (indeed, the rightmost term on the l.h.s.\ exceeds the r.h.s.\ on its own right).
    We may therefore assume that $p = \omega(n^{\eps-1})$ for some arbitrarily small $\eps$.
    
    Assume a dynamic graph algorithm exists with the mentioned running times, then by \cref{lem:restricted_graphs_to_general_graphs} there exists an algorithm that counts $sABt$ paths on a $P_3$-partite $p'$-smoothed dynamic graph. We have $p' = \Theta(p)$ which leaves the same polynomial dependency.
     
    By \cref{lem:ac_oumv_to_restricted_graphs}, the average-case parity \oumv problems can be solved with running time as stated.
    
    Finally, by \cref{lem:conjecture_to_ac_oumv}, this running time must exceed $\Omega\left(n^{3-\eps}\right)$, the r.h.s in the statement of the theorem.

    For the conclusion at the end, note that all logarithmic factors cancel out by the conjecture, as $\log n = o(n^{\eps})$ for any fixed $\eps > 0$.
\end{proof}

\subsection{Lower bounds for other counting problems}
\label{sec:lb_other_small_graphs}
Our lower bound method can be altered in order to attain lower bounds for several other subgraph-counting problems. Our focus naturally lies with problems that present a gap between average-case and worst-case complexities (see~\cite{HLS22}). 
Along with the results in \cref{sec:upper}, we obtain near tight bounds for all these problems (up to an $\eps$ in the exponent that stems from the \omv conjecture). 

\begin{theorem}
\label{thm:lb for counting st4 paths s3 and s4 cycles}
    Fix $0 < p \leq 1$. 
    Conditioned on the \omv conjecture, any dynamic graph algorithm for counting either
\textbf{%
	(a) \paths{s}{t}{4}; (b) $3$-cycles through $s$}; or \textbf{(c) $4$-cycles through $s$},
with error probability at most $1/1600$ on $p$-smoothed dynamic graphs, with preprocessing, update and query times $\tp(n), \tu(n), \tq(n)$,
must satisfy the equation from \cref{thm:lb for counting st3 paths}.

    Specifically, no algorithm for any of the above problems 
    can simultaneously have
    $\tp=O(n^{3-\epsilon})$,
    $\tu=O(pn^{1-\epsilon})$,
    and
    $\tq=O(n^{2-\epsilon})$,  
    for any $\epsilon>0$.
\end{theorem}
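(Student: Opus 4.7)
The plan is to re-run the three-step chain of~\cref{thm:lb for counting st3 paths} (average-case parity \oumv, then counting on a restricted graph class, then counting on general graphs) for each of the three target subgraphs, tailoring the restricted graph class and the inclusion--exclusion correction to the subgraph in question.

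For counting \textbf{$3$-cycles through $s$}, the cleanest choice of restricted class is the family of ``matrix-fan'' graphs on $V = \{s\} \sqcup A \sqcup B$ with $|A| = |B| = n$ and allowed edge types $sA, sB, AB$. Encoding $u$ in $sA$, $v$ in $sB$, and $M$ in $AB$ makes the number of triangles through $s$ equal to $u^T M v$, since the triple $(s,a,b)$ forms a triangle iff $u(a) = M(a,b) = v(b) = 1$. Thus the analogue of~\cref{lem:ac_oumv_to_restricted_graphs} holds after re-defining $\advdist$ to flip $sA$ or $sB$ edges uniformly; the Poissonization and statistical-distance arguments of~\cref{sec:lb:ac_oumv_to_restricted_graphs} then carry over verbatim. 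For the analogue of~\cref{lem:restricted_graphs_to_general_graphs}, the exterior edge types are now only $AA$ and $BB$, so the inclusion--exclusion uses $2^{2}=4$ properly-partitioned copies and admits a constant-sized closed-form correction in the spirit of~\eqref{eq:number of paths in restricted graph}.

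For counting \textbf{\paths{s}{t}{4}}, I would use $P_4$-partite restricted graphs on $V = \{s\} \sqcup A \sqcup B \sqcup C \sqcup \{t\}$ with $|A|=|B|=|C|=n$, encoding $u$ in $sA$, $v$ in $Ct$, and $M$ in $AB$, while forcing the $BC$-layer to mimic the identity via an additional round of inclusion--exclusion (i.e., running $O(1)$ parallel copies whose $BC$-patterns properly partition a fixed basis and summing their counts with appropriate signs over $\F_2$). A length-four path $s \to a_i \to b_i \to c_j \to t$ then contributes $u(a_i)\,M(a_i,c_j)\,v(c_j)$ to the corrected count, giving $u^T M v$ overall. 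For counting \textbf{$4$-cycles through $s$}, I would subdivide every $AB$-edge of a matrix-fan graph with a fresh midpoint node, so that each triangle $s\to a\to b\to s$ becomes a $4$-cycle $s\to a\to m_{ab}\to b\to s$; the added midpoints are exterior nodes controllable by the same partition trick, and the non-subdivided 4-cycles are handled as corrections.

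The main obstacle, for each target problem, is verifying the analogue of~\cref{lem:restricted_graphs_to_general_graphs}: one must enumerate every ``non-canonical'' occurrence of the target subgraph that uses exterior edges, show that after summing across the $2^k$ properly-partitioned copies the non-canonical contributions collapse to a closed-form polynomial in simple degree counters (as in~\eqref{eq:number of paths in restricted graph}), and maintain those counters in $O(1)$ time per update so that the induced smoothing parameter satisfies $p'\in[p/2,p]$ and the overhead is at most a constant factor in $n$. Once this bookkeeping is in place, the chain closes via~\cref{lem:conjecture_to_ac_oumv} and yields the same $\Omega\bigl(n^{3-\eps}\bigr)$ aggregate lower bound for all three problems, matching the specific trade-off $\tp=O(n^{3-\epsilon})$, $\tu=O(pn^{1-\epsilon})$, $\tq=O(n^{2-\epsilon})$ claimed in the theorem.
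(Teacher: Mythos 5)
Your treatment of $s$-triangles matches the paper's sketch: a restricted class on $\{s\}\sqcup A\sqcup B$ with allowed edge types $sA$, $sB$, $AB$, so that $sABs$ triangles encode $u^{T}Mv$, and the inclusion--exclusion step subtracts the $sAAs$/$sBBs$ contributions via simple degree counters.

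Your reduction for counting $4$-cycles through $s$ has a genuine gap. Subdividing every potential $AB$-edge introduces a fresh midpoint $m_{ab}$ for each of the $n^{2}$ pairs, so the dynamic graph now has $N=\Theta(n^{2})$ nodes while the underlying \oumv instance still has dimension $n$. The budget $\tp(N)=O(N^{3-\eps})$, $\tu(N)=O(pN^{1-\eps})$, $\tq(N)=O(N^{2-\eps})$ then evaluates to time $\Theta(n^{6-O(\eps)})$, far above the $O(n^{3-\eps})$ the \omv conjecture forbids, so no contradiction is obtained. Also, the midpoint nodes carry interior edges encoding $M$, so they cannot simultaneously serve as ``exterior nodes controllable by the partition trick,'' and the number of exterior edge classes is no longer constant. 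The paper instead reduces $s$-$4$-cycle counting directly to \paths{s}{t}{3} \emph{on the same node set}: it runs two copies of the dynamic graph that agree on every edge except $(s,t)$, applies the identical change sequence to both, and outputs the difference of their $4$-cycle counts, which equals the number of $4$-cycles through $s$ that use $(s,t)$, i.e.\ exactly the number of \paths{s}{t}{3}. This keeps $N=\Theta(n)$ and preserves $p$.

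Your \paths{s}{t}{4} route is also not workable as stated and diverges from the paper's. You add a full $n$-node layer $C$ and want to ``force the $BC$-layer to mimic the identity'' by summing $O(1)$ copies over $\F_{2}$. But a $p$-smoothed restricted dynamic graph must have a \emph{uniformly random} initial graph (\cref{observation:initial graph}), so no single copy can start with $BC=\mathrm{Id}$, and $O(1)$ independently random $BC$-matrices have only exponentially small chance of summing to the fixed identity. The $M_{1}+M_{2}+M_{3}\sameparity M$ trick of \cref{sec:lb:ac_oumv_to_restricted_graphs} works precisely because $M$ is itself a random \oumv input, so initializing all three $AB$-layers to $M$ is still uniform; there is no analogue for a deterministic target matrix. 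The paper's remedy is much smaller: attach a \emph{single} leaf node $t'$ adjacent only to $t$, yielding a $P_4$-partite restricted class on $\{s\}\sqcup A\sqcup B\sqcup\{t\}\sqcup\{t'\}$ with one extra allowed edge, re-derive \cref{lem:ac_oumv_to_restricted_graphs} for this class, and apply the restricted-to-general step with the correspondingly enlarged (but still constant) set of exterior edge types. Since only one new node and a constant number of new part types appear, both $N=\Theta(n)$ and $p'\in[p/2,p]$ are preserved.
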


\begin{proof}
The proof for $s$-$4$-cycles is by reduction to counting \paths{s}{t}{3}, which is detailed next.
For counting $s$-$3$-cycles and or \paths{s}{t}{4},
the proof goes by alternating the proof of \Cref{thm:lb for counting st3 paths}.
We refrain from presenting the full proofs again (which will require repeating
sections 
\ref{sec:lb:ac_oumv_to_restricted_graphs}, \ref{sec:lb:restricted_graphs_to_general_graphs} and
\ref {sec:lb:wrapping_things_up} with minimal adjustments)
and instead only present the necessary  changes.

\noindent\textbf{Counting $4$-cycles through $s$.}
Here, we get the lower bound directly from \Cref{thm:lb for counting st3 paths},
be showing how to use an algorithm for counting 
$4$-cycles through $s$
in order to count \st 3-paths.

Given a dynamic graph with two designate nodes $s,t$,
create two copies of it, one with the edge $(s,t)$ and one without this edge;
apply all the changes on both graphs throughout the execution, including flipping the edge $(s,t)$ in both if it is flipped in the original dynamic graph.
Assuming an algorithm for counting \st 3-paths, 
apply it to both graphs, 
and whenever there is a query (for the number of \st 3-paths) regarding the original graph,
make queries 
(for the number of $4$-cycles through $s$) 
to the algorithm to both copies, and return the difference.

To see the answer is correct, note that any $4$-cycles through $s$ that does not go through the edge $(s,t)$ appears in both graphs and hence is subtracted, 
while any such $4$-cycle that uses $(s,t)$ appears only on the graph where the edge $(s,t)$ exists, and the number of such paths is the difference returned by the reduction.
Finally, note that there is a trivial one-to-one correspondence between $4$-cycles through $s$  in which $(s,t)$ appears in the new graph that contains $(s,t)$, and 
\st 3-paths in the original graph.
Hence, the reduction returns the correct value.

To conclude, note that the model (and $p$) are exactly the same, 
i.e.\ the simulation of the original $p$-smoothed dynamic graph 
indeed creates two 
$p$-smoothed dynamic graphs with the same number of nodes and the same parameter $p$.

\noindent\textbf{Counting \paths{s}{t}{4}.}
A straightforward attempt to reduce this problem to counting 
\st 3-paths is by adding a single node $t'$ that is connected only to $t$,
and counting \paths{s}{t'}{4}. 
While this reduction would easily go through in the worst-case setting, in the smoothed case it will not: random edges might connect $t'$ to other nodes in the graph, breaking the tight connection between the numbers of \paths{s}{t}{3} and \paths{s}{t'}{4}.

Instead, one can perform the reduction at an earlier stage, and obtain a variant of ~\Cref{lem:ac_oumv_to_restricted_graphs}, where no edges from $t'$ are allowed, except for $(t',t)$.
Formally, this will imply a lower bound for the restricted case of $P_4$-partite graphs. Then one must apply the inclusion-edgeclusion technique, similar to \cref{lem:restricted_graphs_to_general_graphs}, invoking a larger multiplicative error factor due to the added parts in the graph partition (and potential edge types to exclude).

\noindent\textbf{Counting $3$-cycles through $s$.}
Here again one should focus on the two parts and fix each one of them separately. For the parallel of \cref{lem:ac_oumv_to_restricted_graphs}, note the graph only has $3$ parts: $s, A, B$. We wish to count triangles $sABs$, once each. Such a count would represent the \oumv multiplication just as in the case of $3$-paths (as we only use one orientation).

Lastly, the parallel to \cref{lem:restricted_graphs_to_general_graphs} would need to exclude counts of $sAAs$ and $sBBs$. Such counts would simply add up to $\binom{\deg_A(s)}{2} + \binom{\deg_{_B}(s)}{2}$, where $\deg_A(s)$ and $\deg_B(s)$ represent the number of neighbors $s$ has in $A$ and in $B$, respectively. 
The degrees can be easily maintained and counted out by the algorithm.
The error factor is even smaller in this case, as there are less parts in the partition of $V$.
\end{proof}

\newpage
\part*{Appendix}
\appendix
\section{Tools from probability theory}
\label{sec:tools_from_probability}

We use the following versions of Chernoff/Hoeffding bounds:
\begin{fact}
    \label{fact:Chernoff/Hoefdding}
    For independent random variables $X_1, \dots, X_n$ with sum $X = \sum_{i\in[n]} X_i$ and expectation $\Expc{}{X} = \mu$, the following hold:
    \begin{itemize}
        \item If all $X_i$ are supported on $[a,b]$, then for any $t > 0$:
        \[
            \Prob{}{X \geq \mu + t} \leq \exp\left(\frac{-2t^{2}}{n (b-a)^2)}\right).
        \]

        \item If all $X_i$ are Bernoulli variables, then for any $0 < \delta < 1$:
        \[
            \Prob{}{X \leq (1-\delta)\mu} \leq \exp\left(\frac{-\delta^2 \mu }{2}\right).
        \]
    \end{itemize}
    
\end{fact}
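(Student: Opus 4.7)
The plan is to prove both inequalities via the standard Chernoff (exponential-moment) method, which reduces each tail estimate to bounding the moment generating function of $X$ and optimizing over an exponential parameter. Since $X = \sum_i X_i$ with independent summands, the MGF factorizes: $\mathbb{E}[e^{sX}] = \prod_i \mathbb{E}[e^{s X_i}]$, so after applying Markov to $e^{sX}$ (for the upper tail) or $e^{-sX}$ (for the lower tail) it suffices to bound each one-variable MGF and then choose $s$ to minimize the resulting exponent.

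For the first (Hoeffding) bound, I would fix $s > 0$, write
\[
\Prob{}{X \geq \mu + t} = \Prob{}{e^{sX} \geq e^{s(\mu+t)}} \leq e^{-s(\mu+t)} \prod_{i=1}^n \Expc{}{e^{s X_i}},
\]
and then invoke Hoeffding's lemma: for any random variable $Y$ supported on $[a,b]$ with mean $\mu_Y$, one has $\mathbb{E}[e^{s(Y-\mu_Y)}] \leq \exp(s^2(b-a)^2/8)$. The proof of this lemma is itself a short convexity argument, bounding $e^{sy}$ above by the chord through $(a, e^{sa})$ and $(b, e^{sb})$ on $[a,b]$ and taking expectations. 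Plugging this in yields $\Pr[X \geq \mu + t] \leq \exp(-st + n s^2 (b-a)^2/8)$; optimizing gives $s^\star = 4t / (n(b-a)^2)$, producing the claimed $\exp(-2 t^2 / (n(b-a)^2))$.

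For the second (multiplicative Chernoff) bound, I would use $s > 0$ and $e^{-sX}$:
\[
\Prob{}{X \leq (1-\delta)\mu} \leq e^{s(1-\delta)\mu} \prod_{i=1}^n \Expc{}{e^{-s X_i}}.
\]
For Bernoulli $X_i$ with mean $p_i$, $\mathbb{E}[e^{-s X_i}] = 1 + p_i(e^{-s}-1) \leq \exp\bigl(p_i(e^{-s}-1)\bigr)$ by $1+x \leq e^x$, so the product is at most $\exp\bigl(\mu(e^{-s}-1)\bigr)$. Choosing $s = -\ln(1-\delta) > 0$ yields the classical form $\Pr[X \leq (1-\delta)\mu] \leq \bigl(e^{-\delta}/(1-\delta)^{1-\delta}\bigr)^\mu$. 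The final step is the analytic inequality $(1-\delta)\ln(1-\delta) \geq -\delta + \delta^2/2$ for $\delta \in (0,1)$, which one proves by showing that the function $g(\delta) := (1-\delta)\ln(1-\delta) + \delta - \delta^2/2$ vanishes at $0$ and has nonnegative derivative on $(0,1)$. Taking logarithms converts the previous bound into $\exp(-\delta^2 \mu / 2)$.

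The main (minor) obstacle is bookkeeping rather than conceptual: choosing the optimal $s$ in each case and carefully deriving the two auxiliary inequalities (Hoeffding's lemma in the first part, the $(1-\delta)\ln(1-\delta)$ estimate in the second). Both are standard and self-contained, so the whole argument is a textbook routine once the Chernoff-method template is set up.
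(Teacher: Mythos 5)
Your proof is correct: both bounds follow from the standard exponential-moment (Chernoff) method with the usual optimization of the tilting parameter, and the auxiliary steps (Hoeffding's lemma via the chord bound, and the inequality $(1-\delta)\ln(1-\delta)\geq -\delta+\delta^2/2$ via the derivative test) are done correctly. Note, however, that the paper states \cref{fact:Chernoff/Hoefdding} as a standard fact in the appendix without supplying any proof, so there is no argument in the paper to compare against; your derivation is simply the textbook proof of these classical bounds.
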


We extensively use the Poisson distribution, and the Poissonization technique.
The Poisson distribution is supported on $\N_{\geq 0}$ and parameterized by a single parameter $\lambda$, where the probability of the outcome to be $k\in \N_{\geq 0}$ is defined by:
\[
    \Prob{x \sim \poisson{\lambda}}{x = k} := \frac{\lambda^k \cdot e^{-\lambda}}{k!} .
\]
In our proofs we extensively use the following well-known facts about the Poisson distribution (see, e.g., \cite[Section~8.4]{books:probability}).
\begin{fact}
    \label{fact:poisson_properties}~
    \begin{enumerate}
        \item
        Additivity of Poisson: for two parameters $\lambda_1, \lambda_2$, it holds that:
        \[
            \poisson{\lambda_1 + \lambda_2} \sim \poisson{\lambda_1} + \poisson{\lambda_2} .
        \]
        
        \item \label{fact:item:poisson_parity}
        Parity of Poisson: for any parameter $\lambda > 0$, and random variable $R \sim \poisson{\lambda}$, we have
        \[
            \Prob{}{R \equiv_2 0} = \frac{1 + e^{-2\lambda}}{2} .
        \]

        \item \label{fact:item:poissonization}``Poissonization'': consider a set of $\poisson{t}$ samples from a distribution $\mathcal{D}$ with support $[n]$, and denote by $H(i)$ the number of times element $i$ was sampled.
        Then $H(i)$ distributes according to $\poisson{\mathcal{D}(i)\cdot t}$, and $H(1),\dots, H(n)$ are independent from one another.
        
        \item Concentration: let $z\sim \poisson{\lambda}$. For any $a < \lambda$ and $b > \lambda$, we have:
        \[
\begin{aligned}
            \Prob{}{z < a} \leq \frac{(e\lambda)^a e^{-\lambda}}{a^a}
            && \Prob{}{z > b} \leq \frac{(e\lambda)^b e^{-\lambda}}{b^b} .
        \end{aligned}
\]

        \item Efficient sampling (e.g.,~\cite{books:Knuth}): a value from $\poisson{t}$ can be generated with running time $O(k)$ where $k$ is the eventual output value (recall $\Expc{}{k} = t$).
    \end{enumerate}
\end{fact}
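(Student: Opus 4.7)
The plan is to prove each of the five items in turn; all follow from direct manipulations of the Poisson PMF $\Pr[R=k] = \lambda^k e^{-\lambda}/k!$, together with standard generating-function arguments.

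For additivity (item 1), I would compute the convolution directly: for independent $X\sim\poisson{\lambda_1}, Y\sim\poisson{\lambda_2}$,
\[
\Pr[X+Y=k] = \sum_{j=0}^{k} \frac{\lambda_1^j e^{-\lambda_1}}{j!}\cdot \frac{\lambda_2^{k-j} e^{-\lambda_2}}{(k-j)!} = \frac{e^{-(\lambda_1+\lambda_2)}}{k!}\sum_{j=0}^k \binom{k}{j}\lambda_1^j\lambda_2^{k-j} = \frac{(\lambda_1+\lambda_2)^k e^{-(\lambda_1+\lambda_2)}}{k!},
\]
by the binomial theorem. Alternatively one can invoke moment generating functions, since $\E[e^{tX}]=\exp(\lambda(e^t-1))$ for $X\sim\poisson{\lambda}$ and products of MGFs correspond to independent sums. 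For the parity claim (item 2), I would split the PMF by parity and recognize the Taylor series of $\cosh$:
\[
\Pr[R\equiv_2 0] = e^{-\lambda}\sum_{k\ \text{even}}\frac{\lambda^k}{k!} = e^{-\lambda}\cdot\frac{e^{\lambda}+e^{-\lambda}}{2} = \frac{1+e^{-2\lambda}}{2}.
\]

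For Poissonization (item 3), I would first argue that the sum of the coordinates $\sum_i H(i)$ is $\poisson{t}$ by additivity (item 1). Then I would verify that, conditioned on $\sum_i H(i)=r$, the joint distribution of $(H(1),\dots,H(n))$ is multinomial with parameters $r$ and $(\mathcal{D}(1),\dots,\mathcal{D}(n))$: a direct computation multiplies the $n$ independent Poisson PMFs and divides by $\Pr[\sum_i H(i)=r]$, and the factors of $t^r e^{-t}/r!$ cancel to leave the multinomial coefficient. Since this matches the distribution of $r$ i.i.d.\ samples from $\mathcal{D}$ tabulated by element, the claim follows. The main delicate point — and the place I will be most careful — is in stating the exact coupling: we take $T\sim\poisson{t}$ and then $T$ i.i.d.\ samples from $\mathcal{D}$, and we must show this is equivalent to independently drawing $H(i)\sim\poisson{\mathcal{D}(i)\cdot t}$. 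This equivalence is what the conditional-multinomial calculation establishes.

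For concentration (item 4), the cleanest route is the exponential Chernoff method via the Poisson MGF. For the upper tail with $b>\lambda$, I would write
\[
\Pr[z>b] \le \inf_{s>0} e^{-sb}\E[e^{sz}] = \inf_{s>0}\exp\bigl(\lambda(e^s-1)-sb\bigr),
\]
and optimizing at $e^s = b/\lambda$ yields the bound $e^{-\lambda}(e\lambda/b)^b$ (up to trivial manipulation). The lower tail is symmetric, optimizing at $e^s=a/\lambda$ with $s<0$. Finally, for efficient sampling (item 5), I would describe Knuth's multiplicative algorithm: draw i.i.d.\ $U_1,U_2,\dots\sim\mathrm{Unif}[0,1]$ and output the smallest $k$ for which $\prod_{i=1}^{k+1}U_i < e^{-t}$; this is equivalent to counting events in a Poisson process of rate $1$ up to time $t$ by the interarrival-time characterization, so the output is distributed as $\poisson{t}$, and the expected number of uniforms drawn equals $t+1$. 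I do not expect any of these items to pose a serious obstacle — the main care is in keeping the bookkeeping of the Poissonization argument (item 3) clean, since it is the one we rely on most heavily in the rest of the paper.
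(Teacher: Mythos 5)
Your proofs of all five items are correct and are exactly the classical arguments (convolution/MGF for additivity, the $\cosh$ series for parity, the joint-PMF factorization for Poissonization, the Chernoff method with the Poisson MGF for concentration, and Knuth's multiplicative algorithm via exponential interarrival times for sampling). Note that the paper does not prove this Fact at all --- it is stated as a collection of well-known properties with a pointer to a textbook --- so there is nothing to diverge from; your write-up simply supplies the standard derivations the cited references contain.
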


Finally, to compare random variables (or distributions), we use the well-known notion of \emph{statistical distance}, which equals exactly half the $\ell_1$ distance (also known as total variation distance).
We write $P(x)$ for the probability of element $x$ in $P$. For two distributions $P, Q$ over the same domain $\mathcal{X}$ the statistical distance between them is defined by
\[
    \sd{P}{Q} :=
    \frac{1}{2}\sum_{X \in \mathcal{X}} \card{P(x) - Q(x)} .
\]
This is simply the $\ell_1$ distance of the two distributions laid out as probability vectors. We use the following known facts about statistical distance (see, e.g.,~\cite{CT2001}):
\begin{fact}
    \label{fact:SD_properties}~
    \begin{enumerate}
        \item
        \label{fact:item:SD_subadditivity}
        Sub-additivity of statistical distance for product distributions: given distribution $P_i, Q_i$ supported on $\mathcal{X}_i$ for $i=1, \dots, k$, the product distributions satisfy:
        \[
            \sd{P_1 \times \dots \times P_k\ }{\ Q_1 \times \dots \times Q_k}
            \leq \sum_{i=1}^{k} \sd{P_i}{Q_i} .
        \]

        \item
        \label{fact:item:SD_error_difference}
        An equivalent definition of the statistical distance over finite spaces uses the best distinguisher function     $D:\mathcal{X} \to \set{0,1}$. That is
        \[
            \sd{P}{Q} = \max_{D} \card{\Prob{x\sim P}{D(x) = 1} - \Prob{x \sim Q}{D(x) = 1}} .
        \]
        This is in particular useful when $P, Q$ are distributions over inputs to some algorithm $A$: one can define the distinguisher $E$ that outputs $1$ if $A$ erred on the input $X$ (note that $A$ can be a randomized algorithm, making $E$ a randomized distinguisher).
        Thus, when $P$ and $Q$ are statistically close, so is the probabilty that $A$ errs on them. Indeed, by the equality above
        \[
            \card{\Prob{x\sim P}{E(x) = 1} - \Prob{x \sim Q}{E(x) = 1}}
            \leq \sd{P}{Q} .
        \]
    \end{enumerate}
\end{fact}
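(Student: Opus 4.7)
The plan is to prove each of the two items separately, starting with the distinguisher characterization (item~\ref{fact:item:SD_error_difference}) since the sub-additivity proof can then be made cleaner by coupling arguments or by direct manipulation of $\ell_1$ sums.

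For item~\ref{fact:item:SD_error_difference}, I would first establish the identity
\[
    \max_{D: \mathcal{X} \to \set{0,1}} \abs{\Prob{x\sim P}{D(x)=1} - \Prob{x\sim Q}{D(x)=1}} = \sd{P}{Q}.
\]
The upper bound comes from writing the left side as $\abs{\sum_{x: D(x)=1}(P(x) - Q(x))}$, splitting the sum according to the sign of $P(x) - Q(x)$, and noting that each of the two partial sums (positive part and negative part) has absolute value equal to $\sd{P}{Q}$ (since $\sum_x (P(x) - Q(x)) = 0$ forces the two parts to cancel, and their magnitudes sum to $\sum_x \abs{P(x)-Q(x)} = 2\sd{P}{Q}$). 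The matching lower bound is achieved by the explicit distinguisher $D^*(x) := \mathbbm{1}[P(x) > Q(x)]$. Once this characterization is established, the statement about algorithm error follows by instantiating $D$ with the indicator $E$ of the event ``$A$ errs on input $x$'', which may itself be randomized; conditioning on the internal randomness of $A$ reduces to the deterministic case and the bound is preserved by averaging.

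For item~\ref{fact:item:SD_subadditivity}, I would proceed by induction on $k$, so it suffices to handle $k=2$. The key step is the triangle inequality through the hybrid distribution $P_1 \times Q_2$:
\[
    \sd{P_1 \times P_2}{Q_1 \times Q_2} \leq \sd{P_1 \times P_2}{P_1 \times Q_2} + \sd{P_1 \times Q_2}{Q_1 \times Q_2}.
\]
Each summand on the right equals one of the single-coordinate distances: expanding
$2\sd{P_1 \times P_2}{P_1 \times Q_2} = \sum_{x_1, x_2} P_1(x_1)\abs{P_2(x_2) - Q_2(x_2)}$ and pulling $P_1(x_1)$ out (summing to $1$) gives $2\sd{P_2}{Q_2}$, and analogously for the other term. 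The inductive step is immediate by grouping $P_1 \times \cdots \times P_{k-1}$ as a single product distribution and invoking the $k=2$ case.

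The main (and only mildly non-trivial) obstacle is the characterization direction of item~\ref{fact:item:SD_error_difference}, specifically verifying that the optimal $D^*$ really achieves $\sd{P}{Q}$; this hinges on the elementary but easy-to-slip identity that $\sum_x \max(P(x) - Q(x), 0) = \sum_x \max(Q(x) - P(x), 0) = \sd{P}{Q}$, which follows from $\sum_x (P(x) - Q(x)) = 0$. Everything else reduces to manipulation of absolute values and the triangle inequality; the randomized-distinguisher extension needed for the error-probability corollary is handled by a routine conditioning argument.
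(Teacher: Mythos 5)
The paper does not prove this fact at all: it is stated as a standard property of the statistical (total variation) distance and attributed to the literature, so there is no in-paper argument to compare against. Your proof is the standard one and is correct --- the distinguisher characterization via the optimal set $\set{x : P(x) > Q(x)}$ together with the identity $\sum_x \max(P(x)-Q(x),0)=\sd{P}{Q}$ (which follows from $\sum_x (P(x)-Q(x))=0$), the averaging argument for randomized distinguishers, and the hybrid/triangle-inequality argument with induction on $k$ for sub-additivity all go through without issue.
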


\section{Massaging the \omv problem}
\label{app: from omv to parity oumv}

Most of our lower bounds are conditioned no the \omv conjecture which can be seen as a variant of the informal claim ``there is no $O(n^{3-\epsilon})$ combinatorial matrix multiplication algorithm''.  See~\cite{HenzingerKNS15} for further discussion of this connection.

\begin{definition}[the \omv problem]
    The \emph{\omv} problem is the following online problem. Fix an integer $n$. 
    The initial inputs are a Boolean $n$-vector $v_0$ and a Boolean $n\times n$ matrix $M$.
	Then, Boolean vectors $v_i$ of dimension $n$ arrive online, for $i=1,\dots, n$.
	An algorithm solves the \omv problem correctly if it outputs the Boolean vector $Mv_i$ before the arrival of $v_{i+1}$, for $i = 0,1,\dots, n-1$, and outputs $M v_n$.
\end{definition}

\begin{definition}[the \oumv problem]
    The \emph{\oumv} problem is the following online problem. Fix an integer $n$. 
    The initial inputs are two Boolean $n$-vectors $u_0, v_0$, and a Boolean $n\times n$ matrix $M$.
	Then, pairs of Boolean vectors $(u_i, v_i)$ of dimension $n$ arrive online, for $i=1,\dots, n$.
	An algorithm solves the \oumv problem correctly if it outputs the Boolean value $u_i^T M v_i$ before the arrival of $(u_{i+1},v_{i+1})$, for $i = 0,1,\dots, n-1$, and outputs $u_n^T M v_n$.
\end{definition}

\omvconj*

This conjecture is a standard starting point of many lower bounds for dynamic graph algorithms since its introduction.
It is also common to reduce this conjecture to other conjectures that are easier to work with, most notably the \oumv conjecture.
We follow the footsteps of~\cite{HLS22} and make further reductions, reaching the parity \oumv problem:
\DefParityOUMV*

The following result is established in~\cite{HLS22}.

\lemavgparoumv*

We overview the proof here for completeness. It consists of two reductions.

\paragraph{Reduction \#1: existence to parity.}
The first reduction works on worst-case inputs, and replaces the matrix-vector Boolean multiplication (using operators $(\vee, \wedge)$) by a multiplication over the field $\F_2$ (using $(+, \cdot)$),
i.e., it asks to output $uMv \bmod 2$.
In graph notations, we replace the question ``is there a path'' with the question ``is the number of paths even or odd''?

The reduction is done by replacing some $1$ entries in $M$ with $0$. Such replacement occurs at each entry independently with probability $1/2$, and the outcome of this process is a random matrix $M'$.
The core of the proof is in the following insights.
\begin{itemize}
	\item If $u^T M v = 0$ over $\Z$, then $u^T M' v = 0$ as well
	
	\item If $u^T M v = \ell > 0$, then $u^T M' v$ is even or odd with equal probability.
\end{itemize}

To see the second item, consider an entry $M_{ij} = 1$ that affects the value $u^T M v$ (that is, $u_i = v_j = 1$).
First draw all other entries of $M'$, to obtain temporarily $M''$ with $M''_{ij} = 1$ and integer value $a = u^T M'' v$.
Still, $M'_{ij} \sim Ber(1/2)$, and thus the value of $u^T M' v$ will be $a$ or $a-1$ with equal probability.
Overall, we get that the parity of $u^T M' v$ is even or odd with equal probability.

Formally, the reduction gets $u, M, v$ and runs the process above independently for $\Theta(\left(\log n\right)$ times, using a solver for parity \oumv for each resulting matrix $M'$.
If at least once the answer was $1$ (`odd'), it returns $1$ for the existence question. Otherwise it returns $0$.

Note that in fact, matrix multiplication is usually studied over a field and not using Boolean operations. 
It is thus only natural to conjecture a variant of the $\omv$ conjecture over $\F_2$.
A reduction of this conjecture to \oumv would give exactly the parity \oumv conjecture discussed here, without using randomization.
We nevertheless stick to the standard \omv conjecture.

\paragraph{Reduction \#2: worst-case parity to average-case parity.}
For the question of existence (using operators $(\vee, \wedge)$), the average-case is much easier than the worst case: the answer is $1$ w.h.p.
Interestingly, this is not at all the case for the parity question (that is, using $(+, \cdot)$ over the field $\F_2$).
We show that the parity version of \oumv discussed above is hard also on average, establishing \cref{lem:conjecture_to_ac_oumv}.

The proof follows an elegant idea (that dates back to~\cite{BLR93}): given a Boolean matrix $M$, draw a unifomly random Boolean matrix $M^1$, and define another matrix $M^2$ by $M^2 = M + M^1$ (over $\F_2$); note that $M^2$ is also a random matrix. 
Thus we can write $M = M^1 + M^2$ where both $M^1$ and $M^2$ are uniformly random.
Similarly, split each vector $u_i$ (or $v_i$) as $u_i^1 + u_i^2$ where both are uniformly random.
By distributivity of matrix multiplication over $\F_2$, we can write:
    \[
\begin{aligned}
	u_i ^T M v_i = \sum_{b_1, b_2, b_3\in \set{0,1}} (u_i^{b_1})^T M^{b_2} v_i^{b_3} .
\end{aligned}
\]
Each of the eight addends on the right is a multiplication of uniformly random Boolean matrix and vectors.

Formally, assume we have an algorithm solving parity \oumv in the average-case, with error $\eps$.
An algorithm for the worst-case gets an arbitrary input, and does the following: split the matrix (and vectors) randomly as in the above equation, and apply the average-case algorithm on all $8$ parts. 
It can only fail if (at least) one of the $8$ average-case executions failed, which by union bound occurs with probability at most $8\eps$.

 \bibliographystyle{plain}
 \bibliography{refs}

\end{document}